\newcommand{\mbs}[1]{\pmb{#1}}
\newcommand{\vect}[1]{{\lowercase{\mbs{#1}}}}
\newcommand{\mat}[1]{{\uppercase{\mbs{#1}}}}
\newcommand{\cond}{\,\vert\,}
\renewcommand{\Re}[1][]{\ifthenelse{\isempty{#1}}{\operatorname{Re}}{\operatorname{Re}\left(#1\right)}}
\renewcommand{\Im}[1][]{\ifthenelse{\isempty{#1}}{\operatorname{Im}}{\operatorname{Im}\left(#1\right)}}
\newcommand{\SNR}{\mathsf{snr}}
\newcommand{\av}{\vect{a}}
\newcommand{\bv}{\vect{b}}
\newcommand{\cv}{\vect{c}}
\newcommand{\ev}{\vect{e}}
\newcommand{\rv}{\vect{r}}
\newcommand{\vv}{\vect{v}}
\newcommand{\xv}{\vect{x}}
\newcommand{\yv}{\vect{y}}
\newcommand{\muv}{\vect{\mu}}
\newcommand{\Sigmam}{\pmb{\Sigma}}
\newcommand{\Gammam}{\pmb{\Gamma}}
\newcommand{\Lambdam}{\pmb{\Lambda}}
\newcommand{\Psim}{\pmb{\Psi}}
\newcommand{\Thetam}{\pmb{\Theta}}
\newcommand{\Omegam}{\pmb{\Omega}}
\newcommand{\Am}{\mat{a}}
\newcommand{\Bm}{\mat{b}}
\newcommand{\Cm}{\mat{c}}
\newcommand{\Dm}{\mat{d}}
\newcommand{\Em}{\mat{e}}
\newcommand{\Gm}{\mat{g}}
\newcommand{\Mm}{\mat{M}}
\newcommand{\Qm}{\mat{q}}
\newcommand{\Rm}{\mat{r}}
\newcommand{\Sm}{\mat{s}}
\newcommand{\Um}{\mat{u}}
\newcommand{\Vm}{\mat{V}}
\newcommand{\Wm}{\mat{w}}
\newcommand{\Xm}{\mat{x}}
\newcommand{\Ym}{\mat{y}}
\newcommand{\Bc}{{\mathcal B}}
\newcommand{\Cc}{{\mathcal C}}
\newcommand{\Kc}{{\mathcal K}}
\newcommand{\Lc}{{\mathcal L}}
\newcommand{\Nc}{{\mathcal N}}
\newcommand{\Pc}{{\mathcal P}}
\newcommand{\Sc}{{\mathcal S}}
\newcommand{\Uc}{{\mathcal U}}
\newcommand{\Xc}{{\mathcal X}}
\newcommand{\Yc}{{\mathcal Y}}
\newcommand{\CC}{\mathbb{C}}
\newcommand{\Id}{\mat{\mathrm{I}}}
\newcommand{\CN}[1][]{\ifthenelse{\isempty{#1}}{\mathcal{N}_{\mathbb{C}}}{\mathcal{N}_{\mathbb{C}}\left(#1\right)}}
\renewcommand{\P}[1][]{\ifthenelse{\isempty{#1}}{\mathbb{P}}{\mathbb{P}\left(#1\right)}}
\newcommand{\E}[1][]{\ifthenelse{\isempty{#1}}{\mathbb{E}}{\mathbb{E}\left[#1\right]}}
\newcommand{\Var}[1][]{\ifthenelse{\isempty{#1}}{\mathsf{Var}}{\mathsf{Var}\left[#1\right]}}
\newcommand{\I}[1][]{\ifthenelse{\isempty{#1}}{\mathbb{I}}{\mathbb{I}\left\{#1\right\}}}
\renewcommand{\det}[1][]{\ifthenelse{\isempty{#1}}{\mathrm{det}}{\mathrm{det}\left(#1\right)}}
\newcommand{\trace}[1][]{\ifthenelse{\isempty{#1}}{\mathrm{tr}}{\mathrm{tr}\left(#1\right)}}
\newcommand{\rank}[1][]{\ifthenelse{\isempty{#1}}{\mathrm{rank}}{\mathrm{rank}\left(#1\right)}}
\newcommand{\diag}[1][]{\ifthenelse{\isempty{#1}}{\mathrm{diag}}{\mathrm{diag}\left(#1\right)}}
\newcommand{\Cov}[1][]{\ifthenelse{\isempty{#1}}{\mathsf{Cov}}{\mathsf{Cov}\left(#1\right)}}
\newcommand{\defeq}{\triangleq}
\newcommand{\eqdef}{\triangleq}
\newtheorem{proposition}{Proposition}
\newtheorem{remark}{Remark}
\newtheorem{corollary}{Corollary}
\newtheorem{lemma}{Lemma}
\newcounter{enumi_saved}
\definecolor{cadmiumgreen}{rgb}{0.0, 0.42, 0.24}
\definecolor{darkspringgreen}{rgb}{0.09, 0.45, 0.27}
\definecolor{pimentgreen}{rgb}{0.0, 0.65, 0.31}
\pgfplotsset{minor grid style={dotted,gray!25}}
\pgfplotsset{major grid style={dashed,gray!25}}
\pgfplotsset{compat=1.12}
\newcommand\pgfmathsinandcos[3]{%
	\pgfmathsetmacro#1{sin(#3)}%
	\pgfmathsetmacro#2{cos(#3)}%
}
\newcommand\LongitudePlane[3][current plane]{%
	\pgfmathsinandcos\sinEl\cosEl{#2} 
	\pgfmathsinandcos\sint\cost{#3} 
	\tikzset{#1/.style={cm={\cost,\sint*\sinEl,0,\cosEl,(0,0)}}}
}
\newcommand\LatitudePlane[3][current plane]{%
	\pgfmathsinandcos\sinEl\cosEl{#2} 
	\pgfmathsinandcos\sint\cost{#3} 
	\pgfmathsetmacro\yshift{\cosEl*\sint}
	\tikzset{#1/.style={cm={\cost,0,0,\cost*\sinEl,(0,\yshift)}}} %
}
\newcommand\DrawLongitudeCircle[2][1]{
	\LongitudePlane{\angEl}{#2}
	\tikzset{current plane/.prefix style={scale=#1}}
	\pgfmathsetmacro\angVis{atan(sin(#2)*cos(\angEl)/sin(\angEl))} %
	\draw[current plane] (\angVis:1) arc (\angVis:\angVis+180:1);
	\draw[current plane,dashed] (\angVis-180:1) arc (\angVis-180:\angVis:1);
}
\newcommand\DrawLatitudeCircle[2][1]{
	\LatitudePlane{\angEl}{#2}
	\tikzset{current plane/.prefix style={scale=#1}}
	\pgfmathsetmacro\sinVis{sin(#2)/cos(#2)*sin(\angEl)/cos(\angEl)}
	\pgfmathsetmacro\angVis{asin(min(1,max(\sinVis,-1)))}
	\draw[current plane] (\angVis:1) arc (\angVis:-\angVis-180:1);
	\draw[current plane,dashed] (180-\angVis:1) arc (180-\angVis:\angVis:1);
}
\tikzset{%
	>=latex, 
	inner sep=0pt,%
	outer sep=2pt,%
	mark coordinate/.style={inner sep=0pt,outer sep=0pt,minimum size=3pt,
		fill=black,circle}%
}
\renewcommand{\rv}[1]{{\mathsf{#1}}}
\newcommand{\rvVec}[1]{{\bm{\mathsf{#1}}}}
\newcommand{\rvMat}[1]{{\bm{\mathsf{#1}}}}
\newcommand{\T}{{\scriptscriptstyle\mathsf{T}}}
\renewcommand{\H}{{\scriptscriptstyle\mathsf{H}}}
\newcommand{\of}[1]{^{(#1)}}
\newcommand{\ofH}[1]{^{(#1)\H}}
\newcommand{\Span}[1][]{\ifthenelse{\isempty{#1}}{{\rm Span}}{{\rm Span}\left(#1\right)}}
\newcommand*\dif{\mathop{}\mathrm{d}}
\newcommand{\ind}[1]{{\mathbbm{1}{\{#1\}}}}
\renewcommand{\SNR}{\mathrm{SNR}}
\renewcommand{\defeq}{:=}
\renewcommand{\eqdef}{=:}
\renewcommand{\Id}{\mat{I}}
\newcommand{\Xcal}{\mathcal{X}}
\newcommand{\LLR}{\rv{L}(\Xm\to{\Xm'})}
\newcommand{\meanLLR}{\E\big[\rv{L}(\Xm\to{\Xm'})\big]}
\newcommand{\varLLR}{\Var\big[\rv{L}(\Xm\to{\Xm'})\big]}
\title{Joint Constellation Design for Noncoherent MIMO Multiple-Access Channels} 
\author{	
	Khac-Hoang Ngo,~\IEEEmembership{Member,~IEEE, }%
	Sheng Yang,~\IEEEmembership{Member,~IEEE, }\\%
	Maxime Guillaud,~\IEEEmembership{Senior Member,~IEEE, }%
	Alexis Decurninge,~\IEEEmembership{Member,~IEEE} %
	\thanks{Khac-Hoang Ngo is with Department of Electrical Engineering, Chalmers University of Technology, 41296 Gothenburg, Sweden 
		~(e-mail: \texttt{ngok@chalmers.se}).} 
	\thanks{Sheng Yang is with Laboratory of Signals and Systems, CentraleSup\'elec, Paris-Saclay University, 91190 Gif-sur-Yvette, France (e-mail: \texttt{sheng.yang@centralesupelec.fr}).} 
	\thanks{Maxime Guillaud and Alexis Decurninge are with Advanced Wireless Technology Laboratory, Paris Research Center, Huawei Technologies, 92100 Boulogne-Billancourt, France~(e-mail: \texttt{\{maxime.guillaud, alexis.decurninge\}@huawei.com}).}%
	\thanks{{This article was presented in part at the 2020 IEEE Information Theory Workshop (ITW), Riva del Garda, Italy, April 2021~\cite{Ngo2021} and the 25th International ITG Workshop on Smart Antennas~(WSA), French Riviera, France, November 2021~\cite{HoangWSA2021Riemannian}. The results for the two-user case appeared in part in~\cite{Hoang2020const_2user_MAC}.}}
	\vspace{-.5cm}
}
\newcommand{\sheng}[1]{{#1}}
\begin{document}
	
	\maketitle
	\date{\today}
	\begin{abstract}
		We consider the joint constellation design problem for the noncoherent multiple-input multiple-output multiple-access channel~(MAC). By analyzing the noncoherent maximum-likelihood detection error, we propose novel design criteria so as to minimize the error probability. {As a baseline approach, we adapt several existing design criteria for the point-to-point channel to the MAC. Furthermore, we propose new design criteria. Our first proposed design metric is the dominating term in nonasymptotic lower and upper bounds on the pairwise error probability exponent. We give a geometric interpretation of the bound using Riemannian distance in the manifold of Hermitian positive definite matrices.} 
	From an analysis of this metric at high signal-to-noise ratio, we
	obtain further simplified metrics. For any given set of constellation sizes, the proposed metrics can be optimized over the set of {constellation symbols.
		Motivated by the simplified metric, we propose a simple
		constellation construction consisting in \textit{partitioning} a single-user constellation. We also provide a generalization of our previously proposed construction based on \textit{precoding} individual constellations of lower dimensions.} For a fixed joint constellation, the design metrics can be further optimized over the per-user transmit power, especially when the users transmit at different rates. Considering unitary space-time modulation, we investigate the option of building each individual constellation as a set of truncated unitary matrices scaled by the respective transmit power. Numerical results show that our proposed metrics are meaningful, and can be used as objectives to generate constellations through numerical optimization that perform better, for the same transmission rate and power constraint, than a common pilot-based scheme and the constellations optimized with existing metrics.
\end{abstract}

\begin{IEEEkeywords}
	Multiple-input multiple-output~(MIMO), noncoherent communications, multiple-access channel {(MAC),}
	unitary space-time modulation {(USTM),} ML  detector. 
\end{IEEEkeywords}

\section{Introduction} \label{sec:introduction}
In \sheng{multiple-input multiple-output}~(MIMO) communications, it is
usually assumed that the channel state information~(CSI) is known or
estimated (typically by sending pilots and/or using feedback),
and then used for \sheng{precoding} at the transmitter and/or detection at
the receiver. This is \sheng{known as} the {\em coherent} approach. On
the other hand, in the {\em noncoherent} approach, the transmission and
reception are designed \sheng{without using
	\emph{a priori} knowledge of the
	CSI}~\sheng{\cite{Marzetta1999capacity,Hochwald2000unitaryspacetime,ZhengTse2002Grassman,Yang2013CapacityLargeMIMO,Moser,Hoang2021DoF}}. 
This paper studies the latter approach for the MIMO block-fading
multiple-access channel (MAC), i.e., the channel is assumed to remain unchanged during each coherence block of length $T$ and varies between blocks.

In the single-user case \sheng{with isotropic Rayleigh fading}, a noncoherent approach, so-called unitary space-time modulation~{(USTM)}\cite{Hochwald2000unitaryspacetime}, is to transmit $T\times M$ isotropically distributed and truncated unitary signal matrices, where $M$ is the number of transmit antennas. The subspaces of these matrices belong to the Grassmann manifold $G(\CC^T,M)$, defined as the space of $M$-dimensional subspaces in $\CC^T$~\cite{boothby1986introduction}. Information is carried by the position of the transmitted signal matrix subspace in the manifold. The intuition behind this approach is that the signal subspace is not affected by the random fading coefficients. This approach was shown to be within a vanishing gap from the high-SNR capacity if $T\ge N + \min\{M,N\}$~\cite{Hochwald2000unitaryspacetime,ZhengTse2002Grassman}, and within a constant gap if $2M\le T \le M+N$~\cite{Yang2013CapacityLargeMIMO}, where $N$ is the number of receive antennas. Motivated by this, there has been extensive research on the design of noncoherent constellations as a set of points on the Grassmann manifold. Many of these so-called Grassmannian constellations have been proposed, with a common design criterion of maximizing the minimum pairwise chordal distance between the symbols~\cite{Gohary2009GrassmanianConstellations,Kammoun2007noncoherentCodes,Hoang_cubesplit_journal,Dhillon2008constructing}.

In the multi-user case, {a simple and effective design criterion for noncoherent joint constellation remains unclear.} A straightforward extension of the single-user
coherent approach is 
to divide the coherence block into two parts:
1)~a training part in which orthogonal pilot sequences are sent to estimate the
CSI for each user, and 2)~a data transmission part in which different
users communicate in a nonorthogonal fashion~\cite{Murugesan2007optimization}.
Although this approach achieves the optimal degree-of-freedom (DoF)
region in the two-user single-input multiple-output~(SIMO) \sheng{MAC}~\cite{Hoang2018DoF_MAC}, its optimality in
terms of achievable rate and detection error probability \sheng{remains unclear}. An amplitude-based encoding scheme was proposed in~\cite{Manolakos2016noncoh_Energybased_massiveSIMO}, but the accompanying energy detector relies on a large number of receive antennas so that the average received power across all antennas concentrates. Also with massive receive antenna array, some differential encoding schemes were investigated based on phase shift keying~(PSK)~\cite{Schenk2013noncoh_massiveMIMO,Baeza2018noncoh_SIMO_MAC_DPSK_BICM} or quadrature amplitude modulation~(QAM)~\cite{Kong2016differential}. A joint constellation can also be built on PSK constellations which are absolutely additively uniquely decomposable, i.e., each individual PSK symbol can be uniquely decoded from any linear combination of two PSK constellation points with positive weights~\cite{Li2018noncoherent_massive_STBC,Yu2019design_NOMA}. 
In this scheme, the signal unique decodability relies on the asymptotic orthogonality between the users' channels when the number of antennas is large. A similar uniquely decomposable property was also exploited for QAM-based multi-user space-time modulation~\cite{Chen2020design_NOMA_massiveMIMO}. 
{In~\cite{HoangAsilomar2018multipleAccess}, we proposed a precoding-based multiple-access scheme for the SIMO MAC.} 

In this work, we consider a $K$-user MIMO MAC with Rayleigh block fading with coherence time $T \ge 2$ where user $k$ is equipped with $M_k$ antennas and the receiver with $N$ antennas. We aim to derive simple and effective joint constellation construction criteria so as to minimize the joint maximum
likelihood~(ML) symbol detection error. 
If the users could cooperate, the system could be seen as a $\big(\sum_{k=1}^{K} M_k\big) \times N$ MIMO point-to-point channel, for which {USTM} is optimal, or near-optimal, in the high-SNR regime. 
Inspired by this observation, the joint constellation can be treated as a Grassmannian constellation on $G\big(\CC^T,\sum_{k=1}^{K}M_k\big)$, 
which leads to a design criterion mimicking the max-min chordal distance criterion.
Brehler and Varanasi derived the error probability of the ML detector for the MIMO MAC in~\cite{Brehler2001noncoherent} and analyzed the high-SNR asymptotic regime. With cooperating users, this analysis led to a design criterion similar to that for a single-user MIMO channel proposed in~\cite[Eq.~(8)]{McCloudIT2002signalDesignAndConvCode}. However, for noncooperating users (as we consider here), using the same criterion would be suboptimal. 
The joint ML pairwise error exponent can be shown {to be related to} the Kullback-Leibler~(KL) divergence between the output distributions conditioned on either of the transmitted symbols~\cite{BorranTIT2003nonCoherentKLdistance}. Based on this analysis, a criterion consisting in maximizing the minimum KL divergence was proposed in~\cite{Chen2020design_NOMA_massiveMIMO}, but was used only to optimize the transmit powers and the sub-constellation assignment.

\subsubsection*{Contributions}
Following the approach of \cite{Brehler2001noncoherent}, we analyze the
worst-case pairwise error probability~{(PEP) of the ML detector} and introduce new constellation design metrics. {First, since the exact closed-form expression of the PEP is hard to optimize, we resort to maximizing a lower bound of the worst-case PEP exponent. Then, to reduce the complexity of the constellation optimization, we further simplify the metric, and propose simple constructions inspired by the simplified metric.}
Our contributions are summarized as follows.
\begin{itemize}
\item By analyzing the PEP exponent, we propose a constellation design metric for the MIMO MAC which is the dominating term in nonasymptotic lower and upper lower bounds on the worst-case PEP exponent. The lower bound is obtained via the Chernoff bound.

\item We give a geometric interpretation of the required property for a pair of joint constellation symbols to achieve a low PEP. Specifically, the PEP exponent between a pair of joint $\big(T\times \sum_{k=1}^{K}M_k\big)$-matrix symbols $\Xm$ and $\Xm'$ scales linearly with a Riemannian distance between $\Id_T + \Xm\Xm^\H$ and $\Id_T + \Xm'{\Xm'}^\H$. This metric is the length of the geodesic (shortest path) joining these matrices in the manifold of Hermitian positive definite matrices. Therefore, a pair of joint symbols $\Xm$ and $\Xm'$ attains a low PEP if the matrices $\Id_T + \Xm\Xm^\H$ and $\Id_T + \Xm'{\Xm'}^\H$ are well separated in this manifold.


\item From the dominant term of {a relaxed version of the Chernoff-based metric} at high SNR, we
obtain further simplified metrics {to reduce the complexity of constellation optimization.} We also propose an alternating optimization consisting in iteratively optimizing one user at a time to simplify the optimization.

\item {Inspired by our simplified metric, we propose a simple
	construction that consists in {\em partitioning} a single-user constellation. We also generalize our previously proposed construction based on {\em precoding} individual constellations of lower dimension.}

\item For a fixed joint constellation, we investigate power optimization and establish analytically the optimal set of per-user powers optimizing the metrics in the two-user SIMO case. {We also provide insights for power optimization in the $K$-user case.} 

\item {As a baseline approach, we adapt the existing criteria for the MIMO point-to-point channel to the MIMO MAC, namely, the max-min chordal distance criteria~\cite{Hochwald2000unitaryspacetime}, a criterion base on a high-SNR asymptotic bound on the PEP proposed in~\cite{Brehler2001asymptotic}, and a criterion based on the KL divergence~\cite{BorranTIT2003nonCoherentKLdistance}. We evaluate these baselines in terms of symbol-error rate and the value of our proposed metrics.}

\item For any given set of constellation sizes, the proposed metrics can be optimized over the set of constellation symbols. Assuming {per-user USTM,} we {implement a numerical routine} to solve the metric optimization problem, generate joint constellations, and compare with a pilot-based constellation and constellations optimized with {baseline} metrics.
Numerical results show that our Chernoff-based metric leads to significantly better symbol-error-rate performance than the state-of-the-art metrics, while our simplified metric leads to similar performance but lower optimization complexity than the existing ones.
\end{itemize}

We remark that our metrics are general for the multi-user case and, therefore, apply naturally to the single-user case. In the single-user case, our metrics lead to similar performance as the state-of-the-art metrics, which well exploit the asymptotic optimality of USTM. On the other hand, the advantage of our metrics over existing ones is more pronounced in the multi-user case, where the unitary property cannot hold for the joint constellation symbols due to the independence between users. In this case, the existing metrics for joint constellation design, relying on heuristic arguments instead of examining carefully the error exponent, do not provide performance guarantee as our proposed ones. 

\subsubsection*{Paper Organization}
\sheng{The remainder of the paper is organized as follows. In Section~\ref{sec:model}, we present the system model and
formulate the problem. In Section~\ref{sec:criteria}, we analyze the detection error probability
and derive the design metrics, based on which we propose {a} simple constellation construction in Section~\ref{sec:constructions}. In Section~\ref{sec:power_opt}, we address the transmit power optimization. We present the numerical results in Section~\ref{sec:performance} and conclude the paper in Section~\ref{sec:conclusion}. {A discussion on the extension to correlated fading, a generalization of our precoding-based design~\cite{HoangAsilomar2018multipleAccess},} and the proofs can be found in the 
appendices}.


\subsubsection*{Notation}
Random quantities are denoted with non-italic letters with sans-serif fonts, e.g., a scalar $\rv{x}$, a vector $\rvVec{v}$, and a matrix $\rvMat{M}$. 
Deterministic quantities are denoted 
with italic letters, e.g., a scalar $x$, a vector $\pmb{v}$, and a
matrix $\pmb{M}$. {The $n\times n$ identity matrix is denoted by $\Id_n$.}
The Euclidean norm is denoted by $\|\cdot\|$ and the Frobenius norm by $\|\cdot\|_{\rm F}$. The trace, transpose, conjugate, and conjugate transpose of $\pmb{M}$ are respectively $\trace[\Mm]$,
$\pmb{M}^\T, \Mm^*,$ and $\pmb{M}^\H$. The $i$-th eigenvalue of $\Mm$ in decreasing order is denoted by $\sigma_i(\Mm)$, unless otherwise specified. 
{We write $A:= B$ or $B=: A$ to denote that $A$ is defined by $B$.}
We use $\prod$ to denote the conventional or Cartesian product, depending on the factors;
$[n] \defeq \{1,2,\dots,n\}$; 
$\ind{A}$ is the indicator function, whose value is $1$ if $A$ is true and $0$ if $A$ is false.  
Given two functions $f(x)$ and $g(x)$, we write: $f(x) = O(g(x))$ if there exists a constant $c>0$ and some $x_0$ such that $|f(x)| \le c |g(x)|, \forall x \ge x_0$; $f(x) = \Theta(g(x))$ if $f(x) = O(g(x))$ and $g(x) = O(f(x))$. 
{Finally, $\Cc\Nc(\muv,\Sigmam)$ denotes the distribution of a complex proper
Gaussian random vector with mean $\muv$ and covariance matrix~$\Sigmam$.}

\newcommand{\Mtot}{{M_{\rm tot}}}
\section{System Model and Problem Formulation} \label{sec:model}
We consider a MIMO MAC consisting of a receiver equipped with $N$ antennas and $K$ users, user $k$ with $M_k$ antennas, $k \in [K]$. The channel is assumed to be flat and block fading with equal-length and synchronous (across the users) coherence intervals of length $T \ge 2$. That is, the channel matrix $\rvMat{H}_k \in \CC^{N\times M_k}$ of user~$k$ remains constant within each coherence block of $T$ channel uses and changes between blocks.
Furthermore, the {\em distribution} of $\rvMat{H}_k$ is assumed to be known, but its {\em realizations} are unknown to both the users and the receiver. 
We consider independent and identically distributed~(i.i.d.) Rayleigh fading,\footnote{{We discuss the extension to the spatially correlated fading case in Appendix~\ref{app:correlated_fading}.}} namely, the rows of $\rvMat{H}  := [\rvMat{H}_1 \ \rvMat{H}_2 \ \dots \ \rvMat{H}_K]$ are independent and follow $\Cc\Nc\big(\mathbf{0},\Id_\Mtot\big)$ where $\Mtot \defeq \sum_{k = 1}^{K}M_k$. {Motivated by \cite[Th.~1]{Marzetta1999capacity}, we assume that $ \Mtot \le T$.} 
Within {a representative coherence block,} each user $k$
sends a signal matrix\sheng{~symbol} $\rvMat{X}_k \in \CC^{T\times M_k}$, and the
receiver \sheng{observes} 
\begin{equation}
\rvMat{Y} = \sum_{k=1}^{K}\rvMat{X}_k \rvMat{H}_k^\T + \rvMat{Z}, 
\label{eq:channel_model}
\end{equation}
where the additive noise $\rvMat{Z} \in \CC^{T\times N}$ has i.i.d.~$\Cc\Nc(0,1)$ entries independent of $\{\rvMat{H}_k\}$, {and we omitted the block index for notational simplicity.} 


We assume that the transmitted symbol $\rvMat{X}_k$ takes value from a {\em finite constellation} $\Xcal_k$ of fixed size $|\Xc_k| = 2^{R_k T}$ with equally likely symbols, where $R_k$ (bits/channel use) is the transmission rate. 
{Let $P_k \defeq \frac{1}{T|\Xc_k|} \sum_{\Xm_k \in \Xc_k} \|\Xm_k\|_{\rm F}^2$ be the average normalized symbol power of user $k$. We consider the power constraint $P_k \le P$, $\forall k \in [K]$. Thus, $P$ is an upper bound of the average per-user SNR.
} 
We assume without loss of generality~{(w.l.o.g.)} that $\max_k P_k = P$.
Let us 
rewrite~\eqref{eq:channel_model} as  
\begin{equation}
\rvMat{Y}  = [
\rvMat{X}_1 \ \rvMat{X}_2 \ \dots \ \rvMat{X}_K] [
\rvMat{H}_1 \ \rvMat{H}_2 \ \dots \ \rvMat{H}_K]^\T + \rvMat{Z} = \rvMat{X} \rvMat{H}^\T + \rvMat{Z}, \label{eq:model_concatenated}
\end{equation}%
where the concatenated signal matrix $\rvMat{X} := [\rvMat{X}_1 \ \rvMat{X}_2 \ \dots \ \rvMat{X}_K]$ takes value from 
\begin{align}
\Xcal := \big\{ [
\Xm_1\ \Xm_2 \dots \Xm_K]:\ \Xm_k\in\Xcal_k \big\} = \prod_{k=1}^{K} \Xc_k.
\end{align}
Our goal is to derive the
desirable properties of the set tuple $(\Xcal_1, \Xcal_2, \dots, \Xcal_K)$ for a given rate
tuple $(R_1,R_2, \dots, R_K)$ to achieve \sheng{low {\em symbol detection error
	probability}}. 

\begin{remark}
\sheng{In the trivial case where only one of the users has non-zero
	rate, the joint constellation design problem boils down to the single-user
	constellation design.}
\end{remark}

Given $\rvMat{X} = \Xm$, the received signal $\rvMat{Y}$ is a Gaussian matrix with $N$ independent columns having zero mean and the same covariance matrix $\Id_T+\Xm \Xm^\H$. Thus, 
the likelihood function $p_{\rvMat{Y} | \rvMat{X}}$ is given by
\begin{equation}
	p_{\rvMat{Y} | \rvMat{X}}(\Ym | \Xm) = \frac{\exp(-\trace(\Ym^\H(\Id_T+\Xm\Xm^\H)^{-1}\Ym ))}{\pi^{NT}\det^N(\Id_T+\Xm\Xm^\H)}.
\label{eq:likelihood}
\end{equation}
Therefore, given the received \sheng{symbol} $\rvMat{Y} = \Ym$, the joint-user ML symbol detector 
is
\begin{align} \label{eq:MLdecoder}
\Xi(\Ym) 
= \arg \max_{\Xm \in \Xc} 
&\big(-\trace\big((\Id_T+\Xm\Xm^\H)^{-1}\Ym \Ym^\H \big) \notag \\
&\quad - N\ln\det(\Id_T+\Xm\Xm^\H)\big).
\end{align} 
We aim to design $\Xc$ so as to minimize the ML detection error $P_e(\Xcal) = \P[\Xi(\rvMat{Y}) \ne \rvMat{X}]$, i.e., 
\begin{equation}
\Xc^* = \arg{\min_{\Xc}} \; P_e(\Xcal), \label{eq:criterion_MLerror}
\end{equation}
{subject to $P_k \le P$, $\forall k$, and $\max_k P_k = P$.}
\sheng{Since} $p_{\rvMat{Y} | \rvMat{X}}(\Ym | \Xm)$ depends on $\Xm$ only
through $\Xm \Xm^\H$, \sheng{the following proposition is
straightforward.}

\begin{proposition}[Identifiability condition] \label{prop:identifiability}
For the joint ML detection error probability $P_e(\Xc)$ to vanish at high SNR, the joint constellation $\Xc$ must satisfy $\Xm \Xm^\H \ne {\Xm'} {\Xm'}^\H$ 
for any pair \sheng{of distinct symbols} $\Xm$ and ${\Xm'}$ in $\Xc$. 
\end{proposition}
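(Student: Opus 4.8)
The plan is to establish the statement by contraposition: I would show that if some pair of distinct symbols violates the stated condition, then $P_e(\Xc)$ is bounded away from zero uniformly in the SNR and hence cannot vanish. The starting observation, already recorded above, is that the likelihood~\eqref{eq:likelihood} depends on the transmitted symbol $\Xm$ only through $\Xm\Xm^\H$, and this holds at every SNR since the noise has fixed unit variance and the SNR is controlled through the signal power carried by $\Xm$. Thus, if there exist distinct $\Xm,{\Xm'}\in\Xc$ with $\Xm\Xm^\H={\Xm'}{\Xm'}^\H$, then $p_{\rvMat{Y}|\rvMat{X}}(\Ym|\Xm)=p_{\rvMat{Y}|\rvMat{X}}(\Ym|{\Xm'})$ for all $\Ym$, i.e.\ the two symbols induce identical output laws and are statistically indistinguishable. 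For the natural high-SNR scaling, in which the whole constellation is multiplied by a common power factor, this collision is preserved, so the indistinguishability persists as the SNR grows.

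Next I would convert this indistinguishability into a lower bound on $P_e(\Xc)$. Because the two conditional laws coincide, the ML decision events satisfy $\P[\Xi(\rvMat{Y})=\Xm\cond\rvMat{X}=\Xm]=\P[\Xi(\rvMat{Y})=\Xm\cond\rvMat{X}={\Xm'}]\eqdef a$ and $\P[\Xi(\rvMat{Y})={\Xm'}\cond\rvMat{X}={\Xm'}]=\P[\Xi(\rvMat{Y})={\Xm'}\cond\rvMat{X}=\Xm]\eqdef b$. Since the events $\{\Xi=\Xm\}$ and $\{\Xi={\Xm'}\}$ are disjoint, $a+b=\P[\Xi(\rvMat{Y})\in\{\Xm,{\Xm'}\}\cond\rvMat{X}=\Xm]\le 1$, so the combined conditional error of the pair obeys
\[
\P[\Xi(\rvMat{Y})\ne\Xm\cond\rvMat{X}=\Xm]+\P[\Xi(\rvMat{Y})\ne{\Xm'}\cond\rvMat{X}={\Xm'}]=2-a-b\ge 1.
\]
As each symbol is sent with probability $1/|\Xc|$, this pair alone contributes at least $1/|\Xc|$ to $P_e(\Xc)$, which yields $P_e(\Xc)\ge 1/|\Xc|>0$ at every SNR; letting the SNR grow then shows that $P_e(\Xc)$ does not vanish, which is the desired contrapositive.

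I do not expect a genuine obstacle here, as the result is essentially a consequence of the sufficiency of $\Xm\Xm^\H$ for the output law; the text itself calls the proposition straightforward. The only point deserving a line of care is the tie-breaking in the $\arg\max$ of~\eqref{eq:MLdecoder}: whenever $\Xm$ or ${\Xm'}$ attains the maximum likelihood the two are necessarily tied, and how the tie is resolved affects the split between $a$ and $b$. The bound above is robust to this, however, since it uses only the disjointness $a+b\le1$ and never assumes how ties are broken, so the conclusion holds for any tie-breaking convention.
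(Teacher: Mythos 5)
Your argument is correct and is precisely the elaboration the paper has in mind: the paper offers no explicit proof, merely noting that the proposition is "straightforward" because the likelihood \eqref{eq:likelihood} depends on $\Xm$ only through $\Xm\Xm^\H$, which is exactly the indistinguishability observation you start from. Your quantitative lower bound $P_e(\Xc)\ge 1/|\Xc|$ via $a+b\le 1$, and your remark on tie-breaking, are sound and fill in the details the paper leaves implicit.
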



In the next section, we analyze the error probability and derive more specific design criteria.

\section{Constellation Design Criteria} \label{sec:criteria}
With $\rvMat{X}$ uniformly distributed
in $\Xc$, $P_e(\Xcal)$ can be written as
\begin{equation} \label{eq:joint_SER}
P_e(\Xcal) = \frac{1}{|\Xcal|}\sum_{\Xm\in\Xcal}\mathbb{P}\left(
\Xi(\rvMat{Y}) \ne \Xm | \rvMat{X} = \Xm\right).
\end{equation}%
We denote the pairwise error event as $\{\Xm\to{\Xm'}\} \defeq 
\{ p_{\rvMat{Y} | \rvMat{X}}(\rvMat{Y} | \Xm) \le
p_{\rvMat{Y} | \rvMat{X}}(\rvMat{Y} | {\Xm'}) | \rvMat{X} = \Xm\}$.
{For every given $\rvMat{X} = \Xm$, the ML detection error event $\{ \Xi(\rvMat{Y}) \ne \Xm | \rvMat{X} = \Xm\}$ is the union of the pairwise error events denoted by $\bigcup_{{\Xm'} \in\Xcal\setminus \{\Xm\}} \{\Xm\to{\Xm'}\}$. Therefore, $\mathbb{P}\left(
\Xi(\rvMat{Y}) \ne \Xm | \rvMat{X} = \Xm\right) \ge \mathbb{P}(\Xm\to{\Xm'})$ for every $\Xm' \in \Xc \setminus \{\Xm\}$, which implies that  $\mathbb{P}\left(
\Xi(\rvMat{Y}) \ne \Xm | \rvMat{X} = \Xm\right) \ge  \max_{\Xm' \in \Xc \setminus \{\Xm\}}
\mathbb{P}(\Xm\to{\Xm'})$, and thus
\begin{align}
	\sum_{\Xm\in\Xcal} \mathbb{P}\left(
	\Xi(\rvMat{Y}) \ne \Xm | \rvMat{X} = \Xm\right) &\ge  \max_{\Xm \in \Xc} \max_{\Xm' \in \Xc \setminus \{\Xm\}}
	\mathbb{P}(\Xm\to{\Xm'}) \\
	&= \max_{\Xm \ne \Xm' \in \Xc}
	\mathbb{P}(\Xm\to{\Xm'}) \label{eq:tmp400}.
\end{align}
Furthermore, for every $\Xm \in \Xc$,
\begin{align}
	&\mathbb{P}\left(
	\Xi(\rvMat{Y}) \ne \Xm | \rvMat{X} = \Xm\right) \notag \\
	&\le \sum_{\Xm' \in \Xc \setminus \{\Xm\}} \mathbb{P}(\Xm\to{\Xm'}) \label{eq:tmp405}\\
	&\le (|\Xcal|-1) \max_{\Xm \ne \Xm' \in \Xc}
	\mathbb{P}(\Xm\to{\Xm'}) \label{eq:tmp406}
\end{align}
where \eqref{eq:tmp405} follows from the union bound and \eqref{eq:tmp406} holds by replacing $\mathbb{P}(\Xm\to{\Xm'})$ by its maximal value over $\Xm \ne \Xm' \in \Xc$.}
{Introducing \eqref{eq:tmp400} and \eqref{eq:tmp406} into \eqref{eq:joint_SER},} 
we have the
following upper and lower bounds on $P_e(\Xcal)$
\begin{multline}
\frac{1}{|\Xcal|} \max_{\Xm \ne \Xm' \in \Xc}
\mathbb{P}(\Xm\to{\Xm'}) \le P_e(\Xcal) 
\\ \le (|\Xcal|-1) \max_{\Xm \ne \Xm' \in \Xc}
\mathbb{P}(\Xm\to{\Xm'}). \label{eq:Pe_unionbound}
\end{multline}%
We see that for a given constellation size $|\Xcal|$, the symbol
detection error $P_e(\Xc)$ vanishes if and only if the \emph{worst-case PEP}, $\displaystyle\max_{\Xm \ne \Xm' \in \Xc}
\mathbb{P}(\Xm\to{\Xm'})$, vanishes. Therefore, {our goal from now on
is to minimize the worst-case PEP.} 

Following~\cite[Proposition 1]{Brehler2001noncoherent}, the PEP $\mathbb{P}(\Xm\to{\Xm'})$ can be derived in closed form as given in Appendix~\ref{app:closedFormPEP}. This closed-form expression, however, {is not exploitable for optimization and} does not bring clear insights into the constellation design. A high-SNR asymptotic expression of the PEP was given in~\cite[Proposition 3]{Brehler2001noncoherent}, but is also hard to exploit. {Therefore, one needs to resort to further simplified design criteria.} 

\subsection{{Baseline Approach and Criteria}}
{A baseline approach is to treat the joint constellation as a constellation of an $\Mtot \times N$ MIMO point-to-point channel and adapt the existing criteria for that channel. This approach results in three criteria given below.

\subsubsection{Adapting the Max-Min Chordal Distance Criterion} 

By treating~\eqref{eq:model_concatenated} as a point-to-point channel, one can consider {USTM} and regard $\frac{\rvMat{X}}{\|\rvMat{X}\|_{\rm F}}$ as a truncated unitary matrix. Then, according to \cite[Th.~6]{Hochwald2000unitaryspacetime}, a Chernoff upper bound on the PEP $\mathbb{P}(\Xm\to{\Xm'})$ is given by a function of the squared singular values $\bar{\lambda}^2_1, \bar{\lambda}^2_2,\dots, \bar{\lambda}^2_{M_{\rm tot}}$ of the matrix $\frac{\Xm^\H}{\|\Xm\|_{\rm F}} \frac{\Xm'}{\|\Xm'\|_{\rm F}}$ (see~\cite[Eq.~(18)]{Hochwald2000unitaryspacetime}). As argued in~\cite[Sec.~2]{Agrawal2001MIMOconstellations}, this upper bound is increasing with a term dominated by $\bar{\lambda}^2_1 + \bar{\lambda}^2_2 + \dots + \bar{\lambda}^2_{M_{\rm tot}} = \trace[\frac{\Xm\Xm^\H \Xm'{\Xm'}^\H}{\|\Xm\|_{\rm F}^2 \|\Xm'\|_{\rm F}^2}]$. This leads to the design criterion\footnote{{A variant of this criterion proposed in~\cite{Gohary2009GrassmanianConstellations} consists in minimizing $\bar{\lambda}_1 + \bar{\lambda}_2 + \dots + \bar{\lambda}_{M_{\rm tot}} = \trace[\frac{\Xm^\H \Xm'}{\|\Xm\|_{\rm F} \|\Xm'\|_{\rm F}}]$. However, numerical simulations suggest that the resulting performance is similar. Therefore, we focus on \eqref{eq:criterion_maxminChordal} in this paper.}}
\begin{equation} \label{eq:criterion_maxminChordal}
(\text{Min-$m_1$}) \colon \Xc^* = \arg\min_{\Xc} \underbrace{\max_{\Xm \ne \Xm' \in \Xc} \textstyle \trace[\frac{\Xm \Xm^\H \Xm'{\Xm'}^\H}{\|\Xm\|_{\rm F}^2 \|\Xm'\|_{\rm F}^2}]}_{=: m_1(\Xc)}.
\end{equation}
This criterion coincides with the max-min chordal distance criterion for Grassmannian packing considered in~\cite{Conway1996packing,Dhillon2008constructing}.

\subsubsection{Adapting a High-SNR Asymptotic Bound on the PEP}

Another design metric for the point-to-point channel based on a high-SNR asymptotic bound on the PEP~\cite{Brehler2001asymptotic} and the union bound on the average error probability was proposed in~\cite[Eq.~(8)]{McCloudIT2002signalDesignAndConvCode}. Adopting this metric, we consider the following criterion
\begin{align} 
&(\text{Min-$m_2$}) \colon  \\ 
&\Xc^* = \arg\min_{\Xc} \ \underbrace{\ln \sum_{\Xm\ne \Xm'\in \Xc} \det^{-N}\Big(\Id_T - M_{\rm tot}^2\tfrac{ \Xm\Xm^\H\Xm'{\Xm'}^\H}{\|\Xm\|_{\rm F}^2 \|\Xm'\|_{\rm F}^2}\Big)}_{=: m_2(\Xc)}. \label{eq:criterion_minsumDet}
\end{align}

\subsubsection{Adapting a Criterion Based on the KL Divergence}
We note that minimizing the worst-case PEP is equivalent to maximizing the worst-case PEP exponent:
\begin{equation} 
\Xc^* = \arg\max_{\Xc} \min_{\Xm \ne \Xm' \in \Xc} \left(-\frac{1}{N} \ln \mathbb{P}(\Xm\to{\Xm'}) \right). \label{eq:criterion_PEP_exponent}
\end{equation}
An analysis of the PEP exponent follows from a relation between the joint symbol detection problem and hypothesis testing. 
Given the received signal $\{ \yv_n\}_{n=1}^N$, let us consider
two hypotheses: ${H}_0:\ \{ \yv_n\}_{n=1}^N \sim
\Cc\Nc(\mathbf{0},\Id_T + \Xm \Xm^\H)$ and $H_1:\ \{
\yv_i\}_{i=1}^N \sim \Cc\Nc\big(\mathbf{0},\Id_T + {\Xm'}
{\Xm'}^\H\big)$ where $\{ \yv_n\}_{n=1}^N$ are realizations of $N$ columns of $\rvMat{Y}$. Then, the detection of the transmitted joint symbol between $\Xm$ and $\Xm'$ can be seen as a hypothesis test between $H_0$ and $H_1$. The PEP $\P (\Xm\to{\Xm'})$ can be seen as the type-1 error probability of the likelihood ratio test. Then, it follows from the Chernoff-Stein Lemma~\cite[Th.~11.8.3]{Cover2006elements} that as $N\to\infty$, the {\em lowest achievable} error exponent for $\P[\Xm \to \Xm']$, with the constraint that $\P[\Xm' \to \Xm]$ is smaller than a given threshold\footnote{{In Appendix~\ref{proof:KL}, we show that $\P({\Xm}\to {\Xm}') \to 0$ as $N\to \infty$ for any pair of distinct symbols $\Xm$ and $\Xm'$ of a joint constellation satisfying the identifiability condition in Proposition~\ref{prop:identifiability}. Swapping the symbols' roles, we obtain that $\P({\Xm}'\to {\Xm}) \to 0$ as $N\to \infty$. Therefore, for any $\epsilon \in (0,1/2)$, there exists $N_\epsilon > 0$ such that $\P[\Xm' \to \Xm] < \epsilon$ for $N > N_\epsilon$.}} $\epsilon \in (0,1/2)$, is given by $D\big(\Cc\Nc(\mathbf{0},\Id_T + \Xm \Xm^\H) \|  \Cc\Nc\big(\mathbf{0},\Id_T + {\Xm'} {\Xm'}^\H\big)\big)$ where $D(\cdot\|\cdot)$ denotes the KL divergence.
The convergence of the PEP exponent to the KL divergence was also exploited in~\cite{BorranTIT2003nonCoherentKLdistance,Chen2020design_NOMA_massiveMIMO,Li2021constellation}.
However, note that this error exponent is not achieved with the considered joint-ML detector~\eqref{eq:MLdecoder}, but with a detector which is highly biased in favor of $H_0$~\cite{BorranTIT2003nonCoherentKLdistance}. It serves as an upper bound on the PEP exponent of the joint-ML detector. In fact, the performance of the joint-ML detector is related to the KL divergence as shown in~\cite[Lemma~3]{BorranTIT2003nonCoherentKLdistance}. This result is stated in the following to be self-contained. 
\begin{proposition}[{Relation of the joint-ML detection error and the KL divergence~\cite[Lemma~3]{BorranTIT2003nonCoherentKLdistance}}] \label{prop:KL}
Let $\{ \yv_n\}_{n=1}^N \in \Yc$ be drawn~i.i.d. according to the probability density function~(pdf)~$p_0$ on $\Yc$. Let $p_1$ and $p_2$ be pdfs on $\Yc$ with $0 < D(p_0 \| p_2) < D(p_0 \| p_1) < \infty$. Consider two hypothesis tests, one between $\{ \yv_n\}_{n=1}^N \sim p_0$ and $\{ \yv_n\}_{n=1}^N \sim p_1$, and the other between $\{ \yv_n\}_{n=1}^N \sim p_0$ and $\{ \yv_n\}_{n=1}^N \sim p_2$. Let $L_i = \prod_{n = 1}^{N} \frac{p_0(\yv_n)}{p_i(\yv_n)}$ denote the likelihood ratios for the two tests so that the probability of mistaking $p_0$ for $p_i$ using the ML detector is given by $\P[p_0 \to p_i] = \P_{p_0}[L_i < 1]$, $i\in \{1,2\}$. Let $\Delta D \defeq D(p_0 \| p_1) - D(p_0 \| p_2) > 0$. It holds that 
\begin{align}
\P_{p_0}\left[L_1 < \exp\Big(\frac{N \Delta D}{2}\Big) L_2\right] \to 0, \quad  \text{as~} N\to\infty.
\end{align}
\end{proposition}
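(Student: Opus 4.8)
The plan is to reduce the claim to the weak law of large numbers~(WLLN). The first step is to take logarithms: the event $\{L_1 < \exp(N\Delta D/2)\,L_2\}$ is equivalent to $\{\ln L_1 - \ln L_2 < \tfrac{N\Delta D}{2}\}$, and since $L_i = \prod_{n=1}^N p_0(\yv_n)/p_i(\yv_n)$, the common factors $p_0(\yv_n)$ cancel in the difference, leaving
\[
\ln L_1 - \ln L_2 = \sum_{n=1}^N \ln\frac{p_2(\yv_n)}{p_1(\yv_n)} =: N\,\bar{S}_N,
\]
where $\bar{S}_N$ is the empirical average of the i.i.d.\ random variables $\ln\frac{p_2(\yv_n)}{p_1(\yv_n)}$ under $p_0$. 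It thus suffices to prove $\P_{p_0}[\bar{S}_N < \tfrac{\Delta D}{2}] \to 0$.

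Next I would identify the limit of $\bar{S}_N$. Writing $\ln\frac{p_2}{p_1} = \ln\frac{p_0}{p_1} - \ln\frac{p_0}{p_2}$ gives
\[
\E_{p_0}\Big[\ln\frac{p_2(\yv)}{p_1(\yv)}\Big] = D(p_0\|p_1) - D(p_0\|p_2) = \Delta D,
\]
so by the WLLN, $\bar{S}_N \to \Delta D$ in probability. The argument then closes using the strict gap $\Delta D > \tfrac{\Delta D}{2}$, which holds precisely because $\Delta D > 0$: as the threshold $\tfrac{\Delta D}{2}$ lies strictly below the limit,
\[
\P_{p_0}\big[\bar{S}_N < \tfrac{\Delta D}{2}\big] \le \P_{p_0}\big[|\bar{S}_N - \Delta D| \ge \tfrac{\Delta D}{2}\big] \to 0, \quad N\to\infty.
\]

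The \emph{main technical point} to be careful about is the integrability hypothesis needed to invoke the WLLN: one must verify that $\ln\frac{p_2}{p_1}$ is $p_0$-integrable, not merely that its signed mean $\Delta D$ is finite. This reduces to showing $\E_{p_0}\big[\big|\ln\tfrac{p_0}{p_i}\big|\big] < \infty$ for $i\in\{1,2\}$. The positive part is controlled by the finiteness of $D(p_0\|p_i)$, while the negative part is bounded via the elementary inequality $\ln t \le t - 1$: indeed $\max\{\ln\tfrac{p_i}{p_0},0\} \le \tfrac{p_i}{p_0}\,\ind{p_i \ge p_0}$ pointwise, whence $\E_{p_0}\big[\max\{\ln\tfrac{p_i}{p_0},0\}\big] \le \int_{\{p_i \ge p_0\}} p_i \le 1$. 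With integrability secured, the WLLN applies and the remaining steps are routine. (Alternatively, one could invoke a Chernoff/large-deviation bound to obtain an explicit exponential convergence rate, but convergence in probability is all that the statement requires.)
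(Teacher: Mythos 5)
The paper does not prove this proposition at all: it is imported verbatim from the cited reference (Borran et al., Lemma~3) and stated only "to be self-contained," so there is no in-paper argument to compare against. Your proof is correct and is essentially the canonical one — reduce $\ln L_1 - \ln L_2$ to an i.i.d.\ sum with $p_0$-mean $\Delta D$ and apply the weak law of large numbers against the strictly smaller threshold $\Delta D/2$ — which is also how the original reference argues. Your attention to the integrability hypothesis is the right concern and is resolved correctly: the bound $\E_{p_0}\bigl[\max\{\ln\tfrac{p_i}{p_0},0\}\bigr]\le 1$ makes the negative part finite, and finiteness of $D(p_0\|p_i)$ then forces the positive part to be finite as well, so the WLLN applies.
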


Proposition~\ref{prop:KL} says that, for $N$ large enough, the likelihood ratio of the first test is greater than the likelihood ratio of the second test with high probability. This implies that for large $N$, the first test\textemdash corresponding to the hypothesis with greater KL divergence from the true channel output distribution\textemdash has a lower error probability than the second test. In other words, a pair of joint symbols that leads to higher KL divergence are less likely to be misdetected for each other. Motivated by this, the KL divergence $D\big(\Cc\Nc(\mathbf{0},\Id_T + \Xm \Xm^\H) \|  \Cc\Nc\big(\mathbf{0},\Id_T + {\Xm'} {\Xm'}^\H\big)\big) = \meanLLR$ can be used as a design metric for the joint constellation design, as proposed for the point-to-point channel is~\cite[Eq.~(32)]{BorranTIT2003nonCoherentKLdistance}. 
Specifically, 
we consider the following design criterion
\begin{equation} 
(\text{Max-$e_{\min}$}) \colon \Xc^* = \arg\max_{\Xc} \ \underbrace{\frac{1}{N} \displaystyle\min_{\Xm \ne \Xm' \in \Xc}  \E\big[\rv{L}(\Xm\to{\Xm'})\big]}_{=: e_{\min}(\Xc)} \label{eq:criterion_minMean}
\end{equation}
where it follows from \eqref{eq:LLR0} and $\mathbb{E}[\rvMat{Y} \rvMat{Y}^\H] = N \big(\Id_T+ \Xm\Xm^\H)$ that
\begin{align}
\meanLLR
&= N\ln\frac{\det( \Id_T+  {\Xm'}{\Xm'}^\H
)}{\det\left( \Id_T+  \Xm \Xm^\H \right)} - N \\
&\quad + 
N \trace\big( ( \Id_T
+ {\Xm'}{\Xm'}^\H )^{-1} \big) \notag \\ &\quad+
N\trace\big( ( \Id_T+ {\Xm'}{\Xm'}^\H )^{-1} \Xm
\Xm^\H \big). \quad \label{eq:meanLLR}
\end{align}
}

{The criteria Min-$m_1$~\eqref{eq:criterion_maxminChordal}, Min-$m_2$~\eqref{eq:criterion_minsumDet}, and Max-$e_{\min}$~\eqref{eq:criterion_minMean} serve as baselines for our proposed criteria.}
In the following, we present our approach and derive new design criteria.

\subsection{{Proposed} Criteria} \label{sec:emin_dmin}
{Let us rewrite the PEP as
\begin{align}
\mathbb{P}(\Xm\to{\Xm'}) 
&= \mathbb{P}\left(\ln
\frac{p_{\rvMat{Y} | \rvMat{X}}(\rvMat{Y} | \Xm)}{p_{\rvMat{Y} | \rvMat{X}}(\rvMat{Y} | {\Xm'})} \le 0 \right) \\
&= \mathbb{P}\big(\rv{L}(\Xm\to{\Xm'}) \le 0 \big)
\label{eq:PEP}
\end{align}
with the pairwise log-likelihood ratio~(PLLR)  $\rv{L}(\Xm\to{\Xm'})$ defined as $\ln
\frac{p_{\rvMat{Y} | \rvMat{X}}(\rvMat{Y} | \Xm)}{p_{\rvMat{Y} | \rvMat{X}}(\rvMat{Y} | {\Xm'})}$. 
Using \eqref{eq:likelihood}, we obtain 
\begin{align}
&\rv{L}(\Xm\to{\Xm'})
\notag \\
&= N\ln\frac{\det\big( \Id_T+ {\Xm'}
{\Xm'}^\H \big)}{\det\left( \Id_T+ \Xm \Xm^\H
\right)} \\
&\quad - \trace\Big( \big(( \Id_T+ \Xm \Xm^\H )^{-1} - ( \Id_T+ {\Xm'} {\Xm'}^\H )^{-1} \big) \rvMat{Y} \rvMat{Y}^\H \Big).\quad  \label{eq:LLR0}
\end{align}	
Hereafter, we use $\{\lambda_i\}_{i=1}^T$ to denote the eigenvalues of the matrix $\Gammam \defeq (\Id_T+\Xm \Xm^\H)(\Id_T+{\Xm'}{\Xm'}^\H)^{-1}$. Note that $\lambda_i \ge 0$, $\forall i \in [T]$.
The following expression of the PEP will be useful in our analysis.
\begin{lemma} \label{lemma:PEP_expression}
The PEP can be expressed as
\begin{align}
\mathbb{P}(\Xm\to{\Xm'}) = \P[\sum_{i=1}^{T} (\lambda_i - 1) \rv{g}_i \le N \sum_{i=1}^T \ln \lambda_i],
\end{align}
where $\{\rv{g}_i\}_{i=1}^T$ are independent Gamma random variables with shape $N$ and scale $1$.
\end{lemma}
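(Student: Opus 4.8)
The plan is to turn the PEP event $\{\rv{L}(\Xm\to{\Xm'}) \le 0\}$ from \eqref{eq:PEP}--\eqref{eq:LLR0} into a statement about a weighted sum of independent Gamma variables, by whitening the received signal and diagonalizing the relevant matrix pencil. Write $A := \Id_T + \Xm\Xm^\H$ and $B := \Id_T + {\Xm'}{\Xm'}^\H$, both Hermitian positive definite. Conditioned on $\rvMat{X} = \Xm$, the columns of $\rvMat{Y}$ are i.i.d.\ $\CN[\zerov, A]$, so I would factor $\rvMat{Y} = A^{1/2}\rvMat{W}$ with $\rvMat{W}\in\CC^{T\times N}$ having i.i.d.\ $\CN[0,1]$ entries. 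Substituting into \eqref{eq:LLR0} and using cyclicity of the trace reduces the data-dependent term to $\trace[(\Id_T - A^{1/2}B^{-1}A^{1/2})\rvMat{W}\rvMat{W}^\H]$.

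Next I would diagonalize $A^{1/2}B^{-1}A^{1/2} = \Um\Lambdam\Um^\H$ with $\Um$ unitary and $\Lambdam = \diag[\lambda_1,\dots,\lambda_T]$. The key observation is that $A^{1/2}B^{-1}A^{1/2} = A^{-1/2}\Gammam A^{1/2}$ is similar to $\Gammam = AB^{-1}$, so its eigenvalues are exactly the $\{\lambda_i\}$ defined in the statement (and are strictly positive since $A,B \succ 0$). Pulling $\Um$ through the trace and setting $\tilde{\rvMat{W}} := \Um^\H\rvMat{W}$, the term becomes $\sum_{i=1}^T (1-\lambda_i)\,[\tilde{\rvMat{W}}\tilde{\rvMat{W}}^\H]_{ii}$.

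The decisive ingredient is the \emph{unitary invariance} of the standard complex Gaussian law: since $\Um$ is deterministic unitary, $\tilde{\rvMat{W}}$ again has i.i.d.\ $\CN[0,1]$ entries. Hence $\rv{g}_i := [\tilde{\rvMat{W}}\tilde{\rvMat{W}}^\H]_{ii} = \sum_{n=1}^N |\tilde w_{in}|^2$ is a sum of $N$ i.i.d.\ unit-mean exponential variables, i.e.\ Gamma with shape $N$ and scale $1$, and the $\rv{g}_i$ are mutually independent because they come from distinct rows of $\tilde{\rvMat{W}}$. For the remaining deterministic term I would use $\sum_{i=1}^T \ln\lambda_i = \ln\det\Gammam = \ln(\det A/\det B)$, so the log-det coefficient in \eqref{eq:LLR0} equals $-N\sum_i \ln\lambda_i$. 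Assembling, the event $\rv{L}(\Xm\to{\Xm'}) \le 0$ reads $-N\sum_i\ln\lambda_i \le -\sum_i(\lambda_i-1)\rv{g}_i$, and multiplying by $-1$ gives exactly $\sum_i(\lambda_i-1)\rv{g}_i \le N\sum_i\ln\lambda_i$, as claimed.

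The routine algebra (the trace manipulations and the Gamma identification) is standard; the only places demanding care are the similarity argument guaranteeing that the diagonalizing eigenvalues coincide with the $\lambda_i$ of $\Gammam$, and the final sign bookkeeping, since two sign flips occur (one from $\ln(\det B/\det A) = -\sum_i\ln\lambda_i$, and one from multiplying the inequality by $-1$). I expect this sign tracking, rather than any genuine mathematical difficulty, to be the main obstacle to a clean write-up.
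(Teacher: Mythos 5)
Your proposal is correct and follows essentially the same route as the paper's proof: whitening $\rvMat{Y}$ via $A^{1/2}$, diagonalizing $A^{1/2}B^{-1}A^{1/2}$ (which is similar to $\Gammam$ and hence has eigenvalues $\{\lambda_i\}$), and invoking unitary invariance of the standard complex Gaussian to identify the $\rv{g}_i$ as independent Gamma$(N,1)$ variables. The sign bookkeeping you flag works out exactly as in the paper's equation for the PLLR, so nothing further is needed.
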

\begin{proof}
See Appendix~\ref{proof:PEP_expression}.
\end{proof}
\subsubsection{A Criterion Based on Nonasymptotic Bounds}
The following proposition gives a lower bound on the PEP exponent. 
\begin{proposition}[PEP exponent's Chernoff lower bound] \label{prop:Chernoff}
It holds that, for every $s\in[0,1]$,
\begin{align}
&-\frac{1}{N} \ln \P(\Xm\to{\Xm'}) \notag \\
&\ge J_s(\Xm,\Xm')  \label{eq:bound_PEP_exponent}\\
&\defeq \ln\det(s(\Id_T + \Xm'{\Xm'}^\H)^{-1} + (1-s)(\Id_T + \Xm\Xm^\H)^{-1}) \notag \\
&\quad - \Big[s\ln\det[(\Id_T + \Xm'{\Xm'}^\H)^{-1}] \notag \\
&\qquad ~+ (1-s)\ln\det[(\Id_T + \Xm\Xm^\H)^{-1}]\Big].
\end{align}
\end{proposition}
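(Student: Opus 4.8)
The plan is to apply the Chernoff bound directly to the representation of the PEP furnished by Lemma~\ref{lemma:PEP_expression}. By that lemma, writing $c \defeq N\sum_{i=1}^T \ln\lambda_i$ and $\rv{W} \defeq \sum_{i=1}^T (\lambda_i-1)\rv{g}_i$, we have $\P(\Xm\to{\Xm'}) = \P[\rv{W} \le c]$, where the $\rv{g}_i$ are i.i.d.\ Gamma random variables with shape $N$ and scale $1$. Since this is a lower-tail probability, for any $s \ge 0$ Markov's inequality applied to $e^{-s\rv{W}}$ gives $\P[\rv{W}\le c] = \P[e^{-s\rv{W}} \ge e^{-sc}] \le e^{sc}\,\E[e^{-s\rv{W}}]$. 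Taking $-\tfrac1N\ln$ of both sides turns this into a lower bound on the PEP exponent, so the whole argument reduces to evaluating the moment generating function $\E[e^{-s\rv{W}}]$ and then simplifying the result.

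First I would evaluate the MGF using independence of the $\rv{g}_i$ together with the closed form of the Gamma MGF, $\E[e^{u\rv{g}_i}] = (1-u)^{-N}$ valid for $u<1$. With $u = -s(\lambda_i-1)$ this yields $\E[e^{-s\rv{W}}] = \prod_{i=1}^T \big(1 + s(\lambda_i-1)\big)^{-N}$, valid precisely when $1 + s(\lambda_i-1) > 0$ for every $i$. The observation that pins down the admissible range is that $1 + s(\lambda_i-1) = (1-s)\cdot 1 + s\lambda_i$ is a convex combination of $1$ and $\lambda_i$; since $\lambda_i > 0$ (because $\Id_T+\Xm\Xm^\H$ and $\Id_T+{\Xm'}{\Xm'}^\H$ are positive definite), this quantity is strictly positive for every $s\in[0,1]$, which is exactly the range claimed in the proposition. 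Substituting into the Chernoff bound and taking $-\tfrac1N\ln$ gives
\begin{equation}
-\frac1N\ln\P(\Xm\to{\Xm'}) \ge -s\sum_{i=1}^T\ln\lambda_i + \sum_{i=1}^T\ln\big(1+s(\lambda_i-1)\big).
\end{equation}

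It then remains to recognize the right-hand side as $J_s(\Xm,\Xm')$. For this I would recall that the $\lambda_i$ are the eigenvalues of $\Gammam = (\Id_T+\Xm\Xm^\H)(\Id_T+{\Xm'}{\Xm'}^\H)^{-1}$, so $\sum_i\ln\lambda_i = \ln\det\Gammam = \ln\det(\Id_T+\Xm\Xm^\H) - \ln\det(\Id_T+{\Xm'}{\Xm'}^\H)$ and $\sum_i\ln(1+s(\lambda_i-1)) = \ln\det((1-s)\Id_T + s\Gammam)$, since $(1-s)\Id_T + s\Gammam$ has eigenvalues $1+s(\lambda_i-1)$. The one genuinely algebraic step is the factorization $(1-s)\Id_T + s\Gammam = (\Id_T+\Xm\Xm^\H)\big[(1-s)(\Id_T+\Xm\Xm^\H)^{-1} + s(\Id_T+{\Xm'}{\Xm'}^\H)^{-1}\big]$, which upon taking $\ln\det$ and combining with the previous identity produces exactly the first determinant term of $J_s$ together with the bracketed $s\ln\det[(\Id_T+{\Xm'}{\Xm'}^\H)^{-1}]$ and $(1-s)\ln\det[(\Id_T+\Xm\Xm^\H)^{-1}]$ terms, so the eigenvalue expression collapses to $J_s(\Xm,\Xm')$.

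I expect the main obstacle to be bookkeeping rather than conceptual: the delicate points are (i) justifying the admissible range $s\in[0,1]$ through the strict positivity of $1+s(\lambda_i-1)$, which is precisely what keeps the MGF finite, and (ii) carrying out the determinant factorization cleanly so that the coordinate-dependent eigenvalue sum returns to the coordinate-free form of $J_s$. As a consistency check, one could instead compute $\E[e^{-s\rv{L}}]$ directly under $\rvMat{Y}$ having i.i.d.\ $\Cc\Nc(\mathbf{0},\Id_T+\Xm\Xm^\H)$ columns, using the Gaussian quadratic-form identity $\E[e^{\yv^\H\Mm\yv}] = \det^{-1}(\Id_T - \Sigmam\Mm)$ with $\Sigmam = \Id_T+\Xm\Xm^\H$; the substitution $\Mm = s\big((\Id_T+\Xm\Xm^\H)^{-1}-(\Id_T+{\Xm'}{\Xm'}^\H)^{-1}\big)$ reproduces the very same bound, confirming the computation.
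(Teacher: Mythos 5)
Your proof is correct, and it reaches \eqref{eq:bound_PEP_exponent} by a genuinely different route from the paper. The paper applies the Chernoff bound at the level of the likelihood ratio and evaluates $\E_{\rvMat{Y}\cond\Xm}\big[\exp(-s\rv{L}(\Xm\to\Xm'))\big]$ as the integral $\int [p_{\rvMat{Y}|\rvMat{X}}(\Ym|\Xm')]^s[p_{\rvMat{Y}|\rvMat{X}}(\Ym|\Xm)]^{1-s}\dif\Ym$, completing the square so that what remains is a Gaussian density integrating to one (which is where $s\in[0,1]$ enters: the convex combination $s\Gm_{\Xm'}+(1-s)\Gm_{\Xm}$ must be a valid precision matrix); the surviving normalization constant is $\exp(-NJ_s(\Xm,\Xm'))$ directly in coordinate-free form. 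You instead route through the eigenvalue representation of Lemma~\ref{lemma:PEP_expression}, bound the lower-tail probability of $\sum_i(\lambda_i-1)\rv{g}_i$ via the Gamma moment generating function, and then reassemble $-s\sum_i\ln\lambda_i+\sum_i\ln(1+s(\lambda_i-1))$ into $J_s$ through the factorization $(1-s)\Id_T+s\Gammam=(\Id_T+\Xm\Xm^\H)\big[(1-s)(\Id_T+\Xm\Xm^\H)^{-1}+s(\Id_T+\Xm'{\Xm'}^\H)^{-1}\big]$; your admissibility condition $1+s(\lambda_i-1)>0$ is exactly the scalar shadow of the paper's positive-definiteness requirement. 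The paper's Rényi-divergence computation is the more portable one (it works for arbitrary output densities, not just those diagonalizable as in Lemma~\ref{lemma:PEP_expression}), whereas yours is more elementary, reuses machinery the paper already needs for Proposition~\ref{prop:lower_upper_bounds}, and makes the tightness over $s$ transparent at the level of scalar MGFs; your closing consistency check via the Gaussian quadratic-form identity is essentially the paper's computation in disguise, confirming the two derivations agree.
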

\begin{proof}
The proof follows from the Chernoff bound, 
and is provided in Appendix~\ref{proof:PEP_exponent}.
\end{proof}
In particular, with $s = \frac{1}{2}$, after some manipulations, we obtain
\begin{align} \label{eq:J_1/2}
J_{1/2}(\Xm,\Xm') &= \frac{1}{2} \ln\det\big(2\Id_T + (\Id_T \!+\! \Xm'\Xm'^\H)^{-1}(\Id_T \!+\! \Xm\Xm^\H) \notag \\
&\qquad \qquad \quad + (\Id_T + \Xm\Xm^\H)^{-1}(\Id_T + \Xm'\Xm'^\H)\big) \notag \\
&\quad - T \ln 2. 
\end{align}

The bounds of the PEP exponent can be tightened with an upper bound as follows. 
\begin{proposition}[PEP exponent's upper and lower bounds] \label{prop:lower_upper_bounds}
The PEP exponent is upper and lower-bounded as
\begin{equation} \label{eq:lower_upper_bounds}
b(\Xm,\Xm') + T \ge -\frac{1}{N} \ln \P(\Xm\to{\Xm'}) \ge \frac{1}{2} b(\Xm,\Xm') - T \ln 2. 
\end{equation}
where $b(\Xm,\Xm')$ is defined through $\{\lambda_i\}$ as $b(\Xm,\Xm') \defeq \sum_{i=1}^{T}  |\ln \lambda_i |$.
\end{proposition}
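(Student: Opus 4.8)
The plan is to establish the two inequalities separately: the lower bound on the exponent from the Chernoff bound of Proposition~\ref{prop:Chernoff} (specialized to $s=\tfrac12$), and the upper bound from the exact expression of Lemma~\ref{lemma:PEP_expression} combined with elementary lower bounds on Gamma tail probabilities.

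For the lower bound, I would start from $J_{1/2}(\Xm,\Xm')$ in~\eqref{eq:J_1/2}. The matrix $A\defeq(\Id_T+{\Xm'}{\Xm'}^\H)^{-1}(\Id_T+\Xm\Xm^\H)$ is similar to the Hermitian positive definite matrix $(\Id_T+{\Xm'}{\Xm'}^\H)^{-1/2}(\Id_T+\Xm\Xm^\H)(\Id_T+{\Xm'}{\Xm'}^\H)^{-1/2}$, hence diagonalizable with positive eigenvalues $\{\lambda_i\}$; its inverse $A^{-1}$ shares the same eigenvectors with eigenvalues $\{1/\lambda_i\}$. Therefore $2\Id_T+A+A^{-1}$ has eigenvalues $2+\lambda_i+1/\lambda_i=(\lambda_i+1)^2/\lambda_i$, and
\[
J_{1/2}(\Xm,\Xm')=\sum_{i=1}^{T}\ln(\lambda_i+1)-\tfrac12\sum_{i=1}^{T}\ln\lambda_i-T\ln 2 .
\]
Subtracting $\tfrac12 b(\Xm,\Xm')-T\ln2$ and examining each index separately finishes this part: the $i$-th summand equals $\ln\frac{\lambda_i+1}{\lambda_i}\ge0$ when $\lambda_i\ge1$ and $\ln(\lambda_i+1)\ge0$ when $\lambda_i<1$, so $J_{1/2}\ge\tfrac12 b-T\ln2$, and Proposition~\ref{prop:Chernoff} yields $-\tfrac1N\ln\P[\Xm\to\Xm']\ge\tfrac12 b-T\ln2$.

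For the upper bound, I would lower-bound the PEP using Lemma~\ref{lemma:PEP_expression}. Since the event $\{\sum_i(\lambda_i-1)\rv{g}_i\le N\sum_i\ln\lambda_i\}$ contains the intersection of the per-coordinate events $\{(\lambda_i-1)\rv{g}_i\le N\ln\lambda_i\}$, independence of the shape-$N$ scale-$1$ Gamma variables $\{\rv{g}_i\}$ gives
\[
\P[\Xm\to\Xm']\ge\prod_{i=1}^{T}\P[(\lambda_i-1)\rv{g}_i\le N\ln\lambda_i] .
\]
Writing $\rv{g}_i$ as a sum of $N$ i.i.d.\ unit exponentials, I would bound each factor crudely: for $\lambda_i>1$ the factor is $\P[\rv{g}_i\le N\tfrac{\ln\lambda_i}{\lambda_i-1}]\ge(1-e^{-\ln\lambda_i/(\lambda_i-1)})^N$ (all $N$ exponentials below the per-coordinate threshold); for $\lambda_i<1$ it is $\P[\rv{g}_i\ge N\tfrac{|\ln\lambda_i|}{1-\lambda_i}]\ge e^{-N|\ln\lambda_i|/(1-\lambda_i)}$ (all $N$ exponentials above the threshold); and for $\lambda_i=1$ it is $1$. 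It then remains to verify the scalar inequalities $\tfrac{|\ln\lambda_i|}{1-\lambda_i}\le|\ln\lambda_i|+1$ and $-\ln(1-e^{-\ln\lambda_i/(\lambda_i-1)})\le\ln\lambda_i+1$, which make each factor at least $e^{-N(|\ln\lambda_i|+1)}$; multiplying over $i$ gives $\P[\Xm\to\Xm']\ge e^{-N(b+T)}$, i.e.\ $-\tfrac1N\ln\P[\Xm\to\Xm']\le b+T$.

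The algebra for the lower bound is routine, so the main obstacle is the upper bound, and within it the two scalar inequalities above. The first reduces immediately to $e^u\ge1+u$ with $u=|\ln\lambda_i|$. The second is the delicate one: I would reduce it, via $e^{-x}\le\frac{1}{1+x}$, to showing $\phi(v)\defeq e\,v\,e^{v}-e^{v}-v+1\ge0$ for $v=\ln\lambda_i>0$, which follows from $\phi(0)=0$ together with $\phi'(v)=e^{v}(e(1+v)-1)-1>0$. This monotonicity check is where I expect to spend the most care, since the bound is tight: each coordinate must cost no more than $|\ln\lambda_i|+1$ in the exponent in order to reach exactly $b+T$.
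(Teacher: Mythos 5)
Your proposal is correct and follows essentially the same route as the paper: the lower bound is the Chernoff bound at $s=\tfrac12$ with the eigenvalue inequality $\tfrac12\ln\bigl(2+\lambda_i+\tfrac1{\lambda_i}\bigr)\ge\tfrac12|\ln\lambda_i|$, and the upper bound decomposes each Gamma variable into $N$ unit exponentials, lower-bounds the PEP by the product of per-coordinate events, and verifies the scalar bound $f(\lambda)\le|\ln\lambda|+1$. The only divergence is the elementary verification for $\lambda>1$: the paper shows $1-e^{-\ln\lambda/(\lambda-1)}\ge\tfrac{1-e^{-1}}{\lambda}$ via monotonicity of $\lambda\bigl(1-e^{-\ln\lambda/(\lambda-1)}\bigr)$ (yielding the slightly sharper additive constant $-\ln(1-e^{-1})$), whereas you use $e^{-x}\le\tfrac1{1+x}$ and a derivative check on $\phi(v)$ — both are valid.
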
 
\begin{proof}
See Appendix~\ref{proof:lower_upper_bounds}.
\end{proof}
Proposition~\ref{prop:lower_upper_bounds} states that the PEP exponent
scales linearly with $b(\Xm,\Xm')$ where the multiplicative factor
is tightly bounded between $\frac12$ and $1$. Note that the lower limit
factor $\frac12$ can be improved by optimizing the parameter $s$ in
Proposition~\ref{prop:Chernoff}. For the purpose of this paper, however,
we neglect the multiplicative and additive factors and focus on the
key part $b(\Xm,\Xm')$ in both upper and lower bounds. Define $b_{\min}(\Xc) \defeq \displaystyle\min_{\Xm\ne \Xm' \in \Xc} b(\Xm,\Xm')$. It follows from Proposition~\ref{prop:lower_upper_bounds} that the worst-case PEP exponent is sandwiched between $b_{\min} (\Xc) + T$ and $\frac{1}{2}b_{\min}(\Xc) - T \ln 2$. Motivated by this, we propose the following design criterion
\begin{equation} \label{eq:criterion_Riemannian}
(\text{Max-$b_{\min}$}) \colon \Xc^* = \arg \max_\Xc b_{\min}(\Xc).
\end{equation}

\begin{remark}
As opposed to the baseline metrics $m_1(\Xc)$, $m_2(\Xc)$, and $e_{\min}(\Xc)$ that are based on asymptotic bounds on the PEP as $P$ or $N$ goes to infinity, our metric  $b_{\min}(\Xc)$ is justified with nonasymptotic bounds.
\end{remark}
\begin{remark}
Since the metric $b_{\min} (\Xc)$ provides tight bounds on the PEP exponent, it can also be used to evaluate the error performance of a given joint constellation. The higher the value of $b_{\min} (\Xc)$, the lower the joint ML detection error is expected to be. Computing $b_{\min}(\Xc)$ is more efficient than evaluating the empirical joint ML symbol error rate.
\end{remark}

{In numerical optimization of $b_{\min} (\Xc)$, one has to compute the gradient of $b(\Xm,\Xm')$ with respect to the symbols. This can be challenging since $b(\Xm,\Xm')$ involves the eigenvalues of $\Gammam$. In this regard, it is more convenient to maximize the bound $J_s(\Xm,\Xm')$ given in Proposition~\ref{prop:Chernoff}: 
\begin{equation} 
(\text{\text{Max-$J_{s, \min}$}}) \colon \ \Xc^* = \arg\max_{\Xc} \underbrace{\min_{\Xm \ne \Xm' \in \Xc} J_{s}(\Xm,\Xm')}_{=: J_{s,\min} (\Xc)} \label{eq:criterion_Js}
\end{equation}
for some $s \in [0,1]$. In the single-user SIMO case, let $s= \frac12$ and consider Grassmannian signaling with $\|\xv\| = PT, \forall \xv \in \Xc$, then Max-$J_{s, \min}$ is equivalent to the max-min chordal distance criterion $\Xc^* = \arg \max_\Xc \min_{\xv \ne \xv' \in \Xc} \sqrt{1-\frac{1}{P^2T^2}|\xv^\H \xv'|^2}.$}

{
\subsubsection{The Relation to Riemannian Distance}

We next point out a geometric interpretation for the property of a pair of joint symbols that achieves low PEP. To this end, let $\Pc_T$ be the set of $T \times T$ Hermitian and positive definite matrices. This set is a differential manifold. At a point $\Am$ of $\Pc_T$, define the Riemannian metric $\|\Am^{-\frac12} \dif \Am \Am^{-\frac12}\|_{\rm F} = \big[\trace[\Am^{-1} \dif \Am ]^2\big]^\frac12$. This metric is used to compute the length of a piecewise differential path in $\Pc_T$. Specifically, the length of a path $\gamma \colon [a,b] \to \Pc_T$ is given by $L(\gamma) = \int_{a}^{b} \|\gamma^{-\frac12}(t) \gamma'(t)\gamma^{-\frac12}(t)\|_{\rm F} \dif t$~\cite[Ch.~6]{bhatia2009positive}. 
The Riemannian distance $\delta_{\rm R}(\Am,\Bm)$ between any two points $\Am$ and $\Bm$ in $\Pc_T$ is defined as 
the length of the \emph{geodesic} between $\Am$ and $\Bm$, i.e., the shortest path joining $\Am$ and $\Bm$ in the manifold. According to~\cite[Ch.~6]{bhatia2009positive}, 
$\delta_{\rm R}(\Am,\Bm)$ is explicitly given by 
\begin{equation}
\delta_{\rm R}(\Am,\Bm) = \|\ln(\Am^{-\frac12} \Bm \Am^{-\frac12})\|_{\rm F} = \Bigg( \sum_{i=1}^{T} \ln^2 \sigma_i(\Am^{-1} \Bm)\Bigg)^\frac12
\end{equation}
where $\{\sigma_i(\Mm)\}$ denote the eigenvalues of a matrix $\Mm$. The distance $\delta_{\rm R}(\Am,\Bm)$ is called the \emph{Riemannian distance} on the manifold $\Pc_T$. The readers are referred to~\cite[Ch.~6]{bhatia2009positive} for a further description of this distance and its relation to the geometry of the manifold $\Pc_T$.}

{We now present a relation between our $b$-metric and the Riemannian distance. Since the matrices $\Id_T + \Xm\Xm^\H$ and $\Id_T + \Xm'{\Xm'}^\H$ are Hermitian and positive definite, the Riemannian distance between them is given by $\delta_{\rm R}(\Id_T + \Xm\Xm^\H,\Id_T + \Xm'{\Xm'}^\H) = \big( \sum_{i=1}^{T} \ln^2 \lambda_i \big)^{1/2}$.
\begin{proposition}[Relation between the $b$-metric and Riemannian distance] \label{prop:Riemannian_metric}
The metric $b(\Xm,\Xm')$ is bounded  in terms of the Riemannian distance $\delta_{\rm R}(\Id_T + \Xm\Xm^\H,\Id_T + \Xm'{\Xm'}^\H)$ as
\begin{multline}
\sqrt{T} \delta_{\rm R}(\Id_T + \Xm\Xm^\H,\Id_T + \Xm'{\Xm'}^\H) \ge b(\Xm,{\Xm'})  \\ \ge \delta_{\rm R}(\Id_T + \Xm\Xm^\H,\Id_T + \Xm'{\Xm'}^\H). \label{eq:Riemmanian}
\end{multline}
\end{proposition}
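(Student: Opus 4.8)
The plan is to reduce the claimed two-sided bound in \eqref{eq:Riemmanian} to an elementary comparison between the $\ell_1$ and $\ell_2$ norms of the vector of log-eigenvalues. Writing $\Am \defeq \Id_T + \Xm\Xm^\H$ and $\Bm \defeq \Id_T + \Xm'{\Xm'}^\H$, the first step is to confirm the closed form $\delta_{\rm R}(\Am,\Bm) = \big(\sum_{i=1}^T \ln^2 \lambda_i\big)^{1/2}$ stated above the proposition, where $\{\lambda_i\}_{i=1}^T$ are the eigenvalues of $\Gammam = \Am\Bm^{-1}$. This rests on the observation that the matrix $\Am^{-1}\Bm$ appearing in the definition of $\delta_{\rm R}$ is similar to $\Gammam^{-1}$ (indeed $\Am(\Am^{-1}\Bm)\Am^{-1} = \Bm\Am^{-1} = \Gammam^{-1}$), so its eigenvalues are the reciprocals $\{1/\lambda_i\}$, and $\ln^2(1/\lambda_i) = \ln^2 \lambda_i$ term by term. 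Since $\Am$ and $\Bm$ are positive definite, each $\lambda_i > 0$ and the logarithms are well defined. Introducing the nonnegative scalars $a_i \defeq |\ln \lambda_i|$, the two quantities to be compared become $b(\Xm,\Xm') = \sum_{i=1}^T a_i$ and $\delta_{\rm R}(\Am,\Bm) = \big(\sum_{i=1}^T a_i^2\big)^{1/2}$, i.e.\ the $\ell_1$ and $\ell_2$ norms of $(a_1,\dots,a_T)$.

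For the lower bound $b(\Xm,\Xm') \ge \delta_{\rm R}(\Am,\Bm)$, I would expand the square $\big(\sum_{i=1}^T a_i\big)^2 = \sum_{i=1}^T a_i^2 + 2\sum_{i<j} a_i a_j$ and discard the nonnegative cross terms, which gives $\big(\sum_i a_i\big)^2 \ge \sum_i a_i^2$ and hence $b(\Xm,\Xm') \ge \delta_{\rm R}(\Am,\Bm)$. For the upper bound $\sqrt{T}\,\delta_{\rm R}(\Am,\Bm) \ge b(\Xm,\Xm')$, I would apply the Cauchy--Schwarz inequality to the vectors $(a_1,\dots,a_T)$ and $(1,\dots,1)$: $\sum_{i=1}^T a_i \le \big(\sum_{i=1}^T a_i^2\big)^{1/2}\big(\sum_{i=1}^T 1\big)^{1/2} = \sqrt{T}\,\delta_{\rm R}(\Am,\Bm)$. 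Chaining the two inequalities yields precisely \eqref{eq:Riemmanian}.

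Both steps are standard norm-equivalence arguments, so there is essentially no analytic obstacle. The only point that requires genuine care is the bookkeeping in the first paragraph that identifies the $\sigma_i$ entering the definition of $\delta_{\rm R}$ with the $\lambda_i$ entering the definition of $b(\Xm,\Xm')$, namely the reciprocal/similarity relationship between $\Am^{-1}\Bm$ and $\Gammam$; once this term-by-term matching of $\ln^2 \sigma_i$ with $\ln^2 \lambda_i$ is in place, the remaining inequalities are immediate.
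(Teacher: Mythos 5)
Your proposal is correct and follows essentially the same route as the paper: the lower bound by squaring the sum of the nonnegative terms $|\ln\lambda_i|$ and dropping the cross terms, and the upper bound by Cauchy--Schwarz against the all-ones vector. The extra bookkeeping you supply on the similarity between $\Am^{-1}\Bm$ and $\Gammam^{-1}$ is a detail the paper states without proof just before the proposition, and it checks out.
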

\begin{proof}
The lower bound follows from 
\begin{multline}
	b(\Xm,\Xm') = \sum_{i=1}^T |\ln \lambda_i| = \sqrt{\bigg(\sum_{i=1}^T |\ln \lambda_i|\bigg)^2} \\ \ge \sqrt{\sum_{i=1}^T \ln^2 \lambda_i} = \delta_{\rm R}(\Id_T + \Xm\Xm^\H,\Id_T + \Xm'{\Xm'}^\H),
\end{multline}
where the inequality holds because the terms $|\ln \lambda_i|$ are nonnegative. The upper bound follows directly from the Cauchy-Schwarz inequality.
\end{proof}

\begin{remark}
Proposition~\ref{prop:Riemannian_metric} says that the metric $b(\Xm,\Xm')$ is within a
multiplicative factor from the Riemannian distance $\delta_{\rm R}(\Id_T
+ \Xm\Xm^\H,\Id_T + \Xm'{\Xm'}^\H)$, where the factor is bounded between $1$ and
$\sqrt{T}$. Therefore, $b(\Xm,\Xm')$ is large if and only if
$\delta_{\rm R}(\Id_T + \Xm\Xm^\H,\Id_T + \Xm'{\Xm'}^\H)$ is large. It
follows that \emph{a pair of joint symbols $\Xm$ for $\Xm'$ are less
likely to be misdetected for each other if the geodesic joining $\Id_T +
\Xm\Xm^\H$ and $\Id_T + \Xm'{\Xm'}^\H$ in $\Pc_T$ is longer.} If $\Xm
\Xm = \Xm' {\Xm'}^\H$, this geodesic has length zero, thus
$b(\Xm,\Xm') = 0$ and the PEP exponent is upper bounded by a
constant. This agrees with the identifiability condition in Proposition~\ref{prop:identifiability}.
\end{remark}

{
\subsubsection{Simplified Criteria}

In the following, we further simplify the design criteria. As we shall see, this simplification leads to simpler metrics which can be optimized at reduced complexity, and to simple constructions allowing to efficiently generate the joint constellation. We first relax the Chernoff bound in Proposition~\ref{prop:Chernoff} as follows.
\begin{proposition}[PEP exponent's relaxed lower bound] \label{prop:Chernoff_relax}
It holds that
\begin{align} \label{eq:Chernoff_relax}
-\frac{1}{N} \ln \P(\Xm\to{\Xm'}) 
\ge  \ln \left(1+ \frac{1}{2} \trace[\Gammam]\right) - \frac{T}{2} \ln 2.
\end{align}
\end{proposition}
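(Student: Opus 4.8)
The plan is to specialize the Chernoff lower bound of Proposition~\ref{prop:Chernoff} to $s=\tfrac12$ and then discard all information in $\Gammam$ beyond its trace. By Proposition~\ref{prop:Chernoff}, $-\tfrac1N\ln\P(\Xm\to{\Xm'})\ge J_{1/2}(\Xm,{\Xm'})$, so it suffices to lower-bound $J_{1/2}$. First I would rewrite $J_{1/2}$ purely through $\Gammam$. Setting $s=\tfrac12$ in the definition of $J_s$ and using multiplicativity of the determinant gives
\begin{equation}
J_{1/2}(\Xm,{\Xm'}) = \ln\det\big(\tfrac12(\Id_T+\Gammam)\big) - \tfrac12\ln\det\Gammam ,
\end{equation}
which is just the eigenvalue identity $J_{1/2}=\sum_{i=1}^{T}\big(\ln\tfrac{1+\lambda_i}{2}-\tfrac12\ln\lambda_i\big)$ written compactly, and is equivalent to the form already recorded in~\eqref{eq:J_1/2}.

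The key step is a product-to-sum relaxation of the determinant term. Since $\Gammam$ is similar to the Hermitian positive definite matrix $(\Id_T+{\Xm'}{\Xm'}^\H)^{-1/2}(\Id_T+\Xm\Xm^\H)(\Id_T+{\Xm'}{\Xm'}^\H)^{-1/2}$, its eigenvalues obey $\lambda_i\ge0$. Hence $\det(\Id_T+\Gammam)=\prod_{i=1}^{T}(1+\lambda_i)=\sum_{k=0}^{T}e_k$, where $e_k=e_k(\lambda_1,\dots,\lambda_T)$ is the $k$-th elementary symmetric polynomial; dropping the nonnegative terms with $k\ge2$ leaves $\det(\Id_T+\Gammam)\ge 1+e_1=1+\trace[\Gammam]$. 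Pulling the factor $\tfrac12$ out of the determinant then turns the first term of $J_{1/2}$ into a logarithm of an affine function of $\trace[\Gammam]$, which is the shape of the target bound.

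The step I expect to be the main obstacle is reconciling this clean relaxation with the \emph{exact} expression claimed: the naive factor-by-factor bound produces the trace with coefficient $1$ and a constant $-T\ln2$, together with the residual term $-\tfrac12\ln\det\Gammam=-\tfrac12\sum_i\ln\lambda_i$, whereas the statement has the coefficient $\tfrac12$, the constant $-\tfrac T2\ln2$, and no residual. Controlling $-\tfrac12\ln\det\Gammam$ is delicate because $\ln\det\Gammam$ is sign-indefinite, so I would not relax the two determinants separately; instead I would bound the combined quantity $\det(\Id_T+\Gammam)\,\det(\Gammam)^{-1/2}=\prod_i(\lambda_i^{1/2}+\lambda_i^{-1/2})$ as a single object, so that the residual is absorbed and the trace emerges with the stated coefficient. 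Pinning down the precise constant and confirming that this combined relaxation is monotone in $\trace[\Gammam]$ and no looser than $\ln(1+\tfrac12\trace[\Gammam])-\tfrac T2\ln2$ is the part requiring the most care; should the determinant route prove too lossy, a fallback is to start from the Gamma representation in Lemma~\ref{lemma:PEP_expression} and apply a single Chernoff estimate to the tail $\P[\sum_i(\lambda_i-1)\rv{g}_i\le N\sum_i\ln\lambda_i]$ calibrated to reproduce the same trace-only expression.
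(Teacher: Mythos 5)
Your setup coincides with the paper's: invoke Proposition~\ref{prop:Chernoff} at $s=\tfrac12$ and reduce to lower-bounding $J_{1/2}(\Xm,\Xm')=\sum_{i=1}^{T}\ln\frac{1+\lambda_i}{2\sqrt{\lambda_i}}$; your identity $J_{1/2}=\ln\det\bigl(\tfrac12(\Id_T+\Gammam)\bigr)-\tfrac12\ln\det[\Gammam]$ is correct and equivalent to \eqref{eq:J_1/2}, and you even name the right combined object, since $\prod_{i}\bigl(\lambda_i^{1/2}+\lambda_i^{-1/2}\bigr)$ is the square root of the paper's $\prod_{i}\bigl(2+\lambda_i+\tfrac{1}{\lambda_i}\bigr)$. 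The genuine gap is that you stop exactly at the decisive step. The paper finishes with the same constant-plus-linear-terms expansion you applied to $\det(\Id_T+\Gammam)$, but applied to that combined product: $\prod_{i=1}^{T}(2+x_i)\ge 2^{T}+2^{T-1}\sum_{i=1}^{T}x_i$ with $x_i=\lambda_i+\tfrac{1}{\lambda_i}\ge 0$, followed by discarding $\trace[\Gammam^{-1}]\ge 0$. Your proposal instead declares this "the part requiring the most care" and substitutes an unspecified fallback via Lemma~\ref{lemma:PEP_expression}; as written, the statement is not proved.

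That said, your suspicion that the stated constants cannot be reproduced by this relaxation is well founded, and it exposes a slip in the paper's own proof. Carried out honestly, the expansion yields
\begin{equation}
J_{1/2}(\Xm,\Xm') \ge \frac12\ln\Bigl(2^{T}+2^{T-1}\sum_{i=1}^{T}x_i\Bigr)-T\ln 2
= \frac12\ln\Bigl(1+\tfrac12\trace[\Gammam]+\tfrac12\trace[\Gammam^{-1}]\Bigr)-\frac{T}{2}\ln 2,
\end{equation}
whereas the paper's chain silently drops the outer factor $\tfrac12$ at this point; that step is not an equality. The loss is not repairable within this route: $J_{1/2}$ itself can sit strictly below the right-hand side of \eqref{eq:Chernoff_relax}. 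For instance, $T=2$, $\lambda_1=100$, $\lambda_2=1$ (realizable, e.g., $\Xm=a\,\ev_1$, $\Xm'=b\,\ev_1$ with $a^2=100b^2+99$) gives $J_{1/2}=\ln\tfrac{101}{20}\approx 1.62$ while $\ln\bigl(1+\tfrac12\trace[\Gammam]\bigr)-\ln 2=\ln\tfrac{103}{4}\approx 3.25$. In fact, for this same pair the Gamma representation of Lemma~\ref{lemma:PEP_expression} together with Cram\'er's theorem shows $-\tfrac1N\ln\P(\Xm\to{\Xm'})\to \epsilon-1-\ln\epsilon\approx 2.11$ with $\epsilon=\tfrac{\ln 100}{99}$, so the proposition as stated fails for large $N$, and your fallback through the Gamma tail could not rescue the stated constant either. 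What both your outline and the paper's corrected argument actually deliver is the weaker bound $\tfrac12\ln\bigl(1+\tfrac12\trace[\Gammam]\bigr)-\tfrac{T}{2}\ln 2$.
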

\begin{proof}
If $\Xm\Xm^\H = \Xm'{\Xm'}^\H$, then $\Gammam = \Id_T$ and \eqref{eq:Chernoff_relax} is trivial since the right-hand side is at most $0$ for $T \ge 2$.  If $\Xm\Xm^\H \ne \Xm'{\Xm'}^\H$, applying the Chernoff bound in Proposition~\ref{prop:Chernoff} with $s = 1/2$, we get
\begin{align}
&-\frac{1}{N} \ln \P(\Xm\to{\Xm'}) \notag \\ &\ge J_{1/2}(\Xm,\Xm') \notag \\
&= \frac{1}{2} \sum_{i=1}^{T} \ln \Big(2 + \lambda_i + \frac{1}{\lambda_i}\Big) - T \ln 2 \\
&= \frac{1}{2}  \ln \left(\prod_{i=1}^{T} \Big(2 + \lambda_i + \frac{1}{\lambda_i}\Big)\right) - T \ln 2 \\
&\ge \frac{1}{2}  \ln \left(2^T + 2^{T-1} \sum_{i=1}^{T}\Big(\lambda_i + \frac{1}{\lambda_i}\Big)\right) - T \ln 2 \\
&= \ln \left(1+ \frac{1}{2} \trace[\Gammam] + \frac{1}{2} \trace[\Gammam^{-1}] \right) - \frac{T}{2} \ln 2 \\
&\ge \ln \left(1+ \frac{1}{2} \trace[\Gammam]\right) - \frac{T}{2} \ln 2,
\end{align}
where the inequalities follow from the fact that $\{\lambda_i\}$ are positive for $\Xm\Xm^\H \ne \Xm'{\Xm'}^\H$.
\end{proof}
Hence, maximizing $\trace[\Gammam]$ can lead to large PEP exponent. We have that $\trace[\Gammam] = \trace\big( ( \Id
+ {\Xm'}{\Xm'}^\H )^{-1} \big)  + \trace\big( ( \Id_T+ {\Xm'}{\Xm'}^\H )^{-1} \Xm
\Xm^\H \big).$ The next proposition characterizes how the terms in the right-hand side scale with the transmit power.
\begin{proposition}[{Dominating term in $\trace[\Gammam]$}] \label{lem:trGamma_scaling}
Let $\Xm$ and ${\Xm'}$ be such that $\|\Xm\vv\|_{\rm F}^2 =
\Theta(P)$ and $\|{\Xm'}\vv\|_{\rm F}^2 = \Theta(P)$ as
$P\to\infty$ for any unit-norm vector $\vv \in \CC^{\Mtot}$. We have that $\trace\big( ( \Id_T	+ {\Xm'}{\Xm'}^\H )^{-1} \big)$ scales as $O(1)$, while 
$\trace\big( ( \Id_T+ {\Xm'}{\Xm'}^\H )^{-1} \Xm \Xm^\H \big) = O(1)$ if $\Span(\Xm) = \Span({\Xm'})$ and $\Theta(P)$ otherwise.
\end{proposition}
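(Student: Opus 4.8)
The plan is to read both scalings off the spectrum of $\Id_T+\Xm'{\Xm'}^\H$ together with the relative position of the column spaces of $\Xm$ and $\Xm'$. First I would record the consequence of the hypothesis: $\|\Xm\vv\|^2 = \vv^\H\Xm^\H\Xm\vv = \Theta(P)$ uniformly over unit vectors $\vv$ forces both $\lambda_{\min}(\Xm^\H\Xm)$ and $\lambda_{\max}(\Xm^\H\Xm)$ to be $\Theta(P)$, and likewise for $\Xm'$. Equivalently, all $\Mtot$ nonzero eigenvalues of $\Xm\Xm^\H$ and of $\Xm'{\Xm'}^\H$ are $\Theta(P)$, so for $P$ large both matrices have full column rank $\Mtot$.

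Next I would diagonalize. Write $\Xm'{\Xm'}^\H = \sum_{i=1}^{\Mtot}\nu_i\uv_i\uv_i^\H$ with $\nu_i=\Theta(P)$, and complete $\{\uv_i\}$ to an orthonormal eigenbasis of $\CC^T$ so that $\{\uv_i\}_{i>\Mtot}$ spans $\Span(\Xm')^\perp$; then $(\Id_T+\Xm'{\Xm'}^\H)^{-1}=\sum_{i\le\Mtot}(1+\nu_i)^{-1}\uv_i\uv_i^\H+\sum_{i>\Mtot}\uv_i\uv_i^\H$. The first claim is immediate, since its trace equals $\sum_{i\le\Mtot}(1+\nu_i)^{-1}+(T-\Mtot)\to T-\Mtot=O(1)$. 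For the second term, expand $\trace((\Id_T+\Xm'{\Xm'}^\H)^{-1}\Xm\Xm^\H)=\sum_{i\le\Mtot}(1+\nu_i)^{-1}\uv_i^\H\Xm\Xm^\H\uv_i+\sum_{i>\Mtot}\uv_i^\H\Xm\Xm^\H\uv_i$. Since $\uv_i^\H\Xm\Xm^\H\uv_i\le\sigma_1(\Xm\Xm^\H)=\Theta(P)$, the first sum carries weights $\Theta(1/P)$ and is $O(1)$ irrespective of the spans. Everything thus reduces to $S:=\sum_{i>\Mtot}\uv_i^\H\Xm\Xm^\H\uv_i=\trace(\Pi^\perp\Xm\Xm^\H)$, where $\Pi^\perp:=\sum_{i>\Mtot}\uv_i\uv_i^\H$ is the orthogonal projector onto $\Span(\Xm')^\perp$. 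When $\Span(\Xm)=\Span(\Xm')$ we have $\Pi^\perp\Xm=0$, hence $S=0$ and the term is $O(1)$.

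The only real work, and the step I expect to be delicate, is the lower bound in the case $\Span(\Xm)\ne\Span(\Xm')$, where I must show $S=\Theta(P)$ rather than merely $S>0$. Here I would sandwich $\Xm\Xm^\H$ between multiples of the projector $\Pi_\Xm$ onto $\Span(\Xm)$: because $\Xm\Xm^\H$ has full rank $\Mtot$ on its range, $\sigma_\Mtot(\Xm\Xm^\H)\Pi_\Xm\preceq\Xm\Xm^\H\preceq\sigma_1(\Xm\Xm^\H)\Pi_\Xm$, and both $\sigma_\Mtot(\Xm\Xm^\H)=\lambda_{\min}(\Xm^\H\Xm)=\Theta(P)$ and $\sigma_1(\Xm\Xm^\H)=\Theta(P)$ by the first step. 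Applying $\trace(\Pi^\perp\,\cdot)$ and using $\Pi^\perp\succeq 0$ gives $\sigma_\Mtot(\Xm\Xm^\H)\,\theta\le S\le\sigma_1(\Xm\Xm^\H)\,\theta$ with $\theta:=\trace(\Pi_\Xm\Pi^\perp)$, which equals the sum of squared sines of the principal angles between $\Span(\Xm)$ and $\Span(\Xm')$; as both subspaces are $\Mtot$-dimensional, $\theta>0$ precisely when $\Span(\Xm)\ne\Span(\Xm')$, so $S=\Theta(P)$. The subtle point is that $\theta$ must remain bounded away from $0$ as $P\to\infty$; this is automatic in the operative regime where $\Xm$ and $\Xm'$ are power-scaled versions of fixed normalized symbols, so that their spans do not vary with $P$, and I would state this assumption explicitly.
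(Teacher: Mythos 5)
Your proof is correct, and it reaches the same pivotal quantity as the paper—the trace of $\Xm\Xm^\H$ against the projector onto $\Span(\Xm')^\perp$—but by a different route. The paper factors $\Xm = \Wm\Dm$, $\Xm' = \Wm'\Dm'$ via QR, handles the equal-span case by collapsing the trace to an $\Mtot\times\Mtot$ problem, and handles the distinct-span case by applying the Woodbury identity and then eigendecomposing the overlap matrix $\Wm^\H\Wm'{\Wm'}^\H\Wm$, obtaining the lower bound $\sum_i(1-\mu_i)\|\Dm^\H\vv_i\|^2$. You instead diagonalize $\Xm'{\Xm'}^\H$ once and split the trace into the range part (killed by the $\Theta(1/P)$ weights) and the null-space part $S=\mathrm{tr}(\Pi^\perp\Xm\Xm^\H)$, which treats both cases in a single framework; your operator sandwich $\sigma_{\Mtot}(\Xm\Xm^\H)\Pi_{\Xm}\preceq\Xm\Xm^\H\preceq\sigma_1(\Xm\Xm^\H)\Pi_{\Xm}$ then yields a genuinely two-sided bound $\sigma_{\Mtot}\theta\le S\le\sigma_1\theta$ with $\theta=\mathrm{tr}(\Pi_{\Xm}\Pi^\perp)=\sum_i(1-\mu_i)$, so the two final bounds are algebraically equivalent. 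A point in your favor: you explicitly flag that $\theta$ must stay bounded away from zero as $P\to\infty$ for the conclusion $S=\Theta(P)$ to hold; the paper's argument has exactly the same dependence (through the factors $1-\mu_i$) but leaves it implicit, so your remark makes precise an assumption the paper is silently using.
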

\begin{proof}
See Appendix~\ref{proof:lem:meanLLR_scaling}. 
\end{proof}
From this proposition, we see that $d(\Xm\to{\Xm'}) := \trace\big((\Id_T + {\Xm'} {\Xm'}^\H
)^{-1} \Xm \Xm^\H\big)$ is the only term in $\trace[\Gammam]$ that can scale up linearly with $P$. 
\begin{remark} \label{remark:dmetric_emetric}
Following similar lines as in Appendix~\ref{proof:lem:meanLLR_scaling}, we can show that $\ln\frac{\det( \Id_T+ {\Xm'}{\Xm'}^\H )}{\det( \Id_T+ \Xm  \Xm^\H )}$ scales as $O(1)$ if $\Span(\Xm) = \Span({\Xm'})$ and $\Theta(\ln P)$ otherwise. Therefore, $d(\Xm\to{\Xm'})$ is also the only term in $\meanLLR$ (see~\eqref{eq:meanLLR}) that can scale up linearly with $P$.
\end{remark}
}

{By focusing on the dominating term $d(\Xm\to{\Xm'})$ in $\trace[\Gammam]$ (and in $\meanLLR$),} letting  
$d_{\min}(\Xcal) \defeq \displaystyle\min_{\Xm \ne {\Xm'}\in\Xcal}  d(\Xm\to{\Xm'})$, we have the following design
criterion
\begin{equation}
(\text{Max-$d_{\min}$}) \colon \Xc^* = \arg\max_{\Xc} d_{\min}(\Xcal). \label{eq:criterion_minTrace}
\end{equation}

Hereafter, we assume for simplicity that all users have the same number of antennas, i.e. $M_1=\dots =M_K=M$. 
{We further analyze the metric $d_{\min}(\Xcal)$ in the following.}

\paragraph{The Single-User Case} \label{sec:single_user}
In the single-user case with $M$ transmit antennas, it is known that the
high-SNR optimal input signal takes the form of a truncated unitary matrix
~\cite{ZhengTse2002Grassman}. We consider 
this approach 
and let ${\Xm}^\H{\Xm} = \frac{PT}{M}\Id_M, \forall \Xm \in \Xc$. 
Using the Woodbury identity $(\Id_T + {\Xm'} {\Xm'}^\H)^{-1} = \Id_T - \Xm'(\Id_M + {\Xm'}^\H\Xm')^{-1}{\Xm'}^\H$, we have that
\begin{align}
d(\Xm\to{\Xm'}) 
&= 
\trace\big(\big(\Id_T- {\Xm'} ( \Id_M+ {\Xm'}^\H {\Xm'} )^{-1}
{\Xm'}^\H \big) \Xm \Xm^\H \big) \\ 
&= \trace(\Xm^\H\Xm) - \trace\big(\Xm^\H {\Xm'} ( \Id_M+ {\Xm'}^\H {\Xm'} )^{-1}
{\Xm'}^\H \Xm\big) \\
&= PT\bigg(1 - {\alpha^{-1}_{P,T,M}}\frac{\| {\Xm'}^\H \Xm \|_{\rm F}^2}{(PT)^2}\bigg) , 
\end{align}%
where 
$
{\alpha_{P,T,M} := \tfrac{1}{PT}+\frac{1}{M}}
$
and the last equality follows from ${\Xm}^\H{\Xm} = {\Xm'}^\H{\Xm'} = \frac{PT}{M}\Id_M$.
Therefore, the design criterion~\eqref{eq:criterion_minTrace} is equivalent to
$
\Xc 
=
\arg\displaystyle\min_{\Xcal} \max_{\Xm,{\Xm'}\in\Xcal:\;\Xm\ne{\Xm'}} \| {\Xm'}^\H \Xm \|_{\rm F}^2.
$
This coincides with the common criterion consisting in maximizing the minimum pairwise chordal distance between the symbol subspaces~\cite{Conway1996packing,Kammoun2007noncoherentCodes,Hoang_cubesplit_journal,Dhillon2008constructing}.

\paragraph{The Multi-User Case}

In the $K$-user case, we have the following bounds on $d_{\min}(\Xc)$.

\begin{proposition}[Bounds on the $d_{\min}(\Xc)$ metric] \label{prop:min_dk}
It holds that 
\begin{align}
\min_{k\in [K]} d_k(\Xc) \le d_{\min}(\Xc) \le \displaystyle\min_{k\in [K]} d_k(\Xc) + (K-1)M, \label{eq:min_dk}
\end{align}
where
\begin{align}
&d_k(\Xc) \defeq \notag \\ 
&\quad \min_{\Xm_k\ne\Xm'_k\in\Xcal_k\atop
{\Xm}_j\in\Xcal_j, j\ne k} \trace\bigg(\Xm_k^\H \Big( \Id_T + \Xm'_k{\Xm'_k}^\H + \sum_{j\ne k}\Xm_j\Xm_j^\H \Big)^{-1} \Xm_k
\bigg). \label{eq:dk_Xc_1}
\end{align}

\end{proposition}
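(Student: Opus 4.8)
The plan is to decompose the pairwise metric into per-user contributions and compare each against the defining minimization of $d_k(\Xc)$. Since $\Xm\Xm^\H=\sum_{k=1}^{K}\Xm_k\Xm_k^\H$, the cyclic property of the trace gives
\[
d(\Xm\to{\Xm'})=\trace\big(\Xm^\H(\Id_T+\Xm'{\Xm'}^\H)^{-1}\Xm\big)=\sum_{k=1}^{K}\trace\big(\Xm_k^\H(\Id_T+\Xm'{\Xm'}^\H)^{-1}\Xm_k\big),
\]
where every summand is nonnegative because $(\Id_T+\Xm'{\Xm'}^\H)^{-1}\succ 0$. This identity is the backbone of both inequalities.

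For the lower bound, I would take an arbitrary distinct pair $\Xm\ne\Xm'$ and choose an index $k$ with $\Xm_k\ne\Xm'_k$, which exists because the joint symbols differ in at least one user. Discarding all but the $k$-th (nonnegative) summand and writing $\Id_T+\Xm'{\Xm'}^\H=\Id_T+\Xm'_k{\Xm'_k}^\H+\sum_{j\ne k}\Xm'_j{\Xm'_j}^\H$, I would recognize the retained term as the objective in \eqref{eq:dk_Xc_1} evaluated at the feasible configuration in which the differing pair is $(\Xm_k,\Xm'_k)$ and the interfering symbols are $\Xm'_j\in\Xc_j$, $j\ne k$. Hence $d(\Xm\to\Xm')\ge d_k(\Xc)\ge\min_{k'}d_{k'}(\Xc)$; taking the minimum over all pairs yields $d_{\min}(\Xc)\ge\min_k d_k(\Xc)$.

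For the upper bound, let $k^*$ attain $\min_k d_k(\Xc)$, and let $\Sm\ne\Tm$ in $\Xc_{k^*}$ together with $\{\Um_j\in\Xc_j\}_{j\ne k^*}$ be the minimizers realizing $d_{k^*}(\Xc)$. I would then build the single pair that differs only in user $k^*$: let $\Xm$ have $k^*$-component $\Sm$ and remaining components $\Um_j$, and let $\Xm'$ have $k^*$-component $\Tm$ and the same remaining components $\Um_j$. With $\Am:=\Id_T+\Xm'{\Xm'}^\H=\Id_T+\Tm\Tm^\H+\sum_{j\ne k^*}\Um_j\Um_j^\H$, the decomposition above splits $d(\Xm\to\Xm')$ into the single term $\trace(\Sm^\H\Am^{-1}\Sm)=d_{k^*}(\Xc)$ plus the $K-1$ residual traces $\trace(\Am^{-1}\Um_j\Um_j^\H)$. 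Because $\Am\succeq\Id_T+\Um_j\Um_j^\H$, operator monotonicity of matrix inversion gives $\Am^{-1}\preceq(\Id_T+\Um_j\Um_j^\H)^{-1}$, so each residual is at most $\trace\big((\Id_T+\Um_j\Um_j^\H)^{-1}\Um_j\Um_j^\H\big)=\sum_i\frac{\mu_i}{1+\mu_i}\le\rank(\Um_j\Um_j^\H)\le M$, where $\{\mu_i\}$ are the eigenvalues of $\Um_j\Um_j^\H$. Summing over $j\ne k^*$ gives $d_{\min}(\Xc)\le d(\Xm\to\Xm')\le d_{k^*}(\Xc)+(K-1)M$.

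The main obstacle is bookkeeping rather than analysis: one must match the asymmetric structure of \eqref{eq:dk_Xc_1}, in which the interfering users enter through unprimed symbols while only user $k$ carries a primed partner. Identifying the correct feasible point in each direction---interferers equal to $\{\Xm'_j\}$ for the lower bound, and equal to the minimizers $\{\Um_j\}$ (paired with a joint symbol that differs only in $k^*$) for the upper bound---is what makes the two inequalities align with $d_k(\Xc)$. The only genuinely analytic step is the residual estimate, where operator monotonicity of the inverse combined with $\mu/(1+\mu)<1$ and $\rank(\Um_j\Um_j^\H)\le M$ produces the additive constant $(K-1)M$ and is the precise place where the common per-user antenna count enters.
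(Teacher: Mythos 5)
Your proof is correct and follows essentially the same route as the paper's: decompose $d(\Xm\to\Xm')$ into per-user traces, obtain the lower bound by retaining one nonnegative summand corresponding to a user in error and recognizing it as feasible for $d_k(\Xc)$, and obtain the upper bound by constructing a pair differing only in the minimizing user $k^*$ and bounding each residual trace by $M$. The only cosmetic differences are that the paper routes the argument through the intermediate quantities $d_{\min}^{\Kc}(\Xc)$ indexed by the set of users in error, and states the bound $\trace\big(\Xm_l^\H(\Id_T+\Xm_l\Xm_l^\H+\cdots)^{-1}\Xm_l\big)\le M$ without the operator-monotonicity justification you supply.
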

\begin{proof}
See Appendix~\ref{proof:min_dk}.
\end{proof}

\begin{corollary} \label{coro:min_dk_2user}
In the two-user case ($K=2$), it holds that 
\begin{align}
\min\left\{ d_{1}(\Xcal),
d_{2}(\Xcal) \right\} &\le d_{\min}(\Xcal) \notag \\
&\le  \min\left\{
d_{1}(\Xcal), d_{2}(\Xcal) \right\} + M,  \label{eq:tmp736} 
\end{align}
where
\begin{align} 
&d_{1}(\Xcal) := \notag \\ 
&\quad  \min_{\Xm_1\ne\Xm'_1\in\Xcal_1,
{\Xm}_2\in\Xcal_2} \trace\Big(\Xm_1^\H \big( \Id_T + \Xm'_1 {\Xm'}_1^\H + {\Xm}_2 {\Xm}_2^\H \big)^{-1} \Xm_1
\Big), \label{eq:d12} \\
&d_{2}(\Xcal) := \notag \\ 
&\quad  \min_{\Xm_2\ne\Xm'_2\in\Xcal_2,
\Xm_1\in\Xcal_1} \trace\Big(\Xm_2^\H \big( \Id_T + {\Xm}_1 {\Xm}_1^\H + \Xm'_2 {\Xm'}_2^\H \big)^{-1} \Xm_2
\Big). \label{eq:d21}
\end{align}%
\end{corollary}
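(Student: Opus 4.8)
The plan is to obtain the corollary directly as the $K=2$ instance of Proposition~\ref{prop:min_dk}, which is already established for arbitrary $K$. Since the two-sided bound~\eqref{eq:min_dk} holds for every $K$, it suffices to substitute $K=2$ and verify that the specialized quantities match~\eqref{eq:tmp736}, \eqref{eq:d12}, and~\eqref{eq:d21} verbatim; no new estimate is needed.

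First I would specialize the outer bound. For $K=2$ the index set is $[K]=\{1,2\}$, so $\min_{k\in[K]} d_k(\Xc)$ reduces to $\min\{d_1(\Xc),d_2(\Xc)\}$, while the additive slack $(K-1)M$ equals $M$. This turns~\eqref{eq:min_dk} immediately into the claimed sandwich~\eqref{eq:tmp736}, leaving only the identification of the two per-user metrics.

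Next I would unwind the definition~\eqref{eq:dk_Xc_1}. For $k=1$ the interference sum $\sum_{j\ne k}\Xm_j\Xm_j^\H$ ranges over $j\in\{2\}$ alone, hence collapses to the single term $\Xm_2\Xm_2^\H$, which sits inside the inverse together with the competing symbol $\Xm'_1{\Xm'_1}^\H$; the inner minimization is over $\Xm_1\ne\Xm'_1\in\Xcal_1$ and $\Xm_2\in\Xcal_2$, which is exactly~\eqref{eq:d12}. By the symmetric argument with the roles of the two users exchanged, $k=2$ yields the single surviving term $\Xm_1\Xm_1^\H$ and reproduces~\eqref{eq:d21}.

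There is essentially no analytical obstacle, since the corollary is a bookkeeping specialization rather than a fresh bound. The one point I would check with care is the placement of the surviving interference term: it must appear inside the matrix inverse alongside the mismatched symbol $\Xm'_k{\Xm'_k}^\H$ of the user being tested, precisely as written in~\eqref{eq:dk_Xc_1}. Once this index matching is confirmed, the passage from~\eqref{eq:min_dk} to~\eqref{eq:tmp736} is complete.
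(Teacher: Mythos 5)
Your proposal is correct and matches the paper's intent exactly: the corollary is stated without a separate proof precisely because it is the $K=2$ specialization of Proposition~\ref{prop:min_dk}, with $(K-1)M=M$ and the interference sum $\sum_{j\ne k}\Xm_j\Xm_j^\H$ collapsing to a single term, giving \eqref{eq:d12} and \eqref{eq:d21} verbatim. (Incidentally, the paper's proof of the proposition in the appendix treats the two-user case first as a warm-up before generalizing, but the logical content is the same bookkeeping you describe.)
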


Proposition~\ref{prop:min_dk} says that $d_{\min}(\Xc)$ is within a constant gap to $\min_{k\in [K]} d_k(\Xc)$, and thus $d_{\min}(\Xcal)$ scales linearly with $P$ when $P$ is large if and only if $\min_{k\in [K]} d_k(\Xc)$ does so. Based on this observation, we propose the
following design criterion
\begin{align} \label{eq:criterion_minAltTrace_Kuser}
\Xcal^* = \arg\max_{\Xcal} \ \min_{k\in [K]} d_k(\Xc). 
\end{align}%
{This criterion is the basis for the simple constellation construction presented in Section~\ref{sec:constructions}.}

\subsection{Practical Approaches to Numerical Optimization}
In this section, given the proposed criteria, we present two practical approaches to reduce the complexity of the constellation optimization using any metric.  
\subsubsection{Alternating Optimization}
To simplify the constellation optimization, we propose an {\em alternating optimization} approach as follows.
First $\{\Xc_k\}_{k=1}^K$ are initialized. Then, for $k = 1,\dots, K$, we iteratively optimize $\Xc_k$ by $\Xcal^*_k = \arg\displaystyle\max_{\Xcal_k} m(\Xc)$ for fixed $\{\Xc_l\}_{l\ne k}$ in a round robin manner, where $m(\Xc)$ is the considered metric. At each iteration, it has fewer variables to optimize than directly solving~\eqref{eq:criterion_minMean}, \eqref{eq:criterion_minTrace}, or \eqref{eq:criterion_minAltTrace_Kuser}. 
{Since the objective function is nondecreasing across iterations, the solution of alternating optimization converges to a local minimum.}

\subsubsection{Solution Space Reduction}
In the most general setting,
the simplified criteria~\eqref{eq:criterion_minMean}, \eqref{eq:criterion_minTrace}, \eqref{eq:criterion_minAltTrace_Kuser} still have a large solution space. Specifically, $\Xc$ belongs to the product space $$\Bigg\{\Xm_k^{(1)}, \dots,
\Xm_k^{(|\Xc_k|)} \in \CC^{T\times M_k} \colon \frac{1}{|\Xc_k|} \sum_{i=1}^{|\Xc_k|} \big\|\Xm_k^{(i)}\big\|_{{\rm F}}^2\le PT\Bigg\},$$ and thus has $\prod_{k=1}^{K}(T M_k)^{|\Xc_k|}$ free variables to optimize. To reduce the solution space, we make the suboptimal assumption that the individual constellations $\Xc_k$ follow from {USTM,} i.e., they contain scaled-truncated-unitary-matrix symbols.
From \sheng{a practical perspective}, this is desirable since the constellation is oblivious to the presence of the other users and {USTM} is high-SNR optimal, or near optimal, for the single-user channel. {Furthermore, it was shown in \cite{Devassy2015finite} that letting each user employ {USTM} independently from the other users entails a small loss in terms of sum capacity for the noncoherent MIMO MAC even at moderate SNR.}
Under this assumption, we let 
$
\Xm_k^\H \Xm_k = \frac{P_kT}{T} \Id_M, ~\forall \Xm_k \in \Xc_k, k\in [K].
$
Thus, the solution space is reduced to the Cartesian product of $\sum_{k=1}^{K}|\Xc_k|$ instances of the set of truncated unitary matrices (for the signal subspace) and $K$ instances of the interval $[0,P]$ (for the signal power). Furthermore, {we can choose to optimize} the signal subspace and power separately. {Specifically, using the proposed metrics, we first optimize the signal subspace for given transmit power, and then optimize the power for given signal subspace. In the following, we consider each problem.} 

\section{{A} Simple Construction for Fixed Transmit Power} \label{sec:constructions}
In this section, inspired by the proposed criteria, we propose {a} simple constellation construction for fixed powers $\{P_k\}_{k\in [K]}.$\footnote{{In Appendix~\ref{sec:precoding}, we provide another simple constellation construction based on precoding, which is a generalization of our design for the SIMO MAC in~\cite{HoangAsilomar2018multipleAccess}.}} 
We consider the symmetrical power case $P_k = P, \; \forall k \in [K]$. This is a reasonable assumption if the rates are symmetric $R_1 = \dots = R_K$. Also, {following USTM,} we let $\Xm_k^\H\Xm_k = \frac{PT}{M} \Id_M$,
$\forall\,\Xm_k\in\Xcal_k$, $k\in[K]$. Nevertheless, there must be constraints
between the symbols of different users. For instance, if the
constellations are such that $\Xm_1=\Xm_2$ can occur, then $d_{k}(\Xcal)$ is upper-bounded by a constant for any $k$ and any $P$. 
This can be developed in a formal
way as follows. 

By removing the terms inside the
inverse in $d_k(\Xc)$, we obtain an upper bound:
\begin{align}
d_{k}(\Xcal) \le \min&\Big\{ \min_{\Xm_k\ne\Xm'_k\in\Xcal_k} \trace\big(\Xm_k^\H( \Id_T + \Xm'_k{\Xm'_k}^\H)^{-1} \Xm_k \big), \notag \\ 
&\min_{\Xm_k\in\Xcal_k, {\Xm}_l\in\Xcal_l, l\ne k}  \trace\big(\Xm_k^\H (\Id_T + {\Xm}_l \Xm_l^\H )^{-1} \Xm_k \big) \Big\} \label{eq:dk_ub}.
\end{align}
For $d_k(\Xcal)$ to be large,
the upper bound must
be large. This is made precise in the next proposition.

\begin{proposition}[Necessary condition] \label{prop:necessary}
Let $\{\Xcal_k\}_{k=1}^K$ be such that $\Xm_k^\H\Xm_k = \frac{PT}{M} \Id_M$,
$\forall\,\Xm_k\in\Xcal_k$, $k\in[K]$. If the following lower bound on the
$d$-values holds for some $c\in[0,1/M]$
\begin{equation}
\min_{k\in [K]} d_k(\Xcal) \ge 
PT\left(1 - {\alpha^{-1}_{P,T,M}} \, c \right), \label{eq:necc0}
\end{equation}%
where ${\alpha_{P,T,M} := \frac{1}{PT}+\frac{1}{M}}$, then we must have
\begin{multline}
\frac{1}{(PT)^2}\max\Big\{ \max_{\Xm_k\ne\Xm'_k\in\Xcal_k, k \in [K]} \big\|
{\Xm'_k}^\H \Xm_k
\big\|_{\rm F}^2, \\
\max_{\Xm_k\in\Xcal_k, \Xm_l\in\Xcal_l,k\ne l \in [K]} \|
{\Xm}_k^\H\Xm_l \|_{\rm F}^2
\Big\} \le c.
\label{eq:necc}
\end{multline}%
\end{proposition}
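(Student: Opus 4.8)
The plan is to start from the upper bound \eqref{eq:dk_ub} on $d_k(\Xcal)$, which was already obtained by dropping the positive semidefinite terms inside the matrix inverse, and to evaluate each of its two minima in closed form. The crucial observation is that \emph{every} symbol of \emph{every} user obeys the same truncated-unitary constraint $\Xm_k^\H \Xm_k = \frac{PT}{M}\Id_M$, since we have assumed $M_1 = \dots = M_K = M$ together with symmetric power. Consequently, both terms in \eqref{eq:dk_ub} have exactly the form of the single-user $d$-value $\trace\big(\Xm_k^\H(\Id_T + \Xm'{\Xm'}^\H)^{-1}\Xm_k\big)$ evaluated earlier in the single-user analysis, with $\Xm'$ standing either for another symbol $\Xm'_k$ of user $k$ (the first minimum) or for a symbol $\Xm_l$ of some other user $l \ne k$ (the second minimum). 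In particular, the cross-user terms reduce to the same computation as the intra-user ones precisely because $\Xm_l^\H \Xm_l = \frac{PT}{M}\Id_M$ as well.

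First I would apply the Woodbury identity $(\Id_T + \Xm'{\Xm'}^\H)^{-1} = \Id_T - \Xm'(\Id_M + {\Xm'}^\H \Xm')^{-1}{\Xm'}^\H$ with ${\Xm'}^\H\Xm' = \frac{PT}{M}\Id_M$, exactly as in the single-user case, to obtain for any admissible $\Xm'$
\[
\trace\big(\Xm_k^\H(\Id_T + \Xm'{\Xm'}^\H)^{-1}\Xm_k\big) = PT\Big(1 - \alpha^{-1}_{P,T,M}\tfrac{\|{\Xm'}^\H \Xm_k\|_{\rm F}^2}{(PT)^2}\Big).
\]
Since this is a strictly decreasing affine function of the squared cross-correlation $\|{\Xm'}^\H \Xm_k\|_{\rm F}^2$, each minimum over matrices in \eqref{eq:dk_ub} turns into $PT$ times the value attained at the \emph{largest} such correlation: the first minimum yields $\max_{\Xm_k\ne\Xm'_k\in\Xcal_k}\|{\Xm'_k}^\H \Xm_k\|_{\rm F}^2$ and the second $\max_{\Xm_l\in\Xcal_l,\,l\ne k}\|\Xm_l^\H \Xm_k\|_{\rm F}^2$. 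Taking the minimum over $k\in[K]$ and using $\min_k PT\big(1 - \alpha^{-1}_{P,T,M}(PT)^{-2} c_k\big) = PT\big(1 - \alpha^{-1}_{P,T,M}(PT)^{-2}\max_k c_k\big)$, the per-user maxima combine into the single quantity $\hat{c} := (PT)^{-2}\max\big\{\max_{\Xm_k\ne\Xm'_k,\,k}\|{\Xm'_k}^\H\Xm_k\|_{\rm F}^2,\ \max_{\Xm_k\in\Xcal_k,\Xm_l\in\Xcal_l,\,k\ne l}\|\Xm_k^\H\Xm_l\|_{\rm F}^2\big\}$, i.e.\ exactly the left-hand side of \eqref{eq:necc}. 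This gives $\min_{k} d_k(\Xcal) \le PT\big(1 - \alpha^{-1}_{P,T,M}\hat{c}\big)$.

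Finally I would chain this with the hypothesis \eqref{eq:necc0}, namely $\min_k d_k(\Xcal) \ge PT\big(1 - \alpha^{-1}_{P,T,M} c\big)$, to get $PT\big(1 - \alpha^{-1}_{P,T,M} c\big) \le PT\big(1 - \alpha^{-1}_{P,T,M}\hat{c}\big)$; cancelling the common positive factors $PT$ and $\alpha^{-1}_{P,T,M}$ leaves $\hat{c} \le c$, which is precisely \eqref{eq:necc}. There is no genuinely hard step here: the argument is a closed-form evaluation followed by a monotonicity comparison. The only points requiring care are the bookkeeping that converts each minimum over symbol pairs into a maximum over squared cross-correlations, and the remark that the cross-user terms inherit the same closed form as the intra-user ones because all users share the common antenna number $M$ and power level; the restriction $c\in[0,1/M]$ is natural in this context, as the operator-norm bound $\|{\Xm'}^\H\Xm_k\|_{\rm F}^2 \le (PT)^2/M$ shows that $\hat{c}$ can never exceed $1/M$.
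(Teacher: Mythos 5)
Your proposal is correct and follows essentially the same route as the paper: the paper's proof is precisely to apply the single-user Woodbury computation from Section~\ref{sec:single_user} to each trace in the upper bound \eqref{eq:dk_ub}, turn the minima over symbol pairs into maxima over squared cross-correlations, and chain the result with the hypothesis \eqref{eq:necc0}. Your closed-form evaluation, the combination over $k$ and over the two minima, and the final monotonicity comparison all match the intended argument.
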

\begin{proof}
The proof follows the same steps as the single-user case in Section~\ref{sec:single_user}, applying
to the upper bound \eqref{eq:dk_ub}.  
\end{proof}
The above proposition shows that symbol pairs from
different users should fulfill similar distance criteria as symbol pairs
from the same user when it comes to identifiability conditions. However, it is unclear whether
\eqref{eq:necc} alone is enough to guarantee a large value of $d_{\min}(\Xc)$. 
In the following, we shall show that these conditions are indeed sufficient if $c$ is small.  
\begin{proposition}[Sufficient condition] \label{prop:sufficient}
Let $\{\Xcal_k\}_{k=1}^K$ be such that {$\trace[\Xm_k^\H\Xm_k] = PT$,}
$\forall\,\Xm_k\in\Xcal_k$, $k\in[K]$. 
If
\begin{multline}
\frac{1}{(PT)^2}\max\Big\{ \max_{\Xm_k\ne\Xm'_k\in\Xcal_k, k \in [K]} \big\|
{\Xm'_k}^\H \Xm_k
\big\|_{\rm F}^2, \\
\max_{\Xm_k\in\Xcal_k, \Xm_l\in\Xcal_l,k\ne l \in [K]} \|
{\Xm}_k^\H\Xm_l \|_{\rm F}^2
\Big\} \le c
\end{multline}%
for some $c\in[0,1/M]$, then we have 
\begin{equation}
\min_{k\in [K]} d_k(\Xcal) \ge 
PT\Bigg(1 - K \bigg( {\alpha_{P,T,M}}-\sqrt{\frac{K(K-1)c}{2^{\ind{K=2}}}} \bigg)^{-1}
c \Bigg). \label{eq:suff2}
\end{equation}%
\end{proposition}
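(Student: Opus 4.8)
The plan is to lower-bound $d_k(\Xc)$ for each $k$ and then take the minimum over $k$; since \eqref{eq:dk_Xc_1} is a minimum over the competing symbol and the interferers, it suffices to bound the trace for an \emph{arbitrary} $\Xm'_k\in\Xc_k$ with $\Xm'_k\ne\Xm_k$ and arbitrary $\Xm_j\in\Xc_j$, $j\ne k$. Stacking these matrices into $\Wm:=[\,\Xm'_k\ \{\Xm_j\}_{j\ne k}\,]\in\CC^{T\times\Mtot}$, the matrix inside the inverse is exactly $\Mm:=\Id_T+\Wm\Wm^\H$. The first step is to apply the Woodbury identity (as in Section~\ref{sec:single_user}) to obtain
\begin{multline*}
\trace(\Xm_k^\H\Mm^{-1}\Xm_k) = \trace(\Xm_k^\H\Xm_k) \\ - \trace\big(\Am^\H(\Id_{\Mtot}+\Wm^\H\Wm)^{-1}\Am\big),
\end{multline*}
with $\Am:=\Wm^\H\Xm_k$. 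The first term equals $PT$ by the hypothesis $\trace(\Xm_k^\H\Xm_k)=PT$, so the whole task reduces to \emph{upper}-bounding the subtracted term.

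I would then split this term using $\trace(\Am^\H\Bm\Am)\le\lambda_{\max}(\Bm)\,\|\Am\|_{\rm F}^2$ with the positive definite $\Bm=(\Id_{\Mtot}+\Wm^\H\Wm)^{-1}$. The numerator is immediate: $\|\Am\|_{\rm F}^2=\|{\Xm'_k}^\H\Xm_k\|_{\rm F}^2+\sum_{j\ne k}\|\Xm_j^\H\Xm_k\|_{\rm F}^2$ is a sum of $K$ squared cross-products (one same-user pair $\Xm_k,\Xm'_k$ and $K-1$ cross-user pairs), each bounded by $c(PT)^2$ under the hypothesis, so $\|\Am\|_{\rm F}^2\le Kc(PT)^2$.

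The crux is the denominator, namely a lower bound on $\lambda_{\min}(\Id_{\Mtot}+\Wm^\H\Wm)$. Here I would use the per-user normalization $\Xm_a^\H\Xm_a=\tfrac{PT}{M}\Id_M$ assumed throughout this section to write the Gram matrix as $\Wm^\H\Wm=\tfrac{PT}{M}\Id_{\Mtot}+\Em$, where $\Em$ gathers the off-diagonal blocks $\Xm_a^\H\Xm_b$, $a\ne b$. Crucially, \emph{every} off-diagonal block is a cross-user product, since $\Xm'_k$ belongs to user $k$ while each interferer belongs to a distinct user $\ne k$; hence each block has squared Frobenius norm at most $c(PT)^2$. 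Weyl's inequality then gives $\lambda_{\min}(\Wm^\H\Wm)=\tfrac{PT}{M}+\lambda_{\min}(\Em)\ge\tfrac{PT}{M}-\|\Em\|$, where $\|\Em\|$ denotes the spectral norm. For $K\ge3$ I bound $\|\Em\|\le\|\Em\|_{\rm F}\le PT\sqrt{K(K-1)c}$ (there are $K(K-1)$ off-diagonal blocks); for $K=2$, $\Em=\left[\begin{smallmatrix}\zerov&\Bm\\\Bm^\H&\zerov\end{smallmatrix}\right]$ is block anti-diagonal with eigenvalues equal to $\pm$ the singular values of its single block $\Bm$, so $\|\Em\|$ equals the largest singular value of $\Bm$, which is at most $\|\Bm\|_{\rm F}\le PT\sqrt{c}$ — a $\sqrt2$ improvement. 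Both cases are captured by $\|\Em\|\le PT\sqrt{\tfrac{K(K-1)c}{2^{\ind{K=2}}}}$. Using $1+\tfrac{PT}{M}=\alpha_{P,T,M}\,PT$ this yields
\begin{equation*}
1+\lambda_{\min}(\Wm^\H\Wm)\ge PT\Big(\alpha_{P,T,M}-\sqrt{\tfrac{K(K-1)c}{2^{\ind{K=2}}}}\Big).
\end{equation*}

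Finally I would assemble the two bounds: the subtracted term is at most $Kc(PT)^2$ divided by $PT\big(\alpha_{P,T,M}-\sqrt{K(K-1)c/2^{\ind{K=2}}}\big)$, i.e.\ $K\big(\alpha_{P,T,M}-\sqrt{K(K-1)c/2^{\ind{K=2}}}\big)^{-1}c\,PT$, and subtracting this from $PT$ reproduces \eqref{eq:suff2}; since the estimate is uniform over the competing and interfering symbols and over $k$, it passes to $\min_k d_k(\Xc)$. The \textbf{main obstacle} is the denominator bound, as it is the only place where the constellation geometry enters: it requires recognizing that all off-diagonal Gram blocks are cross-user products controlled by the hypothesis, applying Weyl against the scaled-identity diagonal, and—to obtain the $2^{\ind{K=2}}$ factor—exploiting the anti-diagonal block structure when $K=2$. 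A secondary point to verify is that for small $c$ the quantity $\alpha_{P,T,M}-\sqrt{K(K-1)c/2^{\ind{K=2}}}$ is positive, which both makes $\Id_{\Mtot}+\Wm^\H\Wm$ positive definite (so that $\lambda_{\max}$ of its inverse is $(1+\lambda_{\min})^{-1}$) and keeps the right-hand side of \eqref{eq:suff2} meaningful.
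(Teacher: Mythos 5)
Your proof is correct and follows essentially the same route as the paper's: the Woodbury identity reduces the problem to upper-bounding $\trace\big(\Am^\H(\Id_{\Mtot}+\Wm^\H\Wm)^{-1}\Am\big)$ with $\Am=\Wm^\H\Xm_k$, the numerator bound $\|\Wm^\H\Xm_k\|_{\rm F}^2\le Kc(PT)^2$ is identical, and the denominator is handled by the same Weyl/Frobenius perturbation of the scaled-identity Gram matrix, with the block-anti-diagonal eigenvalue argument supplying the $2^{\ind{K=2}}$ improvement (the paper packages these as its lemmas on Hermitian perturbation and on the eigenvalues of $\bigl[\begin{smallmatrix}\Id&\Am\\\Am^\H&\Id\end{smallmatrix}\bigr]$). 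The only cosmetic difference is that the paper passes through the SVD of the stacked matrix and its minimum nonzero singular value whereas you work directly with $\lambda_{\min}(\Wm^\H\Wm)$; both arguments use the USTM normalization $\Xm_a^\H\Xm_a=\frac{PT}{M}\Id_M$ in the denominator step, which you at least state explicitly.
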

\begin{proof}
See Appendix~\ref{proof:sufficient}.
\end{proof}

{\begin{remark}
Proposition~\eqref{prop:sufficient} only requires the joint constellation to satisfy $\trace[\Xm_k^\H\Xm_k] = PT$ rather than $\Xm_k^\H\Xm_k = \frac{PT}{M} \Id_M$ for all
$\Xm_k\in\Xcal_k$, i.e., the joint constellation does not necessarily follow USTM.
\end{remark}}

{The two propositions above give necessary and sufficient conditions for the metric $\min_{k\in [K]} d_k(\Xcal)$ to scale linearly with $P$. The joint constellation attains a high value of this metric if and only if every pair of individual symbols either from the same user or different users are well separated in terms of the chordal distance. This is illustrated for the two-user case in Fig.~\ref{fig:dist_condition}. These propositions}
motivate the following simplified design criterion
\begin{align} 
\Xc^* = \arg\min_{\Xcal} \ 
\max&\Big\{ \max_{\Xm_k\ne\Xm'_k\in\Xcal_k, k \in [K]} \big\|
{\Xm'_k}^\H \Xm_k
\big\|_{\rm F}^2, \notag \\
&\max_{\Xm_k\in\Xcal_k, \Xm_l\in\Xcal_l,k\ne l \in [K]} \|
{\Xm}_k^\H\Xm_l \|_{\rm F}^2
\Big\}. \quad \label{eq:criterion_minChordal}
\end{align}
\begin{figure}[t!]
\vspace{-2cm}
\centering
\includegraphics[width=.45\textwidth]{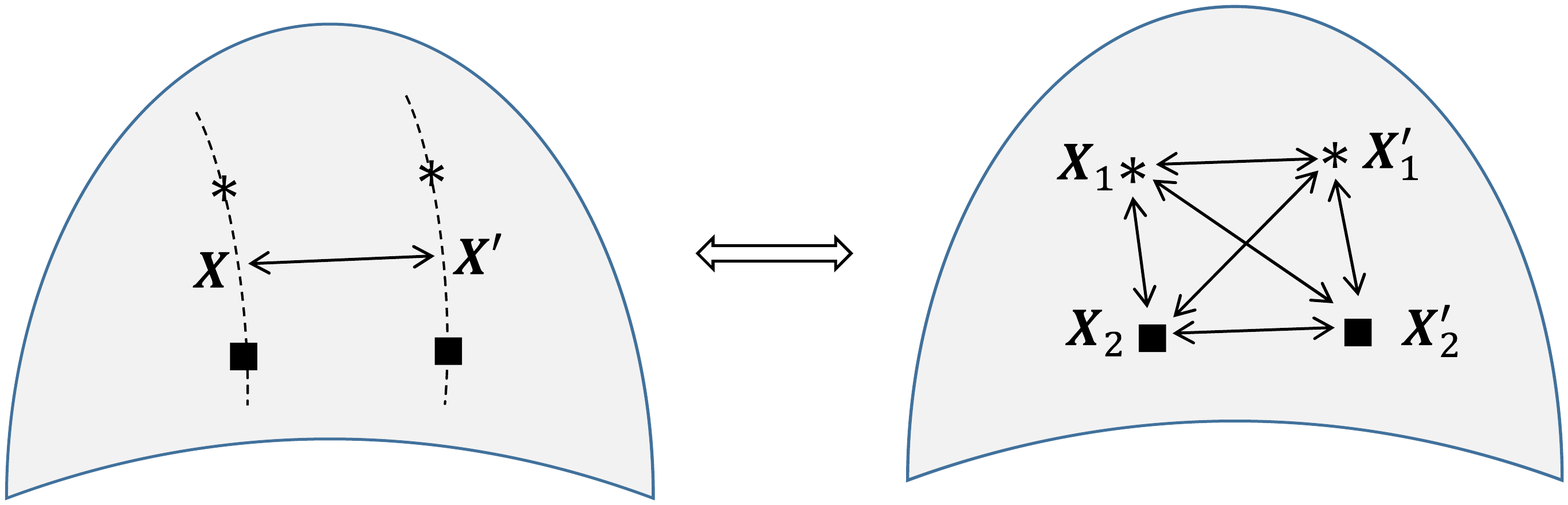}
\vspace{-1.7cm}
\caption{{Illustration of the necessary and sufficient conditions for the metric $\min_{k\in [K]} d_k(\Xcal)$ to scale linearly with $P$ in the two-user case. A pair of joint symbols $\Xm = [\Xm_1 \ \Xm_2]$ and $\Xm' = [\Xm_1' \ \Xm_2']$ attains a high value of the $d(\Xm\to{\Xm'})$ metric if and only if each pair of individual symbols among $\{\Xm_1,\Xm_2,\Xm'_1,\Xm_2'\}$ are well separated in terms of the chordal distance.}}
\label{fig:dist_condition}
\end{figure}

Based on~\eqref{eq:criterion_minChordal}, we propose a simple construction as follows. Let $\Xcal_{\textnormal{SU}}$ be a single-user constellation and let
$
c \defeq \frac{1}{(PT)^2}\displaystyle\max_{\Xm \ne {\Xm'} \in \Xc_{\textnormal{SU}}}  \|
{\Xm'}^\H \Xm \|_{\rm F}^2
\in\bigl[0,\tfrac{1}{M}\bigr]$. 
%
%
%
%
We can generate $\{\Xc_k\}_{k=1}^K$ by partitioning  $\Xcal_{\textnormal{SU}}$ into $K$ disjoint subsets. Then, from \eqref{eq:dmin_lb} and Proposition~\ref{prop:sufficient}, we can
guarantee  
\begin{equation}
d_{\min}(\Xcal) \ge 
PT\Bigg(1 - K \bigg( {\alpha_{P,T,M}}-\sqrt{\frac{K(K-1)c}{2^{\ind{K=2}}}} \bigg)^{-1}
c \Bigg). \label{eq:tmp833}
\end{equation}%
With such a construction, the joint constellation design
problem becomes essentially an individual constellation design problem. A random partition suffices to guarantee \eqref{eq:tmp833}, although one can smartly partition the set
$\Xcal_{\textnormal{SU}}$ to improve over \eqref{eq:tmp833}. The optimal partition problem is equivalent to a min-max graph partitioning~\cite{Bulucc2016recent}. Note that for the right-hand side of \eqref{eq:tmp833} to scale linearly with $P$, $c$ must be small enough, which requires the initial single-user constellation $\Xc_{\textnormal{SU}}$ to be sparse enough in $G(\CC^T,M)$ and thus limits the size of $\Xcal_{\textnormal{SU}}$. This is made precise in the following.

\begin{proposition}[Requirement for the single-user constellation $\Xc_{\rm SU}$] \label{prop:lowerBound_minDist_Xsu}
Consider a joint constellation $\Xc$ 
generated by partitioning a single-user constellation $\Xc_{\textnormal{SU}}$. For the lower bound of $d_{\min}(\Xc)$ in \eqref{eq:tmp833} to scale linearly with $P$, the minimum pairwise chordal distance between elements of $\Xc_{\textnormal{SU}}$, i.e. $\delta_{\min}(\Xc_{\textnormal{SU}}) \defeq
\displaystyle\min_{\Xm \ne {\Xm'} \in \Xc_{\textnormal{SU}}} \sqrt{M-\tfrac{1}{P^2T^2}\|{\Xm'}^\H \Xm\|_{\rm F}^2}$, must satisfy
\begin{align}
\delta_{\min}(\Xc_{\textnormal{SU}}) 
&> \sqrt{M-\Big[\Big({\frac{\alpha_{P,T,M}}{K} +\phi_K}\Big)^{\frac12}-{\phi_K^\frac12}\Big]^2}, \label{eq:tmp867}
\end{align}
{where $\phi_K \defeq \frac{K-1}{4K2^{\ind{K=2}}}$. The condition~\eqref{eq:tmp867} implies that} the cardinality of $\Xc_{\textnormal{SU}}$ is bounded as
\begin{align}
|\Xc_{\textnormal{SU}}| &<{\kappa^{-1}_{T,M}} 2^{2M(T-M)} \notag \\
&\quad \cdot \left(M-\left[{\left({\frac{\alpha_{P,T,M}}{K}} +{\phi_K}\right)}^\frac12-{\phi_K^\frac12}\right]^2\right)^{-M(T-M)} \label{eq:tmp852},
\end{align} 
with
\begin{align}
&{\kappa_{T,M}} \defeq \notag \\
&~~ \frac{1}{(M(T\!-\!M))!} \displaystyle\prod_{i=1}^{\min\{M,T-M\}} \frac{(T-i)!}{(\min\{M,T\!-\!M\}-i)!}. \quad   \label{eq:c(T,M)}
\end{align}
\end{proposition}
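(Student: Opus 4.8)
The plan is to prove the two claims in turn: the distance bound \eqref{eq:tmp867} is pure algebra extracted from the requirement that the guarantee \eqref{eq:tmp833} be a strictly positive multiple of $PT$, and the cardinality bound \eqref{eq:tmp852} then follows by feeding this separation into a sphere-packing argument on the Grassmann manifold $G(\CC^T,M)$. Throughout I use that, since $\Xc_{\textnormal{SU}}$ consists of scaled truncated-unitary symbols, the quantity $c=\frac{1}{(PT)^2}\max_{\Xm\ne\Xm'}\|\Xm'^\H\Xm\|_{\rm F}^2$ is a fixed property of the subspace configuration and, by the definition of $\delta_{\min}$, satisfies $c=M-\delta_{\min}^2(\Xc_{\textnormal{SU}})$.

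First I would translate ``scales linearly with $P$'' into a condition on $c$. The right-hand side of \eqref{eq:tmp833} is $PT\cdot F$ with $F=1-K\big(\alpha_{P,T,M}-\sqrt{K(K-1)c/2^{\ind{K=2}}}\big)^{-1}c$, so linear scaling forces $F>0$, i.e. $\alpha_{P,T,M}>Kc+\sqrt{K(K-1)c/2^{\ind{K=2}}}$. Using the identity $K(K-1)/2^{\ind{K=2}}=4K^2\phi_K$ this becomes $\tfrac{\alpha_{P,T,M}}{K}>c+2\sqrt{\phi_K}\,\sqrt{c}$. Completing the square in $\sqrt{c}$ gives $(\sqrt{c}+\sqrt{\phi_K})^2<\tfrac{\alpha_{P,T,M}}{K}+\phi_K$, hence $\sqrt{c}<\big(\tfrac{\alpha_{P,T,M}}{K}+\phi_K\big)^{1/2}-\phi_K^{1/2}$ and therefore $c<\big[\big(\tfrac{\alpha_{P,T,M}}{K}+\phi_K\big)^{1/2}-\phi_K^{1/2}\big]^2$. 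Substituting $c=M-\delta_{\min}^2(\Xc_{\textnormal{SU}})$ and rearranging yields \eqref{eq:tmp867} directly.

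For the cardinality bound I would invoke the volume (sphere-packing) bound on $G(\CC^T,M)$ under chordal distance. The separation $\delta_{\min}(\Xc_{\textnormal{SU}})$ means the open balls of radius $\delta_{\min}/2$ centered at the subspaces are pairwise disjoint, so the sum of their volumes is at most the total volume of the manifold. With $\kappa_{T,M}$ of \eqref{eq:c(T,M)} identified as the leading coefficient of the normalized ball volume and the real dimension equal to $2M(T-M)$, this gives $|\Xc_{\textnormal{SU}}|\le\kappa_{T,M}^{-1}2^{2M(T-M)}\big(\delta_{\min}^2(\Xc_{\textnormal{SU}})\big)^{-M(T-M)}$. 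Finally, since the first part established $\delta_{\min}^2(\Xc_{\textnormal{SU}})>M-\big[\big(\tfrac{\alpha_{P,T,M}}{K}+\phi_K\big)^{1/2}-\phi_K^{1/2}\big]^2$, the right-hand side is strictly smaller than the same expression with $\delta_{\min}^2$ replaced by $M-[\cdots]^2$, which is exactly \eqref{eq:tmp852}.

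The completing-the-square and substitution steps are routine; the hard part is the packing step, specifically justifying the \emph{exact} constant $\kappa_{T,M}$. This needs the explicit volume of a chordal-distance ball in the complex Grassmann manifold, a known but technical computation, and I would rely on the volume lower bound $V(r)\ge\kappa_{T,M}\,r^{2M(T-M)}$ rather than the small-$r$ asymptotics, verifying it remains valid over the radius $r=\delta_{\min}/2$ relevant here before concluding.
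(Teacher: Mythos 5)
Your proposal is correct and follows essentially the same route as the paper: the condition \eqref{eq:tmp867} is obtained by requiring the bracket in \eqref{eq:tmp833} to be positive and completing the square in $\sqrt{c}$ (the paper states the resulting bound on $c$ without showing the algebra you spell out), and the cardinality bound comes from the same Hamming/sphere-packing argument with the ball-volume formula $\mu(\Bc(\delta))=\kappa_{T,M}\delta^{2M(T-M)}$ from Dai et al. Your caveat about verifying the volume formula at radius $\delta_{\min}/2$ is a fair point of care, but the paper simply invokes the cited corollary for this.
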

\begin{proof}
The right-hand side of \eqref{eq:tmp833} scales linearly with $P$ if $1 - K \Big( {\alpha_{P,T,M}}-\sqrt{\frac{K(K-1)c}{2^{\ind{K=2}}}} \Big)^{-1}
c > 0$, i.e.,
$
c <  \left[\left({\frac{\alpha_{P,T,M}}{K}} +{\phi_K}\right)^{\frac12}-{\phi_K^\frac12}\right]^2. 
$
This is equivalent to \eqref{eq:tmp867} since $\delta_{\min}(\Xc_{\textnormal{SU}}) = \sqrt{M-c}$ by definition.
On the other hand, according to~\cite[Corollary 1]{Dai2008quantizationBounds}, the volume of a metric ball $\Bc(\delta)$ of radius $\delta$ (in chordal distance) in $G(\CC^T,M)$ with the normalized invariant measure $\mu(\cdot)$ is given by
$
\mu(\Bc(\delta)) = {\kappa_{T,M}} \delta^{2M(T-M)}
$
with ${\kappa_{T,M}}$ defined in~\eqref{eq:c(T,M)}. Since $\Xc_{\textnormal{SU}}$ is a packing on $G(\CC^T,M)$ with minimum chordal distance $\delta_{\min}(\Xc_{\textnormal{SU}})$, 
the Hamming upper bound~\cite[Eq.~(3)]{Dai2008quantizationBounds} yields 
$|\Xc_{\textnormal{SU}}| \le \frac{1}{\mu(\Bc(\delta_{\min}(\Xc_{\textnormal{SU}})/2))}$. From this and \eqref{eq:tmp867}, we obtain \eqref{eq:tmp852}. 
\end{proof}


At high SNR ($P\to \infty$), the bounds on $\delta_{\min}(\Xc_{\textnormal{SU}})$ in \eqref{eq:tmp867} and $|\Xc_{\textnormal{SU}}|$ in \eqref{eq:tmp852} converge to
\begin{align}
\nu(K,M) &\defeq \sqrt{M-\Big[\Big(\frac{1}{KM} +{\phi_K}\Big)^{\frac12}-{\phi_K^\frac12}\Big]^2}
\end{align}
and
\begin{align}
\beta(T,K,M) &\defeq c^{-1}_{T,M} 2^{2M(T-M)} \notag \\
&\quad \cdot \bigg(M\!-\!\Big[\Big(\frac{1}{KM} \!+\!{\phi_K}\Big)^\frac12\!-\!{\phi_K^\frac12}\Big]^2\bigg)^{-M(T-M)},
\label{eq:tmp1079}
\end{align}
respectively.
Fig.~\ref{fig:bound_size_Xsu} shows the values of $\log_2(\beta(T,K,M))$, which is the high-SNR upper bound on the number of bits per symbol $\log_2(|\Xc_{\textnormal{SU}}|)$ in $\Xc_{\textnormal{SU}}$, for $K=4$ and some values of $T$ and $M$. As can be seen, for a fixed $M$, the bound monotonically increases with $T$; for a fixed $T$, the bound first increases with $M$ then decreases after a peak value and becomes $0$ (imposing a zero transmission rate) when $M \approx 0.73 T$. 
\begin{figure}[t!]
\vspace{-.4cm}
\centering
\includegraphics[width=.5\textwidth]{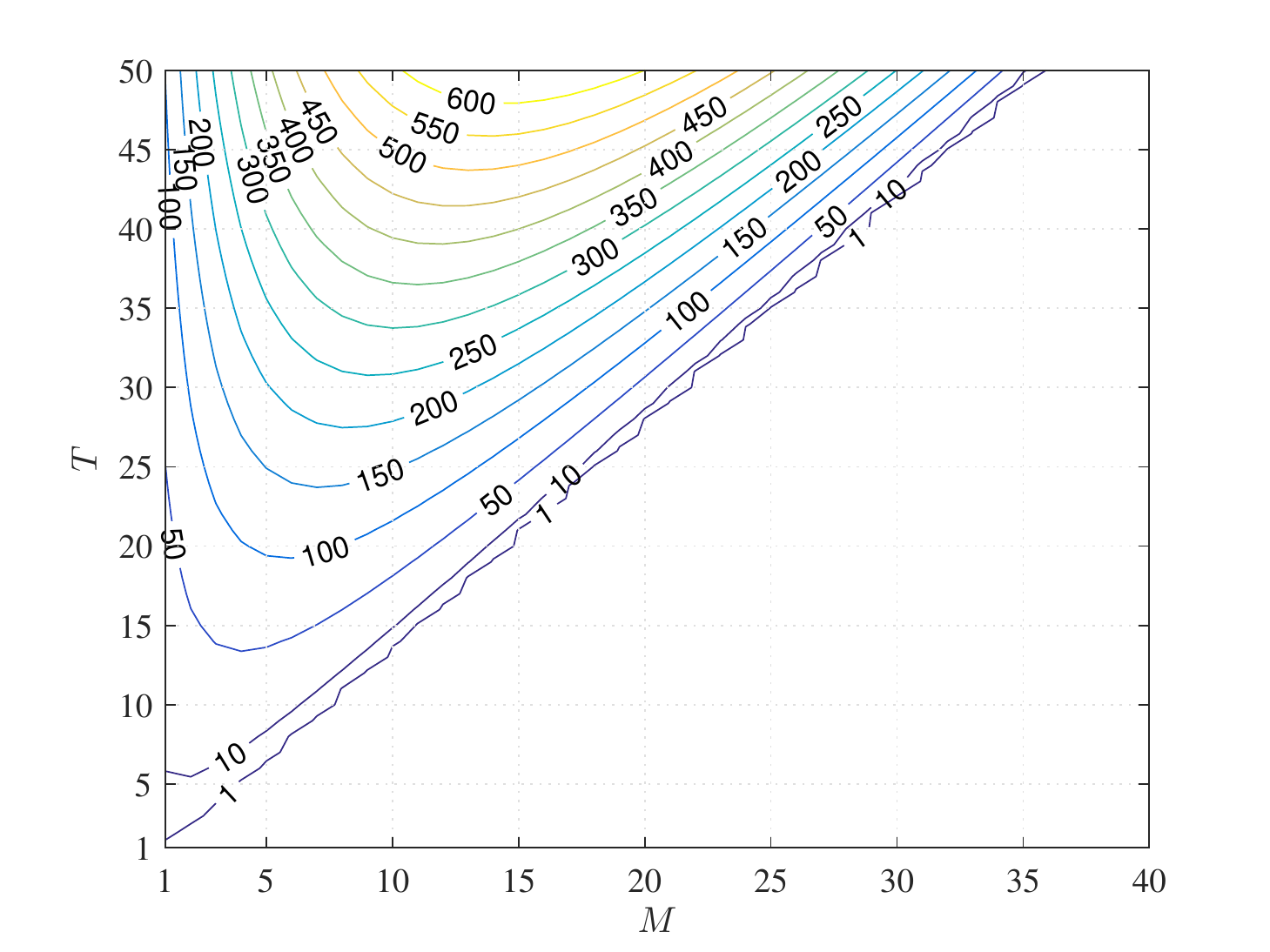}

\vspace{-.4cm}
\caption{The upper bound $\log_2(\beta(T,K,M))$ on the number of bits per symbol $\log_2(|\Xc_{\textnormal{SU}}|)$ necessary for the lower bound \eqref{eq:tmp833} of $d_{\min}(\Xc)$ to scale linearly with $P$ with $K=4$.}
\label{fig:bound_size_Xsu}
\end{figure}

\begin{remark}
The Grassmann manifold $G(\CC^T,M)$ has $2M(T-M)$ real dimensions. From \eqref{eq:tmp1079}, an upper bound on the number of bits per real dimension for $\Xc_{\textnormal{SU}}$ is given by
\begin{equation} \notag
\frac{\log_2 \beta(T,K,M)}{2M(T-M)} \le \zeta(K,M) 
\end{equation}
where
\begin{equation} \notag
\zeta(K,M) \defeq 1 - \frac{1}{2}\log_2\bigg(1-\frac{1}{M}\bigg[\Big(\frac{1}{KM} \!+\!{\phi_K}\Big)^{\frac12}-{\phi_K^\frac12}\bigg]^2\bigg). 
\end{equation}
In fact, using Stirling's formula $\sqrt{2\pi} n^{n+1/2}e^{-n} \le n! \le e n^{n+1/2}e^{-n}$~\cite{Robbins1955remark_Stirling}, we can show that $\frac{\log_2 \beta(T,M)}{2M(T-M)} \uparrow \zeta(K,M)$ as $T\to \infty$, where ``$\uparrow$'' means ``approach from below''. After some simple manipulations, we have that $\zeta(K,M) \le 2- \frac{1}{2}\log_2 3 < \log_2 3$ for any $K$ and $M$. That is, roughly speaking, one should not pack more than $2$ symbols of $\Xc_{\textnormal{SU}}$ in each real dimension of the manifold in average if the partitioning approach is used. 
\end{remark}

\section{Power Optimization} \label{sec:power_opt}
When the users transmit at different rates, letting the users transmit at equal power might not be optimal. For example, in the extreme case where only one of the users transmits at non-zero rate, all other users should remain silent, i.e., transmit at {zero power,} to avoid causing interference. Therefore, power optimization also plays a key role. For a fixed constellation $\Xc$ (possibly generated with equal transmit power), let us now consider the problem of optimizing the transmit power so as to maximize the proposed metrics.\footnote{{Note that the power optimization can be used not only to further optimize a joint constellation whose signal subspaces have been optimized, but also to improve any given joint constellation.}}

{Let us first focus on the two-user case with per-user USTM. For convenience, we write the constellation symbols as truncated unitary matrices scaled with the transmit powers $\{P_1,P_2\}$, that is, $\Xc_k = \Big\{\sqrt{\frac{P_k T}{M}}\Xm_k\of{i} : \Xm_k\ofH{i}\Xm_k\of{i} = \Id_M, i \in [|\Xc_k|] \Big\}, ~k = 1,2.$ Let $\bar{\Xc} = \bar{\Xc}_1 \times \bar{\Xc}_2$ where $\bar{\Xc}_k = \{\Xm_k^{(i)}\}_{i=1}^{|\Xc_k|}$ is the set of the {\em normalized} symbols of user $k$. We assume that $\bar{\Xc}$ is fixed and would like to optimize the transmit powers $\{P_1,P_2\}$. To this end, we define $\theta \defeq P_2/P_1$, denote $\Xc$ as $\Xc^\theta$ for convenience, and seek to optimize $\theta$ as
\begin{align} \label{eq:opt_theta}
\theta^* = \arg \max_{\theta} m(\Xc^\theta)
\end{align}	
where $m(\Xc)$ is the considered metric. Recall that we assume w.l.o.g. that $\max_{k} P_k = P$. The optimal value of $\theta$ cannot be found in closed-form in general. We propose a procedure to optimize $\theta$ as follows.
\begin{enumerate}
\item Let $P_1 = P$, i.e., user~$1$ transmits at full power, optimize $\theta$ as in \eqref{eq:opt_theta} with the constraint $\theta \in [0,1]$. Let $\hat{\theta}$ be the optimal value.
\item Let $P_2 = P$, i.e., user~$2$ transmits at full power, optimize $\theta$\textemdash or equivalently $\frac{1}{\theta}$\textemdash as in \eqref{eq:opt_theta} with the constraint $\frac{1}{\theta} \in [0,1]$. Let $\breve{\theta}$ be the optimal value.
\item The optimal value of $\theta$ is given by\footnote{{In the numerical result in the next section, we shall see that it is favorable to let the user with higher transmission rate transmit at full power $P$ in the considered setting.}} $\arg\max_{\theta \in \{\hat{\theta},\breve{\theta}\}} m(\Xc^\theta)$.
\end{enumerate}
In Steps 1 and 2, one finds an extremum of the metric over $\theta$ or $\frac{1}{\theta}$ inside the interval $[0,1]$. Well-known extremum search algorithms, such as the golden-section search, can be employed. In the SIMO case, the optimization of $\theta$ in these steps can be done more efficiently as follows. In this case, the individual constellations are $\Xc_k = \{\sqrt{P_k T}\xv_k^{(i)} : \|\xv_k^{(i)}\| = 1, i \in [|\Xc_k|] \}, ~k = 1,2.$ The set of the {\em normalized} symbols are $\bar{\Xc}_k = \{\xv_k^{(i)}\}_{i=1}^{|\Xc_k|}$, $k=1,2$. Consider the metric $d_{\min}(\Xc)$.} 
From Corollary~\ref{coro:min_dk_2user}, we deduce that $\min\{d_1(\Xc^\theta),d_2(\Xc^\theta)\} \le d_{\min}(\Xc^\theta) \le \min\{d_1(\Xc^\theta),d_2(\Xc^\theta)\} + 1$ where $d_1(\Xc^\theta) = \min_{\xv_1,\xv'_1,\xv_2} \delta_1(\theta, \xv_1,\xv'_1,\xv_2)$ and $d_2(\Xc^\theta) = \min_{\hat{\xv}_1,\hat{\xv}_2,\hat{\xv}'_2} \delta_2(\theta, \hat{\xv}_1,\hat{\xv}_2,\hat{\xv}'_2)$ with 
\begin{align}
&\delta_1(\theta, \xv_1,\xv'_1,\xv_2) \defeq \notag \\
&\quad P_1T\xv_1^\H(\Id_T + P_1T \xv'_1 {\xv'}_1^\H + \theta P_1 T \xv_2 \xv_2^\H)^{-1} \xv_1, \\ 
&\delta_2(\theta, \hat{\xv}_1,\hat{\xv}_2,\hat{\xv}'_2) \defeq \notag \\
&\quad \theta P_1T \hat{\xv}_2^\H(\Id_T + P_1T\hat{\xv}_1 \hat{\xv}_1^\H + \theta P_1 T\hat{\xv}'_2 {{\hat{\xv}}}_2^{\prime \H}  )^{-1} \hat{\xv}_2,
\end{align}
for $\{\xv_{1}, \xv'_1, \hat{\xv}_1\} \subset \bar{\Xc}_1, \{\xv_2, \hat{\xv}_2, \hat{\xv}'_2\} \subset \bar{\Xc}_2$ such that $\xv_{1} \ne \xv'_1$ and $\hat{\xv}_2 \ne \hat{\xv}'_2$. 
The optimal value of $\theta$ can be found by analyzing $d_1(\Xc^\theta)$ and $d_2(\Xc^\theta)$, as stated in the following proposition. 
\begin{proposition}[Power optimization] \label{prop:theta}
In the two-user SIMO case, the following results hold.
\begin{enumerate} 
\item 
$\min\{d_1(\Xc^\theta),d_2(\Xc^\theta)\}$ is maximized at $\theta = {\tilde{\theta}}$ such that $d_1(\Xc^{{\tilde{\theta}}}) = d_2(\Xc^{{\tilde{\theta}}})$, and 
\begin{align} \label{eq:constant_gap_2user}
d_{\rm min}(\Xc^{{\tilde{\theta}}}) \le \max_\theta d_{\rm min}(\Xc^{\theta}) \le d_{\rm min}(\Xc^{{\tilde{\theta}}}) + 1.
\end{align}

\item For each $\{\xv_{1}, \xv'_1, \hat{\xv}_1\} \subset \bar{\Xc}_1$, $\{\xv_2, \hat{\xv}_2, \hat{\xv}'_2\} \subset \bar{\Xc}_2$ such that $\xv_{1} \ne \xv'_1$ and $\hat{\xv}_2 \ne \hat{\xv}'_2$, there exists a unique value of $\theta$ such that $\delta_1(\theta, \xv_1,\xv'_1,\xv_2) = \delta_2(\theta, \hat{\xv}_1,\hat{\xv}_2,\hat{\xv}'_2)$. Denote this value by $\hat{\theta}$ which is implicitly a function of $\xv_{1}, \xv'_1, \hat{\xv}_1,\xv_2, \hat{\xv}_2, \hat{\xv}'_2$. Denote $\delta(\hat{\theta}) \defeq \delta_1(\hat{\theta}, \xv_1,\xv'_1,\xv_2) = \delta_2(\hat{\theta}, \hat{\xv}_1,\hat{\xv}_2,\hat{\xv}'_2)$. It holds that 
\label{eq:tmp1344}
\begin{align} 
\hat{\theta} &= \frac{1}{3a} \bigg[2\sqrt{\Delta} \notag \\
&\qquad ~~ \cdot \cos\bigg(\frac{1}{3} \arccos \bigg( \frac{9abc -2b^3 - 27a^2d}{2\sqrt{\Delta^3}}\bigg) \bigg) - b\bigg] \label{eq:tmp1344-a} 
\end{align}
with
\begin{align}
a &\defeq P_1T\big[1 + P_1T(1-|{\xv'}_1^\H\xv_2|^2) \big] e_{2}, \\
b &\defeq (1+ P_1T) e_{2} \notag \\
&\quad + \big[1 + P_1T(1-|{\xv'}_1^\H\xv_2|^2) \big] \big[1 + P_1T(1-|\hat{\xv}_1^\H \hat{\xv}_2|^2) \big] \notag \\
&\quad- \big[P_1T + P_1^2T^2(1-|\hat{\xv}_1^\H\hat{\xv}'_2|^2) \big] e_{1},\\ 
c &\defeq - (1+P_1T)e_{1} \notag \\
&\quad - \big[1 + P_1T(1-|\hat{\xv}_1^\H\hat{\xv}'_2|^2) \big] \big[1 + P_1T(1-|\xv_1^\H{\xv'}_1|^2) \big] \notag \\
&\quad + \big(1+\tfrac{1}{P_1T}\big)\big[1 + P_1T(1-|\hat{\xv}_1^\H \hat{\xv}_2|^2) \big], \\
d &\defeq - \big(1+\tfrac{1}{P_1T}\big) \big[1 + P_1T(1-|\xv_1^\H{\xv'}_1|^2) \big], \\
e_{1} &\defeq 1-|\xv_1^\H\xv_2|^2 + P_1T\big[(1-|\xv_1^\H\xv'_1|^2)(1-|{\xv'}_1^\H\xv_2|^2) \notag \\
&\qquad \qquad \qquad \qquad \qquad - |\xv_1^\H\xv'_1{\xv'}_1^\H\xv_2 - \xv_1^\H\xv_2|^2\big], \\
e_{2} &\defeq 1-|\hat{\xv}_2^\H\hat{\xv}'_2|^2 + P_1T\big[(1-|\hat{\xv}_2^\H\hat{\xv}_1|^2)(1-|\hat{\xv}_1^\H\hat{\xv}'_2|^2) \notag \\
&\qquad \qquad \qquad \qquad \qquad - |\hat{\xv}_2^\H\hat{\xv}_1\hat{\xv}_1^\H\hat{\xv}'_2 - \hat{\xv}_2^\H \hat{\xv}'_2|^2\big], \\
\Delta &\defeq b^2 - 3ac,
\end{align}
and that 
\begin{align} \label{eq:tmp1706}
{\tilde{\theta}} = \arg\min_{\hat{\theta} \in \Thetam} \delta(\hat{\theta}),
\end{align}
where  $\Thetam$ is the set of values of $\hat{\theta}$ for all possible 6-tuple of symbols $\{\xv_{1}, \xv'_1, \hat{\xv}_1\} \subset \bar{\Xc}_1$, $\{\xv_2, \hat{\xv}_2, \hat{\xv}'_2\} \subset \bar{\Xc}_2$ such that $\xv_{1} \ne \xv'_1$ and $\hat{\xv}_2 \ne \hat{\xv}'_2$.
\end{enumerate}
\end{proposition}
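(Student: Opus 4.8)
The plan is to ground the whole argument in the monotonicity of the two scalar families $\delta_1$ and $\delta_2$ as functions of the power ratio $\theta$. First I would show that $\delta_1(\theta,\xv_1,\xv'_1,\xv_2)$ is nonincreasing in $\theta$: the matrix $\Id_T + P_1T\xv'_1{\xv'_1}^\H + \theta P_1T\xv_2\xv_2^\H$ grows in the positive semidefinite order with $\theta$, so its inverse shrinks and the quadratic form $P_1T\,\xv_1^\H(\cdot)^{-1}\xv_1$ decreases. For $\delta_2$ I would apply the Sherman--Morrison identity to the rank-one term $\theta P_1T\hat{\xv}'_2{{\hat{\xv}}}_2^{\prime \H}$, writing $\delta_2 = s\,a_0 - \frac{s^2 b_0}{1+s c_0}$ with $s = \theta P_1T$, $a_0 = \hat{\xv}_2^\H\Am^{-1}\hat{\xv}_2$, $b_0 = |\hat{\xv}_2^\H\Am^{-1}\hat{\xv}'_2|^2$, $c_0 = {\hat{\xv}'_2}^\H\Am^{-1}\hat{\xv}'_2$, where $\Am = \Id_T + P_1T\hat{\xv}_1\hat{\xv}_1^\H$. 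Differentiating in $s$ and using the Cauchy--Schwarz inequality $b_0 \le a_0 c_0$, which is strict since $\hat{\xv}_2 \ne \hat{\xv}'_2$ are nonparallel, gives $\frac{d\delta_2}{ds} > 0$, so $\delta_2$ is strictly increasing. Because $d_1(\Xc^\theta)$ and $d_2(\Xc^\theta)$ are minima over finitely many tuples of $\delta_1$'s and $\delta_2$'s, $d_1$ inherits the monotone decrease and $d_2$ the strict increase.

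For Part~1, I would combine these facts: as $\theta$ runs over $[0,\infty)$, $d_2$ starts at $0$ (the leading factor $\theta$) and increases while $d_1$ decreases, so $\min\{d_1,d_2\}$ first agrees with $d_2$ and then with $d_1$, attaining its maximum exactly at the unique crossing $\tilde{\theta}$ where $d_1(\Xc^{\tilde{\theta}}) = d_2(\Xc^{\tilde{\theta}})$; uniqueness is forced by the strict monotonicity of $d_2$. The constant-gap bound \eqref{eq:constant_gap_2user} then follows from Corollary~\ref{coro:min_dk_2user} alone: writing $\mu(\theta) \defeq \min\{d_1(\Xc^\theta),d_2(\Xc^\theta)\}$, that corollary gives $\mu(\theta)\le d_{\rm min}(\Xc^\theta)\le\mu(\theta)+1$ for every $\theta$, so $d_{\rm min}(\Xc^\theta)\le\mu(\theta)+1\le\mu(\tilde{\theta})+1\le d_{\rm min}(\Xc^{\tilde{\theta}})+1$, and taking the maximum over $\theta$ yields the upper inequality, the lower one being trivial.

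For Part~2, the existence and uniqueness of $\hat{\theta}$ for a fixed $6$-tuple is the same single-crossing argument applied to one $\delta_1$ and one $\delta_2$: $\delta_1$ is decreasing and bounded with $\delta_1(0)>0$, while $\delta_2$ is strictly increasing from $\delta_2(0)=0$ to $+\infty$, so $\delta_1=\delta_2$ has exactly one solution. To obtain the closed form \eqref{eq:tmp1344-a}, I would expand $\delta_1$ and $\delta_2$ into scalar rational functions of $\theta$ via the matrix-inversion lemma on the two rank-one updates in each, where $e_1$ and $e_2$ are exactly the determinant-type factors coming from inverting the $2\times2$ ``capacitance'' matrices, then equate them and clear denominators. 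The result is a cubic $a\theta^3+b\theta^2+c\theta+d=0$ with the stated coefficients, whose unique admissible root is given by Vi\`ete's trigonometric solution, namely \eqref{eq:tmp1344-a}; the other two roots are discarded by the uniqueness already established. The main obstacle is purely computational: carrying out the rank-two inversions and the denominator clearing carefully enough that the coefficients collapse exactly to $a,b,c,d$ as written.

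Finally, to prove \eqref{eq:tmp1706} I would identify $\tilde{\theta}$ as the minimizer of the crossing value over $\Thetam$. At $\tilde{\theta}$ some $\delta_1$-tuple $A^\star$ and some $\delta_2$-tuple $B^\star$ attain the common value $v^\star \defeq d_1(\Xc^{\tilde{\theta}}) = d_2(\Xc^{\tilde{\theta}})$, so the $6$-tuple $(A^\star,B^\star)$ crosses precisely at $\tilde{\theta}$ and contributes $\delta(\tilde{\theta}) = v^\star$. For any other $6$-tuple with crossing $\hat{\theta}$ and value $\delta(\hat{\theta})$: if $\hat{\theta}\le\tilde{\theta}$, the monotone decrease of its $\delta_1$ gives $\delta(\hat{\theta})\ge \delta_1(\tilde{\theta},A)\ge d_1(\Xc^{\tilde{\theta}})=v^\star$; if $\hat{\theta}\ge\tilde{\theta}$, the monotone increase of its $\delta_2$ gives $\delta(\hat{\theta})\ge \delta_2(\tilde{\theta},B)\ge d_2(\Xc^{\tilde{\theta}})=v^\star$. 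Hence $\tilde{\theta}$ minimizes $\delta(\hat{\theta})$ over $\Thetam$, which is \eqref{eq:tmp1706}.
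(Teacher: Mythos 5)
Your proposal is correct and follows the same overall architecture as the paper's proof: establish that $\delta_1$ is nonincreasing and $\delta_2$ strictly increasing in $\theta$, deduce the unique crossing for Part~1 and the gap bound from Corollary~\ref{coro:min_dk_2user}, reduce the crossing condition to a cubic for Part~2, and then identify $\tilde{\theta}$ as the minimizer over $\Thetam$. Where you genuinely diverge is in how the key monotonicity lemma is proved: the paper (Lemma~\ref{lem:monotone}) differentiates the fully expanded rational expressions for $\delta_1$ and $\delta_2$ and checks signs term by term using Cauchy--Schwarz on the numerators, whereas you get the decrease of $\delta_1$ for free from the Loewner order (the resolvent shrinks as the inner matrix grows) and the increase of $\delta_2$ from a Sherman--Morrison reduction to the scalar function $s a_0 - s^2 b_0/(1+sc_0)$, whose derivative is bounded below by $a_0/(1+sc_0)^2>0$ via $b_0\le a_0c_0$. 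Your route is shorter and more structural, and as you note the strictness of Cauchy--Schwarz is not even needed for $\delta_2'>0$. For \eqref{eq:tmp1706} the paper argues $\tilde{\theta}\le\breve{\theta}$ and $\tilde{\theta}\ge\breve{\theta}$ separately through the monotonicity of the minimizing tuple's $\delta_1$ and $\delta_2$, while you bound every crossing value from below by $v^\star$ directly; the two arguments are equivalent in content. The only part left at the level of a plan is the algebraic collapse of the cleared-denominator equation to the stated coefficients $a,b,c,d$ (and the check that the chosen trigonometric branch is the positive root), but the paper's own proof asserts this computation without displaying it either, so your treatment is at the same level of detail.
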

\begin{proof} 
See Appendix~\ref{proof:prop:theta}.
\end{proof}

The first part of Proposition~\ref{prop:theta} says that there exists a unique ${\tilde{\theta}}$ that maximizes $\displaystyle\min_{k\in \{1,2\}} d_k(\Xc^\theta)$, and this ${\tilde{\theta}}$ is also approximately the value of $\theta$ maximizing $d_{\min}(\Xc^\theta)$. The second part states that ${\tilde{\theta}}$ can be found by enumerating the closed-form expression \eqref{eq:tmp1344-a} over the set of normalized symbols $\bar{\Xc}_k$, $k = 1,2.$ This is simpler than enumerating $d_{\min}(\Xc^\theta)$ over the whole range of $\theta$. 
In Fig.~\ref{fig:metrics_vs_P2/P}, we numerically verify Proposition~\ref{prop:theta} by plotting the values of $e_{\min} (\Xc^\theta)$ and $d_{\min}(\Xc^\theta)$, as well as $d_1(\Xc^\theta)$ and $d_2(\Xc^\theta)$, as a function of $\theta$ for $P_1 = 20$~dB and different $\bar{\Xc}$ with $T =4$, $B_1 = 6$, and $B_2 = 2$. 
We see that $d_{\min}(\Xc^\theta)$ is within a constant gap from the minimum of $d_1(\Xc^\theta)$, which decreases with $\theta$, and $d_2(\Xc^\theta)$, which increases with $\theta$. The metric $d_{\min}(\Xc^\theta)$ is maximized approximately at ${\tilde{\theta}}$ such that $d_1(\Xc^{{\tilde{\theta}}}) = d_2(\Xc^{{\tilde{\theta}}})$. These observations agree with \eqref{eq:constant_gap_2user} in Proposition~\ref{prop:theta}. Furthermore, ${\tilde{\theta}}$ is also near the value of $\theta$ that maximizes the metric $e_{\min} (\Xc^\theta)$. {Following Proposition~\ref{prop:theta}, when the metric $d_{\min}(\Xc)$ is considered, $\theta^*$ in \eqref{eq:opt_theta} can be approximated by ${\tilde{\theta}}$ from \eqref{eq:tmp1706}.}

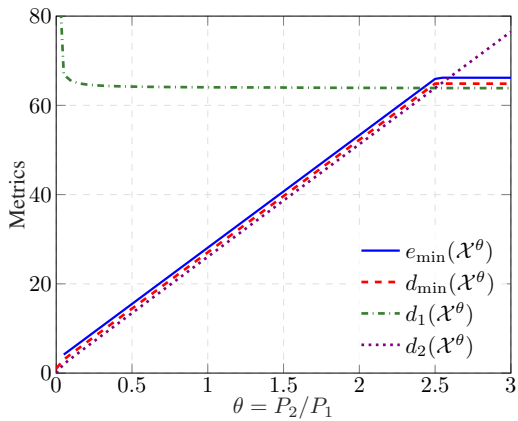
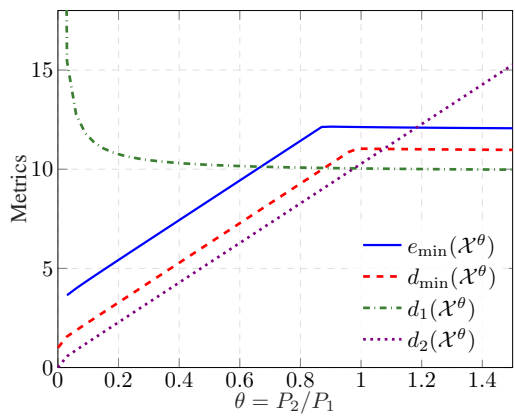
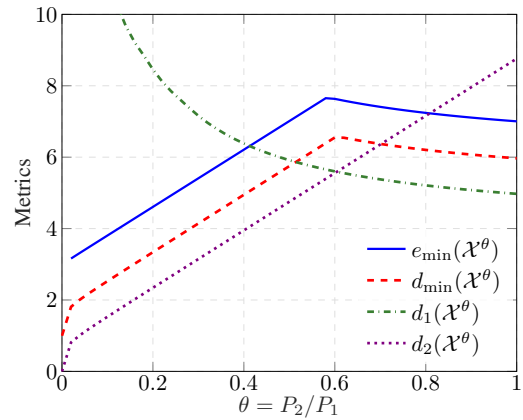
\begin{figure}[t!]
\centering
\hspace{-.1cm}
\subfigure[$\bar{\Xc}$ obtained by maximizing $d_{\min}(\Xc)$~\eqref{eq:criterion_minTrace} at $30$~dB for both users.]{
%
%
\begin{tikzpicture}[scale=.85,style={mark size=3pt,line width=3pt}]

\begin{axis}[%
width=2.8in,
height=2.2in,
at={(0.758in,0.481in)},
scale only axis,
unbounded coords=jump,
xmin=0,
xmax=3,
xlabel style={font=\color{white!15!black}},
xlabel={$\theta = P_2/P_1$},
ymin=0,
ymax=80,
ylabel style={font=\color{white!15!black}},
ylabel={Metrics},
axis background/.style={fill=white},
xmajorgrids,
ymajorgrids,
legend style={at={(1,.01)}, anchor=south east, legend cell align=left, align=left,  fill=white, fill opacity=0.8, text opacity=1, draw=none, 
}
]
\addplot [color=blue, line width=1pt]
  table[row sep=crcr]{%
0	inf\\
0.05	4.15202779509282\\
0.1	5.40806381138211\\
0.15	6.66727720667879\\
0.2	7.92733995807096\\
0.25	9.18775190681731\\
0.3	10.4483411620563\\
0.35	11.7090327476655\\
0.4	12.9697887378409\\
0.45	14.2305878875095\\
0.5	15.491417370058\\
0.55	16.7522689833597\\
0.6	18.0131372379639\\
0.65	19.2740183212898\\
0.7	20.5349095027604\\
0.75	21.7958087753932\\
0.8	23.0567146310045\\
0.85	24.3176259143155\\
0.9	25.578541725435\\
0.95	26.8394613529845\\
1	28.1003842271922\\
1.05	29.3613098863281\\
1.1	30.6222379522537\\
1.15	31.8831681123248\\
1.2	33.1441001058\\
1.25	34.4050337134995\\
1.3	35.6659687498413\\
1.35	36.926905056641\\
1.4	38.1878424982363\\
1.45	39.4487809576157\\
1.5	40.7097203333205\\
1.55	41.9706605369436\\
1.6	43.2316014910963\\
1.65	44.4925431277434\\
1.7	45.7534853868319\\
1.75	47.0144282151532\\
1.8	48.2753715653955\\
1.85	49.5363153953487\\
1.9	50.797259667235\\
1.95	52.0582043471415\\
2	53.319149404539\\
2.05	54.5800948118696\\
2.1	55.8410405441943\\
2.15	57.1019865788889\\
2.2	58.3629328953813\\
2.25	59.6238794749244\\
2.3	60.8848263003979\\
2.35	62.1457733561356\\
2.4	63.4067206277744\\
2.45	64.6676681021212\\
2.5	65.9286157670364\\
2.55	66.1981283233784\\
2.6	66.1961124351086\\
2.65	66.1941725286792\\
2.7	66.192304387747\\
2.75	66.1905041022641\\
2.8	66.1887680411613\\
2.85	66.1870928279028\\
2.9	66.1854753185671\\
2.95	66.1839125821548\\
3	66.1824018828635\\
};
\addlegendentry{$e_{\min} (\Xc^\theta)$}

\addplot [color=red, dashed, line width=1.2pt]
  table[row sep=crcr]{%
0	0.997506234413917\\
0.05	3.02757234929029\\
0.1	4.32342370342215\\
0.15	5.59672360533424\\
0.2	6.86399300272918\\
0.25	8.12878268703578\\
0.3	9.39231317667682\\
0.35	10.6551168957997\\
0.4	11.9174631827634\\
0.45	13.1795029212021\\
0.5	14.4413272096439\\
0.55	15.7029943037174\\
0.6	16.9645431931344\\
0.65	18.2260009578648\\
0.7	19.487386992928\\
0.75	20.7487155539101\\
0.8	22.0099973535508\\
0.85	23.2712405979625\\
0.9	24.5324516792573\\
0.95	25.7936356505394\\
1	27.0547965590729\\
1.05	28.3159376846886\\
1.1	29.5770617134505\\
1.15	30.8381708662034\\
1.2	32.0992669951121\\
1.25	33.3603516571173\\
1.3	34.6214261705033\\
1.35	35.8824916589381\\
1.4	37.1435490861087\\
1.45	38.4045992832144\\
1.5	39.6656429709777\\
1.55	40.9266807774092\\
1.6	42.1877132522505\\
1.65	43.4487408787995\\
1.7	44.7097640836537\\
1.75	45.9707832447891\\
1.8	47.2317986982945\\
1.85	48.4928107440184\\
1.9	49.7538196503247\\
1.95	51.0148256581198\\
2	52.2758289842767\\
2.05	53.5368298245588\\
2.1	54.7978283561277\\
2.15	56.058824739701\\
2.2	57.3198191214173\\
2.25	58.5808116344524\\
2.3	59.8418024004253\\
2.35	61.1027915306257\\
2.4	62.3637791270873\\
2.45	63.6247652835305\\
2.5	64.8783438362352\\
2.55	64.8751258706284\\
2.6	64.8720315501399\\
2.65	64.8690538828023\\
2.7	64.8661863940763\\
2.75	64.8634230798569\\
2.8	64.8607583645086\\
2.85	64.8581870633142\\
2.9	64.8557043488053\\
2.95	64.8533057205132\\
3	64.8509869777425\\
};
\addlegendentry{$d_{\min}(\Xc^\theta)$}

\addplot [color=OliveGreen, dashdotted, line width=1.2pt]
table[row sep=crcr]{%
0	103.38285767408\\
0.05	66.883945001063\\
0.1	65.4319160722002\\
0.15	64.9274807926502\\
0.2	64.6719002109615\\
0.25	64.517458456029\\
0.3	64.4140376324953\\
0.35	64.3399391853356\\
0.4	64.2842410222217\\
0.45	64.2408463330582\\
0.5	64.2060839202112\\
0.55	64.1776110389017\\
0.6	64.1538623573744\\
0.65	64.1337521900193\\
0.7	64.1165038540396\\
0.75	64.10154703836\\
0.8	64.0884535296086\\
0.85	64.0768955511433\\
0.9	64.0666179474916\\
0.95	64.0574191277323\\
1	64.049137710453\\
1.05	64.0416429746366\\
1.1	64.0348279084697\\
1.15	64.0286040672525\\
1.2	64.0228977139427\\
1.25	64.0176468839705\\
1.3	64.0127991259985\\
1.35	64.0083097437321\\
1.4	64.0041404137585\\
1.45	63.9981400507946\\
1.5	63.9893956321326\\
1.55	63.9805286538778\\
1.6	63.972214968884\\
1.65	63.9644043563521\\
1.7	63.9570524948522\\
1.75	63.9501201207979\\
1.8	63.9435723269748\\
1.85	63.9373779746616\\
1.9	63.931509198449\\
1.95	63.9259409871391\\
2	63.9206508274289\\
2.05	63.9156183996723\\
2.1	63.9108253170513\\
2.15	63.9062549010977\\
2.2	63.9018919877903\\
2.25	63.897722759476\\
2.3	63.8937345986894\\
2.35	63.8899159606133\\
2.4	63.8862562614646\\
2.45	63.8827457805332\\
2.5	63.8793755739632\\
2.55	63.8761373986682\\
2.6	63.8730236450161\\
2.65	63.870027277128\\
2.7	63.8671417798028\\
2.75	63.8643611112281\\
2.8	63.8616796607517\\
2.85	63.8590922110953\\
2.9	63.8565939044726\\
2.95	63.8541802121507\\
3	63.8518469070527\\
};
\addlegendentry{$d_1(\Xc^{\theta})$}

\addplot [color=violet, dotted, line width=1.2pt]
table[row sep=crcr]{%
0	0\\
0.05	2.03211862918655\\
0.1	3.3280631176626\\
0.15	4.60139597144089\\
0.2	5.86868222725041\\
0.25	7.13348215242531\\
0.3	8.39701952255437\\
0.35	9.65982818265161\\
0.4	10.9221781898523\\
0.45	12.1842208304497\\
0.5	13.4460474460845\\
0.55	14.7077164478503\\
0.6	15.96926692951\\
0.65	17.2307260433011\\
0.7	18.4921132360018\\
0.75	19.7534428012418\\
0.8	21.0147254803496\\
0.85	22.2759695013371\\
0.9	23.5371812733748\\
0.95	24.7983658630519\\
1	26.0595273284336\\
1.05	27.3206689581033\\
1.1	28.5817934452932\\
1.15	29.8429030167749\\
1.2	31.103999529656\\
1.25	32.3650845450323\\
1.3	33.6261593847059\\
1.35	34.8872251753436\\
1.4	36.1482828832044\\
1.45	37.4093333417055\\
1.5	38.6703772734929\\
1.55	39.9314153082531\\
1.6	41.1924479971949\\
1.65	42.4534758249058\\
1.7	43.7144992191219\\
1.75	44.9755185588278\\
1.8	46.2365341810093\\
1.85	47.497546386315\\
1.9	48.758555443825\\
1.95	50.0195615950886\\
2	51.2805650575576\\
2.05	52.5415660275179\\
2.1	53.8025646826037\\
2.15	55.0635611839617\\
2.2	56.3245556781203\\
2.25	57.5855482986107\\
2.3	58.8465391673768\\
2.35	60.1075283960049\\
2.4	61.3685160868011\\
2.45	62.6295023337361\\
2.5	63.890487223277\\
2.55	65.1514708351222\\
2.6	66.4124532428501\\
2.65	67.6734345144966\\
2.7	68.9344147130661\\
2.75	70.1953938969892\\
2.8	71.4563721205299\\
2.85	72.7173494341504\\
2.9	73.9783258848386\\
2.95	75.2393015164018\\
3	76.5002763697313\\
};
\addlegendentry{$d_2(\Xc^{\theta})$}

\end{axis}
\vspace{-.2cm}

\end{tikzpicture}
\hspace{.1cm}
\subfigure[$\bar{\Xc}$ obtained by precoding with Type-II Precoder.]{
%

\begin{tikzpicture}[scale=.85,style={mark size=3pt,line width=3pt}]

\begin{axis}[%
width=2.8in,
height=2.2in,
at={(0in,0in)},
scale only axis,
unbounded coords=jump,
xmin=0,
xmax=1.5,
xlabel style={font=\color{white!15!black}},
xlabel={$\theta = P_2/P_1$},
ymin=0,
ymax=18,
ylabel style={font=\color{white!15!black}},
ylabel={Metrics},
axis background/.style={fill=white},
xmajorgrids,
ymajorgrids,
legend style={at={(1,.01)}, anchor=south east, legend cell align=left, align=left,  fill=white, fill opacity=0.8, text opacity=1, draw=none, 
}
]
\addplot [color=blue, line width=1pt]
  table[row sep=crcr]{%
0	inf\\
0.03	3.64501512198085\\
0.06	3.98985866075074\\
0.09	4.30660465153878\\
0.12	4.61521059512203\\
0.15	4.92035325553019\\
0.18	5.22370280881043\\
0.21	5.52600427349475\\
0.24	5.82764015829891\\
0.27	6.12882703105709\\
0.3	6.42969669842936\\
0.33	6.73033397710003\\
0.36	7.0307959201901\\
0.39	7.33112231783209\\
0.42	7.63134176717842\\
0.45	7.93147534950889\\
0.48	8.23153894712266\\
0.51	8.53154475294328\\
0.54	8.83150228299261\\
0.57	9.1314190727905\\
0.6	9.43130116710574\\
0.63	9.73115347122465\\
0.66	10.0309800073673\\
0.69	10.3307841048439\\
0.72	10.6305685431039\\
0.75	10.9303356607509\\
0.78	11.2300874396123\\
0.81	11.5298255702757\\
0.84	11.8295515036929\\
0.87	12.1292664921852\\
0.9	12.1456799104602\\
0.93	12.1391909888884\\
0.96	12.1331065111993\\
0.99	12.1273898076844\\
1.02	12.1220085115667\\
1.05	12.1169339458399\\
1.08	12.1121406120515\\
1.11	12.1076057617887\\
1.14	12.1033090356692\\
1.17	12.099232157744\\
1.2	12.0953586756431\\
1.23	12.0916737386624\\
1.26	12.0881639074896\\
1.29	12.0848169904248\\
1.32	12.0816219018887\\
1.35	12.0785685397609\\
1.38	12.0756476786876\\
1.41	12.072850876981\\
1.44	12.0701703951393\\
1.47	12.0675991243217\\
1.5	12.0651305233952\\
};
\addlegendentry{$e_{\min} (\Xc^\theta)$}

\addplot [color=red, dashed, line width=1.2pt]
  table[row sep=crcr]{%
0	0.997506234413926\\
0.03	1.56423606980544\\
0.06	1.87452630837152\\
0.09	2.1778609957232\\
0.12	2.47934994995769\\
0.15	2.7800823906478\\
0.18	3.0804313742072\\
0.21	3.38055929899564\\
0.24	3.68054820433897\\
0.27	3.98044400337008\\
0.3	4.2802743963002\\
0.33	4.58005708654773\\
0.36	4.87980391723759\\
0.39	5.17952311075115\\
0.42	5.47922055416984\\
0.45	5.77890057339152\\
0.48	6.07856641846369\\
0.51	6.3782205784607\\
0.54	6.6778649918877\\
0.57	6.9775011909318\\
0.6	7.27713040261925\\
0.63	7.57675362118387\\
0.66	7.87637166077457\\
0.69	8.17598519446197\\
0.72	8.47559478352804\\
0.75	8.77520089974891\\
0.78	9.07480394255265\\
0.81	9.37440425237523\\
0.84	9.67400212116434\\
0.87	9.97359780071416\\
0.9	10.2731915093402\\
0.93	10.5727834372636\\
0.96	10.8723737509902\\
0.99	11.0368624105855\\
1.02	11.0315522635552\\
1.05	11.02654479142\\
1.08	11.0218148328637\\
1.11	11.017339940249\\
1.14	11.0131000233812\\
1.17	11.0090770479521\\
1.2	11.0052547791212\\
1.23	11.0016185625368\\
1.26	10.998155136578\\
1.29	10.9948524707401\\
1.32	10.9916996260144\\
1.35	10.9886866338479\\
1.38	10.9858043908615\\
1.41	10.9830445669811\\
1.44	10.9803995250338\\
1.47	10.97786225017\\
1.5	10.9754262877411\\
};
\addlegendentry{$d_{\min}(\Xc^\theta)$}

\addplot [color=OliveGreen, dashdotted, line width=1.2pt]
table[row sep=crcr]{%
0	73.3606254863172\\
0.03	15.4411555586749\\
0.06	12.7663585744548\\
0.09	11.8238738478479\\
0.12	11.3424734091149\\
0.15	11.0502981143655\\
0.18	10.854104874167\\
0.21	10.7132699368456\\
0.24	10.6072601204733\\
0.27	10.5245796124843\\
0.3	10.4582907573228\\
0.33	10.4039586333311\\
0.36	10.3586158490946\\
0.39	10.3202019091165\\
0.42	10.2872413484093\\
0.45	10.2586498612969\\
0.48	10.2336127336897\\
0.51	10.2115059603483\\
0.54	10.1918435282897\\
0.57	10.1742412706837\\
0.6	10.1583915170106\\
0.63	10.1440449557642\\
0.66	10.130997423956\\
0.69	10.1190801296815\\
0.72	10.1081523101031\\
0.75	10.0980956453269\\
0.78	10.0888099570318\\
0.81	10.0802098598474\\
0.84	10.0722221280441\\
0.87	10.0647836053939\\
0.9	10.0578395318364\\
0.93	10.0513421930962\\
0.96	10.0452498228028\\
0.99	10.0395257036927\\
1.02	10.0341374270137\\
1.05	10.0290562785683\\
1.08	10.0242567268353\\
1.11	10.019715993899\\
1.14	10.0154136939721\\
1.17	10.0113315274028\\
1.2	10.0074530204828\\
1.23	10.0037633032472\\
1.26	10.0002489189528\\
1.29	9.99689766008719\\
1.32	9.9936984266928\\
1.35	9.99064110354673\\
1.38	9.98771645332921\\
1.41	9.98491602340353\\
1.44	9.98223206422911\\
1.47	9.9796574577447\\
1.5	9.97718565433284\\
};
\addlegendentry{$d_1(\Xc^{\theta})$}

\addplot [color=violet, dotted, line width=1.2pt]
table[row sep=crcr]{%
0	0\\
0.03	0.566738282885083\\
0.06	0.877028860541448\\
0.09	1.1803636670548\\
0.12	1.48185268209039\\
0.15	1.78258515966135\\
0.18	2.08293416797697\\
0.21	2.38306211053194\\
0.24	2.68305102924611\\
0.27	2.98294683870412\\
0.3	3.2827772399931\\
0.33	3.58255993709114\\
0.36	3.88230677349767\\
0.39	4.18202597185403\\
0.42	4.4817234194278\\
0.45	4.78140344225362\\
0.48	5.08106929048173\\
0.51	5.38072345326523\\
0.54	5.68036786917056\\
0.57	5.98000407043324\\
0.6	6.27963328411834\\
0.63	6.57925650449113\\
0.66	6.8788745457262\\
0.69	7.17848808091554\\
0.72	7.47809767135879\\
0.75	7.77770378884706\\
0.78	8.077306832821\\
0.81	8.3769071437274\\
0.84	8.67650501352309\\
0.87	8.97610069401031\\
0.9	9.27569440351142\\
0.93	9.57528633225361\\
0.96	9.87487664674796\\
0.99	10.1744654933731\\
1.02	10.4740530013291\\
1.05	10.7736392850856\\
1.08	11.0732244464229\\
1.11	11.3728085761422\\
1.14	11.6723917555087\\
1.17	11.9719740574698\\
1.2	12.2715555476933\\
1.23	12.5711362854543\\
1.26	12.870716324393\\
1.29	13.1702957131696\\
1.32	13.4698744960294\\
1.35	13.7694527132921\\
1.38	14.0690304017773\\
1.41	14.3686075951767\\
1.44	14.6681843243798\\
1.47	14.9677606177579\\
1.5	15.267336501417\\
};
\addlegendentry{$d_2(\Xc^{\theta})$}

\end{axis}
\vspace{-.2cm}
\end{tikzpicture}
\hspace{.1cm}
\subfigure[$\bar{\Xc}$ obtained by partitioning.]{
%

\begin{tikzpicture}[scale=.85,style={mark size=3pt,line width=3pt}]

\begin{axis}[%
width=2.8in,
height=2.2in,
at={(0.758in,0.481in)},
scale only axis,
unbounded coords=jump,
xmin=0,
xmax=1,
xlabel style={font=\color{white!15!black}},
xlabel={$\theta=P_2/P_1$},
ymin=0,
ymax=10,
ylabel style={font=\color{white!15!black}},
ylabel={Metrics},
axis background/.style={fill=white},
xmajorgrids,
ymajorgrids,
legend style={at={(1,.01)}, anchor=south east, legend cell align=left, align=left,  fill=white, fill opacity=0.8, text opacity=1, draw=none, 
}
]
\addplot [color=blue, line width=1pt]
  table[row sep=crcr]{%
0	inf\\
0.02	3.16123557925052\\
0.04	3.32347369232122\\
0.06	3.48451402818753\\
0.08	3.64518165235089\\
0.1	3.80568618526132\\
0.12	3.9661049696123\\
0.14	4.12647314024778\\
0.16	4.28680894976603\\
0.18	4.44712281826182\\
0.2	4.60742112672126\\
0.22	4.76770800087144\\
0.24	4.92798622650664\\
0.26	5.08825775254373\\
0.28	5.24852398329372\\
0.3	5.40878595625439\\
0.32	5.56904445452596\\
0.34	5.72930008027598\\
0.36	5.88955330413755\\
0.38	6.04980449926198\\
0.4	6.21005396531444\\
0.42	6.37030194571397\\
0.44	6.53054864023814\\
0.46	6.69079421438317\\
0.48	6.85103880641501\\
0.5	7.01128253274767\\
0.52	7.17152549209649\\
0.54	7.33176776871821\\
0.56	7.49200943496442\\
0.58	7.65225055331239\\
0.6	7.63331182337438\\
0.62	7.58271020148712\\
0.64	7.53524793456723\\
0.66	7.49064166136827\\
0.68	7.44864112653771\\
0.7	7.40902448300679\\
0.72	7.37159437219583\\
0.74	7.33617463581772\\
0.76	7.30260754365165\\
0.78	7.27075144523872\\
0.8	7.24047877174823\\
0.82	7.21167432857374\\
0.84	7.18423383047507\\
0.86	7.15806264000054\\
0.88	7.13307467701839\\
0.9	7.1091914728851\\
0.92	7.08634134734901\\
0.94	7.06445869000354\\
0.96	7.04348333111832\\
0.98	7.02335998913884\\
1	7.00403778417535\\
};
\addlegendentry{$e_{\min} (\Xc^\theta)$}

\addplot [color=red, dashed, line width=1.2pt]
  table[row sep=crcr]{%
0	0.997506234413926\\
0.02	1.80963712096005\\
0.04	2.01563127953862\\
0.06	2.19258027439534\\
0.08	2.36147559799604\\
0.1	2.52700780695697\\
0.12	2.69081718075853\\
0.14	2.85362648714778\\
0.16	3.01580380435314\\
0.18	3.17755631950906\\
0.2	3.33900957761691\\
0.22	3.50024408911415\\
0.24	3.66131386110565\\
0.26	3.82225647391725\\
0.28	3.98309888520481\\
0.3	4.14386093555684\\
0.32	4.30455755178552\\
0.34	4.46520017943196\\
0.36	4.62579774174178\\
0.38	4.7863572981935\\
0.4	4.94688450694937\\
0.42	5.10738395612012\\
0.44	5.26785940530085\\
0.46	5.4283139645082\\
0.48	5.58875022866367\\
0.5	5.74917037999349\\
0.52	5.90957626693537\\
0.54	6.06996946560732\\
0.56	6.23035132817524\\
0.58	6.39072302126527\\
0.6	6.55108555673278\\
0.62	6.54415403243885\\
0.64	6.49692024251321\\
0.66	6.45252869790124\\
0.68	6.4107303479524\\
0.7	6.37130441354515\\
0.72	6.33405448616197\\
0.74	6.29880525552172\\
0.76	6.26539975069883\\
0.78	6.23369700312322\\
0.8	6.20357005806568\\
0.82	6.17490427545293\\
0.84	6.1475958720616\\
0.86	6.12155066601324\\
0.88	6.09668299155552\\
0.9	6.07291475778328\\
0.92	6.05017462950485\\
0.94	6.02839731215408\\
0.96	6.00752292565007\\
0.98	5.98749645455689\\
1	5.96826726391417\\
};
\addlegendentry{$d_{\rm min}(\Xc^\theta)$}

\addplot [color=OliveGreen, dashdotted, line width=1.2pt]
table[row sep=crcr]{%
	0	205.226057553434\\
	0.02	30.0169240694069\\
	0.04	18.6636078454026\\
	0.06	14.5705899178059\\
	0.08	12.4608602582233\\
	0.1	11.1740372209172\\
	0.12	10.3072293952775\\
	0.14	9.68365047066871\\
	0.16	9.21352055282319\\
	0.18	8.84640450876378\\
	0.2	8.45758760941565\\
	0.22	8.12495295894879\\
	0.24	7.84697176117705\\
	0.26	7.61119684780455\\
	0.28	7.37898497222778\\
	0.3	7.15863225187132\\
	0.32	6.96542165503605\\
	0.34	6.79462924004395\\
	0.36	6.64256719030544\\
	0.38	6.50631450842299\\
	0.4	6.38352755606984\\
	0.42	6.27230412135119\\
	0.44	6.17108416366073\\
	0.46	6.07857618714758\\
	0.48	5.99370184086263\\
	0.5	5.91555369002995\\
	0.52	5.84336264439942\\
	0.54	5.77647256177212\\
	0.56	5.71432024795938\\
	0.58	5.65641956114533\\
	0.6	5.60234867046232\\
	0.62	5.5517397619427\\
	0.64	5.50427066041639\\
	0.66	5.45965796384024\\
	0.68	5.41765138083094\\
	0.7	5.3780290323788\\
	0.72	5.34059353148878\\
	0.74	5.3051686945104\\
	0.76	5.27159676851266\\
	0.78	5.23973608264142\\
	0.8	5.20945904969814\\
	0.82	5.18065045849002\\
	0.84	5.1532060087615\\
	0.86	5.12703104943403\\
	0.88	5.10203948798164\\
	0.9	5.07815284446201\\
	0.92	5.05529942830226\\
	0.94	5.03341361964871\\
	0.96	5.01243524010689\\
	0.98	4.9923090001623\\
	1	4.97298401259887\\
};
\addlegendentry{$d_1(\Xc^{\theta})$}

\addplot [color=violet, dotted, line width=1.2pt]
table[row sep=crcr]{%
	0	0\\
	0.02	0.812581384351257\\
	0.04	1.01860716603363\\
	0.06	1.1955677121461\\
	0.08	1.36446902146098\\
	0.1	1.53000489203407\\
	0.12	1.69381673687134\\
	0.14	1.85662782319329\\
	0.16	2.01880648359116\\
	0.18	2.18056004837738\\
	0.2	2.34201414931199\\
	0.22	2.50324935246974\\
	0.24	2.66431970227748\\
	0.26	2.82526280503112\\
	0.28	2.9861056370158\\
	0.3	3.14686805253102\\
	0.32	3.30756498870422\\
	0.34	3.46820789898603\\
	0.36	3.6288057127885\\
	0.38	3.78936549446867\\
	0.4	3.94989290609896\\
	0.42	4.11039253896067\\
	0.44	4.27086815524698\\
	0.46	4.43132286712378\\
	0.48	4.5917592713058\\
	0.5	4.75217955152726\\
	0.52	4.91258555750332\\
	0.54	5.07297886644114\\
	0.56	5.23336083144099\\
	0.58	5.3937326199353\\
	0.6	5.55409524447865\\
	0.62	5.71444958760627\\
	0.64	5.87479642205471\\
	0.66	6.03513642732235\\
	0.68	6.19547020331955\\
	0.7	6.3557982816905\\
	0.72	6.51612113525422\\
	0.74	6.67643918592188\\
	0.76	6.83675281136903\\
	0.78	6.99706235068428\\
	0.8	7.15736810917489\\
	0.82	7.317670362472\\
	0.84	7.47796936004898\\
	0.86	7.63826532825509\\
	0.88	7.79855847293153\\
	0.9	7.95884898168405\\
	0.92	8.11913702585857\\
	0.94	8.27942276226475\\
	0.96	8.43970633468571\\
	0.98	8.59998787520354\\
	1	8.76026750536532\\
};
\addlegendentry{$d_2(\Xc^{\theta})$}

\end{axis}
\end{tikzpicture}
\hspace{-.3cm}
\caption{The values of the metrics $e_{\min}(\Xc^\theta)$, $d_{\min}(\Xc^\theta)$, $d_1(\Xc^\theta)$ and $d_2(\Xc^\theta)$ as a function of $\theta$ for $P_1 = 20$~dB, $P_2 = \theta P_1$, $T =4$, $B_1 = 6$, $B_2 = 2$, $M = 1$, and different given normalized constellation $\bar{\Xc}$.}
\label{fig:metrics_vs_P2/P}
\end{figure}

{In the $K$-user case with $K > 2$, one can use a similar procedure to optimize the power. Specifically, letting one user transmit at full power, one optimizes the fraction of power used by other users. In this case, however, a multidimensional extremum search algorithm should be used, such as the Nelder-Mead method~\cite{nelder1965simplex}. The optimal power allocation is given by the option resulting in the highest metric value. More efficient power optimization methods are open for investigation.}

\section{Numerical Results} \label{sec:performance}

We summarize the baseline/proposed design criteria/constructions in Table~\ref{tab:criteria}. 

\begin{table*}[t!]
\begin{center}
\caption{The baseline/proposed joint constellation design criteria/constructions}
			\def\arraystretch{1.8}
\small
\begin{tabular}{|p{1.2cm}|p{7.1cm}|p{1.85cm}|p{5.2cm}|}
\hline
& \multicolumn{1}{c|}{\textbf{Criterion/Construction}} & \textbf{Shorthand} & \multicolumn{1}{c|}{\textbf{Motivation}}
\\ \hline \hline
\multirow{3}{*}{\bf {Baseline}} & $\Xc^* = \arg \displaystyle\min_{\Xc} m_1(\Xc)$ \eqref{eq:criterion_maxminChordal}  & Min-$m_1$ & \multirow{3}{5.2cm}{Treating $\Xc$ as a single-user constellation for the $\Mtot \times N$ MIMO channel and adapt existing criteria for that channel}   \\ \cline{2-3}
& $\Xc^* = \arg\displaystyle\min_{\Xc} m_2(\Xc)$ \eqref{eq:criterion_minsumDet} \newline (from \cite[Eq.~(8)]{McCloudIT2002signalDesignAndConvCode})  & Min-$m_2$ & \\ 
\cline{2-3}
& $\Xc^* = \arg\displaystyle\max_{\Xc} e_{\min}(\Xc)$ \eqref{eq:criterion_minMean} & Max-$e_{\min}$ & 
\\ 
\hline \hline
\multirow{6}{*}{\bf Proposed} & {$\Xc^* = \arg\displaystyle\max_{\Xc} b_{\min}(\Xc)$~\eqref{eq:criterion_Riemannian}} & {Max-$b_{\min}$} & \multirow{2}{5cm}{{Minimizing PEP exponent bounds}} \\ 
\cline{2-3}
& {$\Xc^* = \arg\displaystyle\max_{\Xc} J_{s,\min}(\Xc)$~\eqref{eq:criterion_Js}} & {Max-$J_{s, \min}$} & \\ \cline{2-4}
& $\Xc^* = \arg\displaystyle\max_{\Xc} d_{\min}(\Xcal)$  \eqref{eq:criterion_minTrace} & Max-$d_{\min}$ & Maximizing the high-SNR dominant term {in a PEP exponent bound}\\ \cline{2-4}
& $\Xcal^* = \arg\displaystyle\max_{\Xcal} \displaystyle\min_{k\in [K]} d_k(\Xc)$ \eqref{eq:criterion_minAltTrace_Kuser} & & $\min_{k\in [K]} d_k(\Xc)$ is within a constant gap from $d_{\min}(\Xcal)$ \\ \cline{2-4} 
& $\Xc^* = \arg\displaystyle\min_{\Xcal}
\max\Big\{ \displaystyle\max_{\Xm_k\ne\Xm'_k\in\Xcal_k, k \in [K]} \big\|
{\Xm'_k}^\H  \Xm_k
\big\|_{\rm F}^2,$ 
$~~~~~~~ \displaystyle\max_{\Xm_k\in\Xcal_k, \Xm_l\in\Xcal_l,k\ne l \in [K]} \|
{\Xm}_k^\H\Xm_l \|_{\rm F}^2
\Big\}$ \eqref{eq:criterion_minChordal} & & {Necessary and sufficient conditions for $\min_{k\in [K]} d_k(\Xc)$ to scale linearly with $P$} (Propositions \ref{prop:necessary} and \ref{prop:sufficient}) \\ \cline{2-4} 
& Partitioning a single-user constellation $\Xc_{\textnormal{SU}}$ (Section~\ref{sec:constructions}) & Partitioning & $d_{\min}(\Xcal)$ is large if $\Xc_{\textnormal{SU}}$ is sparse enough\\ \cline{2-4}
& Precoding single-user constellations in $G(\CC^{T-K+1},1)$ {(Appendix \ref{sec:precoding})}& Precoding & Imposing a geometric separation between individual constellations \\ \hline
\end{tabular}
\label{tab:criteria}
\end{center}
\end{table*}
\normalsize

In the following, {we generate and compare the joint constellations with different design criteria/constructions.} 
{For the partitioning design, we optimize the constellation $\Xc_{\rm SU}$ following the max-min chordal distance criterion, then apply a random partition.} For the precoding design {(see Appendix \ref{sec:precoding}),} we consider 
a {common initial constellation for all users, which is numerically optimized according to the max-min chordal distance criterion.} We will compare our design to the constellations optimized with the criteria Min-$m_1$ \eqref{eq:criterion_maxminChordal}, Min-$m_2$ \eqref{eq:criterion_minsumDet}, {and Max-$e_{\min}$~\eqref{eq:criterion_minMean}} in terms of joint symbol error rate~(SER)~\eqref{eq:joint_SER}. We also consider the joint constellation in which the symbols 
contain orthogonal pilot sequences followed by spatially multiplexed QAM data symbols. 
{We use the joint ML detector for all schemes.}

\subsection{Numerical Optimization} \label{sec:optimization}
We solve numerically {Max-$J_{1/2,\min}$~\eqref{eq:criterion_Js},} Max-$e_{\min}$~\eqref{eq:criterion_minMean}, Max-$d_{\min}$~\eqref{eq:criterion_minTrace}, Min-$m_1$~\eqref{eq:criterion_maxminChordal}, Min-$m_2$~\eqref{eq:criterion_minsumDet}, and the alternating optimization of the $d_{\min}$ metric for given powers $\{P_k\}$. 
In general, we want to solve the manifold-constrained optimization 
\begin{align}
\displaystyle\max_{\Xc = \Xc_1 \times \dots \times \Xc_K} \displaystyle\min_{\Xm \ne {\Xm'} \in \Xc} f(\Xm, {\Xm'}), 
\end{align}
where $f(\Xm, {\Xm'})$ is given by $J_{1/2}(\Xm,\Xm')$, $\frac{1}{N} \meanLLR$, $d(\Xm\to\Xm')$, and $-\trace\big(\Xm\Xm^\H \Xm' {\Xm'}^\H\big)$ for the Max-$J_{1/2,\min}$, Max-$e_{\min}$, Max-$d_{\min}$, and Min-$m_1$ criteria, respectively. (Note that $\|\Xm\|_{\rm F}^2$ and $\|\Xm'\|_{\rm F}^2$ are constants for given $\{P_k\}$.)
We remark that the objective function is not smooth because of the $\min$.
To smooth it, we use the {well-known} approximation $\max_i x_i \approx \epsilon \ln \sum_i \exp (x_i/\epsilon)$ with a small $\epsilon$ and obtain 
\begin{align}
\min_{\Xc  = \Xc_1 \times \dots \times \Xc_K} \underbrace{\epsilon \ln \sum_{\Xm \ne {\Xm'} \in \Xc} \exp \Big(-\frac{f(\Xm, {\Xm'})}{\epsilon}\Big)}_{\large {\eqdef g(\Xc)}}. \label{eq:tmp1102}
\end{align}
For Min-$m_2$, the optimization problem is similar to~\eqref{eq:tmp1102} with $g(\Xc)$ replaced by $m_2(\Xc)$.

Each joint constellation symbol $\Xm$ can be seen as a collection of matrix representatives of $K$ points in the Grassmann manifold $G(\CC^T,M)$. }
The Riemannian gradient of $g(\Xc)$ can be computed from its Euclidean gradient following~\cite[Sec.~3.6]{AbsMahSep2008optManifolds}, and the details are given in Appendix~\ref{app:gradient}. We resort to the Manopt toolbox~\cite{manopt} to solve the optimization by conjugate gradient descent on the manifold. Note that the optimization space is not an Euclidean space and the objective function {$g(\Xc)$} is in general nonconvex, thus most descent algorithms only guarantee to return an (approximate) critical point. In order to ensure that this point is a local minimum and not a saddle point, the search direction needs to be carefully constructed. Several rules to construct the new search direction based on a linear combination of the previous search direction and the new (preconditioned) gradient are provided for the Euclidean space in~\cite{Hager2006survey}. The Manopt toolbox adapts these rules to the Riemannian space. If no descent direction is found, one can restart, i.e., switch to the negative gradient. This is equivalent to resetting the direction to a steepest descent step, which discards the past information. The Manopt toolbox implements Powell's restart strategy~\cite{Powell1977restart}. We optimize the joint constellations at $P = 30$~dB, although the constellations are then benchmarked at other SNR values.

\subsubsection{Initialization}
Note that the objective function $g(\Xc)$ in~\eqref{eq:tmp1102} is in general nonconvex and can have multiple local optima. With different initializations, the optimization converges to different local optima. We observe from numerical experiments that different local optima obtained from different initial points can yield significantly different values of $g(\Xc)$. Furthermore, the best initial point for direct optimization of the metric might not be the best initial point for alternating optimization. In our simulation, we try multiple initial points that can be easily generated, namely, the precoding-based constellation, partitioning-based constellation, the pilot-based constellation, and random constellations sampled from the manifold. We eventually choose the option that results in the highest metric value.

\subsubsection{Complexity Analysis}
In each gradient descent step, the objective function $g(\Xc)$ and its Riemannian gradient $\nabla_{\rm R} g(\Xc)$ (given in Appendix~\ref{app:gradient}) need to be computed. In Table~\ref{tab:complexity}, we give the complexity order of these operations for different criteria, where we assume that $|\Xc_k| = \Theta(2^B)$, $\forall k \in [K]$. Note that the complexity orders of computing $g(\Xc)$ and $\nabla_{\rm R} g(\Xc)$ are $O(2^{2KB} (T^3 + T^2 M_{\rm tot})$ and $O(K 2^{(2K+1)B} (T^3 + T^2 M_{\rm tot}))$, respectively, for all considered criteria. However, the absolute number of operations of complexity order $O(T^3)$ or $O(T^2 M_{\rm tot})$ varies for these metrics. Therefore, to facilitate the comparison, we include a scaling factor indicating the number of these dominating operations in Table~\ref{tab:complexity}. The Max-$J_{1/2,\min}$ criterion has the highest complexity, which shall be justified by its performance advantage in the next subsections. The Min-$m_1$ criterion has the lowest complexity. The scaling factors for the Max-$d_{\min}$ criterion is lower than that for Max-$e_{\min}$, although we shall see that they lead to similar performance. Alternating optimization allows to reduce the complexity order of computing $\nabla_{\rm R} g(\Xc)$ by a factor of $K$.
\begin{table*}[t!]
\begin{center}
\caption{The complexity order of computing the objective function $g(\Xc)$ and its Riemannian gradient $\nabla_{\rm R} g(\Xc)$ for different criteria}
			\def\arraystretch{1.5}
\small
\begin{tabular}{ c c c }
\hline 
Criterion & Complexity of computing $g(\Xc)$ & Complexity of computing $\nabla_{\rm R} g(\Xc)$ \\ \hline
Max-$J_{1/2,\min}$ & $\Theta(2^{2KB} (5T^3 + 2T^2 M_{\rm tot}))$ & $\Theta(K 2^{(2K+1)B} (9T^3 + 2T^2 M_{\rm tot}))$\\ 
Max-$d_{\min}$ & $\Theta(2^{2KB} (2T^3 + 2T^2 M_{\rm tot}))$ & $\Theta(K 2^{(2K+1)B} (2.5T^3 + 2T^2 M_{\rm tot}))$ \\ 
Max-$e_{\min}$ & $\Theta(2^{2KB} (4T^3 + 2T^2 M_{\rm tot}))$ & $\Theta(K 2^{(2K+1)B} (4.5T^3 + 2T^2 M_{\rm tot}))$ \\ 
Min-$m_1$ & $\Theta(2^{2KB} (T^3 + 2T^2 M_{\rm tot}))$ & $\Theta(K 2^{(2K+1)B} (T^3 + 2T^2 M_{\rm tot}))$ \\ 
Min-$m_2$ & $\Theta(2^{2KB} (2T^3 + 2T^2 M_{\rm tot}))$ & $\Theta(K 2^{(2K+1)B} (2T^3 + 2T^2 M_{\rm tot}))$ \\ \hline
\end{tabular}
\label{tab:complexity}
\end{center}
\end{table*}

Hereafter, in all figures, the legends representing our proposed schemes are in bold face. 

\subsection{The Single-User Case}
We first consider the single-user case, i.e., $K=1$, with coherence interval $T = 4$, $B \in \{5, 6\}$ bits/symbol, $M = 2$ transmit antennas, and $N = 2$ receive antennas. In Fig.~\ref{fig:SER_P2P}, we show the SER as a function of the SNR $P$ for the constellations obtained by optimizing different metrics. We see that the constellations optimized with the proposed criteria Max-$J_{1/2,\min}$ is on par with that optimized with Min-$m_2$, and outperforms the constellations optimized with the other metrics. The constellation obtained with Max-$d_{\min}$ is on par with that obtained with Max-$e_{\min}$, and slightly better than that with Min-$m_1$ in the high-SNR regime. This shows that for the single-user case where the truncated unitary structure of the symbols is guaranteed, our proposed metrics perform as  well as state-of-the-art metrics. On the other hand, in the multi-user case where the symbols are not necessarily truncated unitary matrices, our metrics have advantages over the existing ones, as we shall show next.
\begin{figure}[t!]
\centering
\subfigure[{$B = 5$ bits/symbol}]{
\begin{tikzpicture}[scale=.85,style={mark size=3pt,line width=3pt}]
\begin{axis}[%
width=3in,
height=2.2in,
at={(0.758in,0.481in)},
scale only axis,
xmin=10, xmax=20,
xtick={10,12,14,16,18,20},
xlabel style={font=\color{white!15!black}},
xlabel={SNR (dB)},
ymode=log,
ymin=1e-5, ymax=1e-1,
yminorticks=true,
ylabel style={font=\color{white!15!black}},
ylabel={Symbol Error Rate},
axis background/.style={fill=white},
xmajorgrids,
ymajorgrids,
yminorgrids,
legend style={at={(0.02,0.02)}, anchor=south west, legend cell align=left, align=left, fill=white, fill opacity=0.3, text opacity=1, draw=none, 
, nodes={scale=.85}}
]

\addplot [color=red, line width=1pt, mark=diamond, mark size=4pt, mark options={solid, red}]
table[row sep=crcr]{%
6.000000000000000   0.209700000000000  \\
8.000000000000000   0.093580000000000 \\
10.000000000000000   0.036000000000000 \\
12.000000000000000   0.011100000000000 \\
14.000000000000000   0.002690000000000 \\
16.000000000000000   0.000643000000000 \\
18.000000000000000   0.000130000000000 \\
20.000000000000000   0.000031000000000 \\
22.000000000000000                   0 \\
};
\addlegendentry{ \textbf{Max-$J_{1/2,\min}$}}

\addplot [color=blue, dotted, line width=1pt, mark=+, mark size=3pt, mark options={solid, blue}]
table[row sep=crcr]{%
6.000000000000000   0.216560000000000 \\
8.000000000000000   0.102860000000000 \\
10.000000000000000   0.041270000000000 \\
12.000000000000000   0.014650000000000 \\
14.000000000000000   0.004360000000000 \\
16.000000000000000   0.001245000000000 \\
18.000000000000000   0.000322000000000 \\
20.000000000000000   0.000074000000000 \\
22.000000000000000                   0 \\
};
\addlegendentry{ \textbf{Max-$d_{\min}$}}

\addplot [color=pimentgreen, dotted, line width=1pt, mark=x, mark size=3pt, mark options={solid, pimentgreen}]
table[row sep=crcr]{%
6.000000000000000   0.216600000000000 \\
8.000000000000000   0.100770000000000 \\
10.000000000000000   0.040770000000000 \\
12.000000000000000   0.014710000000000 \\
14.000000000000000   0.004520000000000 \\
16.000000000000000   0.001169000000000 \\
18.000000000000000   0.000307000000000 \\
20.000000000000000   0.000069000000000 \\
22.000000000000000                   0 \\
};
\addlegendentry{ {Max-$e_{\min}$}}

\addplot [color=pimentgreen, dashdotted, line width=1pt, mark=o, mark size=3pt, mark options={solid, pimentgreen}]
table[row sep=crcr]{%
6.000000000000000   0.222960000000000 \\
8.000000000000000   0.106400000000000 \\
10.000000000000000   0.043050000000000 \\
12.000000000000000   0.015920000000000 \\
14.000000000000000   0.004980000000000 \\
16.000000000000000   0.001343000000000 \\
18.000000000000000   0.000337000000000 \\
20.000000000000000   0.000103000000000 \\
22.000000000000000                   0 \\
};
\addlegendentry{ Min-$m_1$}

\addplot [color=pimentgreen, dashdotted, line width=1pt, mark=square, mark size=3pt, mark options={solid, pimentgreen}]
table[row sep=crcr]{%
6.000000000000000   0.209400000000000 \\
8.000000000000000   0.095790000000000 \\
10.000000000000000   0.035420000000000 \\
12.000000000000000   0.010450000000000 \\
14.000000000000000   0.002720000000000 \\
16.000000000000000   0.000603000000000 \\
18.000000000000000   0.000132000000000 \\
20.000000000000000   0.000028000000000 \\
22.000000000000000                   0 \\
};
\addlegendentry{ Min-$m_2$}

\end{axis}
\end{tikzpicture}%
}
\subfigure[{$B = 6$ bits/symbol}]{
\begin{tikzpicture}[scale=.85,style={mark size=3pt,line width=3pt}]
\begin{axis}[%
width=3in,
height=2.2in,
at={(0.758in,0.481in)},
scale only axis,
xmin=10, xmax=22,
xtick={10,12,14,16,18,20,22},
xlabel style={font=\color{white!15!black}},
xlabel={SNR (dB)},
ymode=log,
ymin=1e-5, ymax=1e-1,
yminorticks=true,
ylabel style={font=\color{white!15!black}},
ylabel={Symbol Error Rate},
axis background/.style={fill=white},
xmajorgrids,
ymajorgrids,
yminorgrids,
legend style={at={(0.02,0.02)}, anchor=south west, legend cell align=left, align=left, fill=white, fill opacity=0.3, text opacity=1, draw=none, 
, nodes={scale=.85}}
]

\addplot [color=red, line width=1pt, mark=diamond, mark size=4pt, mark options={solid, red}]
table[row sep=crcr]{%
6.000000000000000   0.320400000000000 \\
8.000000000000000   0.164800000000000 \\
10.000000000000000   0.073900000000000 \\
12.000000000000000   0.027300000000000 \\
14.000000000000000   7.930e-03 \\
16.000000000000000   1.930e-03 \\
18.000000000000000   3.700e-04 \\
20.000000000000000   0.000087000000000 \\
22.000000000000000   0.000012000000000 \\
};
\addlegendentry{ \textbf{Max-$J_{1/2,\min}$}}

\addplot [color=blue, dotted, line width=1pt, mark=+, mark size=3pt, mark options={solid, blue}]
table[row sep=crcr]{%
6.000000000000000   0.320800000000000 \\
8.000000000000000   0.176400000000000 \\
10.000000000000000   0.081400000000000 \\
12.000000000000000   0.035400000000000 \\
14.000000000000000   1.207e-02 \\
16.000000000000000   3.930e-03 \\
18.000000000000000   1.100e-03 \\
20.000000000000000   0.000341000000000 \\
22.000000000000000   0.000064000000000 \\
};
\addlegendentry{ \textbf{Max-$d_{\min}$}}

\addplot [color=pimentgreen, dotted, line width=1pt, mark=x, mark size=3pt, mark options={solid, pimentgreen}]
table[row sep=crcr]{%
6.000000000000000   0.328300000000000 \\
8.000000000000000   0.169800000000000 \\
10.000000000000000   0.081000000000000 \\
12.000000000000000   0.033000000000000 \\
14.000000000000000   1.163e-02 \\
16.000000000000000   3.490e-03 \\
18.000000000000000   1.190e-03 \\
20.000000000000000   0.000309000000000 \\
22.000000000000000   0.000055000000000 \\
};
\addlegendentry{ {Max-$e_{\min}$}}

\addplot [color=pimentgreen, dashdotted, line width=1pt, mark=o, mark size=3pt, mark options={solid, pimentgreen}]
table[row sep=crcr]{%
6.000000000000000   0.319500000000000 \\
8.000000000000000   0.175100000000000 \\
10.000000000000000   0.081700000000000 \\
12.000000000000000   0.032400000000000 \\
14.000000000000000   1.135e-02 \\
16.000000000000000   3.550e-03 \\
18.000000000000000   9.800e-04 \\
20.000000000000000   0.000310000000000 \\
22.000000000000000   0.000083000000000 \\
};
\addlegendentry{ Min-$m_1$}

\addplot [color=pimentgreen, dashdotted, line width=1pt, mark=square, mark size=3pt, mark options={solid, pimentgreen}]
table[row sep=crcr]{%
6.000000000000000   0.315400000000000 \\
8.000000000000000   0.161800000000000 \\
10.000000000000000   0.068400000000000 \\
12.000000000000000   0.026600000000000 \\
14.000000000000000   7.920e-03 \\
16.000000000000000   1.680e-03 \\
18.000000000000000   4.000e-04 \\
20.000000000000000   0.000072000000000 \\
22.000000000000000   0.000009800000000 \\
};
\addlegendentry{ Min-$m_2$}

\end{axis}
\end{tikzpicture}%
}
\caption{{The SER of the constellations optimized with different criteria for $K=1$ user, coherent interval $T = 4$, $B \in \{5, 6\}$ bits/symbol, $M = 2$ transmit antennas, and $N = 2$ receive antennas.}}
\label{fig:SER_P2P}
\end{figure}
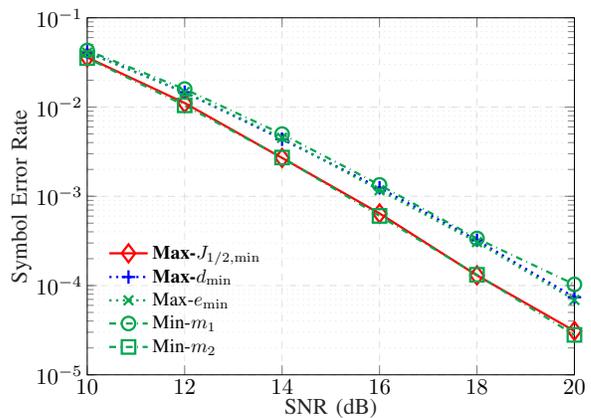
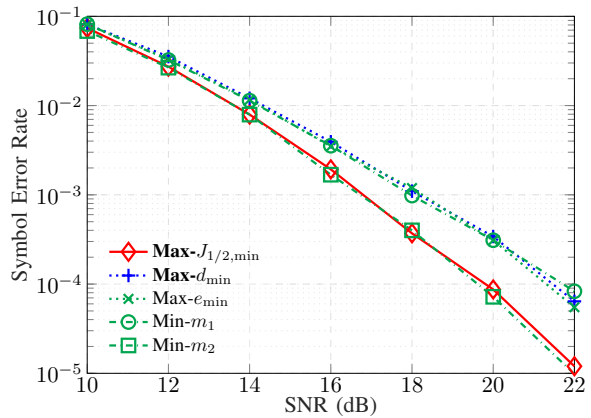

\subsection{{The Multi-User Case With Symmetrical Rate and Equal Power}}
{In the multi-user case,} we first consider the \sheng{symmetrical rate} setting
$R_1 = \dots = R_K = B/T$ \sheng{with equal power $P_1=\dots = P_K =P$ for all users}. 

\subsubsection{Two-User Case}

For the two-user ($K=2$) case,
in Fig.~\ref{fig:joint_SER_T5B4K2M2}, we plot the joint SER \eqref{eq:joint_SER} of the considered constellations for $T = 5$, $B = 4$,
{$M = 2$} and $N = 4$. {We observe that the constellation optimized with the
$J_{1/2,\min}(\Xc)$ metric achieves the best performance among the schemes pertaining to the same rate pair, while the constellation optimized with the $d_{\min}(\Xc)$ metric
\eqref{eq:criterion_minTrace} achieves similar performance as that with the  $e_{\min}(\Xc)$
\eqref{eq:criterion_minMean} metric and outperforms the other constellations for medium and large SNR. The performance of the alternatively optimized constellation with the $d_{\min}(\Xc)$ metric is only slightly inferior to the direct optimization, and better than the pilot-based scheme. The partitioning design (with random partition) and the precoding design respectively outperform the constellations optimized with the Min-$m_2$ and Min-$m_1$ criteria.}
\begin{figure}[t!]
\centering
%
%
\begin{tikzpicture}[scale=.75,style={mark size=3pt,line width=3pt}]

\begin{axis}[%
width=4in,
height=3in,
at={(0.758in,0.481in)},
scale only axis,
xmin=8,
xmax=22,
xtick={8, 10, 12, 14, 16, 18,20,22},
xlabel style={font=\color{white!15!black}},
xlabel={SNR (dB)},
ymode=log,
ymin=1e-06,
ymax=1e-1,
yminorticks=true,
ylabel style={font=\color{white!15!black}},
ylabel={Joint Symbol Error Rate},
axis background/.style={fill=white},
xmajorgrids,
ymajorgrids,
yminorgrids,
legend style={at={(0.02,0.001)}, anchor=south west, legend cell align=left, align=left, fill=white, fill opacity=0.6, text opacity=1, draw=none, 
	nodes={scale=0.95}}
]

\addplot [color=red, line width=1pt, mark=diamond, mark size=4pt, mark options={solid, red}]
table[row sep=crcr]{%
	8.000000000000000   0.047270000000000 \\
	10.000000000000000   0.015240000000000 \\
	12.000000000000000   0.004600000000000 \\
	14.000000000000000   0.000960000000000 \\
	16.000000000000000   0.000210000000000 \\
	18.000000000000000   4.000e-05 \\
	20.000000000000000   7.000e-06 \\
	22 1.187500000000000e-06 \\
};
\addlegendentry{ \textbf{Max-$J_{1/2,\min}$}}

\addplot [color=blue, dashdotted, line width=1pt, mark=+, mark size=4pt, mark options={solid, blue}]
table[row sep=crcr]{%
	8.000000000000000   0.123850000000000 \\
	10.000000000000000   0.042370000000000 \\
	12.000000000000000   0.010870000000000 \\
	14.000000000000000   0.002640000000000 \\
	16.000000000000000   0.000450000000000 \\
	18.000000000000000   8.900e-05 \\
	20.000000000000000   1.800e-05 \\
	22 2.831250000000000e-06 \\
};
\addlegendentry{ \textbf{Max-$d_{\min}$}}

\addplot [color=blue, dashdotted, line width=1pt, mark=pentagon, mark size=4pt, mark options={solid, blue}]
table[row sep=crcr]{%
	8.000000000000000   0.123620000000000 \\
	10.000000000000000   0.043300000000000 \\
	12.000000000000000   0.011020000000000 \\
	14.000000000000000   0.002870000000000 \\
	16.000000000000000   0.000460000000000 \\
	18.000000000000000   1.120e-04 \\
	20.000000000000000   2.100e-05 \\
	22 3.393750000000000e-06 \\
};
\addlegendentry{\textbf{Alternating optimization of $d_{\min}$}}

\addplot [color=violet, dashed, line width=1pt, mark=triangle, mark size=4pt, mark options={solid, violet}]
table[row sep=crcr]{%
	8.000000000000000   0.158050000000000 \\
	10.000000000000000   0.081850000000000 \\
	12.000000000000000   0.041110000000000 \\
	14.000000000000000   0.022820000000000 \\
	16.000000000000000   0.012050000000000 \\
	18.000000000000000   6.690e-03 \\
	20.000000000000000   3.195e-03  \\
	22 1.338e-03 \\
};
\addlegendentry{\textbf{Type-II Precoding}}

\addplot [color=violet, dashed, line width=1pt, mark=triangle, mark size=4pt, mark options={solid, violet,rotate=180}]
table[row sep=crcr]{%
	8.000000000000000   0.053830000000000 \\
	10.000000000000000   0.020860000000000 \\
	12.000000000000000   0.006960000000000 \\
	14.000000000000000   0.002390000000000 \\
	16.000000000000000   0.000980000000000 \\
	18.000000000000000   2.750e-04  \\
	20.000000000000000   9.300e-05   \\
	22 3.100e-05\\
};
\addlegendentry{\textbf{Partitioning}}

\addplot [color=pimentgreen, dashdotted, line width=1pt, mark=x, mark size=4pt, mark options={solid, pimentgreen}]
table[row sep=crcr]{%
	8.000000000000000   0.123620000000000 \\
	10.000000000000000   0.043300000000000 \\
	12.000000000000000   0.011020000000000 \\
	14.000000000000000   0.002870000000000 \\
	16.000000000000000   0.000460000000000 \\
	18.000000000000000   9.600e-05 \\
	20.000000000000000   1.600e-05 \\
	22 2.468750000000000e-06 \\
};
\addlegendentry{ {Max-$e_{\min}$}}

\addplot [color=pimentgreen, dashdotted, line width=1pt, mark=o, mark size=3pt, mark options={solid, pimentgreen}]
table[row sep=crcr]{%
	8.000000000000000   0.269310000000000 \\
	10.000000000000000   0.147690000000000 \\
	12.000000000000000   0.073680000000000 \\
	14.000000000000000   0.035080000000000 \\
	16.000000000000000   0.016740000000000 \\
	18.000000000000000   8.174e-03 \\
	20.000000000000000   3.917e-03 \\
	22 1.876e-03 \\
};
\addlegendentry{ Min-$m_1$}

\addplot [color=pimentgreen, dashdotted, line width=1pt, mark=square, mark size=3pt, mark options={solid, pimentgreen}]
table[row sep=crcr]{%
	8.000000000000000   0.070780000000000 \\
	10.000000000000000   0.029530000000000 \\
	12.000000000000000   0.011210000000000 \\
	14.000000000000000   0.004250000000000 \\
	16.000000000000000   0.001880000000000 \\
	18.000000000000000   6.050e-04 \\
	20.000000000000000   2.550e-04 \\
	22 8.600e-05 \\
};
\addlegendentry{ Min-$m_2$}

\addplot [color=black, dotted, line width=1pt, mark=*, mark size=2pt, mark options={solid, black}]
table[row sep=crcr]{%
	8.000000000000000   0.162840000000000 \\
	10.000000000000000   0.061780000000000 \\
	12.000000000000000   0.018250000000000 \\
	14.000000000000000   0.004190000000000 \\
	16.000000000000000   0.000800000000000 \\
	18.000000000000000   1.740e-04  \\
	20.000000000000000   3.000e-05	 \\
	22 5.062500000000000e-06 \\
};
\addlegendentry{ Pilot-based scheme}

-\end{axis}
\vspace{-.2cm}
\end{tikzpicture}%
\caption{The joint SER of the proposed constellations \sheng{compared
to the baselines} for $T = 5$, $K = 2$, {$B = 4$, $M = 2$,} and $N = 4$.}
\label{fig:joint_SER_T5B4K2M2}
\end{figure}
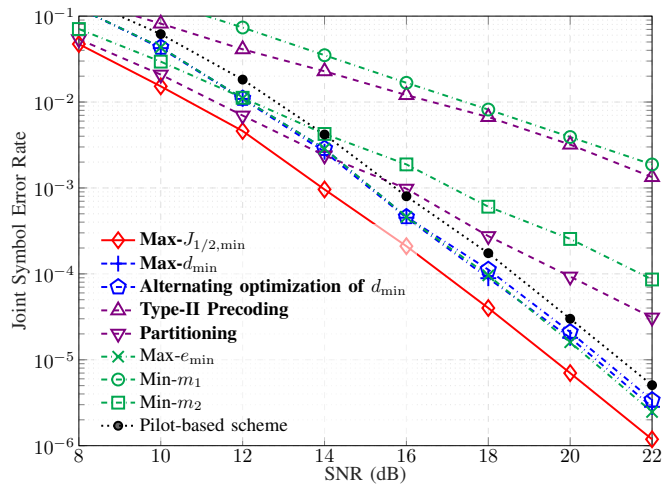

To assess the effectiveness of the design metrics, in Fig.~\ref{fig:metric_T5B4K2M2}, we show the values of our metrics {$b_{\min}(\Xc)$, $J_{1/2,\min}(\Xc)$,} $d_{\min}(\Xc)$  and the {baseline} metrics $e_{\min}(\Xc)$, $m_1(\Xc)$, and $m_2(\Xc)$ for the constellations considered in Fig.~\ref{fig:joint_SER_T5B4K2M2}. {As shown in Fig.~\ref{fig:metric_T5B4K2M2_R} and Fig.~\ref{fig:metric_T5B4K2M2_Js}, the relative order of the constellations in terms of the metrics $b_{\min}(\Xc)$ and $J_{1/2,\min}(\Xc)$ is identical to their relative order in terms of joint-ML SER at moderate/high SNR in Fig.~\ref{fig:joint_SER_T5B4K2M2}. Similar observation holds for the metrics $d_{\min}(\Xc)$ and $e_{\min}(\Xc)$ in Fig.~\ref{fig:metric_T5B4K2M2_e_d}.\footnote{{An exception is that the constellation obtained with Max-$J_{1/2,\min}$ does not have the highest value of $d_{\min}(\Xc)$ and $e_{\min}(\Xc)$, which is speculated to result from the suboptimality of the optimization solution.}} This confirms that our proposed metrics and the $e_{\min}(\Xc)$ metric are meaningful for constellation design and evaluation. We also see that $d_{\min}(\Xc)$ is very close to $e_{\min}(\Xc)$ for $\SNR \ge 20$~dB.} From Fig.~\ref{fig:metric_T5B4K2M2_m1} and Fig.~\ref{fig:metric_T5B4K2M2_m2}, we see that the relative order of the constellations in terms of the value of the baseline metrics $m_1(\Xc)$ in \eqref{eq:criterion_maxminChordal} and $m_2(\Xc)$ in \eqref{eq:criterion_minsumDet} is rather unrevealing about the SER performance in Fig.~\ref{fig:joint_SER_T5B4K2M2}. {For example, the constellations optimized with Max-$e_{\min}$, Max-$d_{\min}$, and alternating optimization perform well although they have high values of the $m_2(\Xc)$ metric.} 
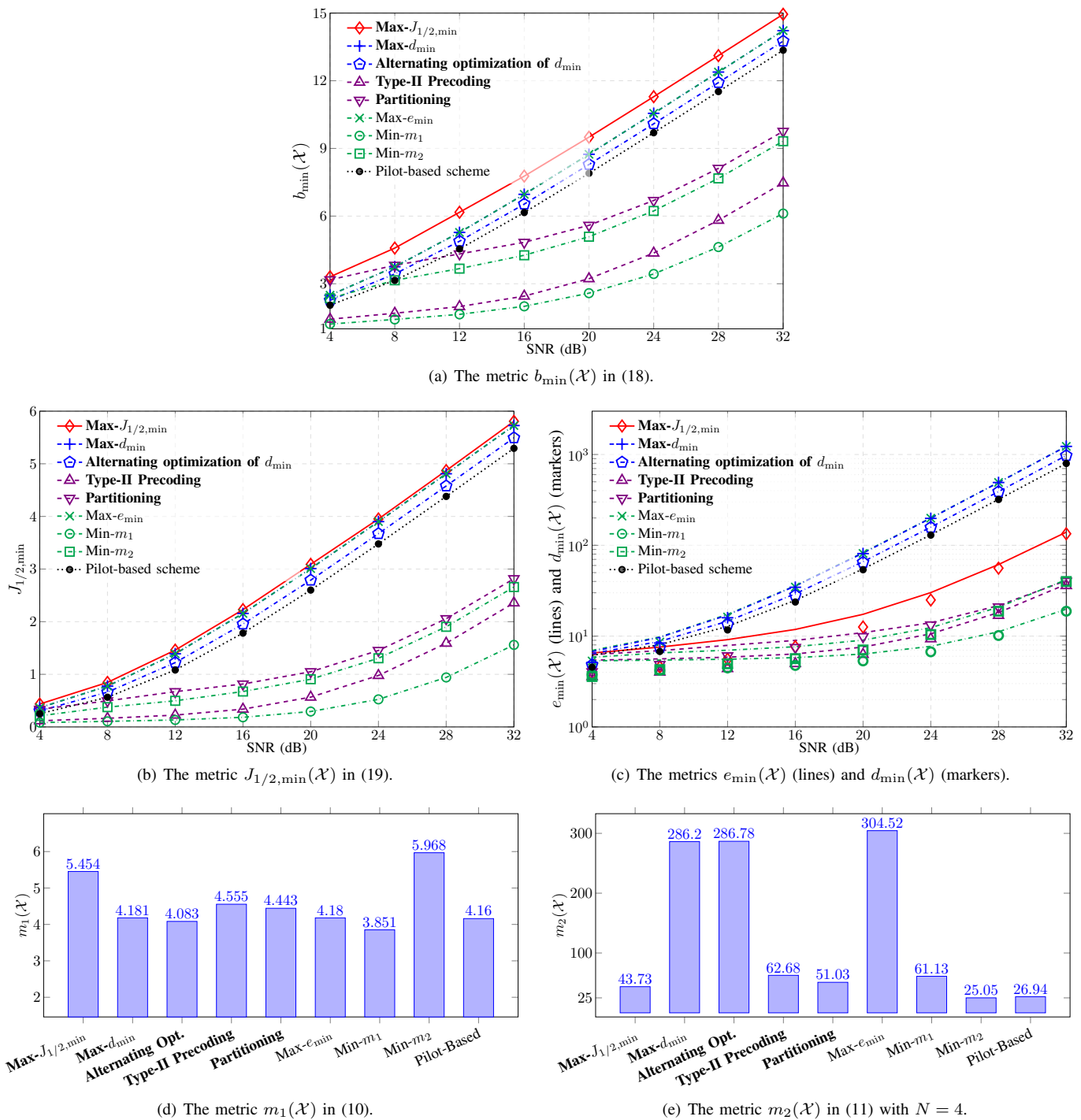
\begin{figure*}[t!]
\centering
\subfigure[{The metric $b_{\min}(\Xc)$ in~\eqref{eq:criterion_Riemannian}.}]{\begin{tikzpicture}[scale=.7,style={mark size=3pt,line width=3pt},spy using outlines={mark size = 2pt, rectangle,lens={scale=3.2}, height=2cm,width=3cm, connect spies}]
\begin{axis}[%
width=4.3in,
height=3in,
at={(0.758in,0.481in)},
scale only axis,
xmin=4,
xmax=32,
xtick={ 4,  8, 12, 16, 20, 24, 28, 32},
xlabel style={font=\color{white!15!black}},
xlabel={SNR (dB)},
ymin=1,
ymax=15,
yminorticks=true,
ytick={1,3,6,9,12,15},
ylabel style={font=\color{white!15!black}},
ylabel={\large $b_{\min}(\Xc)$},
axis background/.style={fill=white},
xmajorgrids,
ymajorgrids,
yminorgrids,
legend style={at={(0.02,1)}, anchor=north west, legend cell align=left, align=left, fill=white, fill opacity=0.6, text opacity=1, draw=none 
	}
]

\addplot [color=red,  line width=1pt, mark=diamond, mark size=4pt, mark options={solid, red}]
table[row sep=crcr]{%
	4 3.3071 \\
	8 4.5815 \\
	12 6.1692 \\
	16 7.7704 \\
	20 9.4986 \\
	24 11.2921 \\
	28 13.1142 \\
	32 14.9483 \\
};
\addlegendentry{ \textbf{Max-$J_{1/2,\min}$}}

\addplot [color=blue, dashdotted, line width=1pt, mark=+, mark size=4pt, mark options={solid, blue}]
table[row sep=crcr]{%
	4 2.4915 \\
	8 3.7542 \\
	12 5.2758 \\
	16 6.9649 \\
	20 8.7407 \\
	24 10.5554 \\
	28 12.3857 \\
	32 14.2228 \\
};
\addlegendentry{ \textbf{Max-$d_{\min}$}}

\addplot [color=blue, dashdotted, line width=1pt, mark=pentagon, mark size=4pt, mark options={solid, blue}]
table[row sep=crcr]{%
	4 2.2648 \\
	8 3.4353 \\
	12 4.8843 \\
	16 6.531 \\
	20 8.2863 \\
	24 10.0922 \\
	28 11.9196 \\
	32 13.7557 \\
};
\addlegendentry{\textbf{Alternating optimization of $d_{\min}$}}

\addplot [color=violet, dashed, line width=1pt, mark=triangle, mark size=4pt, mark options={solid, violet}]
table[row sep=crcr]{%
	4 1.4286 \\
	8 1.6932 \\
	12 1.9863 \\
	16 2.4515 \\
	20 3.2228 \\
	24 4.3598 \\
	28 5.8133 \\
	32 7.4707 \\
};
\addlegendentry{\textbf{Type-II Precoding}}

\addplot [color=violet, dashed, line width=1pt, mark=triangle, mark size=4pt, mark options={solid, violet,rotate=180}]
table[row sep=crcr]{%
	4 3.1743 \\
	8 3.8186 \\
	12 4.3219 \\
	16 4.8373 \\
	20 5.5911 \\
	24 6.6916 \\
	28 8.1161 \\
	32 9.7573 \\
};
\addlegendentry{\textbf{Partitioning}}

\addplot [color=pimentgreen, dashdotted, line width=1pt, mark=x, mark size=4pt, mark options={solid, pimentgreen}]
table[row sep=crcr]{%
	4 2.4915 \\
	8 3.7542 \\
	12 5.2758 \\
	16 6.9649 \\
	20 8.7407 \\
	24 10.5554 \\
	28 12.3857 \\
	32 14.2228 \\
};
\addlegendentry{ {Max-$e_{\min}$}}

\addplot [color=pimentgreen, dashdotted, line width=1pt, mark=o, mark size=3pt, mark options={solid, pimentgreen}]
table[row sep=crcr]{%
	4 1.2134 \\
	8 1.4178 \\
	12 1.6429 \\
	16 1.9989 \\
	20 2.5823 \\
	24 3.4339 \\
	28 4.6268 \\
	32 6.1116 \\
};
\addlegendentry{ Min-$m_1$}

\addplot [color=pimentgreen, dashdotted, line width=1pt, mark=square, mark size=3pt, mark options={solid, pimentgreen}]
table[row sep=crcr]{%
	4 2.3536 \\
	8 3.1537 \\
	12 3.6702 \\
	16 4.2627 \\
	20 5.0855 \\
	24 6.228 \\
	28 7.6716 \\
	32 9.3201 \\
};
\addlegendentry{ Min-$m_2$}

\addplot [color=black, dotted, line width=1pt, mark=*, mark size=2pt, mark options={solid, black}]
table[row sep=crcr]{%
	4 2.047 \\
	8 3.1486 \\
	12 4.5439 \\
	16 6.1579 \\
	20 7.8969 \\
	24 9.6956 \\
	28 11.52 \\
	32 13.355 \\
};
\addlegendentry{ Pilot-based scheme}

\end{axis}
\end{tikzpicture}
\subfigure[{The metric $J_{1/2,\min}(\Xc)$ in~\eqref{eq:criterion_Js}.}]{\begin{tikzpicture}[scale=.7,style={mark size=3pt,line width=3pt},spy using outlines={mark size = 2pt, rectangle,lens={scale=3.2}, height=2cm,width=3cm, connect spies}]
\begin{axis}[%
width=4.5in,
height=3in,
at={(0.758in,0.481in)},
scale only axis,
xmin=4,
xmax=32,
xtick={ 4,  8, 12, 16, 20, 24, 28, 32},
xlabel style={font=\color{white!15!black}},
xlabel={SNR (dB)},
ymin=0,
ymax=6,
yminorticks=true,
ylabel style={font=\color{white!15!black}},
ylabel={\large $J_{1/2, \min}$},
axis background/.style={fill=white},
xmajorgrids,
ymajorgrids,
yminorgrids,
legend style={at={(0.02,1)}, anchor=north west, legend cell align=left, align=left, fill=white, fill opacity=0.6, text opacity=1, draw=none 
	}
]

\addplot [color=red, line width=1pt, mark=diamond, mark size=4pt, mark options={solid, red}]
table[row sep=crcr]{%
	4 0.43527 \\
	8 0.84572 \\
	12 1.4599 \\
	16 2.229 \\
	20 3.0894 \\
	24 3.9518 \\
	28 4.8712 \\
	32 5.8075 \\
};
\addlegendentry{ \textbf{Max-$J_{1/2,\min}$}}

\addplot [color=blue, dashdotted, line width=1pt, mark=+, mark size=4pt, mark options={solid, blue}]
table[row sep=crcr]{%
	4 0.3652 \\
	8 0.77561 \\
	12 1.3897 \\
	16 2.1567 \\
	20 3.0092 \\
	24 3.9016 \\
	28 4.8107 \\
	32 5.7268 \\
};
\addlegendentry{ \textbf{Max-$d_{\min}$}}

\addplot [color=blue, dashdotted, line width=1pt, mark=pentagon, mark size=4pt, mark options={solid, blue}]
table[row sep=crcr]{%
	4 0.3048 \\
	8 0.66153 \\
	12 1.2226 \\
	16 1.9541 \\
	20 2.7884 \\
	24 3.6726 \\
	28 4.5786 \\
	32 5.4936 \\
};
\addlegendentry{\textbf{Alternating optimization of $d_{\min}$}}

\addplot [color=violet, dashed, line width=1pt, mark=triangle, mark size=4pt, mark options={solid, violet}]
table[row sep=crcr]{%
	4 0.11666 \\
	8 0.16408 \\
	12 0.22462 \\
	16 0.33754 \\
	20 0.56605 \\
	24 0.97766 \\
	28 1.591 \\
	32 2.357 \\
};
\addlegendentry{\textbf{Type-II Precoding}}

\addplot [color=violet, dashed, line width=1pt, mark=triangle, mark size=4pt, mark options={solid, violet,rotate=180}]
table[row sep=crcr]{%
	4 0.33501 \\
	8 0.4955 \\
	12 0.67014 \\
	16 0.81537 \\
	20 1.0488 \\
	24 1.4532 \\
	28 2.0567 \\
	32 2.8159 \\
};
\addlegendentry{\textbf{Partitioning}}

\addplot [color=pimentgreen, dashdotted, line width=1pt, mark=x, mark size=4pt, mark options={solid, pimentgreen}]
table[row sep=crcr]{%
	4 0.3652 \\
	8 0.77561 \\
	12 1.3897 \\
	16 2.1567 \\
	20 3.0092 \\
	24 3.9016 \\
	28 4.8107 \\
	32 5.7268 \\
};
\addlegendentry{ {Max-$e_{\min}$}}

\addplot [color=pimentgreen, dashdotted, line width=1pt, mark=o, mark size=3pt, mark options={solid, pimentgreen}]
table[row sep=crcr]{%
	4 0.082473 \\
	8 0.10578 \\
	12 0.1326 \\
	16 0.18272 \\
	20 0.2929 \\
	24 0.52377 \\
	28 0.94057 \\
	32 1.559 \\
};
\addlegendentry{ Min-$m_1$}

\addplot [color=pimentgreen, dashdotted, line width=1pt, mark=square, mark size=3pt, mark options={solid, pimentgreen}]
table[row sep=crcr]{%
	4 0.21088 \\
	8 0.37594 \\
	12 0.49676 \\
	16 0.67296 \\
	20 0.90646 \\
	24 1.3047 \\
	28 1.9025 \\
	32 2.658 \\
};
\addlegendentry{ Min-$m_2$}

\addplot [color=black, dotted, line width=1pt, mark=*, mark size=2pt, mark options={solid, black}]
table[row sep=crcr]{%
	4 0.2512 \\
	8 0.56454 \\
	12 1.0819 \\
	16 1.7826 \\
	20 2.6004 \\
	24 3.4772 \\
	28 4.38 \\
	32 5.2937 \\
};
\addlegendentry{ Pilot-based scheme}

\end{axis}
\end{tikzpicture}
\hspace{.2cm}
\subfigure[The metrics $e_{\min}(\Xc)$ (lines) and $d_{\min}(\Xc)$ (markers).]{
%
%
\begin{tikzpicture}[scale=.7,style={mark size=3pt,line width=3pt},spy using outlines={mark size = 2pt, rectangle,lens={scale=3.2}, height=1.8cm,width=3cm, connect spies}]
\begin{axis}[%
width=4.5in,
height=3in,
at={(0.758in,0.481in)},
scale only axis,
xmin=4,
xmax=32,
xtick={ 4,  8, 12, 16, 20, 24, 28, 32},
xlabel style={font=\color{white!15!black}},
xlabel={SNR (dB)},
ymode=log,
ymin=1,
ymax=3000,
yminorticks=true,
ylabel style={font=\color{white!15!black}},
ylabel={\large $e_{\min}(\Xc)$ (lines) and $d_{\min}(\Xc)$ (markers)},
axis background/.style={fill=white},
xmajorgrids,
ymajorgrids,
yminorgrids,
legend style={at={(0.02,1)}, anchor=north west, legend cell align=left, align=left, fill=white, fill opacity=0.6, text opacity=1, draw=none 
	}
]

\addplot [color=red, line width=1pt, mark=diamond, mark size=4pt, mark options={solid, red}]
table[row sep=crcr]{%
	0 0 \\
	2 .1 \\
};
\addlegendentry{ \textbf{Max-$J_{1/2,\min}$}}

\addplot [color=blue, dashdotted, line width=1pt, mark=+, mark size=4pt, mark options={solid, blue}]
table[row sep=crcr]{%
	0 0 \\
	2 .1 \\
};
\addlegendentry{ \textbf{Max-$d_{\min}$}}

\addplot [color=blue, dashdotted, line width=1pt, mark=pentagon, mark size=4pt, mark options={solid, blue}]
table[row sep=crcr]{%
	0 0 \\
	2 .1 \\
};
\addlegendentry{\textbf{Alternating optimization of $d_{\min}$}}

\addplot [color=violet, dashed, line width=1pt, mark=triangle, mark size=4pt, mark options={solid, violet}]
table[row sep=crcr]{%
	0 0 \\
	2 .1 \\
};
\addlegendentry{\textbf{Type-II Precoding}}

\addplot [color=violet, dashed, line width=1pt, mark=triangle, mark size=4pt, mark options={solid, violet,rotate=180}]
table[row sep=crcr]{%
	0 0 \\
	2 .1 \\
};
\addlegendentry{\textbf{Partitioning}}

\addplot [color=pimentgreen, dashdotted, line width=1pt, mark=x, mark size=4pt, mark options={solid, pimentgreen}]
table[row sep=crcr]{%
	0 0 \\
	2 .1 \\
};
\addlegendentry{ {Max-$e_{\min}$}}

\addplot [color=pimentgreen, dashdotted, line width=1pt, mark=o, mark size=3pt, mark options={solid, pimentgreen}]
table[row sep=crcr]{%
	0 0 \\
	2 .1 \\
};
\addlegendentry{ Min-$m_1$}

\addplot [color=pimentgreen, dashdotted, line width=1pt, mark=square, mark size=3pt, mark options={solid, pimentgreen}]
table[row sep=crcr]{%
	0 0 \\
	2 .1 \\
};
\addlegendentry{ Min-$m_2$}

\addplot [color=black, dotted, line width=1pt, mark=*, mark size=2pt, mark options={solid, black}]
table[row sep=crcr]{%
	0 0 \\
	2 .1 \\
};
\addlegendentry{ Pilot-based scheme}


\addplot [color=red, line width=1pt]
table[row sep=crcr]{%
	4 6.5133 \\
	8 7.5748 \\
	12 9.1456 \\
	16 11.848 \\
	20 17.377 \\
	24 30.1502 \\
	28 61.4267 \\
	32 139.5224 \\
};

\addplot [color=pimentgreen, dashdotted, line width=1pt]
table[row sep=crcr]{%
	4 6.9632 \\
	8 9.7876 \\
	12 17.3551 \\
	16 35.8706 \\
	20 82.9834 \\
	24 199.9199 \\
	28 492.9513 \\
	32 1229.8917 \\
};

\addplot [color=blue, dashdotted, line width=1pt]
table[row sep=crcr]{%
	4 6.7632 \\
	8 9.6876 \\
	12 17.0551 \\
	16 35.5706 \\
	20 82.0834 \\
	24 198.9199 \\
	28 492.2513 \\
	32 1228.8917 \\
};

\addplot [color=blue, dashdotted, line width=1pt]
table[row sep=crcr]{%
	4 6.4254 \\
	8 8.751 \\
	12 14.5845 \\
	16 29.2311 \\
	20 66.0182 \\
	24 158.4215 \\
	28 390.5277 \\
	32 973.5517 \\
};

\addplot [color=violet, dashed, line width=1pt]
table[row sep=crcr]{%
	4 5.4362 \\
	8 5.6141 \\
	12 5.8583 \\
	16 6.3632 \\
	20 7.5767 \\
	24 10.6013 \\
	28 18.1906 \\
	32 37.2505 \\
};

\addplot [color=violet, dashed, line width=1pt]
table[row sep=crcr]{%
	4 6.261 \\
	8 6.9204 \\
	12 7.91 \\
	16 8.9682 \\
	20 10.9159 \\
	24 14.0241 \\
	28 21.688 \\
	32 40.8805 \\
};

\addplot [color=pimentgreen, dashdotted,line width=1pt]
table[row sep=crcr]{%
	4 5.3436 \\
	8 5.4456 \\
	12 5.5652 \\
	16 5.7966 \\
	20 6.3475 \\
	24 7.7187 \\
	28 11.1579 \\
	32 19.7948 \\
};

\addplot [color=pimentgreen, dashdotted,line width=1pt]
table[row sep=crcr]{%
	4 5.8481 \\
	8 6.5402 \\
	12 6.9923 \\
	16 7.6428 \\
	20 9.0274 \\
	24 12.3945 \\
	28 20.8056 \\
	32 41.9145 \\
};

\addplot [color=black, dotted, line width=1pt]
table[row sep=crcr]{%
	4 6.1423 \\
	8 8.0346 \\
	12 12.8014 \\
	16 24.7811 \\
	20 54.8753 \\
	24 130.4694 \\
	28 320.3537 \\
	32 797.3216 \\
};


\addplot [color=red, dashed, line width=1pt, mark=diamond, mark size=4pt, mark options={solid, red}, only marks]
table[row sep=crcr]{%
	4 3.808 \\
	8 4.5737 \\
	12 5.5802 \\
	16 7.6392 \\
	20 12.6015 \\
	24 24.9785 \\
	28 56.0325 \\
	32 134.0225 \\
};

\addplot [color=pimentgreen, dashdotted, line width=1pt, mark=x, mark size=4pt, mark options={solid, pimentgreen}, only marks]
table[row sep=crcr]{%
	4 5.3228 \\
	8 8.5992 \\
	12 16.2337 \\
	16 34.7211 \\
	20 81.3635 \\
	24 198.9119 \\
	28 492.2481 \\
	32 1228.8904 \\
};

\addplot [color=blue, dashdotted, line width=1pt, mark=+, mark size=4pt, mark options={solid, blue}, only marks]
table[row sep=crcr]{%
	4 5.1228 \\
	8 8.3992 \\
	12 15.9337 \\
	16 34.5211 \\
	20 81.0635 \\
	24 197.9119 \\
	28 491.2481 \\
	32 1227.8904 \\
};

\addplot [color=blue, dashdotted, line width=1pt, mark=pentagon, mark size=4pt, mark options={solid, blue}, only marks]
table[row sep=crcr]{%
	4 4.8217 \\
	8 7.4835 \\
	12 13.4729 \\
	16 28.1858 \\
	20 65 \\
	24 157.4143 \\
	28 389.5248 \\
	32 972.5506 \\
};

\addplot [color=violet, dashed, line width=1pt, mark=triangle, mark size=4pt, mark options={solid, violet}, only marks]
table[row sep=crcr]{%
	4 3.5723 \\
	8 4.0185 \\
	12 4.3901 \\
	16 4.9813 \\
	20 6.2906 \\
	24 9.3416 \\
	28 16.9347 \\
	32 35.9961 \\
};

\addplot [color=violet, dashed, line width=1pt, mark=triangle, mark size=4pt, mark options={solid, violet,rotate=180}, only marks]
table[row sep=crcr]{%
	4 4.0482 \\
	8 4.864 \\
	12 6.0678 \\
	16 7.5022 \\
	20 9.8663 \\
	24 13.2769 \\
	28 20.9522 \\
	32 40.1493 \\
};

\addplot [color=pimentgreen, dashdotted, line width=1.5pt, mark=o, mark size=3pt, mark options={solid, pimentgreen}, only marks]
table[row sep=crcr]{%
	4 3.7665 \\
	8 4.1926 \\
	12 4.4602 \\
	16 4.754 \\
	20 5.3304 \\
	24 6.7118 \\
	28 10.1551 \\
	32 18.7937 \\
};

\addplot [color=pimentgreen, dashdotted, line width=1.5pt, mark=square, mark size=3pt, mark options={solid, pimentgreen}, only marks]
table[row sep=crcr]{%
	4 3.59 \\
	8 4.2808 \\
	12 4.9581 \\
	16 5.6518 \\
	20 7.0406 \\
	24 10.4088 \\
	28 18.8203 \\
	32 39.9294 \\
};

\addplot [color=black, dotted, line width=1pt, mark=*, mark size=2pt, mark options={solid, black}, only marks]
table[row sep=crcr]{%
	4 4.5537 \\
	8 6.7762 \\
	12 11.6941 \\
	16 23.7377 \\
	20 53.8579 \\
	24 129.4625 \\
	28 319.351 \\
	32 796.3205 \\
};

\end{axis}
\end{tikzpicture}
\\
\subfigure[The metric $m_1(\Xc)$ in~\eqref{eq:criterion_maxminChordal}.]{\begin{tikzpicture}[scale=.72]
\begin{axis}[
ybar = .2,
ymin=2,
ymax=6.5,
width=5in,
height=2.5in,
enlarge y limits = 0.12,
bar width=20pt,
ylabel={$m_1(\Xc)$},
symbolic x coords={{Max-$J_{1/2,\min}$},{Max-$d_{\min}$},{Alternating Opt.},{Precoding},{Partitioning},{Max-$e_{\min}$},{Min-$m_1$},{Min-$m_2$},{Pilot-Based}},
xticklabels={{\textbf{Max-$J_{1/2,\min}$}},{\textbf{Max-$d_{\min}$}},{\textbf{Alternating Opt.}},{\textbf{Type-II~Precoding}},{\textbf{Partitioning}},{{Max-$e_{\min}$}},{Min-$m_1$},{Min-$m_2$},{Pilot-Based}},
ytick={2,3,4,5,6},
xtick=data,
nodes near coords,
every node near coord/.append style={/pgf/number format/precision=3
},
xticklabel style={inner sep=0pt,anchor=north east,rotate=25, align=right}
]

\addplot[fill=blue!30, draw=blue, nodes near coords style={color=blue}] coordinates {
	({Max-$J_{1/2,\min}$},5.454)
	({Max-$d_{\min}$},4.1809) 
	({Alternating Opt.},4.0832) 
	({Precoding},4.5547) 
	({Partitioning},4.4433) 
		({Max-$e_{\min}$},4.18) 
	({Min-$m_1$},3.8512) 
	({Min-$m_2$},5.9683)
	({Pilot-Based},4.1600)
};
%

\end{axis}
\end{tikzpicture}\label{fig:metric_T5B4K2M2_m1}}
\hspace{.2cm}
\subfigure[The metric $m_2(\Xc)$ in~\eqref{eq:criterion_minsumDet} with {$N=4$.}]{\begin{tikzpicture}[scale=.72]
\begin{axis}[
ybar = .2,
ymin=26,
ymax=300,
width=5in,
height=2.5in,
enlarge y limits = 0.12,
bar width=20pt,
ylabel={$m_2(\Xc)$},
symbolic x coords={{Max-$J_{1/2,\min}$},{Max-$d_{\min}$},{Alternating Opt.},{Precoding},{Partitioning},{Max-$e_{\min}$},{Min-$m_1$},{Min-$m_2$},{Pilot-Based}},
xticklabels={{\textbf{Max-$J_{1/2,\min}$}},{\textbf{Max-$d_{\min}$}},{\textbf{Alternating Opt.}},{\textbf{Type-II~Precoding}},{\textbf{Partitioning}},{{Max-$e_{\min}$}},{Min-$m_1$},{Min-$m_2$},{Pilot-Based}},
ytick={25,100,200,300},
xtick=data,
nodes near coords,
every node near coord/.append style={/pgf/number format/precision=3
},
xticklabel style={inner sep=0pt,anchor=north east,rotate=25, align=right}
]

\addplot[fill=blue!30, draw=blue, nodes near coords style={color=blue}] coordinates {
	({Max-$J_{1/2,\min}$},43.73)
	({Max-$d_{\min}$},286.2) 
	({Alternating Opt.},286.78) 
	({Precoding},62.68) 
	({Partitioning},51.03) 
	({Max-$e_{\min}$},304.52) 
	({Min-$m_1$},61.13) 
	({Min-$m_2$},25.05) 
	({Pilot-Based},26.94)
};
%

\end{axis}
\end{tikzpicture}\label{fig:metric_T5B4K2M2_m2}}
\caption{The value of the {design} metrics for the considered constellations for $T = 5$, {$K=2$,} $B = 4$, {and $M = 2$}.}
\label{fig:metric_T5B4K2M2}
\end{figure*}

%

\subsubsection{Three-User Case}
In the three-user ($K=3$) case, we consider {$T = 7$,} $B=3$, {$N=6$,} and plot the joint SER of various constellations in Fig.~\ref{fig:joint_SER_T7B3K3N6}. We observe again that maximizing {$J_{1/2,\min} (\Xc)$ results in the best performance, while maximizing $e_{\min} (\Xc)$ and $d_{\min} (\Xc)$ yields similar performance and outperforms the other constellations in the moderate/high SNR regime. The pilot-based scheme is outperformed by the constellation obtained from Min-$m_1$. The SER of the Min-$m_2$ constellation and the partitioning-based constellation are low in the low-SNR regime but then decreases more slowly with the SNR.} 

\begin{figure}[t!]
\centering
%
%
\begin{tikzpicture}[scale=.75,style={mark size=3pt,line width=3pt}]

\begin{axis}[%
width=4in,
height=3in,
at={(0.758in,0.481in)},
scale only axis,
xmin=6,
xmax=16,
xtick={6, 8, 10, 12, 14, 16},
xlabel style={font=\color{white!15!black}},
xlabel={SNR (dB)},
ymode=log,
ymin=1e-07,
ymax=1e-1,
yminorticks=true,
ylabel style={font=\color{white!15!black}},
ylabel={Joint Symbol Error Rate},
axis background/.style={fill=white},
xmajorgrids,
ymajorgrids,
yminorgrids,
legend style={at={(0.02,0.001)}, anchor=south west, legend cell align=left, align=left, fill=white, fill opacity=0.6, text opacity=1, draw=none, 
	nodes={scale=1}}
]

\addplot [color=red, line width=1pt, mark=diamond, mark size=4pt, mark options={solid, red}]
table[row sep=crcr]{%
	4.000000000000000   0.114000000000000 \\
	6.000000000000000   0.030600000000000 \\
	8.000000000000000   0.005000000000000 \\
	10.000000000000000   0.000630000000000 \\
	12.000000000000000   0.000071000000000 \\
	14.000000000000000   0.000005000000000 \\
	16 3e-7 \\
};
\addlegendentry{ \textbf{Max-$J_{1/2,\min}$}}

\addplot [color=blue, dashdotted, line width=1pt, mark=+, mark size=4pt, mark options={solid, blue}]
table[row sep=crcr]{%
	4.000000000000000   0.111800000000000 \\
	6.000000000000000   0.029800000000000 \\
	8.000000000000000   0.004800000000000 \\
	10.000000000000000   0.000760000000000 \\
	12.000000000000000   0.000080000000000 \\
	14.000000000000000   0.000006000000000 \\
	16 4.5e-7 \\
};
\addlegendentry{ \textbf{Max-$d_{\min}$}}

\addplot [color=blue, dashdotted, line width=1pt, mark=pentagon, mark size=4pt, mark options={solid, blue}]
table[row sep=crcr]{%
	4.000000000000000   0.202200000000000 \\
	6.000000000000000   0.065100000000000 \\
	8.000000000000000   0.012300000000000 \\
	10.000000000000000   0.002040000000000 \\
	12.000000000000000   0.000236000000000 \\
	14.000000000000000   0.000021000000000 \\
	16 9.4e-7 \\
};
\addlegendentry{\textbf{Alternating optimization of $d_{\min}$}}

\addplot [color=violet, dashed, line width=1pt, mark=triangle, mark size=4pt, mark options={solid, violet}]
table[row sep=crcr]{%
	4.000000000000000   0.133100000000000 \\
	6.000000000000000   0.056400000000000 \\
	8.000000000000000   0.024400000000000 \\
	10.000000000000000   0.011200000000000 \\
	12.000000000000000   0.004896000000000 \\
	14.000000000000000   0.001806000000000 \\
	16 4.910e-04 \\
};
\addlegendentry{\textbf{Type-I Precoding}}

\addplot [color=violet, dashed, line width=1pt, mark=triangle, mark size=4pt, mark options={solid, violet,rotate=180}]
table[row sep=crcr]{%
	4.000000000000000   0.023400000000000 \\
	6.000000000000000   0.005200000000000 \\
	8.000000000000000   0.001100000000000 \\
	10.000000000000000   0.000350000000000 \\
	12.000000000000000   0.000114000000000 \\
	14.000000000000000   0.000047500000000 \\
	16 2.360e-05 \\
};
\addlegendentry{\textbf{Partitioning}}

\addplot [color=pimentgreen, dashdotted, line width=1pt, mark=x, mark size=4pt, mark options={solid, pimentgreen}]
table[row sep=crcr]{%
	4.000000000000000   0.109300000000000 \\
	6.000000000000000   0.029400000000000 \\
	8.000000000000000   0.004500000000000 \\
	10.000000000000000   0.000710000000000 \\
	12.000000000000000   0.000080000000000 \\
	14.000000000000000   0.000007000000000 \\
	16 3.9e-7 \\
};
\addlegendentry{ {Max-$e_{\min}$}}

\addplot [color=pimentgreen, dashdotted, line width=1pt, mark=o, mark size=3pt, mark options={solid, pimentgreen}]
table[row sep=crcr]{%
	4.000000000000000   0.361000000000000 \\
	6.000000000000000   0.158000000000000 \\
	8.000000000000000   0.046400000000000 \\
	10.000000000000000   0.008590000000000 \\
	12.000000000000000   0.001166000000000 \\
	14.000000000000000   0.000104000000000 \\
	16 8.000e-06 \\
};
\addlegendentry{ Min-$m_1$}

\addplot [color=pimentgreen, dashdotted, line width=1pt, mark=square, mark size=3pt, mark options={solid, pimentgreen}]
table[row sep=crcr]{%
	4.000000000000000   0.021800000000000 \\
	6.000000000000000   0.006300000000000 \\
	8.000000000000000   0.001300000000000 \\
	10.000000000000000   0.000320000000000 \\
	12.000000000000000   0.000087000000000 \\
	14.000000000000000   0.000019000000000 \\
	16 4.2e-6 \\
};
\addlegendentry{ Min-$m_2$}

\addplot [color=black, dotted, line width=1pt, mark=*, mark size=2pt, mark options={solid, black}]
table[row sep=crcr]{%
	4.000000000000000   0.291400000000000 \\
	6.000000000000000   0.127100000000000 \\
	8.000000000000000   0.033900000000000 \\
	10.000000000000000   0.008360000000000 \\
	12.000000000000000   0.001321000000000 \\
	14.000000000000000   0.000189000000000 \\
	16  1.800e-05 \\
};
\addlegendentry{ Pilot-based scheme}

-\end{axis}
\vspace{-.2cm}
\end{tikzpicture}%
\caption{{The joint SER of the proposed constellations compared
to the baselines for $T = 7$, $K=3$, $B = 3$, $M = 2$, and $N = 6$.}}
\label{fig:joint_SER_T7B3K3N6}
\end{figure}


Fig.~\ref{fig:metric_T7B3K3M2} depicts the values of the proposed metrics {$b_{\min}(\Xc)$, $J_{1/2,\min}(\Xc)$,}  $d_{\min}(\Xc)$, and the {baseline} metrics {$e_{\min}(\Xc)$,} $m_1(\Xc)$ and $m_2(\Xc)$ for the considered constellations. 
{As for the two-user case, the relative order of the constellations in terms of the metrics $b_{\min}(\Xc)$, $J_{1/2,\min}(\Xc)$, $e_{\min}(\Xc)$ and $d_{\min}(\Xc)$ predicts well the relative order in terms of the joint SER in Fig.~\ref{fig:joint_SER_T7B3K3N6}. 
On the other hand,} from Fig.~\ref{fig:metric_T7B3K3M2_m1} and Fig.~\ref{fig:metric_T7B3K3M2_m2}, we further observe that the metrics $m_1(\Xc)$ and $m_2(\Xc)$ are not meaningful  for constellation evaluation.
\begin{figure*}[t!]
\centering
\subfigure[{The metric $b_{\min}(\Xc)$ in~\eqref{eq:criterion_Riemannian}.}]{\begin{tikzpicture}[scale=.7,style={mark size=3pt,line width=3pt},spy using outlines={mark size = 2pt, rectangle,lens={scale=3.2}, height=2cm,width=3cm, connect spies}]
\begin{axis}[%
width=4.5in,
height=3in,
at={(0.758in,0.481in)},
scale only axis,
xmin=4,
xmax=32,
xtick={ 4,  8, 12, 16, 20, 24, 28, 32},
xlabel style={font=\color{white!15!black}},
xlabel={SNR (dB)},
ymin=1,
ymax=16,
yminorticks=true,
ytick={1,4,8,12,16},
ylabel style={font=\color{white!15!black}},
ylabel={\large $b_{\min}(\Xc)$},
axis background/.style={fill=white},
xmajorgrids,
ymajorgrids,
yminorgrids,
legend style={at={(0.02,1)}, anchor=north west, legend cell align=left, align=left, fill=white, fill opacity=0.6, text opacity=1, draw=none 
	}
]

\addplot [color=red,  line width=1pt, mark=diamond, mark size=4pt, mark options={solid, red}]
table[row sep=crcr]{%
	4 3.0844 \\
	8 4.4457 \\
	12 6.0343 \\
	16 7.7599 \\
	20 9.5524 \\
	24 11.3742 \\
	28 13.2081 \\
	32 15.047 \\
};
\addlegendentry{ \textbf{Max-$J_{1/2,\min}$}}

\addplot [color=blue, dashdotted, line width=1pt, mark=+, mark size=4pt, mark options={solid, blue}]
table[row sep=crcr]{%
	4 3.0842 \\
	8 4.4456 \\
	12 6.0343 \\
	16 7.7599 \\
	20 9.5525 \\
	24 11.3743 \\
	28 13.2082 \\
	32 15.047 \\
};
\addlegendentry{ \textbf{Max-$d_{\min}$}}

\addplot [color=blue, dashdotted, line width=1pt, mark=pentagon, mark size=4pt, mark options={solid, blue}]
table[row sep=crcr]{%
	4 2.7636 \\
	8 4.0863 \\
	12 5.6507 \\
	16 7.3636 \\
	20 9.1504 \\
	24 10.9697 \\
	28 12.8027 \\
	32 14.6411 \\
};
\addlegendentry{\textbf{Alternating optimization of $d_{\min}$}}

\addplot [color=violet, dashed, line width=1pt, mark=triangle, mark size=4pt, mark options={solid, violet}]
table[row sep=crcr]{%
	4 1.9708 \\
	8 2.46 \\
	12 3.11 \\
	16 4.0666 \\
	20 5.3691 \\
	24 6.9364 \\
	28 8.6548 \\
	32 10.4448 \\
};
\addlegendentry{\textbf{Type-I Precoding}}

\addplot [color=violet, dashed, line width=1pt, mark=triangle, mark size=4pt, mark options={solid, violet,rotate=180}]
table[row sep=crcr]{%
	4 4.018 \\
	8 5.0787 \\
	12 6.1016 \\
	16 7.2862 \\
	20 8.3184 \\
	24 9.6435 \\
	28 11.2171 \\
	32 12.9373 \\
};
\addlegendentry{\textbf{Partitioning}}

\addplot [color=pimentgreen, dashdotted, line width=1pt, mark=x, mark size=4pt, mark options={solid, pimentgreen}]
table[row sep=crcr]{%
	4 3.0842 \\
	8 4.4456 \\
	12 6.0343 \\
	16 7.7599 \\
	20 9.5525 \\
	24 11.3743 \\
	28 13.2082 \\
	32 15.047 \\
};
\addlegendentry{ {Max-$e_{\min}$}}

\addplot [color=pimentgreen, dashdotted, line width=1pt, mark=o, mark size=3pt, mark options={solid, pimentgreen}]
table[row sep=crcr]{%
	4 2.3686 \\
	8 3.5802 \\
	12 5.064 \\
	16 6.7318 \\
	20 8.4975 \\
	24 10.3078 \\
	28 12.1371 \\
	32 13.974 \\
};
\addlegendentry{ Min-$m_1$}

\addplot [color=pimentgreen, dashdotted, line width=1pt, mark=square, mark size=3pt, mark options={solid, pimentgreen}]
table[row sep=crcr]{%
	4 3.9563 \\
	8 4.6828 \\
	12 5.3714 \\
	16 6.2447 \\
	20 7.4411 \\
	24 8.9344 \\
	28 10.6143 \\
	32 12.3871 \\
};
\addlegendentry{ Min-$m_2$}

\addplot [color=black, dotted, line width=1pt, mark=*, mark size=2pt, mark options={solid, black}]
table[row sep=crcr]{%
	4 2.0761 \\
	8 3.1654 \\
	12 4.5523 \\
	16 6.1617 \\
	20 7.8985 \\
	24 9.6963 \\
	28 11.5203 \\
	32 13.3551 \\
};
\addlegendentry{ Pilot-based scheme}

\end{axis}
\end{tikzpicture}
\subfigure[{The metric $J_{1/2,\min}(\Xc)$ in~\eqref{eq:criterion_Js}.}]{\begin{tikzpicture}[scale=.7,style={mark size=3pt,line width=3pt},spy using outlines={mark size = 2pt, rectangle,lens={scale=3.2}, height=2cm,width=3cm, connect spies}]
\begin{axis}[%
width=4.5in,
height=3in,
at={(0.758in,0.481in)},
scale only axis,
xmin=4,
xmax=32,
xtick={ 4,  8, 12, 16, 20, 24, 28, 32},
xlabel style={font=\color{white!15!black}},
xlabel={SNR (dB)},
ymin=0,
ymax=7,
yminorticks=true,
ylabel style={font=\color{white!15!black}},
ylabel={\large $J_{1/2, \min}$},
axis background/.style={fill=white},
xmajorgrids,
ymajorgrids,
yminorgrids,
legend style={at={(0.02,1)}, anchor=north west, legend cell align=left, align=left, fill=white, fill opacity=0.6, text opacity=1, draw=none 
	}
]

\addplot [color=red,  line width=1pt, mark=diamond, mark size=4pt, mark options={solid, red}]
table[row sep=crcr]{%
	4 0.54358 \\
	8 1.0422 \\
	12 1.7264 \\
	16 2.5345 \\
	20 3.4067 \\
	24 4.3076 \\
	28 5.2204 \\
	32 6.1382 \\
};
\addlegendentry{ \textbf{Max-$J_{1/2,\min}$}}

\addplot [color=blue, dashdotted, line width=1pt, mark=+, mark size=4pt, mark options={solid, blue}]
table[row sep=crcr]{%
	4 0.54353 \\
	8 1.0422 \\
	12 1.7264 \\
	16 2.5345 \\
	20 3.4067 \\
	24 4.3076 \\
	28 5.2205 \\
	32 6.1383 \\
};
\addlegendentry{ \textbf{Max-$d_{\min}$}}

\addplot [color=blue, dashdotted, line width=1pt, mark=pentagon, mark size=4pt, mark options={solid, blue}]
table[row sep=crcr]{%
	4 0.4436 \\
	8 0.9006 \\
	12 1.5542 \\
	16 2.3452 \\
	20 3.2094 \\
	24 4.1069 \\
	28 5.0183 \\
	32 5.9356 \\
};
\addlegendentry{\textbf{Alternating optimization of $d_{\min}$}}

\addplot [color=violet, dashed, line width=1pt, mark=triangle, mark size=4pt, mark options={solid, violet}]
table[row sep=crcr]{%
	4 0.18348 \\
	8 0.2595 \\
	12 0.38189 \\
	16 0.61768 \\
	20 1.036 \\
	24 1.6547 \\
	28 2.4241 \\
	32 3.2777 \\
};
\addlegendentry{\textbf{Type-I Precoding}}

\addplot [color=violet, dashed, line width=1pt, mark=triangle, mark size=4pt, mark options={solid, violet,rotate=180}]
table[row sep=crcr]{%
	4 0.3632 \\
	8 0.46291 \\
	12 0.55027 \\
	16 0.67704 \\
	20 0.91415 \\
	24 1.3326 \\
	28 1.9509 \\
	32 2.72 \\
};
\addlegendentry{\textbf{Partitioning}}

\addplot [color=pimentgreen, dashdotted, line width=1pt, mark=x, mark size=4pt, mark options={solid, pimentgreen}]
table[row sep=crcr]{%
	4 0.54353 \\
	8 1.0422 \\
	12 1.7264 \\
	16 2.5345 \\
	20 3.4067 \\
	24 4.3076 \\
	28 5.2205 \\
	32 6.1383 \\
};
\addlegendentry{ \textbf{Max-$e_{\min}$}}

\addplot [color=pimentgreen, dashdotted, line width=1pt, mark=o, mark size=3pt, mark options={solid, pimentgreen}]
table[row sep=crcr]{%
	4 0.33189 \\
	8 0.71257 \\
	12 1.2987 \\
	16 2.0475 \\
	20 2.8908 \\
	24 3.7791 \\
	28 4.6869 \\
	32 5.6025 \\
};
\addlegendentry{ Min-$m_1$}

\addplot [color=pimentgreen, dashdotted, line width=1pt, mark=square, mark size=3pt, mark options={solid, pimentgreen}]
table[row sep=crcr]{%
	4 0.44555 \\
	8 0.64595 \\
	12 0.91488 \\
	16 1.2131 \\
	20 1.5689 \\
	24 2.0341 \\
	28 2.6088 \\
	32 3.3476 \\
};
\addlegendentry{ Min-$m_2$}

\addplot [color=black, dotted, line width=1pt, mark=*, mark size=2pt, mark options={solid, black}]
table[row sep=crcr]{%
	4 0.25808 \\
	8 0.57007 \\
	12 1.0854 \\
	16 1.7843 \\
	20 2.6011 \\
	24 3.4775 \\
	28 4.3801 \\
	32 5.2938 \\
};
\addlegendentry{ Pilot-based scheme}

\end{axis}
\end{tikzpicture}
\hspace{.2cm}
\subfigure[The metrics $e_{\min}(\Xc)$ (lines) and $d_{\min}(\Xc)$ (markers).]{
%
%
\begin{tikzpicture}[scale=.7,style={mark size=3pt,line width=3pt},spy using outlines={mark size = 2pt, rectangle,lens={scale=3.2}, height=1.8cm,width=3cm, connect spies}]
\begin{axis}[%
width=4.5in,
height=3in,
at={(0.758in,0.481in)},
scale only axis,
xmin=4,
xmax=32,
xtick={ 4,  8, 12, 16, 20, 24, 28, 32},
xlabel style={font=\color{white!15!black}},
xlabel={SNR (dB)},
ymode=log,
ymin=1,
ymax=3000,
yminorticks=true,
ylabel style={font=\color{white!15!black}},
ylabel={\large $e_{\min}(\Xc)$ (lines) and $d_{\min}(\Xc)$ (markers)},
axis background/.style={fill=white},
xmajorgrids,
ymajorgrids,
yminorgrids,
legend style={at={(0.02,1)}, anchor=north west, legend cell align=left, align=left, fill=white, fill opacity=0.6, text opacity=1, draw=none 
	}
]

\addplot [color=red,  line width=1pt, mark=diamond, mark size=4pt, mark options={solid, red}]
table[row sep=crcr]{%
	0 0 \\
	2 .1 \\
};
\addlegendentry{ \textbf{Max-$J_{1/2,\min}$}}

\addplot [color=blue, dashdotted, line width=1pt, mark=+, mark size=4pt, mark options={solid, blue}]
table[row sep=crcr]{%
	0 0 \\
	2 .1 \\
};
\addlegendentry{ \textbf{Max-$d_{\min}$}}

\addplot [color=blue, dashdotted, line width=1pt, mark=pentagon, mark size=4pt, mark options={solid, blue}]
table[row sep=crcr]{%
	0 0 \\
	2 .1 \\
};
\addlegendentry{\textbf{Alternating optimization of $d_{\min}$}}

\addplot [color=violet, dashed, line width=1pt, mark=triangle, mark size=4pt, mark options={solid, violet}]
table[row sep=crcr]{%
	0 0 \\
	2 .1 \\
};
\addlegendentry{\textbf{Type-I Precoding}}

\addplot [color=violet, dashed, line width=1pt, mark=triangle, mark size=4pt, mark options={solid, violet,rotate=180}]
table[row sep=crcr]{%
	0 0 \\
	2 .1 \\
};
\addlegendentry{\textbf{Partitioning}}

\addplot [color=pimentgreen, dashdotted, line width=1pt, mark=x, mark size=4pt, mark options={solid, pimentgreen}]
table[row sep=crcr]{%
	0 0 \\
	2 .1 \\
};
\addlegendentry{ {Max-$e_{\min}$}}

\addplot [color=pimentgreen, dashdotted, line width=1pt, mark=o, mark size=3pt, mark options={solid, pimentgreen}]
table[row sep=crcr]{%
	0 0 \\
	2 .1 \\
};
\addlegendentry{ Min-$m_1$}

\addplot [color=pimentgreen, dashdotted, line width=1pt, mark=square, mark size=3pt, mark options={solid, pimentgreen}]
table[row sep=crcr]{%
	0 0 \\
	2 .1 \\
};
\addlegendentry{ Min-$m_2$}

\addplot [color=black, dotted, line width=1pt, mark=*, mark size=2pt, mark options={solid, black}]
table[row sep=crcr]{%
	0 0 \\
	2 .1 \\
};
\addlegendentry{ Pilot-based scheme}


\addplot [color=red,  line width=1pt]
table[row sep=crcr]{%
	4 9.8886 \\
	8 14.3414 \\
	12 25.4813 \\
	16 53.4395 \\
	20 123.6568 \\
	24 300.0302 \\
	28 743.0584 \\
	32 1855.8942 \\
};

\addplot [color=pimentgreen, dashdotted, line width=1pt]
table[row sep=crcr]{%
	4 9.8883 \\
	8 14.3413 \\
	12 25.482 \\
	16 53.4425 \\
	20 123.6654 \\
	24 300.053 \\
	28 743.1168 \\
	32 1856.0419 \\
};

\addplot [color=blue, dashdotted, line width=1pt]
table[row sep=crcr]{%
	4 9.8883 \\
	8 14.3413 \\
	12 25.482 \\
	16 53.4425 \\
	20 123.6654 \\
	24 300.053 \\
	28 743.1166 \\
	32 1856.0416 \\
};

\addplot [color=blue, dashdotted, line width=1pt]
table[row sep=crcr]{%
	4 9.2333 \\
	8 12.8443 \\
	12 21.9259 \\
	16 44.7421 \\
	20 102.0555 \\
	24 246.0211 \\
	28 607.6464 \\
	32 1516.0082 \\
};

\addplot [color=violet, dashed, line width=1pt]
table[row sep=crcr]{%
	4 7.7139 \\
	8 8.0385 \\
	12 8.6323 \\
	16 10.0078 \\
	20 13.3768 \\
	24 21.8191 \\
	28 43.017 \\
	32 96.2604 \\
};

\addplot [color=violet, dashed, line width=1pt]
table[row sep=crcr]{%
	4 8.5763 \\
	8 9.0662 \\
	12 9.5607 \\
	16 10.3703 \\
	20 12.0525 \\
	24 16.1805 \\
	28 26.5096 \\
	32 52.4387 \\
};

\addplot [color=pimentgreen, dashdotted,line width=1pt]
table[row sep=crcr]{%
	4 8.5744 \\
	8 11.1569 \\
	12 17.6581 \\
	16 33.9945 \\
	20 75.0321 \\
	24 178.1151 \\
	28 437.0482 \\
	32 1087.459 \\
};

\addplot [color=pimentgreen, dashdotted,line width=1pt]
table[row sep=crcr]{%
	4 8.9938 \\
	8 9.8457 \\
	12 10.6868 \\
	16 11.7967 \\
	20 13.9708 \\
	24 18.9015 \\
	28 26.1181 \\
	32 44.1425 \\
};

\addplot [color=black, dotted, line width=1pt]
table[row sep=crcr]{%
	4 8.1778 \\
	8 10.0736 \\
	12 14.842 \\
	16 26.8224 \\
	20 56.9168 \\
	24 132.511 \\
	28 322.3954 \\
	32 799.3633 \\
};


\addplot [color=red, dashed, line width=1pt, mark=diamond, mark size=4pt, mark options={solid, red}, only marks]
table[row sep=crcr]{%
	4 8.1574 \\
	8 13.023 \\
	12 24.3495 \\
	16 52.3862 \\
	20 122.6354 \\
	24 299.0217 \\
	28 742.055 \\
	32 1854.8928 \\
};

\addplot [color=pimentgreen, dashdotted, line width=1pt, mark=x, mark size=4pt, mark options={solid, pimentgreen}, only marks]
table[row sep=crcr]{%
	4 8.1574 \\
	8 13.0231 \\
	12 24.3504 \\
	16 52.3892 \\
	20 122.6441 \\
	24 299.0445 \\
	28 742.1134 \\
	32 1855.0405 \\
};

\addplot [color=blue, dashdotted, line width=1pt, mark=+, mark size=4pt, mark options={solid, blue}, only marks]
table[row sep=crcr]{%
	4 8.1574 \\
	8 13.0231 \\
	12 24.3503 \\
	16 52.3892 \\
	20 122.6441 \\
	24 299.0444 \\
	28 742.1133 \\
	32 1855.0403 \\
};

\addplot [color=blue, dashdotted, line width=1pt, mark=pentagon, mark size=4pt, mark options={solid, blue}, only marks]
table[row sep=crcr]{%
	4 7.5951 \\
	8 11.5722 \\
	12 20.8144 \\
	16 43.6972 \\
	20 101.0376 \\
	24 245.0139 \\
	28 606.6435 \\
	32 1515.0071 \\
};

\addplot [color=violet, dashed, line width=1pt, mark=triangle, mark size=4pt, mark options={solid, violet}, only marks]
table[row sep=crcr]{%
	4 4.888 \\
	8 5.6148 \\
	12 6.6056 \\
	16 8.4486 \\
	20 12.081 \\
	24 20.5803 \\
	28 41.8023 \\
	32 95.0555 \\
};

\addplot [color=violet, dashed, line width=1pt, mark=triangle, mark size=4pt, mark options={solid, violet,rotate=180}, only marks]
table[row sep=crcr]{%
	4 6.4241 \\
	8 7.3694 \\
	12 8.0849 \\
	16 9.0043 \\
	20 10.9152 \\
	24 15.216 \\
	28 25.5545 \\
	32 51.4873 \\
};

\addplot [color=pimentgreen, dashdotted, line width=1.5pt, mark=o, mark size=3pt, mark options={solid, pimentgreen}, only marks]
table[row sep=crcr]{%
	4 6.9532 \\
	8 9.8929 \\
	12 16.5501 \\
	16 32.951 \\
	20 74.0148 \\
	24 177.1082 \\
	28 436.0455 \\
	32 1086.4579 \\
};

\addplot [color=pimentgreen, dashdotted, line width=1.5pt, mark=square, mark size=3pt, mark options={solid, pimentgreen}, only marks]
table[row sep=crcr]{%
	4 6.5289 \\
	8 7.5966 \\
	12 8.4377 \\
	16 9.4893 \\
	20 11.6204 \\
	24 16.7524 \\
	28 24.3835 \\
	32 42.4091 \\
};

\addplot [color=black, dotted, line width=1pt, mark=*, mark size=2pt, mark options={solid, black}, only marks]
table[row sep=crcr]{%
	4 6.5298 \\
	8 8.797 \\
	12 13.7286 \\
	16 25.7767 \\
	20 55.8985 \\
	24 131.5038 \\
	28 321.3925 \\
	32 798.3621 \\
};

\end{axis}
\end{tikzpicture}
\\
\subfigure[The metric $m_1(\Xc)$ in~\eqref{eq:criterion_maxminChordal}.]{\begin{tikzpicture}[scale=.7]
\begin{axis}[
ybar = .2,
ymin=5,
ymax=8,
width=5in,
height=2.5in,
enlarge y limits = 0.12,
bar width=20pt,
ylabel={$m_1(\Xc)$},
symbolic x coords={{Max-$J_{1/2,\min}$},{Max-$d_{\min}$},{Alternating Opt.},{Precoding},{Partitioning},{Max-$e_{\min}$},{Min-$m_1$},{Min-$m_2$},{Pilot-Based}},
xticklabels={{\textbf{Max-$J_{1/2,\min}$}},{\textbf{Max-$d_{\min}$}},{\textbf{Alternating Opt.}},{\textbf{Type-I~Precoding}},{\textbf{Partitioning}},{Max-$e_{\min}$},{Min-$m_1$},{Min-$m_2$},{Pilot-Based}},
ytick={5,6,7,8},
xtick=data,
nodes near coords,
every node near coord/.append style={/pgf/number format/precision=3
},
xticklabel style={inner sep=0pt,anchor=north east,rotate=25, align=right}
]

\addplot[fill=blue!30, draw=blue, nodes near coords style={color=blue}] coordinates {
	({Max-$J_{1/2,\min}$},6.837)
	({Max-$e_{\min}$},6.8341) 
	({Max-$d_{\min}$},6.8341) 
	({Alternating Opt.},5.9432) 
	({Precoding},7.7315) 
	({Partitioning},7.5652) 
	({Min-$m_1$},5.8636) 
	({Min-$m_2$},7.5802)
	({Pilot-Based},6.3673)
};
%

\end{axis}
\end{tikzpicture}\label{fig:metric_T7B3K3M2_m1}}
\hspace{.2cm}
\subfigure[The metric $m_2(\Xc)$ in~\eqref{eq:criterion_minsumDet} with {$N=6$.}]{\begin{tikzpicture}[scale=.7]
\begin{axis}[
ybar = .2,
ymin=50,
ymax=470,
width=5in,
height=2.5in,
enlarge y limits = 0.12,
bar width=20pt,
ylabel={$m_2(\Xc)$},
symbolic x coords={{Max-$J_{1/2,\min}$},{Max-$d_{\min}$},{Alternating Opt.},{Precoding},{Partitioning},{Max-$e_{\min}$},{Min-$m_1$},{Min-$m_2$},{Pilot-Based}},
xticklabels={{\textbf{Max-$J_{1/2,\min}$}},{{Max-$e_{\min}$}},{\textbf{Max-$d_{\min}$}},{\textbf{Alternating Opt.}},{\textbf{Type-II~Precoding}},{\textbf{Partitioning}},{Min-$m_1$},{Min-$m_2$},{Pilot-Based}},
ytick={0,100,200, 300, 400},
xtick=data,
nodes near coords,
every node near coord/.append style={/pgf/number format/precision=3
},
xticklabel style={inner sep=0pt,anchor=north east,rotate=25, align=right}
]

\addplot[fill=blue!30, draw=blue, nodes near coords style={color=blue}] coordinates {
	({Max-$J_{1/2,\min}$},258.0344)
	({Max-$e_{\min}$},462.4933) 
	({Max-$d_{\min}$},420.6368) 
	({Alternating Opt.},471.6821) 
	({Precoding},115.6173) 
	({Partitioning},92.2106) 
	({Min-$m_1$},370.4266) 
	({Min-$m_2$},56.37) 
	({Pilot-Based},57.8164)
};
%

\end{axis}
\end{tikzpicture}\label{fig:metric_T7B3K3M2_m2}}
\caption{The value of the {design} metrics for the considered constellations for {$T = 7$, $K=3$, $B = 3$, and $M = 2$}.}
\label{fig:metric_T7B3K3M2}
\end{figure*}


\subsection{The Multi-User Case With Asymmetrical Rate and Power Optimization} 
We now consider the asymmetrical rate case and focus on the two-user SIMO ($M=1$) case. We set $T = 4$, $B_1 = 6$, and $B_2 = 2$ (as in Fig.~\ref{fig:metrics_vs_P2/P}). In Fig.~\ref{fig:SER_T4_1B6_2B2_N4}, we plot the joint SER of the constellations generated by Max-$d_{\min}$, precoding, or partitioning and compare with a pilot-based constellation with the same transmission rate for each user. Furthermore, we consider equal and full transmit power $P_1 = P_2 = P$, or optimized power as in Section~\ref{sec:power_opt}. The constellations obtained by Max-$d_{\min}$ significantly {outperform} other schemes. For this constellation, the optimal power coincides with full power $P_1 = P_2 = P$ for all $P > 4$~dB. For the precoding and partitioning designs, the optimal power allocation is to let user 1 (which has higher transmission rate) transmit at full power $P_1 = P$ and user 2 at lower power $P_2 = {{\theta^*}} P$ with ${{\theta^*}}$ obtained from optimizing $d_{\min}(\Xc)$ as in {Proposition~\ref{prop:theta}.} The SER with optimized power is only slightly lower than the SER with full power.
This is because the values of the metrics with optimized power are not significantly higher than that with full power, as seen in Fig.~\ref{fig:metrics_vs_P2/P}. However, using optimized power helps reduce the transmit power of user $2$, {thus save energy for this user.} The lower ${{\theta}^*}$ is, the further the power of user $2$ is saved with respect to transmitting at full power.
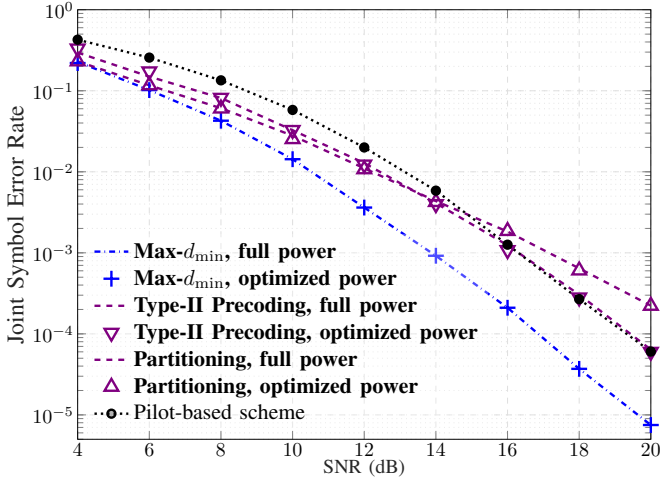
\begin{figure}[t!]
\centering
%
%
\begin{tikzpicture}[scale=.75,style={mark size=3pt,line width=3pt}]

\begin{axis}[%
width=4in,
height=3in,
at={(0.758in,0.481in)},
scale only axis,
xmin=4,
xmax=20,
xtick = {4,6,8,10,12,14,16,18,20},
xlabel style={font=\color{white!15!black}},
xlabel={SNR (dB)},
ymode=log,
ymin=5e-06,
ymax=1,
yminorticks=true,
ylabel style={font=\color{white!15!black}},
ylabel={\large Joint Symbol Error Rate},
axis background/.style={fill=white},
xmajorgrids,
ymajorgrids,
yminorgrids,
legend style={at={(0.01,0.01)}, anchor=south west, legend cell align=left, align=left, fill=white, fill opacity=0.3, text opacity=1, draw=none, nodes={scale=1.1}
}
]

\addplot [color=blue,dashdotted, line width=1.2pt]
table[row sep=crcr]{%
	4	0.2274\\
	6	0.102\\
	8	0.0427\\
	10	0.0143\\
	12	0.00362\\
	14	0.00092\\
	16	0.00021\\
	18	3.7e-05\\
	20	7.5e-06\\
};
\addlegendentry{\textbf{Max-$d_{\min}$, full power}}

\addplot [color=blue, dashed, line width=1.2pt, draw=none, mark=+, mark size=4pt, mark options={solid, blue}, only marks]
table[row sep=crcr]{%
	4	0.2198\\
	6	0.102\\
	8	0.0427\\
	10	0.0143\\
	12	0.00362\\
	14	0.00092\\
	16	0.00021\\
	18	3.7e-05\\
	20	7.5e-06\\
};
\addlegendentry{\textbf{Max-$d_{\min}$, optimized power}}

\addplot [color=violet, dashed, line width=1.2pt]
  table[row sep=crcr]{%
4	0.296\\
6	0.1496\\
8	0.0814\\
10	0.0326\\
12	0.01282\\
14	0.0041\\
16	0.00124\\
18	0.000305\\
20	6.4e-05\\
};
\addlegendentry{\textbf{Type-II Precoding, full power}}

\addplot [color=violet,line width=1.2pt, draw=none, mark=triangle, mark size=4pt, mark options={solid, violet,rotate=180}, only marks]
  table[row sep=crcr]{%
4	0.3321\\
6	0.1734\\
8	0.0818\\
10	0.0328\\
12	0.01228\\
14	0.00399\\
16	0.00108\\
18	0.000281\\
20	6e-05\\
};
\addlegendentry{\textbf{Type-II Precoding, optimized power}}

\addplot [color=violet, dashed, line width=1.2pt]
  table[row sep=crcr]{%
4	0.2306\\
6	0.1168\\
8	0.0609\\
10	0.0279\\
12	0.01105\\
14	0.00435\\
16	0.00183\\
18	0.000643\\
20	0.000223\\
};
\addlegendentry{\textbf{Partitioning, full power}}

\addplot [color=violet, line width=1.2pt, draw=none, mark=triangle, mark size=4pt, mark options={solid, violet}, only marks]
  table[row sep=crcr]{%
4	0.2306\\
6	0.1168\\
8	0.0604\\
10	0.0253\\
12	0.01074\\
14	0.00424\\
16	0.00186\\
18	0.000606\\
20	0.000222\\
};
\addlegendentry{\textbf{Partitioning, optimized power}}

\addplot [color=black, dotted, line width=1.2pt, mark=*, mark size=2pt, mark options={solid, black}]
  table[row sep=crcr]{%
4	0.4268\\
6	0.2566\\
8	0.1345\\
10	0.058\\
12	0.01997\\
14	0.00585\\
16	0.00126\\
18	0.000268\\
20	6.05e-05\\
};
\addlegendentry{Pilot-based scheme}


\end{axis}
\end{tikzpicture}%
\caption{The joint SER of the proposed constellations with full transmit power $P_1 = P_2= P$ or optimized transmit power as in Section~\ref{sec:power_opt}, compared to a pilot-based constellation for $T=4$, $K=2$, $B_1 = 6$, $B_2 = 2$, {$M = 1$,} and $N=4$.}
\label{fig:SER_T4_1B6_2B2_N4}
\end{figure}

In Fig.~\ref{fig:powerSaving}, we plot the optimized power fraction ${{\theta}^*}$ for user 2  obtained using~{Proposition~\ref{prop:theta}.} 
For the precoding design, as the power constraint $P$ grows, $\tilde{{\theta}}^*$ increases, i.e., user $2$ should use more power. Conversely, for the partitioning design, user $2$ should use less power as $P$ grows. We note that this behavior might not hold for all constellations of the kind.
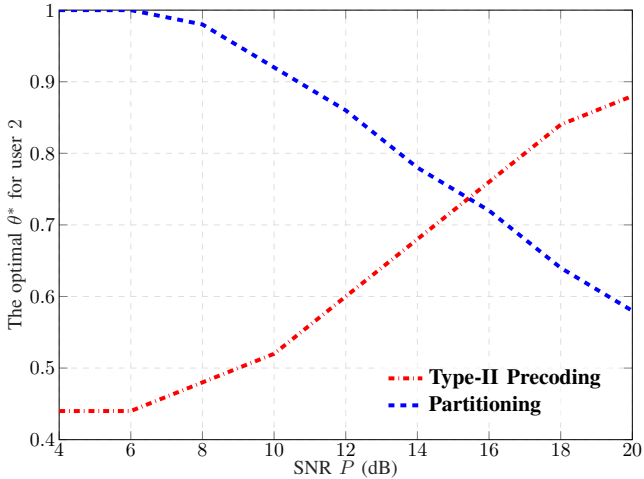
\begin{figure}[t!]
\centering
%
%
\begin{tikzpicture}[scale=.75,style={mark size=3pt,line width=3pt}]

\begin{axis}[%
width=4in,
height=3in,
at={(0.758in,0.481in)},
scale only axis,
xmin=4,
xmax=20,
xtick = {4,6,8,10,12,14,16,18,20},
xlabel style={font=\color{white!15!black}},
xlabel={SNR $P$ (dB)},
ymin=0.4,
ymax=1,
ylabel style={font=\color{white!15!black}},
ylabel={The optimal $\theta^*$ for user $2$},
axis background/.style={fill=white},
xmajorgrids,
ymajorgrids,
legend style={at={(.97,.03)}, anchor=south east, legend cell align=left, align=left, fill=white, fill opacity=0.6, text opacity=1, draw=none, nodes={scale=1.1} 
}
]

\addplot[color=red, dashdotted, line width=2pt,]
table[row sep=crcr]{%
	4	0.44\\
	6	0.44\\
	8	0.48\\
	10	0.52\\
	12	0.6\\
	14	0.68\\
	16	0.76\\
	18	0.84\\
	20	0.88\\
};
\addlegendentry{\textbf{Type-II Precoding}}

\addplot[color=blue, dashed, line width=2pt,]
  table[row sep=crcr]{%
4	1\\
6	1\\
8	0.98\\
10	0.92\\
12	0.86\\
14	0.78\\
16	0.72\\
18	0.64\\
20	0.58\\
};
\addlegendentry{\textbf{Partitioning}}

\end{axis}
\end{tikzpicture}
\caption{The optimal power fraction $\theta^*$ for user $2$ for the precoding and partitioning designs with $T=4$, $B_1 = 6$, $B_2 = 2$, {and $M = 1$}.}
\label{fig:powerSaving}
\end{figure}


\section{Conclusion} \label{sec:conclusion}
In this work, we studied the joint constellation design for
noncoherent MIMO MAC in Rayleigh block fading. By analyzing the joint detection error, we have derived closed-form metrics which
are effective for designing joint constellations that achieve a low error rate. Our metrics are applicable to both the single-user and multi-user scenarios, but are especially suited for the latter case. Specifically, the metric $b_{\min}(\Xc)$ provides tight bounds on the worst-case pairwise error probability, and was shown through numerical experiments to be a good indicator for the joint detection error of different constellations. Therefore, it can be used as a tool to evaluate the error performance of a given joint constellation. Joint constellations that achieve high values of $b_{\min}(\Xc)$ and low error probability can be obtained by maximizing the Chernoff-bound-based metric $J_{s,\min}(\Xc)$. Our $b_{\min}(\Xc)$ metric leads to a geometric interpretation: a joint constellation that achieves low detection error must have good Riemannian distance property in the manifold of Hermitian positive definite matrices. Our investigation of several baseline metrics adapted from existing criteria for the MIMO point-to-point  channel shows that the KL-distance-based metric $e_{\min}(\Xc)$ exhibits good performance, which can be closely approached with our simplified metric $d_{\min}(\Xc)$. To reduce the complexity of the maximization of these metrics, we proposed and demonstrated the effectiveness of two practical approaches, namely, alternating optimization and reduction of the solution space to the class of unitary space-time modulation. Inspired by our metrics, we  proposed a simple constellation construction consisting in partitioning a single-user constellation. We also provided another simple construction based on precoding individual constellations of lower dimension, which is a generalization of our previous design for the SIMO MAC. Furthermore, we investigated the optimization of the per-user symbol power.

In this work, we have focused on the optimality with respect to the joint ML detector. Note that it is common in the literature to use insights from the joint ML detector performance as guidelines to design constellation and detection for the single-user case. However, this detector has high complexity in general. It would be interesting, as in the single-user case~\cite{Kammoun2007noncoherentCodes,Hoang_cubesplit_journal}, to construct joint constellations that allow for effective low-complexity detection. This is normally done by imposing a particular structure on the constellation. (For example, we exploited the geometric structure of the precoding-based constellation to design efficient multi-user detection in the SIMO case in \cite{HoangAsilomar2018multipleAccess}.) With the proposed metrics, this problem can be solved by introducing additional constraints on the constellation.

\appendices

\section{{Disscussion on the Extension to Correlated Fading}}
\label{app:correlated_fading}
{We discuss the possible extension to correlated fading in the following. At the users' side, spatial correlation between the antennas of different users is not likely since the users are not colocated. For the case where there is correlation between the antennas of the same user, the optimized joint constellation can be obtained from the optimized joint constellation for uncorrelated fading with a modified power constraint. This is stated in the following proposition.}
{
\begin{proposition} \label{rem:correlation}
Consider the case where there is correlation between the antennas of the same user, namely, the rows of $\rvMat{H}$ are independent and follow $\Cc\Nc(\mathbf{0},\Psim)$ with $\Psim \defeq \Bigg[\begin{smallmatrix}
\Psim_1 &  & \mathbf{0} \\ & \ddots & \\ \mathbf{0}  &  & \Psim_K
\end{smallmatrix}\Bigg]$ where $\Psim_k$ is an $M_k \times M_k$ positive definite matrix. In this case, the solution to the ML error minimization \eqref{eq:criterion_MLerror} can be expressed as $\tilde{\Xc}_k = \{\Xm_k \Psim_k^{\frac12}: \Xm_k \in \Xc^*_k\}$ where $\{\Xc^*_k\}_{k=1}^K$ is the solution to \eqref{eq:criterion_MLerror} for uncorrelated fading where the power constraint is replaced with  
\begin{align} \label{eq:const_power_constraint_corr}
\frac{1}{|\Xc_k|} \sum_{\Xm_k \in \Xc_k} \|\Xm_k \Psim_k^{-\frac12} \|_{\rm F}^2 \le P T, \quad k \in [K].
\end{align} 
\end{proposition}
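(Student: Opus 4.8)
The plan is to show that the correlated problem is nothing more than a deterministic, invertible reparametrization of the uncorrelated problem already solved in the paper, acting separately on each user. First I would isolate the only way the fading statistics enter the objective. Conditioned on $\rvMat{X}=\Xm$, the columns of $\rvMat{Y}$ in \eqref{eq:channel_model} are still mutually independent, since the rows of $\rvMat{H}$ are independent; writing the $n$-th row of $\rvMat{H}$ as the column vector $\rvVec{h}_n\sim\Cc\Nc(\mathbf{0},\Psim)$, the $n$-th column of $\rvMat{Y}$ is $\Xm\rvVec{h}_n+\rvVec{z}_n$, which is $\Cc\Nc(\mathbf{0},\Id_T+\Xm\Psim\Xm^\H)$ because $\Cov[\Xm\rvVec{h}_n]=\Xm\Psim\Xm^\H$. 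As $\Psim$ is block diagonal, $\Xm\Psim\Xm^\H=\sum_{k}\Xm_k\Psim_k\Xm_k^\H$. Hence the likelihood \eqref{eq:likelihood}, the joint ML detector \eqref{eq:MLdecoder}, every pairwise error probability, and therefore $P_e$, depend on the symbols only through the single Hermitian positive definite matrix $\Id_T+\sum_k\Xm_k\Psim_k\Xm_k^\H$.

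The key step is a whitening substitution. Since each $\Psim_k$ is positive definite, $\Psim_k^{1/2}$ is invertible, so $\phi_k(\Xm_k)\defeq\Xm_k\Psim_k^{1/2}$ is a bijection on $\CC^{T\times M_k}$ with inverse $\Wm_k\mapsto\Wm_k\Psim_k^{-1/2}$. The identity $\Xm_k\Psim_k\Xm_k^\H=\phi_k(\Xm_k)\phi_k(\Xm_k)^\H$ shows that the output covariance of the correlated channel for $\{\Xm_k\}$ coincides with that of the uncorrelated channel for the transformed symbols $\{\phi_k(\Xm_k)\}$. Consequently the two conditional densities agree as functions of $\Ym$, the corresponding joint ML detectors have identical decision regions, and $P_e^{\mathrm{corr}}(\{\Xc_k\})=P_e^{\mathrm{unc}}(\{\phi_k(\Xc_k)\})$. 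I would then push the power budget through the inverse map: writing a feasible correlated symbol as $\Wm_k\Psim_k^{-1/2}$ with $\Wm_k$ ranging over the transformed constellation, the original constraint $\frac{1}{T|\Xc_k|}\sum\|\Xm_k\|_{\rm F}^2\le P$ becomes precisely \eqref{eq:const_power_constraint_corr} for $\{\Wm_k\}$. Thus $\{\phi_k\}$ is an objective-preserving bijection from the feasible set of \eqref{eq:criterion_MLerror} under the original constraint onto the feasible set of the uncorrelated instance of \eqref{eq:criterion_MLerror} under constraint \eqref{eq:const_power_constraint_corr}.

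With this bijection in hand the conclusion is immediate: a constellation minimizes the correlated error under the original constraint if and only if its image under $\{\phi_k\}$ minimizes the uncorrelated error under \eqref{eq:const_power_constraint_corr}. Applying the inverse whitening $\Wm_k\mapsto\Wm_k\Psim_k^{-1/2}$ to the uncorrelated optimizer $\{\Xc_k^*\}$ therefore returns the optimal correlated constellation, each user-$k$ symbol being obtained from a symbol of $\Xc_k^*$ by right-multiplication with $\Psim_k^{-1/2}$ (the inverse of $\phi_k$, which is exactly the reweighting that converts the plain power budget into the weighted budget \eqref{eq:const_power_constraint_corr}). I do not anticipate a genuine obstacle; the argument is essentially bookkeeping, and positive definiteness of each $\Psim_k$ is what makes $\phi_k$ and its inverse well defined. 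The two points needing care are that \emph{optimality}, not merely a single pairwise bound, transfers\textemdash which is guaranteed because the entire likelihood (hence the ML rule and $P_e$) is invariant under the substitution\textemdash and that the \emph{direction} of the power-constraint transformation be tracked correctly through the non-isometric map $\phi_k$; the block-diagonal structure of $\Psim$ is precisely what lets the whole statement decouple into the per-user form claimed.
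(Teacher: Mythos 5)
Your proof is correct and follows essentially the same route as the paper's: the per-user whitening $\Xm_k\mapsto\Xm_k\Psim_k^{1/2}$ leaves the conditional law of $\rvMat{Y}$ (hence the ML rule and $P_e$) unchanged when the fading is replaced by its whitened, uncorrelated version, and the original power budget transforms into \eqref{eq:const_power_constraint_corr}; you are merely more explicit than the paper about the bijection and about why optimality, not just the value of the objective, transfers. One point worth flagging: you conclude that the correlated-fading optimizer is obtained from $\Xc_k^*$ by right-multiplication with $\Psim_k^{-1/2}$, whereas the proposition's display uses $\Psim_k^{1/2}$. Your direction is the one actually consistent with the change of variables (a constellation $\{\Xm_k\Psim_k^{1/2}\colon \Xm_k\in\Xc_k^*\}$ need not even be feasible for the original power constraint), so this is an exponent-sign inconsistency in the statement rather than a flaw in your argument; the paper's own proof stops short of writing the final map explicitly and does not resolve the direction.
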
}
{\begin{proof}
In the mentioned correlated fading case, the channel output $\rvMat{Y}$ can be written as
$$\rvMat{Y} = \sum_{k=1}^{K}\rvMat{X}_k \rvMat{H}_k^\T + \rvMat{Z} = \sum_{k=1}^{K}\tilde{\rvMat{X}}_k \tilde{\rvMat{H}}_k^\T + \rvMat{Z} = \tilde{\rvMat{X}} \tilde{\rvMat{H}}^\T + \rvMat{Z}$$ 
where $\tilde{\rvMat{X}}_k \defeq {\rvMat{X}}_k \Psim_k^{\frac12}$, $\tilde{\rvMat{X}} = [\tilde{\rvMat{X}}_1 \ \tilde{\rvMat{X}}_2 \ \dots \ \tilde{\rvMat{X}}_K]$, $\tilde{\rvMat{H}}_k \defeq {\rvMat{H}}_k \Psim_k^{-\frac12}$, and $\tilde{\rvMat{H}} = [\tilde{\rvMat{H}}_1 \ \tilde{\rvMat{H}}_2 \ \dots \ \tilde{\rvMat{H}}_K]$. Note that $\tilde{\rvMat{H}}$ is a uncorrelated matrix with i.i.d $\Cc\Nc(0,1)$ entries. The likelihood function is given by $p_{\rvMat{Y} | \rvMat{X}}(\Ym | \Xm) = \frac{\exp(-\trace(\Ym^\H(\Id_T+\Xm \Psim \Xm^\H)^{-1}\Ym ))}{\pi^{NT}\det^N(\Id_T+\Xm\Psim\Xm^\H)} = \frac{\exp(-\trace(\Ym^\H(\Id_T+\tilde{\Xm} \tilde{\Xm}^\H)^{-1}\Ym ))}{\pi^{NT}\det^N(\Id_T+\tilde{\Xm}\tilde{\Xm}^\H)}$. Therefore, ML detection in the correlated channel from $\rvMat{X}$ to $\rvMat{Y}$ is equivalent to ML detection in the uncorrelated channel from $\tilde{\rvMat{X}}$ to $\rvMat{Y}$, where the power constraint becomes $	\frac{1}{|\Xc_k|} \sum_{\tilde{\Xm}_k \in \Xc_k} \|\tilde{\Xm}_k \Psim_k^{-\frac12} \|_{\rm F}^2 \le P T,  k \in [K].$ As a consequence, one can obtain the optimal constellation for the correlated fading case from that for the uncorrelated fading case with the constraint \eqref{eq:const_power_constraint_corr} on the constellation symbols.
\end{proof}}

{The optimization \eqref{eq:criterion_MLerror} with constraint \eqref{eq:const_power_constraint_corr} is a generalization of the problem considered in this paper, and is a subject for future works. In particular, if $\Psim_k = \psi_k\Id_{M_k}$, i.e., the users experience different path losses given by $\{\psi_k\}_{k=1}^K$, an outer power-loading algorithm could be used to manage the path loss such that the effective channel gain of different users are equal.
}

{Correlation at the receiver's side is likely if the receive antennas are placed close to each other. In this case, the constellation optimization is equivalent to the case with uncorrelated fading with colored noise, as stated in the following proposition.
\begin{proposition}
Consider the case where there is correlation between the receive antennas, namely, the columns of $\rvMat{H}$ are independent an follow $\Cc\Nc(\mathbf{0}, \Psim)$ where $\Psim$ is an $N \times N$  positive definite matrix. In this case, the solution to the ML error minimization~\eqref{eq:criterion_MLerror} is identical to that for the uncorrelated fading case with a colored noise matrix having independent rows following $\Cc\Nc(\mathbf{0},\Psim^{-1})$.
\end{proposition}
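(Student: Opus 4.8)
The plan is to \emph{whiten} the receive-side correlation by a deterministic, invertible right-multiplication of the observation $\rvMat{Y}$, thereby turning the receive-correlated model into the uncorrelated model with a colored noise term, and then to invoke the fact that an invertible preprocessing of the observation preserves the ML detector and its error probability.

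First I would factor the receive covariance through its Hermitian positive-definite square root, writing $\rvMat{H} = \Psim^{\frac12}\tilde{\rvMat{H}}$. Since the columns of $\rvMat{H}$ are independent and follow $\Cc\Nc(\mathbf{0},\Psim)$, each column of $\tilde{\rvMat{H}} = \Psim^{-\frac12}\rvMat{H}$ follows $\Cc\Nc(\mathbf{0},\Psim^{-\frac12}\Psim\Psim^{-\frac12}) = \Cc\Nc(\mathbf{0},\Id_N)$ and these columns remain independent, so $\tilde{\rvMat{H}}$ has i.i.d.\ $\Cc\Nc(0,1)$ entries; equivalently, its rows are independent $\Cc\Nc(\mathbf{0},\Id_\Mtot)$, i.e.\ $\tilde{\rvMat{H}}$ is exactly an uncorrelated channel matrix in the sense of \eqref{eq:model_concatenated}. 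Substituting into the channel model gives $\rvMat{Y} = \Xm\tilde{\rvMat{H}}^\T (\Psim^{\frac12})^\T + \rvMat{Z}$, and right-multiplying by the invertible matrix $(\Psim^{-\frac12})^\T$ yields the equivalent observation $\rvMat{Y}' \defeq \rvMat{Y}(\Psim^{-\frac12})^\T = \Xm\tilde{\rvMat{H}}^\T + \rvMat{Z}'$ with $\rvMat{Z}' \defeq \rvMat{Z}(\Psim^{-\frac12})^\T$.

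Next I would identify the law of $\rvMat{Z}'$. Transposing the $t$-th row of $\rvMat{Z}'$ gives $\Psim^{-\frac12}\wv_t$, where $\wv_t$ denotes the (column) transpose of the $t$-th row of $\rvMat{Z}$ and is $\Cc\Nc(\mathbf{0},\Id_N)$ because $\rvMat{Z}$ has i.i.d.\ $\Cc\Nc(0,1)$ entries. Its covariance is $\Psim^{-\frac12}\Id_N(\Psim^{-\frac12})^\H = \Psim^{-1}$, using that $\Psim^{-\frac12}$ is Hermitian; distinct rows of $\rvMat{Z}$ are independent, so the rows of $\rvMat{Z}'$ are independent $\Cc\Nc(\mathbf{0},\Psim^{-1})$. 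Thus $(\Xm,\rvMat{Y}')$ obeys precisely the uncorrelated fading model with colored noise of covariance $\Psim^{-1}$ per row.

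Finally, I would close the argument by observing that $\rvMat{Y} \mapsto \rvMat{Y}'$ is a fixed bijection of $\CC^{T\times N}$: its Jacobian is a nonzero constant independent of $\Xm$, so $\rvMat{Y}'$ is a sufficient statistic and the likelihoods for the two models differ only by this $\Xm$-independent factor. Hence the ML decision rule \eqref{eq:MLdecoder}, its decision regions, and the resulting error probability $P_e(\Xc)$ coincide for the two models, for every constellation $\Xc$; since the symbols and therefore the power constraints are untouched, the optimization \eqref{eq:criterion_MLerror} is literally the same problem and has the same solution. The main obstacle is bookkeeping: because $\Psim^{\frac12}$ is Hermitian but not real, care is needed with the transpose-versus-conjugate-transpose distinction, and in particular the choice of the \emph{Hermitian} square root is what makes the noise covariance come out exactly as $\Psim^{-1}$---an arbitrary factor $\mathbf{L}$ with $\mathbf{L}\mathbf{L}^\H = \Psim$ would instead produce $(\mathbf{L}^\H\mathbf{L})^{-1}$. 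The one genuinely conceptual point, rather than computational, is the invariance of ML detection under invertible preprocessing.
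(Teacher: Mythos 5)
Your proof is correct and follows essentially the same route as the paper: whiten the receive correlation by writing $\tilde{\rvMat{H}} = \Psim^{-\frac12}\rvMat{H}$, apply a fixed invertible right-multiplication to $\rvMat{Y}$ to obtain an uncorrelated channel with per-row noise covariance $\Psim^{-1}$, and invoke invariance of the ML rule and its error probability under invertible preprocessing. Your only (welcome) refinement is multiplying by $(\Psim^{-\frac12})^\T$ rather than $\Psim^{-\frac12}$, which handles the transpose-versus-Hermitian-transpose issue for complex $\Psim$ and makes the noise covariance come out exactly as $\Psim^{-1}$, a detail the paper glosses over.
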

\begin{proof}
Upon receiving the signal $\rvMat{Y} = \rvMat{X}\rvMat{H}^\T + \rvMat{Z}$, the receiver can cancel the correlation by computing
\begin{align}
\tilde{\rvMat{Y}} = \rvMat{Y} \Psim^{-1/2} = \rvMat{X}\tilde{\rvMat{H}}^\T + \rvMat{Z} \Psim^{-1/2},
\end{align}
where the equivalent channel matrix $\tilde{\rvMat{H}} \defeq \Psim^{-1/2} \rvMat{H}$ has i.i.d. entries following $\Cc\Nc(0,1)$. The channel from $\rvMat{X}$ to $\tilde{\rvMat{Y}}$ has uncorrelated fading and colored noise matrix $\rvMat{Z} \Psim^{-1/2}$ with independent rows following $\Cc\Nc(\mathbf{0},\Psim^{-1})$. Since $\Psim$ is positive definite, the mapping $\rvMat{Y} \mapsto \tilde{\rvMat{Y}}$ is one-to-one. Therefore, ML detection of $\rvMat{X}$ from $\rvMat{Y}$ is equivalent to that from $\tilde{\rvMat{Y}}$. The corresponding constellation optimization to minimize the ML error are thus identical.
\end{proof}

Solving the joint constellation optimization for the MAC with colored noise is also a subject for future works. The single-user counterpart has been investigated in~\cite{BekoTSP2007noncoherentColoredNoise}.
}

\section{A closed-form expression of the PEP} \label{app:closedFormPEP}

\begin{proposition} \label{prop:PEP_general}
Let $\{{\hat{\lambda}}_l\}_{l=1}^L$ be the distinct non-zero eigenvalues of 
$\Lambdam \defeq ( \Id_T+ \Xm \Xm^\H) \big( \Id_T+ {\Xm'}{\Xm'}^\H \big)^{-1} - \Id_T$
with multiplicities $\{\mu_l\}_{l=1}^L$, and let $\{{\hat{\lambda}}_l\}_{l=1}^{L_p}$ be positive and $\{{\hat{\lambda}}_l\}_{l=L_p+1}^L$ negative. The PEP is given by
\begin{equation}
\P(\Xm\to{\Xm'}) = \begin{cases}
1+\sum_{k=1}^{L_p} \xi_k\Big(N \ln\frac{\det( \Id_T+ \Xm \Xm^\H
)}{\det( \Id_T+ {\Xm'}
{\Xm'}^\H )}\Big), \\
\quad  \text{if~} \det( \Id_T+ \Xm
\Xm^\H ) \ge \det( \Id_T+ {\Xm'}
{\Xm'}^\H ), \\
-\sum_{k=L_p+1}^{L} \xi_k\Big(N \ln\frac{\det( \Id_T+ \Xm \Xm^\H
)}{\det( \Id_T+ {\Xm'}
{\Xm'}^\H )}\Big), \\
\quad  \text{if~} \det( \Id_T+ \Xm
\Xm^\H ) \le \det( \Id_T+ {\Xm'}
{\Xm'}^\H ),
\end{cases}
\end{equation}
with $\xi_k(c) \defeq {\rm Res}\bigg(\frac{e^{sc}}{s\prod_{l=1}^L {\hat{\lambda}}_l^{\mu_l N}\big(s+1/{\hat{\lambda}}_l\big)^{\mu_l N}}, \frac{-1}{{\hat{\lambda}}_k}\bigg)$ where $${\rm Res}(f(s),x) \defeq \frac{1}{(m-1)!}\lim_{s\to x} \frac{\dif^{m-1}}{\dif s^{m-1}} \big[(s-x)^m f(s)\big]$$ is the residue of a function $f(s)$ in a pole $x$ of multiplicity $m$.
\end{proposition}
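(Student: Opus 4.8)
The plan is to start from the Gamma representation of the PEP established in Lemma~\ref{lemma:PEP_expression} and then obtain the closed form by a two-sided Laplace-transform inversion carried out through contour integration and the residue theorem.

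First I would recast the event. By Lemma~\ref{lemma:PEP_expression}, $\P(\Xm\to\Xm')=\P[\rv{Q}\le c]$, where $\rv{Q}\defeq\sum_{i=1}^T(\lambda_i-1)\rv{g}_i$, the $\rv{g}_i$ are independent $\mathrm{Gamma}(N,1)$ variables, and $c\defeq N\sum_{i=1}^T\ln\lambda_i=N\ln\frac{\det(\Id_T+\Xm\Xm^\H)}{\det(\Id_T+\Xm'{\Xm'}^\H)}$, using $\prod_i\lambda_i=\det\Gammam$. Since $\{\lambda_i-1\}$ are exactly the eigenvalues of $\Lambdam=\Gammam-\Id_T$, the zero eigenvalues contribute nothing; grouping the $\mu_l$ copies of each distinct nonzero eigenvalue $\hat\lambda_l$ and using that a sum of $\mu_l$ independent $\mathrm{Gamma}(N,1)$ variables is $\mathrm{Gamma}(\mu_l N,1)$, I obtain $\rv{Q}=\sum_{l=1}^L\hat\lambda_l\,\tilde{\rv{g}}_l$ with $\tilde{\rv{g}}_l\sim\mathrm{Gamma}(\mu_l N,1)$ independent. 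The task thus reduces to evaluating the CDF of a weighted sum of independent Gamma variables at the real point $c$.

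Next I would pass to the bilateral Laplace transform. Since $\E[e^{-s\tilde{\rv{g}}_l}]=(1+s)^{-\mu_l N}$ on $\Re(s)>-1$, the transform of $\rv{Q}$ is $\Phi(s)\defeq\E[e^{-s\rv{Q}}]=\prod_{l=1}^L(1+\hat\lambda_l s)^{-\mu_l N}$, which converges on a vertical strip containing the imaginary axis, namely to the right of the poles $s=-1/\hat\lambda_l$ with $\hat\lambda_l>0$ and to the left of those with $\hat\lambda_l<0$. Writing the CDF via the Bromwich integral $F(c)=\frac{1}{2\pi i}\int_{\gamma-i\infty}^{\gamma+i\infty}\frac{\Phi(s)}{s}e^{sc}\,\mathrm{d}s$ for a suitable $\gamma$, and observing that $\prod_l\hat\lambda_l^{\mu_l N}(s+1/\hat\lambda_l)^{\mu_l N}=\prod_l(1+\hat\lambda_l s)^{\mu_l N}=1/\Phi(s)$, the integrand is precisely the function whose residues define $\xi_k(c)$ in the statement. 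The factor $1/s$ contributes a simple pole at the origin with residue $\Phi(0)e^{0}=1$.

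Finally I would close the contour and apply the residue theorem, splitting into the two cases according to the sign of $c$, i.e.\ according to whether $\det(\Id_T+\Xm\Xm^\H)\gtrless\det(\Id_T+\Xm'{\Xm'}^\H)$. Fixing $\gamma>0$ inside the strip: when $c\ge0$ the factor $e^{sc}$ decays as $\Re(s)\to-\infty$, so I close to the left, enclosing the origin and the poles $-1/\hat\lambda_k$ with $\hat\lambda_k>0$ ($k\le L_p$), which yields $F(c)=1+\sum_{k=1}^{L_p}\xi_k(c)$; when $c\le0$ the factor decays as $\Re(s)\to+\infty$, so I close to the right with clockwise orientation, enclosing only the poles $-1/\hat\lambda_k$ with $\hat\lambda_k<0$ ($k>L_p$) and absorbing the orientation sign, giving $F(c)=-\sum_{k=L_p+1}^L\xi_k(c)$. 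The higher-order poles of multiplicity $\mu_l N$ are handled by the general residue formula, which is exactly the $\mathrm{Res}(\cdot,\cdot)$ expression appearing in the statement. The main obstacle I anticipate is the rigorous justification of the contour closing: I must invoke Jordan's lemma to show that the large circular arcs vanish, relying on the polynomial decay $|\Phi(s)|=O(|s|^{-N\sum_l\mu_l})$ (nontrivial, since $\sum_l\mu_l\ge1$ whenever $\Xm\Xm^\H\ne\Xm'{\Xm'}^\H$, i.e.\ under the identifiability condition of Proposition~\ref{prop:identifiability}) together with the sign of $c$ controlling $|e^{sc}|$ on the appropriate half-plane, while tracking the region of convergence carefully so that $\gamma$ sits in the correct strip for each case.
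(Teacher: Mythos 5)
Your proof is correct, but it is genuinely different from what the paper does: the paper's proof of Proposition~\ref{prop:PEP_general} is a one-line reduction to \cite[Proposition~1]{Brehler2001noncoherent}, obtained by observing that the matrix $\textbf{C}^{\rm NC}_{ij}$ appearing there equals $\Lambdam\otimes\Id_N$ in this setting, so its nonzero eigenvalues are the $\hat{\lambda}_l$ with multiplicities $\mu_l N$; the residue formula is then imported wholesale from that reference. You instead give a self-contained derivation starting from the paper's own Lemma~\ref{lemma:PEP_expression}: the reduction of $\rv{Q}=\sum_i(\lambda_i-1)\rv{g}_i$ to a weighted sum of independent $\mathrm{Gamma}(\mu_l N,1)$ variables, the identification $c=N\ln\det\Gammam$, the bilateral transform $\Phi(s)=\prod_l(1+\hat{\lambda}_l s)^{-\mu_l N}$ with its strip of analyticity, the Bromwich representation of the CDF with the pole at $s=0$ contributing the leading $1$, and the case split on the sign of $c$ dictating which half-plane to close in are all correct, and your bookkeeping of which poles ($k\le L_p$ versus $k>L_p$) fall on which side of the abscissa $\gamma\in(0,\min_{l>L_p}1/|\hat{\lambda}_l|)$ matches the statement exactly. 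The one technical point you flag — vanishing of the arcs — is handled by the decay $|\Phi(s)/s|=O(|s|^{-N\sum_l\mu_l-1})$, which under identifiability is at least $O(|s|^{-N-1})$ and so suffices even in the boundary case $c=0$ (where both branches of the formula must, and do, agree since the residues sum to zero). What your route buys is transparency and independence from the external reference, at the cost of re-deriving machinery that \cite{Brehler2001noncoherent} already provides; what the paper's route buys is brevity. Either is acceptable, though if you keep your version you should make the Jordan's-lemma estimate explicit rather than leaving it as an anticipated obstacle.
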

\begin{proof}
The closed-form expression of the PEP follows readily from \cite[Proposition 1]{Brehler2001noncoherent} by noting that the matrix $\textbf{C}^{\rm NC}_{ij}$ therein is equal to $\Lambdam \otimes \Id_N$ in our setting, thus has the same nonzero eigenvalues as $\Lambdam$ with multiplicities $N$.
\end{proof}

{
\section{Proof that $\lim\limits_{N\to\infty}\P({\Xm}\to {\Xm'}) = 0$ for any pair of identifiable symbols $\Xm$ and $\Xm'$} \label{proof:KL}
We invoke Cantelli's inequality to get that\footnote{{Cantelli's inequality~\cite[Sec.~II.8]{Savage1961probability} states that $\P(\rv{x} - \mu \le \lambda) \le \frac{\sigma^2}{\sigma^2+\lambda^2}$ for a real-valued random variable $\rv{x}$ with mean $\mu$ and variance $\sigma^2$, and $\lambda<0$. Applying this with $\rv{x} = \LLR$ and $\lambda = -\meanLLR$, we obtain \eqref{eq:Cantelli}.}} 
\begin{align} 
\P({\Xm}\to {\Xm'}) &= \P[\rv{L}(\Xm\to{\Xm'}) \le 0] \\
&\le 
\frac{\varLLR}{\varLLR + \meanLLR^2} \qquad  \label{eq:Cantelli} \\
&= \left(1+ \frac{\meanLLR^2}{\varLLR}\right)^{-1}. 
\end{align}
Therefore, it suffices to show that $\frac{\meanLLR^2}{\varLLR} \to \infty$ as $N\to\infty$.
We recall from~\eqref{eq:LLR_expression} that 
\begin{align}
\LLR 
&= -N \sum_{i=1}^{T}\ln \lambda_i + \sum_{i=1}^{T} (\lambda_i - 1) \rv{g}_i
\end{align}
where $\{\rv{g}_i\}_{i=1}^T$ are independent Gamma random variables with shape $N$ and scale 1. It follows that
\begin{align}
\meanLLR &= 
-N \sum_{i=1}^{T}\ln \lambda_i + N\sum_{i=1}^{T} (\lambda_i - 1) \\ 
&= N \sum_{i=1}^{T} (\lambda_i - 1 - \ln \lambda_i), \\
\varLLR &= N \textstyle\sum_{i=1}^{T} (\lambda_i-1)^2 
\end{align}
For any joint constellation satisfying the identifiability condition in Proposition~\ref{prop:identifiability}, we have $\Xm\Xm^\H \ne \ \Xm'{\Xm'}^\H$, thus $\Gammam \ne \Id_T$  and thus $\{\lambda_i \colon \lambda_i \ne 1\} \ne \emptyset$. It follows that $\sum_{i=1}^{T} (\lambda_i - 1 - \ln \lambda_i)$ is positive definite since $\ln x < x-1, \forall x > 0, x \ne 1$. Also, $\textstyle\sum_{i=1}^{T} (\lambda_i-1)^2$ is positive definite. 
Therefore
\begin{equation}
\frac{\meanLLR^2}{\varLLR} = N \frac{\big(\sum_{i=1}^{T} (\lambda_i - 1 - \ln \lambda_i)\big)^2}{\sum_{i=1}^{T} (\lambda_i-1)^2} \to \infty,
\end{equation}
as $N\to \infty$. 
This completes the proof.
\section{Proof of Lemma~\ref{lemma:PEP_expression}} \label{proof:PEP_expression}
From the second equality in~\eqref{eq:PEP}, it suffices to show that the PLLR can be written as $-N \sum_{i=1}^T \ln \lambda_i + \sum_{i=1}^{T} (\lambda_i - 1) \rv{g}_i$. Let $\rvMat{Y}_0 \defeq {(\Id_T+ \rvMat{X}\rvMat{X}^\H)^{-1/2}}\rvMat{Y}$ be a ``whitened'' version of $\rvMat{Y}$, then $\rvMat{Y}_0$ is a Gaussian matrix with $T$ independent rows following $
\Cc\Nc(\mathbf{0},\Id_N)$. From \eqref{eq:LLR0}, the PLLR $\LLR$ can be expressed  as
\begin{align}
&\LLR \notag \\ 
&= -N\ln\det [\Gammam]  \notag \\
&\quad + \trace\Big((( \Id_T\!+\! \Xm \Xm^\H)^\frac12 \big( \Id_T\!+\! {\Xm'}{\Xm'}^\H \big)^{-1} ( \Id_T\!+\! \Xm \Xm^\H)^{\frac12} - \Id_T) \notag \\
&\qquad \qquad  \cdot \rvMat{Y}_0\rvMat{Y}_0^\H\Big). 
\end{align}
Since $\Gammam$ and $( \Id_T+ \Xm \Xm^\H)^\frac12 \big( \Id_T+ {\Xm'}{\Xm'}^\H \big)^{-1} ( \Id_T+ \Xm \Xm^\H)^{\frac12}$ share the same eigenvalues $\{\lambda_i\}_{i=1}^T$, we can decompose 
\begin{multline}
	( \Id_T+ \Xm \Xm^\H)^\frac12 \big( \Id_T+ {\Xm'}{\Xm'}^\H \big)^{-1} ( \Id_T+ \Xm \Xm^\H)^{\frac12} - \Id_T \\ = \bar{\Um} \diag[\lambda_1 - 1, \lambda_2 - 1,\dots,\lambda_T - 1] \bar{\Um}^\H
\end{multline}
where $\bar{\Um}$ is a $T \times T$ unitary matrix. We further expand the PLLR as
\begin{align}
\LLR &= -N\ln\det [\Gammam]  \notag \\
&\quad + \trace[ \diag[\lambda_1-1,\dots,\lambda_T-1] \bar{\Um}^\H \rvMat{Y}_0\rvMat{Y}_0^\H \bar{\Um}] \\
&= -N\sum_{i=1}^{T}\ln \lambda_i + \sum_{i=1}^{T} (\lambda_i - 1) \rv{g}_i \label{eq:LLR_expression}
\end{align}
where $\rv{g}_i \defeq \|\bar{\rvVec{y}}_{0,i}\|^2$ with $\bar{\rvVec{y}}_{0,i}$ being the $i$-th row of $\bar{\Um}^\H \rvMat{Y}_0$. Note that since $\bar{\Um}$ is unitary and deterministic, $\bar{\Um}^\H \rvMat{Y}_0$ has the same distribution  as $\rvMat{Y}_0$, i.e., $\bar{\rvVec{y}}_{0,i}$ are independent and follow $\Cc\Nc(\mathbf{0},\Id_N)$. Therefore, $\{\rv{g}_i\}_{i=1}^T$ are independent Gamma random variables with shape $N$ and scale $1$. This completes the proof.}

{
\section{Proof of Proposition~\ref{prop:Chernoff}} \label{proof:PEP_exponent}
In this proof, for convenience, we denote $\Gm_\Am \defeq (\Id + \Am \Am^\H)^{-1}$. 
We need to show that $\mathbb{P}(\Xm\to{\Xm'}) = \P[\rv{L}(\Xm\to{\Xm'}) \le 0] \le \exp(-NJ_s(\Xm,\Xm'))$.
By applying the Chernoff bound~\cite[Th. 6.2.7]{DeGroot2012ProbStats}, we obtain for every $s > 0$ that
\begin{align}
&\mathbb{P}(\Xm\to{\Xm'}) \notag \\
&\le \E_{\rvMat{Y} \cond \Xm}\left[\exp\left(-s  \rv{L}(\Xm\to{\Xm'}) \right)\right] \\
&= \E_{\rvMat{Y} \cond \Xm}\left[ \left(\frac{p_{\rvMat{Y} | \rvMat{X}}(\rvMat{Y} | \Xm')}{p_{\rvMat{Y} | \rvMat{X}}(\rvMat{Y} | {\Xm})}\right)^s\right] \\
&= \int_{\CC^{T\times N}} [p_{\rvMat{Y} | \rvMat{X}}(\Ym | \Xm')]^s [p_{\rvMat{Y} | \rvMat{X}}(\Ym | \Xm)]^{1-s} \dif \Ym \\
&= \int_{\CC^{T\times N}} \left[\frac{\exp(-\trace(\Ym^\H\Gm_{\Xm'}\Ym ))}{\pi^{NT}\det^{-N}\Gm_{\Xm'}}\right]^s \notag \\
&\qquad \cdot  \left[\frac{\exp(-\trace(\Ym^\H\Gm_{\Xm}\Ym ))}{\pi^{NT}\det^{-N}\Gm_{\Xm}}\right]^{1-s} \dif \Ym \label{eq:ref2044}\\
&= \left[\frac{\det^s(\Gm_{\Xm'}) \det^{1-s}(\Gm_\Xm)}{\det[s\Gm_{\Xm'} + (1-s)\Gm_{\Xm}]} \right]^N \notag \\
&\quad \cdot \int_{\CC^{T\times N}} \frac{\exp(-\trace(\Ym^\H(s\Gm_{\Xm'} + (1-s)\Gm_\Xm)\Ym ))}{\pi^{NT}\det^{-N}(s\Gm_{\Xm'} + (1-s)\Gm_\Xm)} \dif \Ym \qquad \label{eq:tmp2045}
\end{align}
where \eqref{eq:ref2044} follows from \eqref{eq:likelihood}, and \eqref{eq:tmp2045} follows after some simple manipulations. Next, we restrict to $s \in [0,1]$, and thus $(s\Gm_{\Xm'} + (1-s)\Gm_\Xm)^{-1}$ is a covariance matrix. Therefore, the integral in \eqref{eq:tmp2045} is an integral of a Gaussian density over the whole support, and thus equals $1$. As a consequence, $\mathbb{P}(\Xm\to{\Xm'})$ is upper-bounded by the first term in \eqref{eq:tmp2045}, which equals $\exp(-N J_s(\Xm,\Xm'))$. 
\section{Proof of Proposition~\ref{prop:lower_upper_bounds}}
\label{proof:lower_upper_bounds}
The lower bound in \eqref{eq:lower_upper_bounds} follows by taking $s = 1/2$ in Proposition~\ref{prop:Chernoff} and by bounding $J_{1/2}(\Xm,\Xm')$ in~\eqref{eq:J_1/2} as
\begin{align}
J_{1/2}(\Xm,\Xm') &= \frac{1}{2} \sum_{i=1}^{T} \ln \Big(2 + \lambda_i + \frac{1}{\lambda_i}\Big) - T \ln 2 \\
&\ge \frac{1}{2} \sum_{i=1}^{T} \ln \max\Big\{\lambda_i, \frac{1}{\lambda_i}\Big\} - T \ln 2 \\
&= \frac{1}{2} \sum_{i=1}^{T} | \ln \lambda_i | - T \ln 2.
\end{align}
%
To show the upper bound, we first write the Gamma random variables $\rv{g}_i$ as $\rv{g}_i = \sum_{j=1}^N \rv{e}_{i,j}$, $i \in [T]$, where $\{\rv{e}_{i,j}\}_{i\in [T],j\in [N]}$ are independent exponential random variables with parameter~$1$. From this and Lemma~\ref{lemma:PEP_expression}, we can bound the PEP as
\begin{align}
\mathbb{P}(\Xm\to{\Xm'}) &= \P\bigg(\sum_{i=1}^{T} \sum_{j=1}^{N} (\lambda_i - 1) \rv{e}_{i,j} \le N \sum_{i=1}^{T} \ln \lambda_i\bigg) \\
&\ge \P[{(\lambda_i - 1) \rv{e}_{i,j} \le \ln \lambda_i, \forall i\in [T], j\in [N]}] \\
&= \prod_{i=1}^{T} \prod_{j=1}^{N}  \P[(\lambda_i - 1) \rv{e}_{i,j} \le \ln \lambda_i]  \\
&= \exp\bigg(-N \sum_{i=1}^{T} f(\lambda_i)\bigg) \label{eq:tmp701}
\end{align}
where $f(\lambda) \defeq - \ln \P[(\lambda-1)\rv{e} \le \ln \lambda]$ with $\rv{e}$ being an exponential random variable with parameter~$1$. We shall show that 
\begin{align} \label{eq:f_lambda}
f(\lambda) \le |\ln \lambda| + 1, \quad \forall \lambda \ge 0.
\end{align}
\begin{itemize} 
\item If $\lambda = 1$, \eqref{eq:f_lambda} obviously holds with equality.
\item If $\lambda < 1$, we have that $\P[(\lambda-1)\rv{e} \le \ln \lambda] = \P\big(\rv{e} \ge \frac{-\ln \lambda}{1-\lambda}\big) = \exp\big(\frac{\ln \lambda}{1-\lambda}\big)$. Thus $f(\lambda) = \frac{- \ln \lambda}{1-\lambda} = - \ln \lambda + \frac{\ln \lambda^{-1}}{\lambda^{-1}-1} < |\ln \lambda| + 1$ since $\ln \lambda^{-1} < \lambda^{-1}-1$ for all $\lambda^{-1} > 1$.
\item If $\lambda > 1$, we have that 
$
\P[(\lambda-1)\rv{e} \le \ln \lambda] = \P[\rv{e} \le \frac{\ln \lambda}{\lambda - 1}] 
= 1 - \exp\left(-\frac{\ln \lambda}{\lambda - 1}\right) 
\ge \frac{1-e^{-1}}{\lambda}.
$
To verify the inequality, notice that the function $\lambda \left(1 - \exp\left(-\frac{\ln \lambda}{\lambda - 1}\right)\right) = \lambda - \lambda^{-\frac{1}{\lambda-1}+1}$ is increasing for $\lambda > 1$, and converges from above to $1-e^{-1}$ as $\lambda$ approaches $1$ from above. We deduce that $f(\lambda) \le \ln (\frac{\lambda}{1-e^{-1}}) = \ln \lambda - \ln (1-e^{-1}) < |\ln \lambda| + 1$.
\end{itemize}
Introducing~\eqref{eq:f_lambda} into~\eqref{eq:tmp701}, we upper 
bound the PEP exponent as
\begin{align}
-\frac{1}{N} \ln \P[\Xm \to \Xm'] \le \sum_{i=1}^{T} f(\lambda_i) 
\le \sum_{i=1}^{T} |\ln \lambda_i| + T.
\end{align}
This completes the proof.
}

\section{Proof of Proposition~\ref{lem:trGamma_scaling}} \label{proof:lem:meanLLR_scaling}
We have $\trace\big(( \Id_T + {\Xm'}{\Xm'}^\H)^{-1} \big) = O(1)$ since the eigenvalues of $( \Id_T + {\Xm'}{\Xm'}^\H )^{-1}$ are all smaller than $1$. 
{Following the QR decomposition, the input matrix $\Xm$ can be factorized as $\Xm = \Wm\Dm$ where $\Wm \in \CC^{T\times \Mtot}$ is a truncated unitary matrix specifying the column space of $\Xm$, and $\Dm \in \CC^{\Mtot\times \Mtot}$ is a full-rank spanning matrix.
Similarly, 
$
{\Xm'} = {\Wm'}{\Dm'},
$
for some truncated unitary matrix ${\Wm'} \in \CC^{T\times \Mtot}$ and some full-rank  spanning matrix ${\Dm'} {\in \CC^{\Mtot \times \Mtot}}$. 

If $\Span(\Xm) = \Span({\Xm'})$, we get $\Wm = \Wm'$, thus
\begin{align}
&\trace\big( ( \Id_T+ {\Xm'}{\Xm'}^\H )^{-1} \Xm \Xm^\H \big) \notag \\
&= \trace\big( ( \Id_T+ \Wm \Dm {\Dm'}^\H \Wm )^{-1} \Wm \Dm \Dm^\H \Wm^\H \big) \\
&= \trace\big( ( \Id_{\Mtot} +  \Dm {\Dm'}^\H )^{-1} \Dm \Dm^\H\big) \\
&\le  \frac{\Mtot \sigma_{\max}(\Dm^\H \Dm)}{1+\sigma_{\min}({\Dm'}^\H \Dm')}, \label{eq:tmp2549}
\end{align}
where $\sigma_{\max}(\Dm^\H \Dm)$ is the largest eigenvalue of $\Dm^\H \Dm$ and $\sigma_{\min}({\Dm'}^\H \Dm')$ is the smallest eigenvalue of ${\Dm'}^\H \Dm'$. Since $\|\Xm\vv\|^2 = \|\Dm\vv\|^2$ for any unit-norm vector $\vv \in \CC^{\Mtot}$ and $\|\Xm\vv\|^2 = \Theta(P)$ by assumption, we get that $\|\Dm\vv\|^2 = \Theta(P)$ and $\|\Dm^\H\vv\|^2 = \Theta(P)$ for any $\vv$. Taking $\vv$ as one of the eigenvectors of $\Dm^\H \Dm$, we deduce that the eigenvalues of $\Dm^\H\Dm$ scale as $\Theta(P)$. Similarly, the eigenvalues of ${\Dm'}^\H \Dm'$ also scale as $\Theta(P)$. Therefore, it follows from~\eqref{eq:tmp2549} that $\trace\big( ( \Id_T+ {\Xm'}{\Xm'}^\H )^{-1} \Xm \Xm^\H \big)$ is upper bound by a constant for large $P$, i.e., $\trace\big( ( \Id_T+ {\Xm'}{\Xm'}^\H )^{-1} \Xm \Xm^\H \big) = O(1)$.

Moreover, using the Woodbury identity $(\Id_T + {\Xm'} {\Xm'}^\H)^{-1} = \Id_T - \Xm'(\Id_M + {\Xm'}^\H\Xm')^{-1}{\Xm'}^\H$, we obtain
\begin{align}
&\trace\big( ( \Id_T+ {\Xm'}{\Xm'}^\H )^{-1} \Xm \Xm^\H \big) \notag \\
&= 
\trace\big(\big(\Id_T- {\Xm'} ( \Id_M+ {\Xm'}^\H {\Xm'} )^{-1}
{\Xm'}^\H \big) \Xm \Xm^\H \big) \\ 
&= \trace(\Xm^\H\Xm) - \trace\big(\Xm^\H {\Xm'} ( \Id_M+ {\Xm'}^\H {\Xm'} )^{-1}
{\Xm'}^\H \Xm\big) \\
&= \trace[\Xm^\H\Xm] - \trace\big(\Xm^\H \Wm' {\Dm'} ( \Id_M+ {\Dm'}^\H {\Dm'} )^{-1}
{\Dm'}^\H {\Wm'}^\H \Xm\big) \\
&\ge \trace[\Xm^\H\Xm] - \trace\big(\Xm^\H \Wm' {\Wm'}^\H \Xm\big) \label{eq:tmp2564} \\
&= \trace[\Dm^\H\Dm] - \trace\big(\Dm^\H \Wm^\H \Wm' {\Wm'}^\H \Wm \Dm\big), \label{eq:tmp2565}
\end{align}
where~\eqref{eq:tmp2564} follows since ${\Dm'} ( \Id_M+ {\Dm'}^\H {\Dm'} )^{-1}
{\Dm'}^\H \preceq \Id$. Let us denote by $\{\vv_1, \dots, \vv_{\Mtot}\}$ and $\{\mu_1, \dots,\mu_{\Mtot}\} \le 1$ respectively the eigenvectors and corresponding eigenvalues of the matrix $\Wm^\H \Wm' {\Wm'}^\H \Wm$. Then it follows from~\eqref{eq:tmp2565} that
\begin{equation}
\trace\big( ( \Id_T+ {\Xm'}{\Xm'}^\H )^{-1} \Xm \Xm^\H \big) \ge \sum_{i=1}^{\Mtot} (1-\mu_i) \|\Dm^\H \vv_i\|^2. \label{eq:tmp2570}
\end{equation}
If $\Span(\Xm) \ne \Span({\Xm'})$, there exists some $i\in [\Mtot]$ such that $\mu_i < 1$. Furthermore, $\|\Dm^\H\vv_i\|^2 = \Theta(P)$ following the reasoning above. Therefore, it follows from~\eqref{eq:tmp2570} that $\trace\big( ( \Id_T+ {\Xm'}{\Xm'}^\H )^{-1} \Xm \Xm^\H \big) = \Theta(P)$.
}

\section{Proof of Proposition~\ref{prop:min_dk}} \label{proof:min_dk}
We first show the two-user case, i.e., $\min\left\{ d_{1}(\Xcal),
d_{2}(\Xcal) \right\} \le d_{\min}(\Xcal) \le  \min\left\{
d_{1}(\Xcal), d_{2}(\Xcal) \right\} + M$ with $d_{1}(\Xcal)$ and $d_{2}(\Xcal)$ defined in \eqref{eq:d12} and \eqref{eq:d21}, respectively. To this end, we first develop $d(\Xm\to{\Xm'})$ as
\begin{align}
d(\Xm\to{\Xm'})
&= \trace\big( \Xm_1^\H (\Id_T+ {\Xm'}{\Xm'}^\H )^{-1} \Xm_1 \big)  \notag \\
&\quad + \trace\big( \Xm_2^\H (\Id_T + {\Xm'}{\Xm'}^\H )^{-1} \Xm_2
\big), 
\end{align}
where we recall that $\Xm := [\Xm_1 \ \Xm_2]$, ${\Xm'} := [\Xm'_1 \ \Xm'_2]$ with $\Xm_k,\Xm'_k\in\Xcal_k$, $k\in\{1,2\}$, and ${\Xm'}\ne\Xm$. 
Regarding $\Xm$ and $\Xm'$ as the transmitted and detected joint symbols, respectively, there are two types of error event. On one hand, if both users are in error, i.e., $\Xm_1\ne\Xm'_1$, $\Xm_2 \ne \Xm'_2$, then
\begin{align}
d(\Xm\to{\Xm'}) &= \trace\Big( \Xm_1^\H \big( \Id_T+ \Xm'_1 {\Xm'}_1^\H + \Xm'_2 {\Xm'}_2^\H \big)^{-1} \Xm_1
\Big) \notag \\
&\quad + \trace\Big( \Xm_2^\H \big( \Id_T+ {\Xm'}_1^\H
\Xm'_1 + {\Xm'}_2^\H \Xm'_2 \big)^{-1} \Xm_2 \Big).  \notag
\end{align}
On the other hand, if only one user is in error, i.e.,
$\Xm_k=\Xm'_k$, $\Xm_{l}\ne\Xm'_{l}$, $k\ne
l\in\left\{ 1,2 \right\}$, then
\begin{align}
d(\Xm\to{\Xm'}) &= 
{\trace\Big( \Xm_k^\H \big( \Id_T+ {\Xm}_k {\Xm}_k^\H + \Xm'_l {\Xm'}_l^\H \big)^{-1} \Xm_k
\Big)}
\notag \\
&\quad + \trace\Big( \Xm_l^\H \big( \Id_T+ {\Xm'}_k^\H
\Xm'_k + {\Xm'}_l^\H \Xm'_l \big)^{-1} \Xm_l \Big). \notag
\end{align}
It holds that $d_{\min}(\Xcal)$ is the minimal value of $d(\Xm\to{\Xm'})$ over both cases. 
Since $0 \le \trace\big( \Xm_k^\H ( \Id_T + {\Xm}_k {\Xm}_k^\H + \Xm'_l {\Xm'}_l^\H )^{-1} \Xm_k
\big) \le M$, $\forall\,k\ne l$, 
we obtain the bounds on $d_{\min}(\Xcal)$ as stated.

We now generalize the analysis of the two-user case to the $K$-user case. 
Let us develop
\begin{align}
d(\Xm \to \Xm') = \sum_{k=1}^K \trace[\Xm_k^\H(\Id_T + \Xm'{\Xm'}^\H)^{-1}\Xm_k],
\end{align}
where we recall that $\Xm = [\Xm_1 \dots \Xm_K], \Xm' = [\Xm'_1 \dots \Xm'_K]$ with $\Xm_k,\Xm'_k \in \Xc_k, k \in [K]$ and $\Xm \ne \Xm'$. $\Xm$ and $\Xm'$ are regarded as the transmitted and detected joint symbols, respectively. For any $\Kc \subset [K]$, if all users in $\Kc$ are wrongly detected, while all users in $\Lc = [K]\setminus \Kc$ are correctly detected, then
\begin{align}
&d(\Xm \to \Xm') \notag \\
&= \sum_{k\in\Kc} \trace\bigg(\Xm_k^\H\Big(\Id_T + \Xm_k'{\Xm'_k}^\H + \sum_{j\ne k}\Xm'_j{\Xm'_j}^\H\Big)^{-1}\Xm_k\bigg) \notag \\
&\quad + \sum_{l\in\Lc} \trace\bigg(\Xm_l^\H\Big(\Id_T + \Xm_l\Xm_l^\H + \sum_{j\ne l}\Xm'_j{\Xm'_j}^\H\Big)^{-1}\Xm_l\bigg).
\end{align}
In this case, the minimal value of $d(\Xm \to \Xm')$ is defined as 
\begin{align}
d_{\min}^\Kc(\Xc) \defeq \min_{\Xm_k \ne \Xm'_k \in \Xc_k, \forall k \in \Kc, \atop \Xm_l = \Xm'_l \in \Xc_l, \forall l \in [K] \setminus \Kc}  d(\Xm \to \Xm').
\end{align}
Then, it is straightforward that $d_{\min}(\Xc)$ is the minimum value of $d_{\min}^\Kc(\Xc)$ over all possible $\Kc \subset[K]$, i.e., $d_{\min}(\Xc) = \min_{\Kc \subset [K]} d_{\min}^\Kc(\Xc)$. 
With $d_k(\Xc)$ defined in \eqref{eq:dk_Xc_1}, we have that
\begin{align}
d_{\min}^\Kc(\Xc) \ge \displaystyle\min_{k\in \Kc} d_k(\Xc) \ge \min_{k\in [K]} d_k(\Xc), \forall \Kc \subset [K],
\end{align}
where the first inequality holds since the constraint under the $\min$ in $d_k(\Xc)$ subsumes that in $d_{\min}^\Kc(\Xc)$ and the trace in $d_k(\Xc)$ is one of the summands in $d_{\min}^\Kc(\Xc)$, which are all nonegative, for any $k \in \Kc \subset [K]$; the second inequality holds since $\Kc \subset[K]$. Taking $\Kc^* = \arg \displaystyle\min_{\Kc \subset [K]} d_{\min}^\Kc(\Xc)$ yields
\begin{align}
d_{\min}(\Xc) = d_{\min}^{\Kc^*}(\Xc) \ge \min_{k\in [K]} d_k(\Xc). \label{eq:dmin_lb}
\end{align}
On the other hand, since $\trace\Big(\Xm_l^\H\big(\Id_T + \Xm_l\Xm_l^\H + \sum_{j\ne l}\Xm'_j{\Xm'_j}^\H\big)^{-1}\Xm_l\Big) \le M$, $\forall\,l \in [K]$, 
we get that $d_{\min}^{\{k\}}(\Xc) \le d_k(\Xc) + (K-1)M$ for all $k\in [K]$. Letting $k^* = \arg\min_{k\in[K]} d_k(\Xc)$, we  have that
\begin{align}
d_{\min}(\Xc) &= \min_{\Kc \subset [K]} d_{\min}^\Kc(\Xc) \\
&\le d_{\min}^{\{k^*\}}(\Xc) \\
&\le d_{k^*}(\Xc) + (K-1)M \\
&\le \min_{k\in[K]} d_k(\Xc) + (K-1)M. \label{eq:dmin_ub}
\end{align}
From \eqref{eq:dmin_lb} and \eqref{eq:dmin_ub}, we have \eqref{eq:min_dk}, and the proof is concluded.

\section{Proof of Proposition~\ref{prop:sufficient}} \label{proof:sufficient}
Let us rewrite $\Xm'_k
{\Xm'_k}^\H + \sum_{l\ne k}{\Xm}_l \Xm_l^\H = \bar{\Xm} \bar{\Xm}^\H$ where
$\bar{\Xm} := \left[\Xm_1 \ \dots \ \Xm_{k-1} \ \Xm'_k \ \Xm_{k+1} \ \dots \ \Xm_K\right] \in \Xcal$. Then, the trace in \eqref{eq:dk_Xc_1}
becomes
\begin{align}
&\trace\big(\Xm_k^\H \big(\Id_T+ \bar{\Xm} \bar{\Xm}^\H \big)^{-1} \Xm_k
\big) \notag \\
&= \trace\left(\Xm_k^\H \Xm_k\right) - \trace \big(\Xm_1^\H\bar{\Xm} \big(\Id_T+ \bar{\Xm}^\H \bar{\Xm} \big)^{-1} \bar{\Xm}^\H \Xm_k \big)  
\\
&= PT- \trace \big(\Xm_k^\H \Um
\Sigmam \big(\Id_T+ \Sigmam^2 \big)^{-1} \Sigmam \Um^\H
\Xm_k  \big), 
\end{align}%
where $\bar{\Xm} = \Um \Sigmam \Vm^\H$ with $\Um\in\mathbb{C}^{r\times T}$, $\Vm\in\mathbb{C}^{KM\times	r}$ being orthogonal matrices, and
$r$ being the rank of $\bar{\Xm}$; $\Sigmam$ contains
$r$ singular values of $\bar{\Xm}$ in decreasing order. 
Then, since $\big(\Id_T+ \Sigmam^2 \big)^{-1} \preceq
(1+\sigma^2_{\min}(\bar{\Xm}))^{-1} \Id$ with $\sigma_{\min}(\bar{\Xm})$ being the
minimum non-zero singular value of $\bar{\Xm}$, we have
\begin{align}
&\trace\Big(\Xm_k^\H \big(\Id_T+ \bar{\Xm} \bar{\Xm}^\H \big)^{-1} \Xm_k
\Big) \notag \\
&\ge PT - \big(1+\sigma^2_{\min}(\bar{\Xm})\big)^{-1} \trace \left( \Xm_k^\H \Um \Sigmam  \Sigmam \Um^\H \Xm_k  \right) \\
&= PT- \big(1+\sigma^2_{\min}(\bar{\Xm})\big)^{-1} \big\| \bar{\Xm}^\H
\Xm_k \big\|_{\rm F}^2 \\
&= PT - \big(1+\sigma^2_{\min}(\bar{\Xm})\big)^{-1}
\bigg(\| {\Xm'_k}^\H \Xm_k \|_{\rm F}^2 +
\sum_{l\ne k}\| \Xm_l^\H \Xm_k \|_{\rm F}^2 \bigg). \label{eq:tmp922} 
\end{align}%

From
\eqref{eq:tmp922}, the key is to find a lower bound on the non-zero singular value 
$\sigma_{\min}(\bar{\Xm})$. {To this end, we shall make use of the following lemmas.}

\begin{lemma} \label{lemma:hermitian_pertubation}
Let $\Am$ and $\Bm$ be two $T\times T$ Hermitian matrices, then 
$|\sigma_i(\Am+\Bm) - \sigma_i(\Am)| \le \|\Bm\|_{\rm F}, \forall i \in [T]$. 
\end{lemma}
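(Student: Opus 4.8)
The plan is to reduce the statement to the classical Weyl perturbation inequality for Hermitian matrices and then bound the spectral norm by the Frobenius norm. Note first that since $\Am$ and $\Bm$ are Hermitian, so is $\Am+\Bm$, so all the quantities $\sigma_i(\cdot)$ are genuine real eigenvalues taken in decreasing order, $\sigma_1 \ge \sigma_2 \ge \dots \ge \sigma_T$. First I would recall the Courant--Fischer min-max characterization: for any $T\times T$ Hermitian matrix $\Mm$,
\begin{equation}
\sigma_i(\Mm) = \max_{\substack{V \subseteq \CC^T \\ \dim V = i}} \; \min_{\substack{\xv \in V \\ \xv \ne \zerov}} \frac{\xv^\H \Mm \xv}{\xv^\H \xv}, \qquad i \in [T].
\end{equation}

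Next I would exploit the Rayleigh-quotient bounds $\sigma_T(\Bm) \le \frac{\xv^\H \Bm \xv}{\xv^\H \xv} \le \sigma_1(\Bm)$, valid for every nonzero $\xv$. Writing $\frac{\xv^\H (\Am+\Bm)\xv}{\xv^\H\xv} = \frac{\xv^\H \Am \xv}{\xv^\H\xv} + \frac{\xv^\H \Bm \xv}{\xv^\H\xv}$ and substituting these bounds into the min-max formula, the additive constants $\sigma_1(\Bm)$ and $\sigma_T(\Bm)$ pass unchanged through both the inner minimization over $\xv\in V$ and the outer maximization over $i$-dimensional subspaces $V$, yielding the two-sided Weyl bound
\begin{equation}
\sigma_i(\Am) + \sigma_T(\Bm) \;\le\; \sigma_i(\Am+\Bm) \;\le\; \sigma_i(\Am) + \sigma_1(\Bm), \qquad \forall i \in [T].
\end{equation}
Rearranging gives $\sigma_T(\Bm) \le \sigma_i(\Am+\Bm) - \sigma_i(\Am) \le \sigma_1(\Bm)$, and since the eigenvalues of $\Bm$ are ordered, the extremal absolute values occur at the endpoints, so
\begin{equation}
\bigl|\sigma_i(\Am+\Bm) - \sigma_i(\Am)\bigr| \;\le\; \max\bigl\{|\sigma_1(\Bm)|,\,|\sigma_T(\Bm)|\bigr\} \;=\; \max_{j\in[T]} |\sigma_j(\Bm)|,
\end{equation}
i.e.\ the spectral norm of $\Bm$.

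Finally I would bound the spectral norm by the Frobenius norm: since $\Bm$ is Hermitian (hence normal), $\|\Bm\|_{\rm F}^2 = \sum_{j=1}^T \sigma_j(\Bm)^2$, so that $\max_{j\in[T]} |\sigma_j(\Bm)| \le \bigl(\sum_{j=1}^T \sigma_j(\Bm)^2\bigr)^{1/2} = \|\Bm\|_{\rm F}$. Chaining this with the previous display gives $|\sigma_i(\Am+\Bm) - \sigma_i(\Am)| \le \|\Bm\|_{\rm F}$ for all $i\in[T]$, as claimed. I expect no real obstacle here, as the result is a standard perturbation bound; the only point requiring a little care is the bookkeeping in the min-max step, namely verifying that adding the constants $\sigma_1(\Bm)$ and $\sigma_T(\Bm)$ inside both the inner $\min$ and the outer $\max$ indeed leaves them as pure additive shifts, which is precisely what produces the clean two-sided inequality.
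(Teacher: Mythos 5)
Your proof is correct and follows essentially the same route as the paper: bound $|\sigma_i(\Am+\Bm)-\sigma_i(\Am)|$ by the spectral norm of $\Bm$ via Weyl's perturbation inequality, then bound the spectral norm by the Frobenius norm. The only difference is that the paper simply cites this eigenvalue perturbation bound from Golub and Van Loan, whereas you derive it from the Courant--Fischer min-max characterization, which is a correct and standard derivation.
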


\begin{proof}
From~\cite[Corollary 8.1.6]{GoluVanl96matrix_computations}, $|\sigma_i(\Am+\Bm) - \sigma_i(\Am)|$ is upper bounded by the spectral norm of $\Bm$. Then, the lemma follows since the spectral norm is upper bounded by the Frobenius norm.
\end{proof}

\begin{lemma} \label{lem:min_singular_value}
Let $\Qm := \left[\begin{smallmatrix} \Id_m & \Am_{m\times n} \\
\Am_{m\times n}^\H &
\Id_n \end{smallmatrix}\right]$
be positive semidefinite. Then, the $m+n$ eigenvalues of $\Qm$ are 
$1+\sigma_1(\Am), \ldots, 1+\sigma_{\min\{m,n\}}(\Am), 1, \ldots, 1, 
1-\sigma_{\min\{m,n\}}(\Am), \ldots, 1-\sigma_1(\Am)$. 
\end{lemma}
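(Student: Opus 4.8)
The plan is to reduce the claim to a standard fact about the spectrum of the off-diagonal ``Jordan--Wielandt'' matrix. First I would write
\begin{equation}
\Qm = \Id_{m+n} + \Mm, \qquad \Mm \defeq \begin{bmatrix} \mathbf{0} & \Am \\ \Am^\H & \mathbf{0} \end{bmatrix},
\end{equation}
so that $\Mm$ is Hermitian and the eigenvalues of $\Qm$ are precisely $1$ plus those of $\Mm$. It then suffices to show that the $m+n$ eigenvalues of $\Mm$ are $\pm\sigma_i(\Am)$ for $i\in[\min\{m,n\}]$ together with $|m-n|$ additional zeros; adding $1$ to each then yields the asserted list.

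To compute the spectrum of $\Mm$, I would introduce the singular value decomposition $\Am = \Um\Sigmam\Vm^\H$ and work with the associated singular triples. For each index $i\in[\min\{m,n\}]$ let $\uv_i$ and $\vv_i$ be the $i$-th left and right singular vectors, so that $\Am\vv_i = \sigma_i(\Am)\uv_i$ and $\Am^\H\uv_i = \sigma_i(\Am)\vv_i$. A one-line computation then shows that
\begin{equation}
\Mm \begin{bmatrix} \uv_i \\ \pm\vv_i \end{bmatrix} = \begin{bmatrix} \pm\Am\vv_i \\ \Am^\H\uv_i \end{bmatrix} = \pm\sigma_i(\Am) \begin{bmatrix} \uv_i \\ \pm\vv_i \end{bmatrix},
\end{equation}
so $\left[\begin{smallmatrix} \uv_i \\ \pm\vv_i \end{smallmatrix}\right]$ is an eigenvector of $\Mm$ with eigenvalue $\pm\sigma_i(\Am)$. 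Orthonormality of $\{\uv_i\}$ and $\{\vv_i\}$ guarantees these $2\min\{m,n\}$ eigenvectors are linearly independent, and the remaining $|m-n|$ left (or right) singular vectors lying in the kernel of $\Am^\H$ (or $\Am$) supply eigenvectors $\left[\begin{smallmatrix}\uv_j\\\mathbf{0}\end{smallmatrix}\right]$ (or $\left[\begin{smallmatrix}\mathbf{0}\\\vv_j\end{smallmatrix}\right]$) with eigenvalue $0$. This accounts for all $m+n$ eigenvalues.

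An equivalent, coordinate-free route I would keep in reserve is to solve the eigenvalue equation $\Qm\left[\begin{smallmatrix}\xv\\\yv\end{smallmatrix}\right] = \lambda\left[\begin{smallmatrix}\xv\\\yv\end{smallmatrix}\right]$ directly: it reads $(\lambda-1)\xv = \Am\yv$ and $(\lambda-1)\yv = \Am^\H\xv$, whence $(\lambda-1)^2\xv = \Am\Am^\H\xv$, forcing $(\lambda-1)^2$ to be an eigenvalue of $\Am\Am^\H$, i.e. $\lambda = 1\pm\sigma_i(\Am)$, while $\lambda=1$ collects the kernel directions. The only genuinely delicate point, and the step I would treat most carefully, is the multiplicity bookkeeping: matching the $2\min\{m,n\}$ nonzero-singular-value eigenvectors with the $|m-n|$ kernel eigenvectors so that exactly $m+n$ eigenvalues are produced, in particular handling repeated or vanishing singular values without double counting. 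Finally I would note that the positive-semidefiniteness hypothesis is not needed for the eigenvalue formula itself; it merely forces $\sigma_1(\Am)\le 1$ so that the listed values $1-\sigma_i(\Am)$ are nonnegative, consistent with $\Qm\succeq\mathbf{0}$.
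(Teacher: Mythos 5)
Your proof is correct and takes essentially the same route as the paper: the paper conjugates $\Qm$ by $\mathrm{diag}(\Um,\Vm)$ from the SVD of $\Am$ to obtain $2\times 2$ blocks $\left[\begin{smallmatrix}1 & \sigma_i\\ \sigma_i & 1\end{smallmatrix}\right]$, and your explicit eigenvectors $\left[\begin{smallmatrix}\uv_i\\ \pm\vv_i\end{smallmatrix}\right]$ are precisely the diagonalization of those blocks, with the same kernel bookkeeping for the remaining $|m-n|$ directions.
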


\begin{proof}
The singular value decomposition of $\Am$ leads to a block
diagonalization of $\Qm$ with $2\times 2$ blocks. The result then
follows immediately. 
\end{proof}

{We proceed from
\eqref{eq:tmp922} as follows.}
\begin{itemize}
\item For any $K$, applying Lemma~\ref{lemma:hermitian_pertubation} with $\Am = \Id_{KM}$ and $\Bm = \bar{\Xm}^\H \bar{\Xm} - \frac{PT}{M}\Id_{KM}$, we have that $|\sigma_{\min}(\bar{\Xm}^\H \bar{\Xm}) - \frac{PT}{M}| \le \big\|\bar{\Xm}^\H \bar{\Xm} - \frac{PT}{M}\Id_{KM}\big\| = \sqrt{\sum_{k\ne l \in [K]}\| \Xm_l^\H \Xm_k \|_{\rm F}^2} \le PT\sqrt{K(K-1)c}$, thus $\sigma^2_{\min}(\bar{\Xm}) \ge PT\big(\frac{1}{M} - \sqrt{K(K-1)c}\big)$.

\item For $K=2$, the bound can be tightened. For $k\ne l \in \{1,2\}$, applying Lemma~\ref{lem:min_singular_value} with $\Qm = \frac{M}{PT} \bar{\Xm}^\H
\bar{\Xm}$ and $\Am = \frac{M}{PT} {\Xm'_k}^\H \Xm_l$, we see that
the minimum non-zero eigenvalues of $\Qm$ is $1-\sigma^*(\frac{M}{PT}
{\Xm'_k}^\H \Xm_l)$ if there exists at least one singular value of $\frac{M}{PT}
{\Xm'_k}^\H \Xm_l$
strictly smaller than $1$ and $\sigma^*(\frac{M}{PT}
{\Xm'_k}^\H \Xm_l)$ is the largest among such
values. Otherwise, if all singular values of $\frac{M}{PT} {\Xm'_k}^\H
\Xm_l$ are $1$, the minimum non-zero eigenvalue of $\Qm$ is two. In
any case, the minimum non-zero eigenvalue of $\Qm$ is lower
bounded 
by 
$
1-\|
\frac{M}{PT} \Xm'_k \Xm_l^\H \| \ge 1-M\sqrt{c}. 
$
Hence, $\sigma^2_{\min}(\bar{\Xm}) \ge PT\big(\frac{1}{M} - \sqrt{c}\big)$.
\end{itemize}

Finally, plugging the bound of $\sigma^2_{\min}(\bar{\Xm})$ into \eqref{eq:tmp922} yields \eqref{eq:suff2}.

{
\section{Constellation Design Based on Precoding} \label{sec:precoding}
In~\cite{HoangAsilomar2018multipleAccess}, we have proposed a precoding-based constellation construction for the SIMO case ($M = 1$). In this appendix, we extend that construction to the MIMO case. The idea follows from the intuition that each individual constellation should have a unique signature to help the receiver separate signals transmitted from different users. 
If one dedicates $(K-1)M$ degrees of freedom of a user's signal for this unique signature to discriminate from the signals transmitted from $(K-1)M$ antennas of all other users, the remaining degrees of freedom for communication is $T-(K-1)M$ per antenna. Following this line, we construct $\Xc_k$ as the image of a Grassmannian constellation in $G(\CC^{T-(K-1)M},M)$ through a user-specific mapping from $G(\CC^{T-(K-1)M},M)$ to $G(\CC^{T},M)$. Specifically, we first define for each user an initial constellation $\Cc_k = \{\Cm^{(1)},\dots,\Cm^{(|\Xc_k|)}\}$ in $G(\CC^{T-(K-1)M},M)$. Then, we generate the elements of the constellation $\Xc_k$ as
\begin{align} \label{eq:precoding}
\Xm_k^{(i)} = \sqrt{P_kT}\frac{\Um_k \Cm_k^{(i)}}{\|\Um_k \Cm_k^{(i)}\|_{\rm F}}, \quad i \in [|\Xc_k|],
\end{align}
where $\pmb{U}_k \in \CC^{T\times (T - (K-1)M)}$ is a full-rank linear precoder associated to user $k$. 
Therefore, each symbol $\xv_k^{(i)}$ of user $k$ belongs to the column space $\Uc_k$ of $\Um_k$. In this way, we embed the users' signatures into the set of user-specific precoders. 
For example, when $T=3$, $K=2$, $M = 1$, and $|\Xc_1| = |\Xc_2| = 4$, a geometric interpretation for the precoders $\Um_1 = [\ev_1 \ \ev_3]$ and $\Um_2 = [\ev_2 \ \ev_3]$ is provided in Fig.~\ref{fig:geometric}. (We use $\ev_k$ to denote the $k$-th column of $\Id_T$.)} 

\begin{figure}[t!]
\centering
\begin{tikzpicture}[scale=.9,font=\small]
\def\R{2.5} 
\def\angEl{20} 
\def\angAz{-140} 
\def\angPhi{60} 
\def\angBeta{30} 


\pgfmathsetmacro\H{\R*cos(\angEl)} 
\tikzset{xyplane/.style={cm={cos(\angAz),sin(\angAz)*sin(\angEl),-sin(\angAz),
cos(\angAz)*sin(\angEl),(0,0)}}}
\tikzset{acplane/.style={cm={sin(\angAz)*sin(\angEl),cos(\angAz),-sin(\angAz),
cos(\angAz)*sin(\angEl),(0,0)}}}                   		
\LongitudePlane[xzplane]{\angEl}{\angAz}
\LongitudePlane[pzplane]{\angEl}{\angPhi}
\LatitudePlane[equator]{\angEl}{0}


\fill[ball color=white] (0,0) circle (\R); 


\coordinate (O) at (0,0);
\coordinate[mark coordinate] (I) at (.64*\R,-.26*\R);
\coordinate[mark coordinate,color=blue] (c11) at (0.3*\H,.78*\H);
\coordinate[mark coordinate,color=blue] (c12) at (0.64*\H,.11*\H);
\coordinate[mark coordinate,color=blue] (c13) at (.64*\H,-.62*\H);
\coordinate[mark coordinate,color=blue] (c14) at (0.3*\H,-1.02*\H);

\coordinate[mark coordinate,color=red] (c21) at (-.4*\R,-.31*\R);
\coordinate[mark coordinate,color=red] (c22) at (.3*\R,-.325*\R);
\coordinate[mark coordinate,color=red] (c23) at (.9*\R,-.15*\R);
\coordinate[mark coordinate,color=red] (c24) at (.9*\R,.15*\R);


\DrawLatitudeCircle[\R]{0} 
\DrawLongitudeCircle[\R]{\angAz+90} 


\draw[xyplane,<->] (1.6*\R,0) node[below] {$x_2$} -- (0,0) -- (0,2*\R)
node[below] {$x_3$};
\draw[xyplane,-] (\R,-.5*\R) -- (2.2*\R,-.5*\R) node[right = 18pt] {$\mathcal{U}_2$}-- (2.2*\R,2.8*\R) -- (\R,2.8*\R);
\draw[-] (-.8*\R,1.2*\R) -- (-.8*\R,1.6*\R) node[below=12pt,right] {$\mathcal{U}_1$}-- (1.3*\R,.9*\R) -- (1.3*\R,.4*\R);
\draw[->] (0,0) -- (0,1.2*\R) node[left] {$x_1$};


\node[left=7pt,below] at (c11) {\color{blue}$\xv_1^{(1)}$};
\node[left] at (c12) {\color{blue}$\xv_1^{(2)}$};
\node[left] at (c13) {\color{blue}$\xv_1^{(3)}$};
\node[left=5pt,above] at (c14) {\color{blue}$\xv_1^{(4)}$};
\node[above=7pt,right] at (c21) {\color{red}$\xv_2^{(1)}$};
\node[above=8pt,left] at (c22) {\color{red}$\xv_2^{(2)}$};
\node[right=5pt] at (c23) {\color{red}$\xv_2^{(3)}$};
\node[right=6pt] at (c24) {\color{red}$\xv_2^{(4)}$};

\node[below=8pt,right] at (I) {$I$};
\end{tikzpicture}
\caption{{A geometric interpretation in the real domain of the precoding-based constellations for the precoders $\Um_1 = [\ev_1 \ \ev_3]$ and $\Um_2 = [\ev_2 \ \ev_3]$, $T=3$, $K = 2$, $M=1$, and $|\Xc_1| = |\Xc_2| = 4$. The symbols of user $1$ and user $2$\textemdash represented by their projections on the unit sphere\textemdash belong to the column spaces $\Uc_1$ and $\Uc_2$ of $\Um_1$ and $\Um_2$, respectively. The axis $x_t$, $t\in \{1,2,3\}$, corresponds to the $t$-th component of a symbol~\cite{HoangAsilomar2018multipleAccess}.}
}
\label{fig:geometric}
\end{figure}

{We now design the precoders $\{\Um_k\}$. 
To this end, we consider the QR factorization 
$
\Um_k = \Qm_k \Rm_k$, $k\in [K],
$
where the truncated unitary matrix $\Qm_k \in \CC^{T\times(T-(K-1)M)}$ controls the \textit{subspace} which the symbols $\Xm_k^{(i)}$ lie in, and the upper triangular matrix $\Rm_k \in \CC^{(T-(K-1)M)\times(T-(K-1)M)}$ controls the \textit{orientation} of the symbols in this subspace.

We first design $\{\Qm_k\}$. Leaning on Proposition~\ref{prop:sufficient}, we aim to design $\{\Qm_k\}$ such that $\|{\Xm}_k^\H\Xm_l \|_{\rm F}^2$, $\Xm_k\in\Xcal_k, \Xm_l\in\Xcal_l,k\ne l \in [K]$ are small in order to guarantee a high value of the metric $\min_{k\in[K]} d_k(\Xc_k)$. (Note that by construction~\eqref{eq:precoding}, the symbols $\{\Xm_k\}$ satisfy $\trace[\Xm_k^\H \Xm_k] = PT$, thus Proposition~\ref{prop:sufficient} applies.) Let us consider two users $k$ and $l$ and assume that the column spaces of their precoders $\Um_k$ and $\Um_l$ share $r$ dimensions. Hence, we write $\Qm_k = [\Qm_0 ~\Vm_k]$ and $\Qm_l = [\Qm_0~\Vm_l]$ where $\Qm_0 \in \CC^{T \times r}$, $\Vm_k \in \CC^{T \times (T - (K-1)M - r)}$, and $\Vm_l \in \CC^{T \times (T - (K-1)M - r)}$ are a truncated unitary matrices. We impose that $\Vm_k^\H \Vm_l = \mathbf{0}$. 
A symbol of user~$k$ can be expressed as $\Xm_k = \Qm_0 \Dm_k + \Vm_k \Em_k$ with $\Dm_k \in \CC^{r \times M}$ and $\Em_k \in \CC^{(T - (K-1)M - r) \times M}$. Similarly, $\Xm_l = \Qm_0 \Dm_l + \Vm_l \Em_l$ with $\Dm_l \in \CC^{r \times M}$ and $\Em_l \in \CC^{(T - (K-1)M - r) \times M}$. Therefore, 
\begin{align}
\|\Xm^\H_k \Xm_l\|^2_{\rm F} = \|\Dm^\H_k \Dm_l\|^2_{\rm F} + \|\Em^\H_k \Vm^\H_k \Vm_l \Em_l\|^2_{\rm F} = \|\Dm^\H_k \Dm_l\|^2_{\rm F}.
\end{align}
That is, $\|\Xm^\H_k \Xm_l\|^2_{\rm F}$ depends only on the projections $\Dm_k$ and $\Dm_l$ of $\Xm_l$ and $\Xm_l$, respectively, on the shared subspace $\Span[\Qm_0]$. Therefore, to minimize $\|\Xm^\H_k \Xm_l\|^2_{\rm F}$, we minimize the dimension $r$ of $\Span[\Qm_0]$, or equivalently, maximize the number of orthogonal dimensions $T-r$ of $\Span[\Qm_k]$ and $\Span[\Qm_l]$. 
With $T \ge KM$, we can design $\{\Qm_k\}$ such that their first $M$ columns are mutually orthogonal, so $T - r = 2M$ for any pair $\Qm_k$ and $\Qm_l$:\footnote{{In this way, the users' signals are orthogonal in the first $KM$ channel uses. This is in the same spirit as the pilot-based scheme where orthogonal pilots are sent in the first $KM$ channel users.}}
\begin{equation} \label{eq:precoder_Q_type1}
\Qm_k = [\ev_{(k-1)M+1} \dots \ev_{kM} \ \ev_{KM+1} \ \dots \ \ev_T], ~~ k\in [K]. 
\end{equation}
If $T\ge K(K-1)M$, we can further increase $T-r$ to $2(K-1)M$ for any pair $\Qm_k$ and $\Qm_l$ with the design:
\begin{equation} \label{eq:precoder_Q_type2}
\Qm_k = [\ev_1 \ \dots \ \ev_{(k-1)(K-1)M} \ \ev_{k(K-1)M+1} \ \dots \ \ev_T],  ~~ k\in [K].
\end{equation}
In this way, the orthogonal complements of $\{\Qm_k\}$ are mutually orthogonal. 

Next, we design $\{\Rm_k\}$. We let $\Rm_k = \diag(\eta_{k,1},\eta_{k,2}, \dots,\eta_{k,T - (K - 1)M})$, where $\eta_{k,i}$ indicates the weight of a symbol in the dimension of $\Uc_k$ represented by column $i$ of ${\Qm_k}$. These factors control the orientation of the symbols in $\Uc_k$. A particular choice is to set higher weights $\eta_{k,i}$ for the dimensions in the \emph{mutually exclusive} parts of $\Uc_1, \dots, \Uc_K$, and lower weights for the dimensions in the \emph{intersection} of $\Uc_1, \dots, \Uc_K$. 
In Fig.~\ref{fig:geometric}, this can be interpreted as putting the points representing the symbols further away from point $I$ representing the intersection of $\Uc_1$ and $\Uc_2$. 
Let the weights within the mutually exclusive parts of $\Uc_1, \dots, \Uc_K$ (corresponding to the first $M$ columns of $\Qm_k$ in \eqref{eq:precoder_Q_type1} and the first $(K-1)(K-1)M$ columns of $\Qm_k$ in \eqref{eq:precoder_Q_type2}) be equally $\eta_1$ and the weights within the intersection (corresponding to the remaining columns of $\Qm_k$) be equally $\eta_2$ ($\eta_2 < \eta_1$). By determining $\eta_1$ and $\eta_2$ such that the joint symbols have equal weights in average in every dimension of $\Span[{[\Um_1 \ \dots \ \Um_K]}]$, we obtain $(\eta_1,\eta_2) = (\sqrt{K},1 )$ for $\Qm_k$ in \eqref{eq:precoder_Q_type1}, and $(\eta_1,\eta_2) = \left(\sqrt{\frac{K}{K-1}},1 \right)$ for $\Qm_k$ in \eqref{eq:precoder_Q_type2}.
}

{We summarize the proposed precoders as follows.
\begin{itemize}
\item (Type-I precoder) Assuming that $T \ge KM$, let
\begin{align} \label{eq:precoder_type1}
\Um_k &= \big[\eta_1[\ev_{(k-1)M+1} \dots \ev_{kM}] \ \eta_2[\ev_{KM+1} \ \dots \ \ev_T] \big]
\end{align}
for $k\in[K]$, 
where $(\eta_1,\eta_2) = (\sqrt{K},1 )$.
\item (Type-II precoder) Assuming that $T  \ge  K(K  - 1)M$, let 
\begin{align} \label{eq:precoder_type2}
\Um_k = \big[\eta_1 [&\ev_1 \ \dots \ \ev_{(k-1)(K-1)M} \ \ev_{k(K-1)M+1} \ \\ 
&\quad \dots \ev_{K(K-1)M}] \ \eta_2 [\ev_{K(K-1)M+1}\ \dots \ \ev_T] \big]
\end{align}
for $k\in[K]$, 
where $(\eta_1,\eta_2) = \left(\sqrt{\frac{K}{K-1}},1 \right)$.
\end{itemize}}

\section{Proof of Proposition~\ref{prop:theta}} \label{proof:prop:theta}
We shall make use of the following lemma.

\begin{lemma} \label{lem:monotone}
Consider three {\em distinct} $T$-dimensional unit vectors $\av$, $\bv$, $\cv$, $\rho >0$, a variable $\theta \ge 0$, and two functions $\delta_1(\theta) \defeq \rho \av^\H (\Id_T + \rho \bv \bv^\H + \rho \theta \cv \cv^\H)^{-1} \av$ and $\delta_2(\theta) \defeq \rho \theta \av^\H (\Id_T + \rho \bv \bv^\H + \rho \theta \cv \cv^\H)^{-1} \av$. Then, $\delta_1(\theta)$ is monotonically decreasing in $\theta$ while $\delta_2(\theta)$ is strictly increasing  in $\theta$.
\end{lemma}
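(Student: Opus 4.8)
The plan is to route both functions through the single matrix $\Mm(\theta) \defeq \Id_T + \rho\bv\bv^\H + \rho\theta\cv\cv^\H$, which is positive definite (hence invertible) for every $\theta \ge 0$ because $\Id_T \succ \mathbf{0}$ and the two rank-one terms are positive semidefinite. With this notation $\delta_1(\theta) = \rho\,\av^\H \Mm(\theta)^{-1}\av$ and $\delta_2(\theta) = \theta\,\delta_1(\theta)$, so the whole problem reduces to understanding the scalar map $\theta \mapsto \av^\H \Mm(\theta)^{-1}\av$ together with its multiplication by $\theta$. First I would simply differentiate.

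For $\delta_1$ this is routine. Since $\frac{\dif}{\dif\theta}\Mm(\theta) = \rho\cv\cv^\H$, the standard identity $\frac{\dif}{\dif\theta}\Mm(\theta)^{-1} = -\Mm(\theta)^{-1}\big(\tfrac{\dif}{\dif\theta}\Mm(\theta)\big)\Mm(\theta)^{-1}$ yields
\[
\frac{\dif\delta_1}{\dif\theta} = -\rho^2\,\av^\H \Mm(\theta)^{-1}\cv\cv^\H\Mm(\theta)^{-1}\av = -\rho^2\big|\cv^\H \Mm(\theta)^{-1}\av\big|^2 \le 0,
\]
which immediately gives that $\delta_1$ is monotonically (weakly) decreasing, exactly as claimed.

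The main work is $\delta_2$. Here I would use the Sherman--Morrison identity to make the $\theta$-dependence explicit. Writing $\Nm \defeq \Id_T + \rho\bv\bv^\H = \Mm(0)$ (positive definite) and the scalars $A \defeq \av^\H \Nm^{-1}\av$, $B \defeq \cv^\H \Nm^{-1}\cv$, $C \defeq \cv^\H \Nm^{-1}\av$, Sherman--Morrison gives
\[
\av^\H \Mm(\theta)^{-1}\av = A - \frac{\rho\theta\,|C|^2}{1+\rho\theta B}, \qquad \delta_2(\theta) = \rho\theta A - \frac{\rho^2\theta^2|C|^2}{1+\rho\theta B}.
\]
Differentiating this explicit rational function and collecting terms produces $\frac{\dif\delta_2}{\dif\theta} = \rho A - \rho^2\theta|C|^2(2+\rho\theta B)/(1+\rho\theta B)^2$.

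The crux is certifying that this derivative is strictly positive. The idea is to invoke the Cauchy--Schwarz inequality for the positive definite inner product induced by $\Nm^{-1}$, namely $|C|^2 = |\cv^\H\Nm^{-1}\av|^2 \le (\cv^\H\Nm^{-1}\cv)(\av^\H\Nm^{-1}\av) = AB$. Since $|C|^2$ enters with a negative sign, replacing it by its upper bound $AB$ is the worst case; setting $s \defeq \rho\theta B \ge 0$, the bracketed factor collapses through $(1+s)^2 - s(2+s) = 1$, giving $\frac{\dif\delta_2}{\dif\theta} \ge \rho A/(1+\rho\theta B)^2 > 0$, where positivity of $A = \av^\H\Nm^{-1}\av$ follows from $\av$ being a nonzero unit vector and $\Nm^{-1} \succ \mathbf{0}$. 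Hence $\delta_2$ is strictly increasing. The only delicate points are arranging the algebra so that the Cauchy--Schwarz bound lands on the correct side and verifying that the numerator simplifies to the clean constant $1$; I note that the distinctness hypothesis on $\av,\bv,\cv$ is not in fact needed for these two monotonicity claims, only the unit-norm (nonzero) property of $\av$.
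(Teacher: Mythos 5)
Your proof is correct, and the key step is organized differently from the paper's. The paper differentiates $\delta_1$ and $\delta_2$ directly and writes the derivatives out explicitly in terms of the Euclidean inner products $\av^\H\bv$, $\bv^\H\cv$, $\av^\H\cv$; positivity of $\partial\delta_2/\partial\theta$ then hinges on showing $(1-|\av^\H\bv|^2)(1-|\bv^\H\cv|^2)-|\av^\H\bv\bv^\H\cv-\av^\H\cv|^2\ge 0$, which is Cauchy--Schwarz for the projector $\Id_T-\bv\bv^\H$, combined with the claim $|\av^\H\cv|^2<1$ drawn from the distinctness hypothesis. You instead push the $\theta$-dependence through Sherman--Morrison, reducing everything to the three scalars $A=\av^\H\Nm^{-1}\av$, $B=\cv^\H\Nm^{-1}\cv$, $C=\cv^\H\Nm^{-1}\av$, and apply Cauchy--Schwarz in the $\Nm^{-1}$-weighted inner product; the algebra then collapses via $(1+s)^2-s(2+s)=1$ to the clean bound $\dif\delta_2/\dif\theta\ge \rho A/(1+\rho\theta B)^2>0$. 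Your route buys two things: the computation is shorter and less error-prone than the paper's explicit rational expressions, and it makes transparent that strict monotonicity of $\delta_2$ needs only $\av\ne\mathbf{0}$ --- indeed the paper's reliance on ``$\av\ne\cv$ implies $|\av^\H\cv|^2<1$'' is slightly delicate for complex unit vectors (it fails when $\cv=e^{i\phi}\av$), a corner your argument sidesteps entirely. What the paper's expansion buys in exchange is the fully explicit closed-form derivative, though that form is not reused elsewhere. Both arguments are differentiation plus Cauchy--Schwarz at heart, so the difference is one of decomposition rather than of fundamental idea.
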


\begin{proof}
After some simple manipulations, we obtain 
\begin{align}
\frac{\partial \delta_1}{\partial \theta} &= -\frac{\rho^2 |\rho \av^\H\bv \bv^\H \cv - (1+\rho) \av^\H \cv|^2}{(1+\rho + \rho(1 + \rho (1 - |\bv^\H \cv|^2))\theta)^2}, \\
\frac{\partial \delta_2}{\partial \theta} &= \frac{1}{\rho + \rho^2 (1 - |\bv^\H\cv|^2)}\notag \\
&\quad \cdot \bigg[\rho^2(1-|\av^\H\cv|^2) \notag \\
&\qquad ~~+ \rho^3\big[(1-|\av^\H\bv|^2)(1-|\bv^\H\cv|^2) - |\av^\H\bv\bv^\H\cv - \av^\H \cv|^2\big] \bigg. \notag \\
&\qquad~~ \bigg. + \frac{(1+\rho)|\rho \av^\H\bv \bv^\H \cv - (1+\rho) \av^\H \cv|^2}{(1+\rho + \rho(1 + \rho(1-|\bv^\H \cv|^2))\theta)^2}  \bigg].
\end{align}
It is obvious that $\frac{\partial \delta_1}{\partial \theta} \le 0, \forall \theta \ge 0$. Let $\{\bar{\bv}_i\}_{i=1}^{T-1}$ form an orthogonal complement of $\bv$, i.e., $\bv\bv^\H + \sum_{i=1}^{T-1} \bar{\bv}_i \bar{\bv}_i^\H = \Id_T$, we have that
\begin{align}
&(1-|\av^\H\bv|^2)(1-|\bv^\H\cv|^2) - |\av^\H\bv\bv^\H\cv - \av^\H \cv|^2 \notag \\
&= \av^\H(\Id_T - \bv\bv^\H)\av \cv^\H(\Id_T - \bv\bv^\H)\cv - |\av^\H(\Id_T - \bv\bv^\H)\cv|^2 \\
&= \Bigg(\sum_{i=1}^{T-1} |\av^\H \bar{\bv}_i|^2\Bigg) \Bigg(\sum_{i=1}^{T-1} |\cv^\H \bar{\bv}_i|^2\Bigg) - \Bigg|\sum_{i=1}^{T-1}\av^\H \bar{\bv}_i \bar{\bv}_i^\H \cv \ \Bigg|^2 \\
&\ge 0,
\end{align}
where the last inequality is due to the Cauchy–Schwarz inequality. This and $|\av^\H\cv|^2 < 1$ (since $\av \ne \cv$) imply that $\frac{\partial \delta_2}{\partial \theta} > 0, \forall \theta \ge 0$.
\end{proof}

In the remainder of the proof, the symbols $\xv_{1}$, $\xv'_1$, $\hat{\xv}_1$, $\xv_2$, $\hat{\xv}_2$, and $\hat{\xv}'_2$ implicitly satisfy $\{\xv_{1}, \xv'_1, \hat{\xv}_1\} \subset \bar{\Xc}_1$, $\{\xv_2, \hat{\xv}_2, \hat{\xv}'_2\} \subset \bar{\Xc}_2$, $\xv_{1} \ne \xv'_1$ and $\hat{\xv}_2 \ne \hat{\xv}'_2$. For notational simplicity, we write $\delta_1(\theta,\xv_{1}, \xv'_1, \xv_2)$ as $\delta_1(\theta)$ and $\delta_2(\theta,\hat{\xv}_1, \hat{\xv}_2, \hat{\xv}'_2)$ as $\delta_2(\theta)$.
\begin{enumerate}
\item From Lemma~\ref{lem:monotone}, we have that $\delta_1(\theta)$ is monotonically decreasing in $\theta$ for any $\xv_{1}, \xv'_1, \xv_2$, so $d_1(\Xc^\theta) = \displaystyle\min_{\xv_{1}, \xv'_1, \xv_2} \delta_1(\theta)$  (see \eqref{eq:d12}) is also monotonically decreasing in $\theta$. Also from Lemma~\ref{lem:monotone}, $\delta_2(\theta)$ is strictly increasing in $\theta$ for any $\hat{\xv}_1, \hat{\xv}_2, \hat{\xv}'_2$, and so is $d_2(\Xc^\theta) = \displaystyle\min_{\hat{\xv}_{2}, \hat{\xv}'_2, \hat{\xv}_1} \delta_2(\theta)$ (see \eqref{eq:d21}).
Furthermore, $\delta_1(0) = P_1T - \frac{P_1^2T^2|\xv_1^\H \xv_1'|^2}{1 + P_1T} > \frac{P_1T}{1+ P_1T} > 0 = \delta_2(0)$ for any $\xv_{1}$, $\xv'_1$, $\hat{\xv}_1$, $\xv_2$, $\hat{\xv}_2$, $\hat{\xv}'_2$, so $d_1(\Xc^\theta) > d_2(\Xc^\theta)$ at $\theta = 0$. Therefore, there exists a {\em unique} $\theta^* > 0$ such that $d_1(\Xc^{\theta^*}) = d_2(\Xc^{\theta^*})$, and thus $\theta^*$ maximizes $\min\{d_1(\Xc^\theta),d_2(\Xc^\theta)\}$. 

Let $\tilde{\theta} = \arg \max_\theta d_{\rm min}(\Xc^{\theta})$. Then $d_{\rm min}(\Xc^{\theta^*}) \le d_{\rm min}(\Xc^{\tilde{\theta}})$. Also, we have that 
\begin{align}
d_{\rm min}(\Xc^{\tilde{\theta}}) &\le \min\{d_1(\Xc^{\tilde{\theta}}),d_2(\Xc^{\tilde{\theta}})\} + 1 \label{eq:tmp2362} \\
&\le \min\{d_1(\Xc^{{\theta^*}}),d_2(\Xc^{{\theta^*}})\} + 1 \label{eq:tmp2363} \\
&\le d_{\rm min}(\Xc^{{\theta^*}}) + 1, \label{eq:tmp2364}
\end{align}
where \eqref{eq:tmp2362} and \eqref{eq:tmp2364} follow from \eqref{eq:tmp736}, and \eqref{eq:tmp2363} holds because $\theta^*$ maximizes the term $\min\{d_1(\Xc^\theta),d_2(\Xc^\theta)\}$. Therefore, $d_{\rm min}(\Xc^{\theta^*}) \le \max_\theta d_{\rm min}(\Xc^{\theta}) \le d_{\rm min}(\Xc^{\theta^*}) + 1$, implying that $\theta^*$ is approximately the solution to $\max_\theta d_{\rm min}(\Xc^{\theta})$.

\item  Since $\delta_1(\theta)$ is decreasing in $\theta$ and $\delta_2(\theta)$ is increasing in $\theta$ while $\delta_1(0) > \delta(0)$, for any 6-tuple $\{\xv_{1}, \xv'_1, \hat{\xv}_1, \xv_2, \hat{\xv}_2, \hat{\xv}'_2\}$, there exists a unique $\hat{\theta} > 0$ such that $\delta_1(\hat{\theta}) = \delta_2(\hat{\theta})$. Note that $\hat{\theta}$ is a function of $\{\xv_{1}, \xv'_1, \hat{\xv}_1, \xv_2, \hat{\xv}_2, \hat{\xv}'_2\}$. The condition $\delta_1(\hat{\theta}) = \delta_2(\hat{\theta})$ can be written as a cubic equation $a \hat{\theta}^3 + b\hat{\theta}^2 + c\hat{\theta} + d = 0$ where $a,b,c,d$ are given in \eqref{eq:tmp1344}. Note that $a > 0$. Then, \eqref{eq:tmp1344-a} follows by solving this equation for a positive root.

Recall that we denote the set of values of $\hat{\theta}$ for all possible values of $\{\xv_{1}, \xv'_1, \hat{\xv}_1, \xv_2, \hat{\xv}_2, \hat{\xv}'_2\}$ by $\Thetam$. Then $\Thetam$ is also the set of $\theta$ such that $\delta_1(\theta) = \delta_2(\theta)$ for some $\{\xv_{1}, \xv'_1, \hat{\xv}_1, \xv_2, \hat{\xv}_2, \hat{\xv}'_2\}$. 

Since $d_1(\Xc^{\theta^*}) = d_2(\Xc^{\theta^*}) \eqdef \delta(\theta^*)$, it is straightforward that $\theta^* \in \Thetam$. Let $\breve{\theta} \defeq \arg\displaystyle\min_{\hat{\theta} \in \Thetam} \delta(\hat{\theta})$ and $\breve{\delta}_k(\theta)$ be the function $\delta_k(\theta)$ with $\xv_{1}$, $\xv'_1$, $\hat{\xv}_1$, $\xv_2$, $\hat{\xv}_2$, $\hat{\xv}'_2$ satisfying $\breve{\delta}_1(\breve{\theta}) = \breve{\delta}_2(\breve{\theta}) = \delta(\breve{\theta})$. We have $\breve{\delta}_1(\theta^*) \ge d_1(\Xc^{\theta^*}) = \delta(\theta^*) \ge \delta(\breve{\theta}) = \breve{\delta}_1(\breve{\theta})$ where the first equality follows from the $\min$ in $d_1(\Xc^{\theta^*})$ and the second inequality holds because $\theta^* \in \Thetam$ and due to the definition of $\breve{\theta}$. As a consequence, $\theta^* \le \breve{\theta}$ because $\breve{\delta}_1(\theta)$ is decreasing in $\theta$. Similarly, we have that  $\breve{\delta}_2(\theta^*) \ge d_2(\Xc^{\theta^*}) =  \delta(\theta^*) \ge \delta(\breve{\theta}) = \breve{\delta}_2(\breve{\theta})$, so $\theta^*\ge \breve{\theta}$ because $\breve{\delta}_1(\theta)$ is increasing in $\theta$. We conclude that $\theta^* = \breve{\theta}$. 
\end{enumerate}

\section{The Riemannian Gradient of $g(\Xc)$} \label{app:gradient}
The $n$-th constellation symbol of user~$k$ can be written as $\sqrt{\rho_k}\Sm_{k,n}$ where $\rho_k \defeq \frac{P_k T}{M}$. Here, the matrix $\Sm_{k,n} \in \CC^{T \times M}$ has unit-norm columns and represents a point in the Grassmann manifold $G(\CC^T,M)$. 
The joint constellation $\Xc$ can be equivalently represented by a collection of $K2^{B}$ of those matrices denoted by $\Sc = \{\Sm_{k,n}\}_{k \in [K], n\in [2^B]}$. Therefore, we interchangeably write $g(\Xc)$ as $g(\Sc)$. To optimize $\Xc$ for a fixed set of powers $\{P_k\}$, we optimize $\Sc$ following~\eqref{eq:tmp1102} by gradient descent on the Grassmann manifold. To this end, we need to compute the Riemannian gradient  $\nabla_{\rm R} g({\Sc})$.
According to~\cite[Sec.~3.6]{AbsMahSep2008optManifolds}, 
the Riemannian gradient can be computed by projection as $$\nabla_{\rm R} g({\Sc}) = \bigg\{ (\Id_T - \Sm_{k,n}\Sm_{k,n}^\H) \frac{\partial g(\Sc)}{\partial \Sm_{k,n}}\bigg\}_{k \in [K], n\in [2^B]},$$
where $\frac{\partial g({\Sc})}{\partial \Sm_{k,n}}$ is the Euclidean derivative of $g(\Sc)$ with respect to $\Sm_{k,n}$ given by
\begin{align}
\frac{\partial g({\Sc})}{\partial \Sm_{k,n}^{(n)}} &= - \Bigg(\sum_{\Xm \ne {\Xm'} \in \Xc}  \exp \bigg(-\frac{f(\Xm, {\Xm'})}{\epsilon}\bigg) \Bigg)^{-1} \notag \\
&\quad\cdot \sum_{\substack{\Xm = \big[\sqrt{\rho_1}\Sm_{1,i_1} \ \dots \ \sqrt{\rho_K}\Sm_{K,i_K} \big] \\ \ne {\Xm'} = \big[\sqrt{\rho_1}\Sm_{1,j_1} \dots \sqrt{\rho_K}\Sm_{K,j_K}\big], \\  \{(1,i_1),\dots,(K,i_K),(1,j_1),\dots,(K,j_K)\}\ni (k,n)}} \notag \\
&\qquad \qquad \cdot \exp \bigg(-\frac{f(\Xm, {\Xm'})}{\epsilon}\bigg) \frac{\partial f(\Xm,{\Xm'})}{\partial \Sm_{k,n}}.
\end{align}

We present next the expression of the derivative $\frac{\partial f(\Xm,{\Xm'})}{\partial \Sm_{k,n}}$. For the Max-$J_{1/2,\min}$, Max-$e_{\min}$, Max-$d_{\min}$, and Min-$m_1$ criteria, $f(\Xm,{\Xm'})$ is given by $J_{1/2}(\Xm,\Xm')$, $\frac{1}{N} \meanLLR$, $d(\Xm\to\Xm')$, and $-\trace[\Xm\Xm^\H \Xm' {\Xm'}^\H]$, respectively. For $J_{1/2}(\Xm,\Xm')$ and $-\trace[\Xm\Xm^\H \Xm' {\Xm'}^\H]$, which are symmetric in $\Xm$ and $\Xm'$, we consider w.l.o.g. $(k,n) = (1,i_1)$. 
After some manipulations, we have that
\begin{align}
\frac{\partial J_{1/2}(\Xm,\Xm')}{\partial \Sm_{1,i_1}} &=  \frac{\rho_1}{2} (\Psim + \Psim^\H)\Sm_{1,i_1},
\end{align}
where 
\begin{align}
\Psim &\defeq \Omegam^{-1} (\Id_T + \Xm'{\Xm'}^\H)^{-1} \notag \\
&\quad - (\Id_T + \Xm{\Xm}^\H)^{-1}  (\Id_T + \Xm'{\Xm'}^\H) \Omegam^{-1} (\Id_T + \Xm{\Xm}^\H)^{-1}
\end{align} 
with 
\begin{align}
\Omegam &\defeq 2\Id_T + (\Id_T + \Xm'{\Xm'}^\H)^{-1}(\Id_T + \Xm{\Xm}^\H)  \notag \\
&\quad + (\Id_T + \Xm{\Xm}^\H)^{-1} (\Id_T + \Xm'{\Xm'}^\H).
\end{align} Furthermore,
\begin{align}
\frac{\partial (-\trace[\Xm\Xm^\H \Xm' {\Xm'}^\H])}{\partial \Sm_{1,i_1}} &= - 2 \rho_1 \Xm'{\Xm'}^\H \Sm_{1,i_1}.
\end{align}
On the other hand, for $\frac{1}{N} \meanLLR$ and $d(\Xm\to\Xm')$, which are asymmetric in $\Xm$ and $\Xm'$, we consider w.l.o.g. $(k,n) \in \{(1,i_1), (1,j_1)\}$. After some manipulations, we have that
\begin{align}
&\frac{\partial \frac{1}{N} \meanLLR}{\partial \Sm_{1,i_1}} \notag \\
&= 2 \rho_1 \big[(\Id_T + \Xm'{\Xm'}^\H)^{-1} - (\Id_T + \Xm{\Xm}^\H)^{-1}\big] \Sm_{1,i_1}, \\
&\frac{\partial \frac{1}{N} \meanLLR}{\partial \Sm_{1,j_1}} \notag \\
&= 2 \rho_1 (\Id_T + \Xm'{\Xm'}^\H)^{-1} \notag \\
&\quad \cdot [\Id_T - (\Id_T + \Xm{\Xm}^\H) (\Id_T + \Xm'{\Xm'}^\H)^{-1} ] \Sm_{1,j_1},
\end{align}
and that
\begin{align}
\frac{\partial d(\Xm\to\Xm')}{\partial \Sm_{1,i_1}} &= 2 \rho_1 (\Id_T + \Xm'{\Xm'}^\H)^{-1} \Sm_{1,i_1}, \\
\frac{\partial d(\Xm\to\Xm')}{\partial \Sm_{1,j_1}} &= - 2 \rho_1 (\Id_T + \Xm'{\Xm'}^\H)^{-1} \Xm{\Xm}^\H \notag \\
&\quad \cdot  (\Id_T + \Xm'{\Xm'}^\H)^{-1} \Sm_{1,j_1}.
\end{align}

For the Min-$m_2(\Xc)$ criterion, $g(\Xc)$ is replaced by $m_2(\Xc) = \ln \sum_{\Xm\ne \Xm'\in \Xc} \det^{-N}\Big(\Id_T - \omega \Xm\Xm^\H\Xm'{\Xm'}^\H\Big)$ with $\omega = \frac{M_{\rm tot}^2}{\|\Xm\|_{\rm F}^2 \|\Xm'\|_{\rm F}^2} = \Big(\frac{M_{\rm tot}}{T \sum_{k=1}^K P_k}\Big)^2$. As for $g(\Xc)$, we write interchangeably $m_2(\Xc)$ as $m_2(\Sc)$ and compute the Riemannian gradient of $m_2(\Sc)$ as $$\nabla_{\rm R} m_2({\Sc}) = \bigg\{ (\Id_T - \Sm_{k,n}\Sm_{k,n}^\H) \frac{\partial m_2(\Sc)}{\partial \Sm_{k,n}}\bigg\}_{k \in [K], n\in [2^B]}.$$
Here,  the Euclidean derivative $\frac{\partial m_2({\Sc})}{\partial \Sm_{k,n}}$ is given by
\begin{align}
&\frac{\partial m_2({\Sc})}{\partial \Sm_{k,n}^{(n)}} \notag \\ 
&= - N \Bigg(\sum_{\Xm \ne {\Xm'} \in \Xc} \det^{-N}\Big(\Id_T - \omega \Xm\Xm^\H\Xm'{\Xm'}^\H\Big) \Bigg)^{-1} \notag \\
&\quad\cdot \sum_{\substack{\Xm = \big[\sqrt{\rho_1}\Sm_{1,i_1} \ \dots \ \sqrt{\rho_K}\Sm_{K,i_K} \big] \\ \ne {\Xm'} = \big[\sqrt{\rho_1}\Sm_{1,j_1} \dots \sqrt{\rho_K}\Sm_{K,j_K}\big], \\  \{(1,i_1),\dots,(K,i_K),(1,j_1),\dots,(K,j_K)\}\ni (k,n)}} \notag \\
&\qquad \cdot 
\det^{-N-1}\Big(\Id_T - \omega \Xm\Xm^\H\Xm'{\Xm'}^\H\Big) \notag \\
&\qquad \cdot \frac{\partial \det(\Id_T - \omega \Xm\Xm^\H\Xm'{\Xm'}^\H)}{\partial \Sm_{k,n}}.
\end{align} 
We obtain after some manipulations that
\begin{align}
&\frac{\partial \det(\Id_T - \omega \Xm\Xm^\H\Xm'{\Xm'}^\H)}{\partial \Sm_{k,n}} \notag \\
&= - \omega \det\big(\Id_T - \omega \Xm\Xm^\H\Xm'{\Xm'}^\H\big) \notag \\
&\quad \cdot \big[\Xm\Xm^\H  \Xm'{\Xm'}^\H + \Xm'{\Xm'}^\H \Xm\Xm^\H\big] \Sm_{k,n},
\end{align}
for all $(k,n) \in \{(1,i_1),\dots,(K,i_K),(1,j_1),\dots,(K,j_K)\}$.

\bibliographystyle{IEEEtran}
\bibliography{IEEEabrv,./biblio}

\begin{IEEEbiographynophoto}{Khac-Hoang Ngo}
	(Member, IEEE) received the B.E. degree (Hons.) in electronics and telecommunications from University of Engineering and Technology, Vietnam National University, Hanoi, Vietnam, in 2014; and the M.Sc. degree (Hons.) and Ph.D. degree in wireless communications from CentraleSupélec, Paris-Saclay University, France, in 2016 and 2020, respectively. His Ph.D. thesis was also realized at Paris Research Center, Huawei Technologies France. Since September 2020, he has been a postdoctoral researcher at Chalmers University of Technology, Sweden. He is also an adjunct lecturer at University of Engineering and Technology, Vietnam National University Hanoi, Vietnam. His research interests include wireless communications and information theory, with an emphasis on massive random access, edge computing, MIMO, noncoherent communications, coded caching, and network coding. He received the Marie Skłodowska-Curie Actions (MSCA) Individual Fellowship and the ``Signal, Image \& Vision Ph.D. Thesis Prize'' by Club EEA, GRETSI and GdR-ISIS, France in 2021.
\end{IEEEbiographynophoto}

\begin{IEEEbiographynophoto}{Sheng Yang}
	(Member, IEEE) received the B.E. degree in electrical engineering from Jiaotong University, Shanghai, China, in 2001, and both the engineer degree and the M.Sc. degree in electrical	engineering from Telecom ParisTech, Paris, France, in 2004. In 2007, he obtained the Ph.D. degree from Universite Pierre et Marie Curie (Paris VI). From October 2007 to November 2008, he was with Motorola Research Center in Gif-sur-Yvette, France, as a Senior Staff Research Engineer. Since December 2008, he has joined CentraleSupelec, Paris-Saclay University, where he is currently a Full Professor. From April 2015, he also
	holds an Honorary Associate Professorship in the Department of Electrical and Electronic Engineering of the University of Hong Kong (HKU). He received the 2015 IEEE ComSoc Young Researcher Award for the Europe, Middle East, and Africa Region (EMEA). He was an Associate Editor of the IEEE TRANSACTIONS ON WIRELESS COMMUNICATIONS from 2015 to 2020. He is currently an Associate Editor of the IEEE TRANSACTIONS ON INFORMATION THEORY.
\end{IEEEbiographynophoto}

\begin{IEEEbiographynophoto}{Maxime Guillaud}
	(Senior Member, IEEE) received the M.Sc. degree in electrical engineering from ENSEA, Cergy, France, in 2000, and the Ph.D.	degree in electrical engineering and communications from Telecom Paris Tech, Paris, France, in 2005. From 2000 to 2001, he was a Research Engineer with Lucent Bell Laboratories (currently Nokia), Holmdel, NJ, USA. From 2006 to 2010, he was	a Senior Researcher with FTW, Vienna, Austria.
	From 2010 to 2014, he was a Researcher with	the Vienna University of Technology, Vienna. Since 2014, he has been a Researcher with the Huawei Technologies France, where he heads the Signal and Information Processing Team. He worked on numerous aspects of the physical layer of radio access networks, including transceiver algorithms, channel modeling, machine learning, and modulation design for non-coherent and multiple access communications. He introduced the principle of relative calibration for the exploitation of channel reciprocity. He has authored over 80 research articles and holds 18 patents. He is an Associate Editor of the IEEE TRANSACTIONS ON WIRELESS COMMUNICATIONS.
\end{IEEEbiographynophoto}

\begin{IEEEbiographynophoto}{Alexis Decurninge}
	(Member, IEEE) received the Ph.D. degree in statistics from Université Pierre et Marie Curie, Paris, France, in 2015. His Ph.D. thesis on statistical methods for radar signal processing was made in collaboration with Thales Air Systems. Since 2015, he has been a Research Engineer with the Mathematical and Algorithmic Sciences Laboratory, Huawei Technologies France, Paris. His research interests focus on statistical signal processing, Riemannian geometry, and wireless communications.
\end{IEEEbiographynophoto}

\end{document}